\documentclass[final,3p,times,12pt]{elsarticle}
\usepackage[T1]{fontenc}
\usepackage[utf8]{inputenc}
\usepackage{microtype}
\usepackage{amsmath,amssymb,amsthm,mathtools}
\usepackage{enumitem}
\usepackage{booktabs}
\usepackage{aliascnt}
\usepackage{needspace}
\usepackage[hidelinks,bookmarksnumbered=true]{hyperref}
\usepackage[nameinlink,noabbrev]{cleveref}
\usepackage{tikz}
\usepackage{tocloft}

\biboptions{numbers,sort&compress}

\allowdisplaybreaks
\setlist{topsep=2pt plus 1pt minus 1pt,itemsep=1pt,parsep=0pt,partopsep=0pt}

\makeatletter
\def\thm@space@setup{%
  \thm@preskip=4pt plus 2pt minus 1pt
  \thm@postskip=4pt plus 2pt minus 1pt}
\makeatother

\AtBeginDocument{%
  \setlength{\parskip}{0pt}%
  \setlength{\abovedisplayskip}{5pt plus 2pt minus 2pt}%
  \setlength{\belowdisplayskip}{5pt plus 2pt minus 2pt}%
  \setlength{\abovedisplayshortskip}{2pt plus 1pt minus 1pt}%
  \setlength{\belowdisplayshortskip}{3pt plus 1pt minus 1pt}%
}

\newcommand{\Vsub}{\mathcal V_{\leq 1}}
\newcommand{\Vone}{\mathcal V_1}
\newcommand{\Vext}{\mathcal V_{\mathsf{ext}}}
\newcommand{\M}{\mathsf M}
\newcommand{\FVA}{\omega\mathbf{FVA}}
\newcommand{\FAC}{\mathcal{F}(\mathbf{C})}
\newcommand{\FS}{\mathbf{FS}}
\newcommand{\DCPO}{\mathbf{DCPO}}
\newcommand{\BC}{\mathbf{BC}}
\newcommand{\RB}{\mathbf{RB}}
\newcommand{\stle}{\leq_{\mathrm{st}}}
\newcommand{\id}{\operatorname{id}}
\newcommand{\supp}{\operatorname{supp}}
\newcommand{\Max}{\operatorname{Max}}
\newcommand{\Min}{\operatorname{Min}}
\newcommand{\Mon}{\operatorname{Mon}}
\newcommand{\Up}{\operatorname{Up}}
\newcommand{\cone}{\operatorname{cone}}
\newcommand{\Int}{\operatorname{int}}
\newcommand{\R}{\mathbb R}
\newcommand{\Rplus}{\mathbb R_{\geq 0}}
\newcommand{\N}{\mathbb N}
\newcommand{\Z}{\mathbb Z}
\newcommand{\up}{\mathord{\uparrow}}
\newcommand{\down}{\mathord{\downarrow}}
\newcommand{\waybelow}{\ll}

\newcommand{\Dirac}{\delta}
\newcommand{\etaV}{\eta}
\newcommand{\eps}{\varepsilon}

\theoremstyle{plain}
\newtheorem{theorem}{Theorem}[section]
\newaliascnt{proposition}{theorem}
\newtheorem{proposition}[proposition]{Proposition}
\aliascntresetthe{proposition}
\newaliascnt{lemma}{theorem}
\newtheorem{lemma}[lemma]{Lemma}
\aliascntresetthe{lemma}
\newaliascnt{corollary}{theorem}
\newtheorem{corollary}[corollary]{Corollary}
\aliascntresetthe{corollary}

\theoremstyle{definition}
\newaliascnt{definition}{theorem}
\newtheorem{definition}[definition]{Definition}
\aliascntresetthe{definition}
\newaliascnt{example}{theorem}
\newtheorem{example}[example]{Example}
\aliascntresetthe{example}
\newaliascnt{remark}{theorem}
\newtheorem{remark}[remark]{Remark}
\aliascntresetthe{remark}

\crefname{definition}{Definition}{Definitions}
\crefname{example}{Example}{Examples}
\crefname{theorem}{Theorem}{Theorems}
\crefname{proposition}{Proposition}{Propositions}
\crefname{lemma}{Lemma}{Lemmas}
\crefname{corollary}{Corollary}{Corollaries}
\crefname{remark}{Remark}{Remarks}
\crefname{section}{Section}{Sections}
\crefname{subsection}{Section}{Sections}
\crefname{equation}{Equation}{Equations}

\begin{document}

\begin{frontmatter}

\title{Finite-valuation approximable structures: a solution to the Jung--Tix problem of probabilistic powerdomains\tnoteref{t1}}
\tnotetext[t1]{Research supported by NSF of China (Nos. 12471439, 12231007).}

\author[addr1]{Yuxu Chen}
\address[addr1]{School of Mathematics, Sichuan University, Chengdu 610065, China}
\ead{chenyuxu@scu.edu.cn}

\author[addr1]{Hui Kou}
\ead{kouhui@scu.edu.cn}

\author[addr1]{Zhenchao Lyu}
\ead{zhenchaolyu@scu.edu.cn}

\begin{abstract}
We introduce the category \(\FVA\) of finite-valuation approximable domains, a full subcategory of continuous domains contained in the category of pointed countably based FS-domains. We prove that \(\FVA\) is Cartesian closed and closed under both the subprobabilistic and probabilistic valuation powerdomains. Hence the valuation monads \(\Vsub\) and \(\Vone\) restrict to \(\FVA\), yielding a positive answer to the category-existence form of the Jung--Tix problem, a long-standing open problem in domain theory since the 1990s. 
In particular, we develop a new   factorization-approximation framework for constructing objects from a given class of known
objects or structures. Applying this method to the class of subprobabilistic powerdomains over finite posets, we construct the class $\FVA$ and show that it is closed under Scott-continuous retracts, finite products, function spaces, \(\Vsub\) and \(\Vone\) monads.

\end{abstract}

\begin{keyword}
domain theory  \sep finite-valuation approximation  \sep probabilistic powerdomain \sep Cartesian closed category
\MSC[2020] 06B35 \sep 06F30 \sep 18D15 \sep 68Q55 \sep 60B05
\end{keyword}

\end{frontmatter}

\clearpage
\begingroup
\setlength{\parskip}{0pt}
\tableofcontents
\endgroup
\clearpage

\section{Introduction}
\subsection{Domain theory and finite approximation}
\label{subsec:intro-finite-approximation}

Domain theory originated in Dana Scott's order-theoretic approach \cite{Scott1970,Scott1972,Scott1993} to computation 
and in the Scott--Strachey programme for denotational semantics
\cite{ScottStrachey1971}.  Its basic idea is to order partial objects by information content: $x\leq y$ means that $y$ contains at least the information present in $x$. Directed suprema describe limits of compatible approximations, Scott-continuous maps preserve those limits, and least fixed points interpret recursive definitions. Suitable
Cartesian closed categories of domains then provide interpretations of
higher-order function types. Domain theory thereby connects order, topology, fixed-point theory, and the semantics of programming languages within a single mathematical setting 
\cite{AbramskyJung1994,AmadioCurien1998,GierzEtAl2003,
Goubault2013,Gunter1992,Winskel1993,Zhang1991}.

Finite order structure has played an organizing role from the beginning.
Scott's universal domain
$\mathcal P\omega$ represents data types by retracts of an algebraic
domain whose compact elements are finite~\cite{Scott1976}. 
Plotkin's universal domain $\mathbb T^\omega$ and the
embedding--projection method of Smyth and Plotkin similarly reconstruct infinite domains and solutions of recursive domain equations from controlled approximation stages
\cite{Plotkin1978,SmythPlotkin1982}. A fundamental
requirement for interpreting higher-order types is Cartesian closure.
Jung's systematic study of
Cartesian closed categories of domains placed bifinite domains, their Scott-continuous retracts, the RB-domains, and the broader class of FS-domains at the centre of this finite-approximation programme~\cite{Jung1989,AbramskyJung1994}. 
In particular, pointed FS-domains form one of the  Cartesian closed full subcategories of the category of pointed domains with
Scott-continuous maps as morphisms. The structure of these classes and their order-topological foundations were subsequently developed further
by Lawson, Scott, and their collaborators~\cite{GierzEtAl2003}.

\subsection{Probabilistic powerdomains, Jung--Tix problem and related work}
\label{subsec:intro-jung-tix}

The introduction of probabilistic computation revealed a persistent
obstruction to this programme. The intellectual origins of probabilistic choice and its semantic treatment in domain theory trace back to the foundational contributions of Saheb-Djahromi \cite{Saheb-Djahromi1980}. Later Jones and Plotkin introduced the probabilistic powerdomain in order to model probabilistic choice \cite{JonesPlotkin1989,Jones1990}.

For a dcpo $D$, let $\Vsub(D)$ and $\Vone(D)$ denote, respectively,
the dcpos of continuous subprobabilistic and probabilistic valuations on $D$,
ordered pointwise on Scott-open sets. Continuous valuations provide the
standard domain-theoretic representation of probabilistic choice.
Although the probabilistic powerdomain construction preserves continuity \cite{Jones1990},
the category of all continuous domains and Scott-continuous functions is not Cartesian closed \cite{Jung1989}.
On the other hand, among the familiar full Cartesian closed subcategories with strong finite-approximation properties, the classes of RB-domains and FS-domains stand out as two particularly appropriate candidates for semantic applications.
We have shown that RB-domains are not closed under probabilistic powerdomain construction recently \cite{ChenKouLyu2026}.
It remains unknown whether the subprobabilistic powerdomain $\Vsub$
preserves FS-domains. 

Jung and Tix proved that $\Vsub(P)$ is an RB-domain for any finite rooted tree $P$ and an FS-domain for any finite reversed rooted tree, but their result does not extend to arbitrary finite posets beyond these two classes~\cite{JungTix1998}.
The resulting Jung--Tix compatibility problem asks
whether there exists a suitably well-behaved class of continuous dcpos
that is both Cartesian closed and closed under probabilistic powerdomains. As Jung and Tix emphasized, however, their results
illustrate the difficulty of the problem rather than provide a satisfactory general solution.

Considerable progress in higher-order probabilistic semantics has
nevertheless been achieved by changing the semantic setting rather than
settling this compatibility problem in its original form. Probabilistic
coherence spaces provide a fully abstract model of probabilistic
PCF~\cite{EhrhardPaganiTasson2018}. Quasi-Borel predomains support adequate
semantics for languages with higher-order types, recursion, and continuous
probability distributions~\cite{VakarKammarStaton2019}, while a direct
domain-theoretic semantics for statistical programming languages over
dcpos was developed in~\cite{GoubaultLarrecqJiaTheron2023}. Commutative
probabilistic monads on dcpos yield a sound and adequate semantics for the
Probabilistic FixPoint Calculus%
~\cite{JiaLindenhoviusMisloveZamdzhiev2021}, and such a monad supplies the
classical probabilistic component in the semantics of variational quantum
programming~\cite{JiaEtAl2022}. In another direction, Goubault-Larrecq used
the call-by-push-value paradigm to separate the two semantic requirements
between distinct classes of types: probabilistic choice resides at value
types, while the passage from value types to computation types incorporates
demonic nondeterminism~\cite{Goubault2019}. Di Gianantonio and Edalat instead
work directly with random variables over standard probability spaces,
obtaining probabilistic semantics in an enriched category of Scott
domains~\cite{DiGianantonioEdalat2024}.

Although these alternative approaches have yielded several interesting semantic
models for higher-order probabilistic languages, they do so by modifying
the ambient semantic setting, separating the relevant type disciplines, or
replacing valuations with random variables. 
Consequently, these approaches do not resolve the original Jung--Tix
problem, which calls for a solution within the standard domain-theoretic
setting of continuous dcpos and Scott-continuous maps. A solution in this setting would provide the conceptually
simplest and mathematically most elegant domain-theoretic reconciliation
of higher-order function spaces with probabilistic choice.

Several important advances have clarified why the original problem is difficult. Goubault-Larrecq introduced $\omega$-QRB-domains and proved that they are preserved by the probabilistic powerdomain, finite products, retracts, and expanding bilimits, but the resulting category is not Cartesian closed~\cite{GoubaultLarrecq2012}. Goubault-Larrecq and Jung subsequently proved that QRB-domains coincide with QFS-domains and with Lawson-compact quasicontinuous dcpos, and established the corresponding
probabilistic closure theorem without the earlier countability and pointedness restrictions
\cite{GoubaultLarrecqJung2014}. Lyu and Kou also studied the probabilistic powerdomain from a topological viewpoint \cite{LK2018}. 
Passing from continuous to quasicontinuous domains does not remove this obstruction within the full-subcategory setting.  Jia, Jung, Kou, Li, and Zhao proved that every full Cartesian closed subcategory of the category of quasicontinuous domains and Scott-continuous maps consists entirely of continuous domains~\cite{JiaJungKouLiZhao2015}. Thus a full Cartesian closed solution cannot be obtained merely by enlarging the object class from continuous to quasicontinuous domains.

The obstruction is already visible on finite posets. Even for finite posets, it is difficult to check if their probabilistic powerdomains are FS-domains.  For nearly three decades, the Jung--Tix problem has become one of the central and technically most difficult open problems in domain theory. Its resolution is fundamental to the development of a satisfactory domain-theoretic foundation for higher-order probabilistic  denotational semantics.

\subsection{Main contribution and proof roadmap}
\label{subsec:intro-roadmap}

In this paper, we give a positive answer to the Jung--Tix problem by introducing $\FVA$, a Cartesian closed full subcategory consisting of continuous domains that is closed under probabilistic powerdomains.

The proof strategy is divided into two stages. First, we treat finite posets. We prove that the subprobabilistic powerdomain of every finite poset is an FS-domain. 
The second stage is to use the finite valuation spaces themselves as building blocks for general domains. A domain is called
\emph{finite-valuation approximable} when its identity is the pointwise
supremum of an increasing sequence of maps, each factoring through
$\Vsub(P_n)$ for some finite poset $P_n$. We write $\FVA$ for the resulting
full subcategory. Here ``finite'' refers to $P_n$; the factorization object
$\Vsub(P_n)$ is usually infinite, but its order is controlled by finitely many
upper-set coordinates. Thus the definition replaces finite-image approximation by factorization through finite valuation spaces.

Our main result shows that every object of $\FVA$ is a pointed countably based
FS-domain, and that $\FVA$ contains the terminal dcpo and is closed under
Scott-continuous retracts, finite products, function spaces, and both
$\Vsub$ and $\Vone$. Consequently, $\FVA$ is a full Cartesian closed
subcategory of $\DCPO$, and the subprobabilistic and probabilistic valuation monads restrict to it. This gives a nontrivial finite-structure solution to the category-existence form of the Jung--Tix problem. 
The class of finite-valuation approximable domains contains all countably based bc-domains, while it is incomparable with the class of RB-domains.

The proof proceeds through a sequence of interfaces between domain theory and finite-dimensional analysis. Section~\ref{sec:finite-poset-powerdomains} establishes the finite-poset foundation by constructing explicit FS approximate identities on \(\Vsub(P)\). Section~\ref{sec:factorization-classes} then develops the factorization-approximation framework and specializes it to the generators \(\Vsub(P)\), reducing the main closure questions to finite-dimensional constructions involving ordered convex polytopes, Hasse-cone generators, monotone couplings, randomized grid rounding, interior contraction, and finite stochastic kernels. 
These kernels are subsequently used in two directions: Section~\ref{sec:valuation-closure} applies Choquet integration and Kleisli extension to valuation powerdomains, while Section~\ref{sec:function-spaces-ccc} uses finite sampling and barycentric reconstruction to establish the required function-space closure and Cartesian closedness. Finally, Section~\ref{sec:category-size} compares the resulting class with countably based bc-domains and RB-domains.

\section{Preliminaries}\label{sec:finite-valuations}

\subsection{Domain-theoretic background}
\label{subsec:prelim-domain}

We recall the domain-theoretic notions used below and then fix the
finite-dimensional notation for valuations. For general background on domains
and continuous valuations, see~\cite{AbramskyJung1994,GierzEtAl2003}.
Throughout, $\N=\{0,1,2,\ldots\}$, and directed sets are understood to be
nonempty.

A subset $E$ of a poset is \emph{directed} if every two elements of $E$ have an upper bound in $E$. A \emph{dcpo} is a poset in which every directed subset has a supremum. A dcpo is \emph{pointed} if it has a least element, denoted by $\bot$. Let $P$ be a dcpo. A subset $U$ of $P$ is called \emph{Scott open}, if $U$ is an upper set and $\sup D\in U$ implies $U\cap D\not=\emptyset$ for every directed subset $D$ of $P$. All Scott open subsets of $P$ form a topology, denoted by $\sigma(P)$, which is called the \emph{Scott topology} of $P$. 
 
 For elements $a,b$ of a dcpo, we write $a\waybelow b$ and say that $a$ is \emph{way below} $b$ if, whenever $b\leq\sup E$ for a directed set
$E$, there is $e\in E$ with $a\leq e$. In particular, $a$ is called compact if $a\ll a$. A dcpo $D$ is \emph{continuous} (resp. \emph{algebraic}) if,
for every $x\in D$, the set $\{a\in D:a\waybelow x\}$ (resp. $\{a\in D:a\waybelow a\leq x\}$ ) is directed and has
supremum $x$. For simplicity, a \emph{domain} means  a continuous dcpo in this paper. A subset $B\subseteq D$ is a \emph{basis} if
\(
        B_x=\{b\in B:b\waybelow x\}
\)
is directed with supremum $x$ for every $x\in D$. A  domain is
\emph{countably based} if it has a countable basis.

\begin{lemma}\label{lem:cont-equal}\cite[Exercise II-1.35]{GierzEtAl2003}
Let $D$ be a dcpo. Then the following conditions are equivalent:
\begin{enumerate}
\item[(1)] $D$ is continuous;
\item[(2)] $D$ has a basis;
\item[(3)] for every Scott-open subset $U$ of $D$ and every $x\in U$, there is $a\in D$ such that
\[x\in\operatorname{int}_{\sigma(D)}(\uparrow a)
  \subseteq {\uparrow a} \subseteq U.\]
\end{enumerate}
\end{lemma}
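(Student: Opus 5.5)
I would establish the cycle of implications (1)$\Rightarrow$(2)$\Rightarrow$(3)$\Rightarrow$(1). The whole argument rests on two facts. The first is the interpolation property of $\waybelow$ in a dcpo with a basis. The second is the description of a Scott-open interior in terms of the way-below relation.

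\emph{(1)$\Rightarrow$(2).} This is immediate: if $D$ is continuous, then $B=D$ itself is a basis. By definition, $\{a\in D: a\waybelow x\}$ is directed with supremum $x$ for every $x\in D$.

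\emph{(2)$\Rightarrow$(3).} Let $B$ be a basis, $U$ Scott open, and $x\in U$.
\begin{enumerate}
\item Since $x=\sup B_x$ with $B_x$ directed and $U$ Scott open, there is $a\in B_x$ with $a\in U$. Hence ${\uparrow a}\subseteq U$, because $U$ is an upper set.
\item It remains to show $x\in\Int_{\sigma(D)}({\uparrow a})$. For this I will show that $V=\{y\in D: a\waybelow y\}$ is Scott open. Then $x\in V\subseteq{\uparrow a}$, and so $x$ lies in the interior.
\item $V$ is an upper set, since $a\waybelow y\le y'$ implies $a\waybelow y'$.
\item Inaccessibility by directed suprema is the main obstacle. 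It needs the interpolation property: if $a\waybelow y$, there is $b\in B$ with $a\waybelow b\waybelow y$.
\item Granting interpolation, let $y=\sup E\in V$ with $E$ directed. Pick $b$ with $a\waybelow b\waybelow y$. Since $b\waybelow\sup E$, there is $e\in E$ with $b\le e$. Then $a\waybelow e$, so $e\in V$.
\end{enumerate}

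\emph{Interpolation.} I would prove it by the standard argument. Let $A=\{c\in B:\exists b\in B,\ c\waybelow b\waybelow y\}$. Then $A$ is directed, using directedness of each $B_b$ and of $B_y$. Its supremum is $\sup_{b\in B_y}\sup B_b=\sup B_y=y$. Since $a\waybelow y=\sup A$, we get $a\le c$ for some $c\in A$. If $c\waybelow b\waybelow y$, then $a\le c\waybelow b$, so $a\waybelow b\waybelow y$. The directedness check for $A$ is routine.

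\emph{(3)$\Rightarrow$(1).} Fix $x$ and set $I=\{a: x\in\Int_{\sigma(D)}({\uparrow a})\}$.
\begin{enumerate}
\item Every $a\in I$ satisfies $a\waybelow x$. If $x\le\sup E$ with $E$ directed, then $\sup E$ lies in the Scott-open set $\Int({\uparrow a})$, so some $e\in E$ lies in it, giving $a\le e$.
\item $I$ is directed. It is nonempty by applying (3) with $U=D$. Given $a_1,a_2\in I$, apply (3) to the open set $\Int({\uparrow a_1})\cap\Int({\uparrow a_2})\ni x$. This yields $a\in I$ with ${\uparrow a}$ inside both, so $a_1,a_2\le a$.
\item $\sup I=x$. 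Clearly $\sup I\le x$. If $x\not\le s:=\sup I$, then $U=D\setminus{\downarrow s}$ is Scott open and contains $x$. Applying (3) gives $a\in I$ with ${\uparrow a}\subseteq U$, yet $a\le s$, so $a\notin U$, a contradiction.
\end{enumerate}

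It follows that $\{a: a\waybelow x\}$ contains the directed set $I$ with supremum $x$, and its elements are all below $x$. Moreover, it is itself directed: for $a_1,a_2\waybelow x$, the directedness of $I$ and $\sup I=x$ yield elements of $I$ above $a_1$ and $a_2$, and hence a common upper bound in $I$. Therefore $D$ is continuous.
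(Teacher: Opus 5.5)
Your proof is correct and complete. The cycle (1)$\Rightarrow$(2)$\Rightarrow$(3)$\Rightarrow$(1) goes through as written. This includes the interpolation step, where $A=\bigcup_{b\in B_y}B_b$ is directed with supremum $y$, and the final observation that $\{a: a\ll x\}$ is directed with supremum $x$ because it contains the directed set $I$ with $\sup I=x$.

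The paper gives no proof of this lemma; it simply imports it from \cite{GierzEtAl2003} (Exercise II-1.35), and it likewise cites interpolation separately as \cref{lem:cont-property}(1). So there is no in-paper argument to compare against. Your route is the standard textbook one. It has the advantage of being self-contained, since you derive interpolation directly from a basis rather than assuming full continuity. Your direction (3)$\Rightarrow$(1) is exactly what the paper relies on in \cref{prop:C-continuous}.
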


\begin{lemma}\label{lem:cont-property}
A continuous dcpo $E$ has the following properties:
\begin{enumerate}
\item[(1)] \cite[Theorem I-1.9(ii)]{GierzEtAl2003} Interpolation: if $x\ll y$, then $x\ll z\ll y$ for some $z\in E$.
\item[(2)] \cite[Proposition II-1.10(ii)]{GierzEtAl2003} For $x\in E$, set
\(
  \mathord{\Uparrow}x=\{a\in E:x\ll a\}.
\)
Then $\mathord{\Uparrow}x$ is Scott open, and $\{\mathord{\Uparrow} x: \ x\in E\}$ forms a basis of the Scott topology.
\item[(3)] \cite[Theorem III-4.5]{GierzEtAl2003} $E$ is countably based if and only if its Scott topology has a countable basis.
\end{enumerate}
\end{lemma}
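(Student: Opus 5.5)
The three parts of \cref{lem:cont-property} are standard facts about continuous dcpos, and the plan is to derive each from the definition of continuity and from \cref{lem:cont-equal}. I would treat (1) and (2) in order, since (2) rests on (1), and then handle (3) separately.

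For interpolation (1), fix $x\ll y$ and consider
\[
  A=\{a\in E:\ a\ll z\ll y\ \text{for some } z\in E\}.
\]
First I show that $A$ is directed. Given $a_1\ll z_1\ll y$ and $a_2\ll z_2\ll y$, directedness of $\{z:z\ll y\}$ yields $z\ll y$ with $z_1,z_2\le z$, so $a_1,a_2\ll z$. Directedness of $\{a:a\ll z\}$ then gives a common upper bound $a\ll z$, and this $a$ lies in $A$. Next I show $\sup A=y$: for each $z\ll y$ we have $\sup\{a:a\ll z\}=z$, and taking the supremum over $z$ gives $y$. Since $x\ll y=\sup A$, some $a\in A$ satisfies $x\le a$. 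Then $x\le a\ll z\ll y$, which gives $x\ll z\ll y$.

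For (2), $\mathord{\Uparrow}x$ is an upper set. To see that it is Scott open, suppose $x\ll\sup D$ with $D$ directed. By interpolation there is $z$ with $x\ll z\ll\sup D$. Then $z\le d$ for some $d\in D$, so $x\ll d$ and $d\in\mathord{\Uparrow}x$. For the basis property, take $U$ Scott open and $y\in U$. Since $y=\sup\{a:a\ll y\}$ is a directed supremum inside $U$, some $a\ll y$ lies in $U$. Hence $y\in\mathord{\Uparrow}a\subseteq\up a\subseteq U$.

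Part (3) is the only nontrivial direction, and I expect it to be the main obstacle. Suppose first that $B$ is a countable basis of $E$. I claim that the sets $\mathord{\Uparrow}b$, $b\in B$, form a countable base of $\sigma(E)$. Given $y\in U$, the supremum argument in (2), applied to the directed set $B_y$, yields $b\in B\cap U$ with $b\ll y$, so $y\in\mathord{\Uparrow}b\subseteq U$.

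Conversely, let $\mathcal U$ be a countable base of $\sigma(E)$. For each $W\in\mathcal U$, the set $\up W$-covers in the form $\up a\subseteq W$ are available, but choosing a single $a_W$ per $W$ is not enough. Instead I index by pairs $(W,W')\in\mathcal U^2$ with the following property:
\[
  \text{there exists } a \text{ with } W'\subseteq\up a\subseteq W,
\]
and for each such pair I choose one $a_{W,W'}$. These elements form a countable set $B$.

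To check that $B$ is a basis, take $y$ and $u\ll y$. Interpolate $u\ll v\ll y$. Then $y\in\mathord{\Uparrow}v$, so there is $W'\in\mathcal U$ with $y\in W'\subseteq\mathord{\Uparrow}v\subseteq\up v$. Likewise $v\in\mathord{\Uparrow}u$, so there is $W\in\mathcal U$ with $v\in W\subseteq\mathord{\Uparrow}u$. Since $W'\subseteq\up v\subseteq W$, the pair $(W,W')$ qualifies, and $a=a_{W,W'}$ satisfies $u\ll a$ (because $a\in W$) and $y\in W'\subseteq\up a$. I would also refine $W'\subseteq\mathord{\Uparrow}$ of a chosen witness if $a\ll y$ is needed rather than just $a\le y$. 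To make this work I would choose the pairs so that $W'\subseteq\operatorname{int}\up a$, using \cref{lem:cont-equal}(3). Repeating the interpolation once more then gives $a\ll y$.

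It follows that $B_y$ contains elements above every $u\ll y$, so $\sup B_y=y$. Directedness of $B_y$ follows in the same way: given $b_1,b_2\in B_y$, interpolate to get $b_1,b_2\ll c\ll y$, and then apply the construction with $u=c$. The delicate point throughout is keeping the chosen witnesses way-below $y$ while staying within countably many choices.
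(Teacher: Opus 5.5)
Your arguments are correct. They are the standard textbook proofs behind the results of Gierz et al.\ that the paper simply cites (the paper gives no proof of its own), including the pair-indexing construction for the converse in (3). The one loose spot is the hedged step $a\waybelow y$ for $a=a_{W,W'}$ in (3), and it closes without changing the construction. Since $W'$ is Scott open, $W'\subseteq\up a$ already forces $W'\subseteq\operatorname{int}\up a$. As $\operatorname{int}\up a$ is a Scott-open neighbourhood of $y=\sup\{c:c\waybelow y\}$, it contains some $c\waybelow y$, whence $a\leq c\waybelow y$. Alternatively, interpolate $u\waybelow v\waybelow w\waybelow y$ and take $W'$ to be a basic neighbourhood of $w$ inside $\mathord{\Uparrow}v$, so that $a\leq w\waybelow y$.
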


 A map between dcpos is
\emph{Scott-continuous} if and only if it is monotone and preserves directed
suprema, i.e., it is continuous respect to the Scott topologies.  For dcpos $D$ and $E$, we write $[D\to E]$ for the dcpo of
Scott-continuous maps from $D$ to $E$, ordered pointwise:
\[
  f\leq g
  \quad\Longleftrightarrow\quad
  f(x)\leq g(x)\text{ for every }x\in D,
\]
called the function space. 
Directed suprema in $[D\to E]$ are computed pointwise. Thus, for a
directed family $(f_i)$,
\[
  \left(\sup_i f_i\right)(x)=\sup_i f_i(x).
\]
This observation will be used repeatedly when a construction on valuations
is lifted pointwise to a function space. We write $\DCPO$ for the category
of dcpos and Scott-continuous maps.

\begin{lemma}\cite[Lemma 1.21]{Jung1989}\label{lem:dcpo-Cartesian}
Let ${\bf A}$ be a full subcategory of $\DCPO$. Then ${\bf A}$ is Cartesian closed iff its terminal object, binary products, and exponentials all exist and inherited from $\DCPO$, that is, the terminal object is order-isomorphic to the singleton dcpo, the binary products of any two objects $A_1,A_2\in{\bf A}$ is order-isomorphic to the cartesian product $A_1\times A_2$ and the exponential object $A_2^{A_1}$ is order-isomorphic to $[A_1\to A_2]$.
\end{lemma}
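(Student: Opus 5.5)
The plan is to prove the two directions of \cref{lem:dcpo-Cartesian} separately. The backward direction is a routine verification. The forward direction identifies the abstract categorical structure with the concrete one by probing objects of $\mathbf A$ with maps out of a one-point dcpo.

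\textbf{Backward direction.} Assume the singleton $1$, the cartesian products $A_1\times A_2$ and the function spaces $[A_1\to A_2]$ all lie in $\mathbf A$ (up to isomorphism). Since $\mathbf A$ is full in $\DCPO$, it suffices to check the universal properties inside $\DCPO$.
\begin{enumerate}
\item Every dcpo has exactly one map to $1$.
\item Pairing $\langle f,g\rangle$ of Scott-continuous maps is Scott-continuous, because directed suprema in $A_1\times A_2$ are computed coordinatewise. The projections are Scott-continuous.
\item Evaluation $\mathrm{ev}\colon [A_1\to A_2]\times A_1\to A_2$ is Scott-continuous. This follows because a map on a product of dcpos is Scott-continuous iff it is so in each variable separately, and because directed suprema in $[A_1\to A_2]$ are pointwise.
\item Currying $\Lambda f(c)=f(c,-)$ again yields Scott-continuous maps, using separate continuity and pointwise suprema.
\end{enumerate}
The usual $\beta\eta$ equations then hold, so these data give a Cartesian closed structure on $\mathbf A$.

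\textbf{Forward direction.} Suppose $\mathbf A$ is Cartesian closed with terminal object $T$, products $A_1\sqcap A_2$ and exponentials $A_2^{A_1}$.
\begin{enumerate}
\item \emph{$T$ is a singleton.} $T$ is nonempty, since otherwise only empty objects would map to it. If $T$ had two points, the two constant maps from $T$ to $T$ would be distinct endomorphisms of a terminal object, which is impossible. So $T\cong 1$.
\item \emph{Global elements are points.} For each $A\in\mathbf A$, the morphisms $T\to A$ correspond bijectively to the points of $A$. This correspondence is monotone in both directions, because the hom-sets of a full subcategory of $\DCPO$ carry the pointwise order.
\item \emph{Products are concrete.} Points of $A_1\sqcap A_2$ correspond to pairs of points by the universal property, so the canonical map $\langle \pi_1,\pi_2\rangle\colon A_1\sqcap A_2\to A_1\times A_2$ is a continuous bijection. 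To see that its inverse is monotone, take $(a_1,a_2)\le(b_1,b_2)$. The pair maps $\langle a_1,a_2\rangle$ and $\langle b_1,b_2\rangle$ from $T$ correspond to points $x,y$ of $A_1\sqcap A_2$. I need $x\le y$, and mere uniqueness of mediating maps does not obviously give monotonicity, so an extra device is required.

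I would use the two-point chain $\mathbb 2$ as a test object, if it lies in $\mathbf A$, or more generally any object containing a comparable pair. A comparable pair of points in $A_i$ is the same as a monotone map from $\mathbb 2$, and such a map is automatically Scott-continuous since $\mathbb 2$ is finite. The universal property then produces a map $\mathbb 2\to A_1\sqcap A_2$, which gives $x\le y$.

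If $\mathbb 2\notin\mathbf A$, I would instead take the test domain to be $A_i$ itself, using maps that are constant at $a_i$ below one threshold and equal to $b_i$ above it. This is Jung's trick: for a point $c$ of an object $C$, the map sending $\down c$ to $a_i$ and everything else to $b_i$ is Scott-continuous when $c$ is chosen suitably. This gives monotonicity of the inverse, hence an order-isomorphism.
\item \emph{Exponentials are concrete.} Points of $A_2^{A_1}$ correspond to morphisms $T\sqcap A_1\cong A_1\to A_2$, so $A_2^{A_1}$ is in bijection with $[A_1\to A_2]$. This bijection is monotone because evaluation is monotone. The order on the reverse side is handled exactly as for products.
\end{enumerate}

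\textbf{Main obstacle.} The hardest step is showing that the abstract product and exponential carry the pointwise order, rather than merely having the right underlying set. This is a reflection of order, and it needs a test object in $\mathbf A$ that detects comparabilities. I expect to handle it by reducing to $\mathbb 2$ via a retraction argument, which works whenever some object of $\mathbf A$ has two comparable distinct points. In the degenerate case where all objects are antichains, the claim is checked directly.
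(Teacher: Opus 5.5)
The paper gives no proof of this lemma; it only cites it from Jung's thesis, so there is no in-paper argument to compare with. Your outline is a correct, self-contained proof along the standard lines: identify points with global elements $T\to A$, then reflect the order using test morphisms. Two points need tightening.

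\textbf{The test objects.} The case distinctions in your ``main obstacle'' are unnecessary, because a test object is always available: the object in which the comparability occurs. Also, nothing in the hypotheses puts the two-element chain in $\mathbf A$, since $\mathbf A$ need not be retract-closed. So ``reducing to it'' can only mean composing with the retraction maps directly.

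For products, change one coordinate at a time. If $a_1\leq b_1$, the morphism $\langle\id_{A_1},\mathrm{const}_{a_2}\rangle\colon A_1\to A_1\sqcap A_2$ sends $a_1$ and $b_1$ to the points corresponding to $(a_1,a_2)$ and $(b_1,a_2)$. Its monotonicity gives the inequality, with no step maps needed.

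For exponentials, ``exactly as for products'' hides a different construction, which you should write out.
\begin{itemize}
\item Given $f\leq g$ with $f\neq g$, pick $x_0$ with $f(x_0)<g(x_0)$.
\item On $A_2\times A_1$ define $h(c,x)=f(x)$ if $c\leq f(x_0)$, and $h(c,x)=g(x)$ otherwise.
\item $h$ is Scott-continuous because $\down f(x_0)$ is Scott-closed and $f\leq g$. By the product step it is therefore a morphism $A_2\sqcap A_1\to A_2$.
\item Naturality of currying, $\Lambda h\circ d=\Lambda\bigl(h\circ(d\times\id_{A_1})\bigr)$ for points $d$, shows that $\Lambda h\colon A_2\to A_2^{A_1}$ sends $f(x_0)$ and $g(x_0)$ to the points corresponding to $f$ and $g$.
\item Monotonicity of $\Lambda h$ then gives the required inequality.
\end{itemize}
Your proof that $e\mapsto\mathrm{ev}(e,-)$ is monotone already uses that the inverse of the product comparison map is monotone. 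So the product step must come first, as you have it.

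\textbf{Nonemptiness of $T$.} The claim ``$T$ is nonempty'' is justified only when $\mathbf A$ contains a nonempty dcpo. Under the paper's definitions the empty poset is a dcpo. The full subcategory whose only object is $\varnothing$ is Cartesian closed, with terminal object $\varnothing$ and exponential $\varnothing^{\varnothing}=\varnothing$, whereas $[\varnothing\to\varnothing]$ is a singleton. So the statement tacitly excludes this trivial case. This is harmless for the paper, whose objects are all pointed.
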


In general, the function space of two continuous dcpos need not be continuous~\cite{Jung1989}. Therefore, the category of all continuous dcpos is not Cartesian closed.

\begin{definition}\label{def:FS}
\cite[Definition~II-2.15]{GierzEtAl2003}.  Let $D$ be a dcpo. 
 \begin{enumerate}
     \item[(1)] 
 A Scott-continuous map $f:D\to D$ is
\emph{finitely separated from the identity} if there is a finite subset
$M\subseteq D$ such that, for every $x\in D$, some $m\in M$ satisfies
\(
        f(x)\leq m\leq x.
\)
The set $M$ is called a \emph{finite separator} for $f$.
 \item[(2)] 
A family $(f_i)_{i\in I}$ of Scott-continuous self-maps on $D$, directed in the
pointwise order, is called an \emph{FS approximate identity} if every $f_i$ is
finitely separated from the identity and $\sup_i f_i=\id_D$ pointwise. A
dcpo admitting an FS approximate identity is called an \emph{FS-domain}. 
We write $\FS$ for the full subcategory of $\DCPO$ consisting of all 
FS-domains.
\end{enumerate}
\end{definition}

\begin{lemma}\cite[Lemma II-2.16]{GierzEtAl2003}\label{lem:FS-properties}
Let $D$ be a dcpo.
\begin{enumerate}
\item[(1)] If a Scott-continuous map $f:D\to D$ is finitely separated from the identity, then $f(x)\ll x$ for every $x\in D$.
\item[(2)] Every FS-domain is continuous.
\end{enumerate}
\end{lemma}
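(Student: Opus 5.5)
For part (1), I fix $x\in D$ and a finite separator $M$ for $f$, and show $f(x)\ll x$ directly from the definition of the way-below relation. So let $E$ be directed with $x\leq\sup E$. For each $e\in E$ the separation property gives some $m_e\in M$ with $f(e)\leq m_e\leq e$. Because $M$ is finite and $E$ is directed, the pigeonhole principle shows that some $m\in M$ occurs cofinally: the set $E_m=\{e\in E: m_e=m\}$ is cofinal in $E$.

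To justify this, suppose no $m$ had a cofinal $E_m$. Then for each $m\in M$ there is $e_m\in E$ with no element of $E_m$ above $e_m$. Take an upper bound $e^*\in E$ of the finitely many $e_m$. Then $e^*$ lies in some $E_{m'}$ and sits above $e_{m'}$, a contradiction.

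Next I use Scott-continuity of $f$. Since $E_m$ is cofinal, it is directed and $\sup E_m=\sup E$. Hence
\[
f(x)\leq f(\sup E)=f(\sup E_m)=\sup_{e\in E_m} f(e)\leq m.
\]
Here each $f(e)\leq m_e=m$ for $e\in E_m$. Finally, $m\leq e$ for any $e\in E_m\subseteq E$, so $f(x)\leq e$ for some $e\in E$, and therefore $f(x)\ll x$. The only delicate point is the cofinality argument; the rest is immediate.

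For part (2), let $(f_i)$ be an FS approximate identity on $D$. I verify condition (2) of \cref{lem:cont-equal}, taking the candidate basis $B=D$, which means showing that $\{a: a\ll x\}$ is directed with supremum $x$. The family $\{f_i(x)\}_i$ is directed because $(f_i)$ is pointwise directed, its supremum is $x$, and by (1) it consists of elements way below $x$. Hence $\sup\{a:a\ll x\}=x$.

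For directedness, take $a,b\ll x$. Since $x=\sup_i f_i(x)$ is a directed supremum, the definition of $\ll$ gives $i,j$ with $a\leq f_i(x)$ and $b\leq f_j(x)$. Choosing $k$ with $f_i,f_j\leq f_k$, the element $f_k(x)\ll x$ is an upper bound of both $a$ and $b$. Nonemptiness is clear, since $f_i(x)\ll x$ for any $i$. So $D$ is continuous.
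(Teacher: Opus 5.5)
Your proof is correct. The paper gives no proof of its own and only cites Lemma II-2.16 of Gierz et al.; your argument is essentially the standard one from that source: for (1), a pigeonhole choice of a single separator point $m$ that occurs cofinally in the directed set, and for (2), the directed family $(f_i(x))_i$ of way-below approximants with supremum $x$.
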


The following important result is also due to Jung \cite{Jung1989,AbramskyJung1994}.

\begin{theorem}\label{thm:FS-cate}\cite[Theorem 4.2.11]{AbramskyJung1994}
The full subcategory of pointed FS-domains and Scott-continuous maps is Cartesian closed.
\end{theorem}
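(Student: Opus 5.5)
This is Jung's theorem, so I will follow the standard argument of \cite{Jung1989,AbramskyJung1994} and reduce Cartesian closedness to the concrete description in \cref{lem:dcpo-Cartesian}. It suffices to show three things: the singleton dcpo is a pointed FS-domain; $D\times E$ is a pointed FS-domain whenever $D,E$ are; and $[D\to E]$ is a pointed FS-domain whenever $D,E$ are. The universal properties are then inherited from $\DCPO$.

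The terminal case is trivial, with the identity separated by $M=\{\bot\}$. For products, pointedness is clear. Take FS approximate identities $(f_i)$ on $D$ and $(g_j)$ on $E$ with separators $M_i,N_j$. The family $(f_i\times g_j)_{(i,j)}$ is directed because both factors are, and its supremum is $\id_{D\times E}$ since suprema are computed coordinatewise. The set $M_i\times N_j$ separates $f_i\times g_j$ from the identity.

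For the function space, $[D\to E]$ is pointed by the constant map $\bot$. The natural candidate approximants are $\Phi_{i,j}(h)=g_j\circ h\circ f_i$. These are Scott-continuous in $h$, directed in $(i,j)$, and converge pointwise to $\id$ by joint continuity of composition. The difficulty is that $\Phi_{i,j}$ need not be finitely separated, since $\{g_j\circ h\circ f_i\}$ ranges over all monotone-ish maps $f_i(D)\to g_j(E)$, and these images may be infinite. I will use Jung's trick of passing to separating maps with finite image up to composition, in two steps.

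\emph{Step 1.} Since $f$ is finitely separated with separator $M$, \cref{lem:FS-properties} gives $f(x)\ll x$ for all $x$. I first show that, after interpolating, $f'\circ f$ factors as "finite" data: $\Phi(h)$ is determined by the finitely many values $h(m)$, $m\in M$, in the sense that $g\circ h\circ f\le g\circ \tilde h\circ f$ whenever $g h(m)\le g\tilde h(m)$ on suitable elements. Concretely, I would write the approximant as $\Phi(h)=\sup$ of step functions $x\mapsto g(h(m))$ over $m\in M$ with $m$ below $x$ in the appropriate open sense.

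\emph{Step 2.} The separator for $\Phi$ is the finite set of maps built from choices $m\mapsto n\in N$ on $M$. I need to verify that these are Scott-continuous, which is the obstacle. Step functions $\sup_m (\Uparrow m\searrow n_m)$ require $m\ll x$ and directedness, and the families may fail to be consistent. Following Abramsky--Jung, I will handle this by replacing $f$ with $f\circ f$ and using a finite separator of a composite $f_i\circ f_k$ with $f_k\ge$ a suitable approximant, so that the relevant $M$ consists of elements with $m\ll x$ uniformly. I will also bound $g\circ h\circ f\le s\le h$ with $s$ a finite join of step functions; joins exist because $E$'s relevant subsets are bounded by $h(x)$. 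Monotonicity of $\Phi$ and the directedness of the resulting family follow from monotonicity of $(f_i),(g_j)$.

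I expect that constructing Scott-continuous finite separators for $\Phi_{i,j}$, using only the FS property without consistency/lub assumptions, is the main technical step. If it proves too delicate, I would cite \cite[Theorem 4.2.11]{AbramskyJung1994} for that step.
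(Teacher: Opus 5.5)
Your reduction via \cref{lem:dcpo-Cartesian} and your terminal and product cases are fine. The function-space case, however, is the whole content of Jung's theorem, and you do not carry it out; the construction you sketch also fails. You plan to fit a finite join of step functions between $g\circ h\circ f$ and $h$, and you assert that these joins exist because the relevant sets are bounded by $h(x)$. FS-domains are not bounded complete. The finite poset $B_5$ of \cref{prop:B5-not-FVA} is an FS-domain (its identity is separated by $B_5$ itself), yet $a$ and $b$ are bounded above by $c$ and $d$ and have no join. So your candidate separators are undefined in general. Your Step~1 claim that $g\circ h\circ f$ is determined by the finitely many values $h(m)$, $m\in M$, is also false: $g\circ h\circ f$ depends on $h$ at every point of $f[D]$. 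Citing Abramsky--Jung for this step is exactly what the paper does, since it gives no proof of this theorem. But then your proposal establishes only the trivial cases.

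The missing idea is a way to obtain finitely many separating maps without any joins, using finite ``types'' together with squaring. Let $M$ separate $f$ and $N$ separate $g$. For $h\in[D\to E]$ put $R_h=\{(m,n)\in M\times N: n\le h(m)\}$, and fix one representative $h_R$ for each of the finitely many realized types $R$. Suppose $R_h=R_{h'}$. For every $x$, choose $m\in M$ with $f(x)\le m\le x$ and $n\in N$ with $g(h(m))\le n\le h(m)$. Then $(m,n)\in R_{h'}$, so $g(h(f(x)))\le n\le h'(m)\le h'(x)$; that is, $g\circ h\circ f\le h'$. Applying this to the pairs $(h,h_{R_h})$ and $(h_{R_h},h)$ gives
\[
  g^2\circ h\circ f^2\;\le\;g\circ h_{R_h}\circ f\;\le\;h .
\]
Hence the finite set $\{g\circ h_R\circ f\}$ separates $h\mapsto g^2\circ h\circ f^2$ from the identity. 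By continuity of composition, $\sup_i f_i^2=\id_D$ and $\sup_j g_j^2=\id_E$, so $\bigl(h\mapsto g_j^2\circ h\circ f_i^2\bigr)_{(i,j)}$ is an FS approximate identity on $[D\to E]$. The role of squaring is to turn the ``cross'' inequality $g\circ h\circ f\le h'$ for equal types into a genuine sandwich. It is not there to obtain strong separation, which is how you use it in Step~2.
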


In the classical setting of the category of pointed continuous dcpos and Scott-continuous maps, the two maximal Cartesian closed full subcategories are the FS-domains and the continuous L-domains \cite{Jung1989,Jung1990}. However, the valuation powerdomain of a L-domain need not be an L-domain in general. Thus, among the two classical maximal Cartesian closed classes of domains, FS-domains provide the natural setting in which to seek a positive solution to the Jung--Tix problem.

\subsection[Continuous valuations and Kleisli extension]{Continuous valuations, integration, and Kleisli extension}
\label{subsec:prelim-valuations}

For a real number $r$, write $r^+=\max\{r,0\}$.

\begin{definition}\cite{JonesPlotkin1989}
Let $X$ be a topological space and let $\mathcal O(X)$ be its lattice of open sets. A \emph{continuous valuation} on $X$ is a Scott-continuous map
$\nu:\mathcal O(X)\to[0,\infty]$ satisfying
\[
  \nu(\varnothing)=0,
  \qquad
  \nu(U)+\nu(V)=\nu(U\cup V)+\nu(U\cap V).
\]
It is a
\emph{subprobabilistic valuation} if $\nu(X)\leq1$, and a 
\emph{probabilistic valuation} if $\nu(X)=1$. Here, $[0,\infty]$ is endowed with the Scott topology.

These valuations are ordered pointwise and form dcpos denoted by $\mathcal V(X)$, $\Vsub(X)$, and $\Vone(X)$, respectively. When $X$ is a dcpo endowed with its Scott topology, we call $\Vsub(X)$ the \emph{subprobabilistic powerdomain} over $X$ and $\Vone(X)$ the \emph{probabilistic powerdomain} over $X$.

\end{definition}

Here, $\nu(U)$ is interpreted as the mass assigned to the observable event
$U$, and the displayed modularity equation is the finite-additivity law in
its form appropriate to open sets. Scott continuity of $\nu$ means that the
mass of a directed union of open sets is the supremum of their masses.

Continuous valuations provide an order-theoretic interpretation of probabilistic choice that is compatible with recursion and higher-order functions. They have been used in denotational semantics for probabilistic PCF, statistical probabilistic programming with continuous distributions, and mixed classical--quantum programming languages \cite{JonesPlotkin1989,Jones1990,EhrhardPaganiTasson2018,VakarKammarStaton2019,GoubaultLarrecqJiaTheron2023,JiaLindenhoviusMisloveZamdzhiev2021,JiaEtAl2022}. 
In particular, Goubault-Larrecq, Jia, and Th\'eron use domain-theoretic valuation methods to give semantics to a statistical language with higher-order functions, continuous distributions, conditioning, and recursion \cite{GoubaultLarrecqJiaTheron2023}. 

In this paper, all valuation powerdomains are taken over dcpos endowed
with their Scott topologies. If $D$ is continuous, then $\Vsub(D)$ is
continuous \cite[Corollary 5.4]{Jones1990}; if $D$ is countably based, then so is $\Vsub(D)$
\cite[Corollary 5.5]{Jones1990}. For pointed $D$,
\cref{lem:valuation-retract} will exhibit $\Vone(D)$ as a
Scott-continuous retract of $\Vsub(D)$.

For a Scott-continuous map $f:D\to E$ between two dcpos $D$ and $E$, define its pushforward 
\[
  \Vsub(f):\Vsub(D)\longrightarrow\Vsub(E),
  \qquad
  \Vsub(f)(\nu)(U)=\nu\bigl(f^{-1}(U)\bigr)
  \quad(U\in\sigma(E)).
\]
The unit is $\eta_D(x)=\delta_x$, where $\delta_x$ is the Dirac valuation concentrated at $x$. We use the same symbol $\eta_D$ for its corestriction to $\Vone(D)$ when the codomain is clear.

The following is the local continuity of the continuous-valuation monad.  

\begin{lemma}\label{lem:V-local}
Let \((f_i)_{i\in I}\) be a directed family of Scott-continuous maps
\[
  f_i:D\longrightarrow E,
\]
and let \(f=\sup_i f_i\) pointwise.  Then
\[
  \Vsub(f)=\sup_i\Vsub(f_i)
\]
pointwise in \([\Vsub(D)\to\Vsub(E)]\).
In particular, if \(D=E\), \(f_i\leq\id_D\) for every \(i\), and
\(\sup_i f_i=\id_D\), then
\[
  \Vsub(f_i)\leq\id_{\Vsub(D)}
  \qquad\text{and}\qquad
  \sup_i\Vsub(f_i)=\id_{\Vsub(D)}.
\]
\end{lemma}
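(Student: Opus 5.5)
The strategy is to check both parts of \cref{lem:V-local} directly on Scott-open sets. For this, I use two facts: the order on $\Vsub(E)$ is pointwise on $\sigma(E)$, and every continuous valuation is Scott-continuous as a map $\sigma(D)\to[0,1]$.

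\emph{Monotonicity.} First I check that $f\mapsto\Vsub(f)$ is monotone. Suppose $g\leq h$ pointwise and $U\in\sigma(E)$. Since $U$ is an upper set, $g(x)\in U$ forces $h(x)\in U$, so $g^{-1}(U)\subseteq h^{-1}(U)$. Monotonicity of $\nu$ then gives $\Vsub(g)(\nu)(U)\leq\Vsub(h)(\nu)(U)$. Applying this to the directed family $(f_i)$, the family $(\Vsub(f_i)(\nu))_i$ is directed in $\Vsub(E)$ and bounded above by $\Vsub(f)(\nu)$.

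\emph{Reverse inequality.} The key step is the identity of open sets
\[
  f^{-1}(U)=\bigcup_i f_i^{-1}(U),
\]
where the union is directed. The inclusion $\supseteq$ is monotonicity again. For $\subseteq$: if $f(x)=\sup_i f_i(x)\in U$, then because $U$ is Scott open and $(f_i(x))_i$ is directed, some $f_i(x)$ lies in $U$. Since $\nu$ is Scott-continuous on $\sigma(D)$,
\[
  \nu\bigl(f^{-1}(U)\bigr)=\sup_i\nu\bigl(f_i^{-1}(U)\bigr).
\]
Directed suprema in $\Vsub(E)$ are computed pointwise on open sets: the pointwise supremum of a directed family of continuous valuations is again modular and Scott-continuous, because suprema commute with the directed limits involved. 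Hence $\Vsub(f)(\nu)=\sup_i\Vsub(f_i)(\nu)$. This holds for every $\nu$, which is the claimed pointwise equality in $[\Vsub(D)\to\Vsub(E)]$. Each $\Vsub(f_i)$ is Scott-continuous by the same kind of argument: it is monotone, and directed suprema of valuations are pointwise.

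\emph{The special case.} Take $D=E$, $f_i\leq\id_D$ and $\sup_i f_i=\id_D$. Monotonicity gives $\Vsub(f_i)\leq\Vsub(\id_D)$, and $\Vsub(\id_D)=\id_{\Vsub(D)}$ by definition of the pushforward. The main equality then gives $\sup_i\Vsub(f_i)=\id_{\Vsub(D)}$.

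The only delicate point is the pointwise computation of directed suprema in $\Vsub(E)$, specifically that modularity and Scott continuity survive taking the supremum. I will cite this as standard from \cite{Jones1990}, or verify it directly by interchanging two directed suprema of reals.
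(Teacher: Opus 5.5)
Your proposal is correct and matches the paper's proof: both rest on the directed-union identity $f^{-1}(U)=\bigcup_i f_i^{-1}(U)$, with upperness of $U$ giving one inclusion and Scott openness the other, followed by Scott continuity of $\nu$ and the pointwise computation of directed suprema in $\Vsub(E)$. Your route to $\Vsub(f_i)\leq\id_{\Vsub(D)}$, via monotonicity of $g\mapsto\Vsub(g)$ and $\Vsub(\id_D)=\id_{\Vsub(D)}$, is the same inclusion $f_i^{-1}(U)\subseteq U$ that the paper uses directly, just phrased more generally.
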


\begin{proof}
Fix \(\nu\in\Vsub(D)\) and a Scott-open set \(U\subseteq E\).  Since the
family \((f_i)_i\) is directed, the family
\((f_i^{-1}(U))_i\) is directed under inclusion.  Moreover,
\[
  f^{-1}(U)=\bigcup_i f_i^{-1}(U).
\]
Indeed, the inclusion from right to left follows from \(f_i\leq f\) and the
upperness of \(U\).  Conversely, if \(f(x)=\sup_i f_i(x)\in U\), Scott
openness gives \(f_i(x)\in U\) for some \(i\).  Hence the Scott continuity
of \(\nu\) yields
\[
\begin{aligned}
  \bigl(\Vsub(f)(\nu)\bigr)(U)
  &=
  \nu\bigl(f^{-1}(U)\bigr)\\
  &=
  \sup_i\nu\bigl(f_i^{-1}(U)\bigr)\\
  &=
  \sup_i\bigl(\Vsub(f_i)(\nu)\bigr)(U).
\end{aligned}
\]
Directed suprema of continuous valuations are computed pointwise on
Scott-open sets~\cite{JonesPlotkin1989,Jones1990}, so the first assertion
follows.

Now assume \(f_i\leq\id_D\).  If \(f_i(x)\in U\), then \(x\in U\) because
\(U\) is upper.  Thus \(f_i^{-1}(U)\subseteq U\), and hence
\[
  \Vsub(f_i)(\nu)(U)
  =
  \nu\bigl(f_i^{-1}(U)\bigr)
  \leq
  \nu(U).
\]
Therefore \(\Vsub(f_i)\leq\id_{\Vsub(D)}\).  The last assertion follows by
applying the first part to \(\sup_i f_i=\id_D\).
\end{proof}


Equip \([0,1]\) with its usual order and Scott topology. Let
\(\nu\in\Vsub(D)\), and let
\(
    h:D\longrightarrow [0,1]
\)
be Scott-continuous. We use the standard integration with respect to
continuous valuations, defined by the layer-cake formula
\[
  \int_D h\,d\nu
  =
  \int_0^1
  \nu\bigl(h^{-1}((t,1])\bigr)\,dt .
\]
The integral on the right-hand side is the ordinary Riemann integral.
More generally, for a lower semicontinuous map
\(
    h:D\longrightarrow\overline{\mathbb R}_{+},
\)
we define
\[
  \int_D h\,d\nu
  =
  \sup_{r>0}
  \int_0^r
  \nu\bigl(h^{-1}((t,\infty])\bigr)\,dt .
\]
This is the usual \emph{Choquet integral} associated with continuous
valuations~\cite{Tix1995,GoubaultLarrecq2022}.
For $[0,1]$-valued Scott-continuous functions, it agrees with
the simple-function definition of integration used by
Jones and Plotkin in \cite{JonesPlotkin1989}.

The following theorem is the directed form of monotone convergence for integration with respect to continuous valuations, see \cite[Theorem~3.1]{JonesPlotkin1989} or \cite[Satz 4.4]{Tix1995}.

\begin{theorem}\label{thm:jones-monotone-convergence}
Let $D$ be a dcpo, let $\nu\in\Vsub(D)$, and let $(h_i)_{i\in I}$ be a directed family of Scott-continuous maps $h_i:D\to[0,1]$, ordered pointwise.
Then
\[
  \int_D\sup_{i\in I}h_i\,d\nu
  =
  \sup_{i\in I}\int_Dh_i\,d\nu.
\]
\end{theorem}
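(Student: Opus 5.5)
We prove the statement from the layer-cake formula together with Scott continuity of \(\nu\) and monotone convergence for the Riemann integral. Put \(h=\sup_{i\in I}h_i\). Since \(h\) is a pointwise directed supremum of Scott-continuous maps, it is Scott-continuous, so \(\int_D h\,d\nu\) is defined. Monotonicity of the integral in the integrand follows directly from the layer-cake formula. If \(h_i\leq h\), then \(h_i^{-1}((t,1])\subseteq h^{-1}((t,1])\) for every \(t\), and \(\nu\) is monotone. Hence \(\sup_i\int_D h_i\,d\nu\leq\int_D h\,d\nu\), and only the reverse inequality needs work.

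\textbf{Level sets.} Fix \(t\in[0,1)\). Since \((t,1]\) is Scott open in \([0,1]\), the argument of \cref{lem:V-local} applies. If \(h(x)=\sup_i h_i(x)>t\), then \(h_i(x)>t\) for some \(i\). Therefore
\[
  h^{-1}((t,1])=\bigcup_{i\in I}h_i^{-1}((t,1]),
\]
and this union is directed because \((h_i)\) is directed. Scott continuity of \(\nu\) then gives
\[
  \nu\bigl(h^{-1}((t,1])\bigr)=\sup_i\nu\bigl(h_i^{-1}((t,1])\bigr)
\]
for every \(t\).

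\textbf{Passing the supremum through the integral.} Write \(g_i(t)=\nu(h_i^{-1}((t,1]))\) and \(g(t)=\nu(h^{-1}((t,1]))\). These are antitone functions \([0,1]\to[0,1]\), so they are Riemann integrable. The family \((g_i)\) is directed with pointwise supremum \(g\). The main obstacle is that \(I\) need not be countable, so the ordinary monotone convergence theorem does not apply directly. We avoid it by working with lower Riemann sums.

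Fix \(\varepsilon>0\) and choose a partition \(0=t_0<\dots<t_n=1\) such that
\[
  \sum_k g(t_k)(t_k-t_{k-1})\geq\int_0^1 g\,dt-\varepsilon.
\]
This is possible because \(g\) is antitone and integrable, so its lower sums, which use right endpoints, converge to the integral.

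Only finitely many points \(t_k\) are involved. For each \(k\), pick \(i_k\) with \(g_{i_k}(t_k)\geq g(t_k)-\varepsilon\), and then use directedness to take a single \(i\) above all the \(i_k\). Because \(g_i\) is antitone, \(g_i(t)\geq g_i(t_k)\) on \([t_{k-1},t_k]\). Summing over the partition,
\[
  \int_0^1 g_i\,dt\geq\sum_k g_i(t_k)(t_k-t_{k-1})\geq\int_0^1 g\,dt-2\varepsilon.
\]
Letting \(\varepsilon\to0\) yields \(\sup_i\int_D h_i\,d\nu\geq\int_D h\,d\nu\), which together with the first paragraph proves the theorem.
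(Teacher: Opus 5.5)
Your proof is correct, and it takes a different route from the paper, which gives no argument for this theorem and instead imports it from Jones--Plotkin (Theorem~3.1) and Tix (Satz~4.4). You prove it directly from the layer-cake definition that the paper actually adopts, in three steps:
\begin{enumerate}
\item The level-set identity $h^{-1}((t,1])=\bigcup_i h_i^{-1}((t,1])$ is the same computation as in \cref{lem:V-local}.
\item Scott continuity of $\nu$ turns that identity into the pointwise directed convergence $g=\sup_i g_i$ of antitone functions on $[0,1]$.
\item Your finite right-endpoint (lower Darboux) sum argument then exchanges the directed supremum with the Riemann integral using only finitely many evaluation points. No sequential monotone convergence theorem is needed, so an uncountable index set causes no trouble.
\end{enumerate}
The citation buys brevity and access to the more general theory in those sources. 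However, Jones and Plotkin define the integral through simple functions, so the citation also relies on the agreement of the two definitions of the integral, which the paper asserts without proof. Your argument removes that dependency. It makes this step self-contained for exactly the $[0,1]$-valued integrands the paper needs, e.g.\ $x\mapsto k_i(x)(U)$ in \cref{lem:kleisli-local}.
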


In particular, for every Scott-open set \(U\subseteq D\),
\(
    \int_D\chi_U\,d\nu=\nu(U),
\)
and hence
\(
    \int_D1\,d\nu=\nu(D).
\)
Now let
\(
    k:D\longrightarrow\Vsub(E)
\)
be Scott-continuous. Such a map is called a \emph{valuation kernel}.
For every Scott-open set \(U\subseteq E\), the evaluation map
\[
    x\longmapsto k(x)(U)
\]
is Scott-continuous from \(D\) to \([0,1]\). The \emph{Kleisli extension} of
\(k\) is the map
\[
k^\dagger:\Vsub(D)\longrightarrow\Vsub(E)
\]
defined by
\begin{equation}\label{eq:kleisli-definition}
    (k^\dagger\nu)(U)
    =
    \int_D k(x)(U)\,d\nu(x).
\end{equation} 

The following lemma can be found in \cite[Lemma 3.1, Propositionn 3.2]{GoubaultLarrecqJiaTheron2023}.

\begin{lemma}
\label{lem:valuation-kleisli-laws}
Let $C,D,E,F$ be dcpos, and let
\[
k:D\longrightarrow\Vsub(E),
  \qquad
  \ell:E\longrightarrow\Vsub(F)
\]
be Scott-continuous kernels. The following properties hold.
\begin{enumerate}[label=\textup{(\roman*)}]
\item The map $k^\dagger:\Vsub(D)\to\Vsub(E)$ defined by
\eqref{eq:kleisli-definition} is Scott-continuous.
~Kleisli extension is also monotone in the kernel: if $k\leq k'$,
then $k^\dagger\leq(k')^\dagger$.
\item The two unit laws are
\begin{equation}\label{eq:kleisli-unit-laws}
  \eta_D^\dagger=\id_{\Vsub(D)},
  \qquad
  k^\dagger\circ\eta_D=k.
\end{equation}
\item Kleisli extension satisfies the associativity law
\begin{equation}\label{eq:kleisli-associativity}
  (\ell^\dagger\circ k)^\dagger
  =\ell^\dagger\circ k^\dagger.
\end{equation}
\item For every Scott-continuous map $u:D\to E$,
\begin{equation}\label{eq:pushforward-kleisli}
  \Vsub(u)=(\eta_E\circ u)^\dagger.
\end{equation}
\end{enumerate}

Consequently, for Scott-continuous maps $v:C\to D$ and $w:E\to F$,
\begin{equation}\label{eq:kleisli-pushforward-composition}
\begin{aligned}
  (k\circ v)^\dagger&=(k^\dagger\circ\eta_D\circ v)^\dagger=k^\dagger\circ(\eta_D\circ v)^\dagger=k^\dagger\circ\Vsub(v),\\
  (\Vsub(w)\circ k)^\dagger&=((\eta_F\circ w)^\dagger\circ k)^\dagger=(\eta_F\circ w)^\dagger\circ k^\dagger=\Vsub(w)\circ k^\dagger.
\end{aligned}
\end{equation}
\end{lemma}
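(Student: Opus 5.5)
The statement to prove is \cref{lem:valuation-kleisli-laws}, the standard monad laws for Kleisli extension on subprobabilistic valuations. Every item reduces to a property of the Choquet integral. The main tool is \cref{thm:jones-monotone-convergence}, together with two further facts. First, the integral is linear: additive and positively homogeneous in the integrand. Second, it is Scott-continuous in the valuation: for a fixed Scott-continuous $h:D\to[0,1]$, the map $\nu\mapsto\int h\,d\nu$ is monotone and preserves directed suprema. This follows from the layer-cake formula, because each layer $\nu(h^{-1}((t,1]))$ is monotone in $\nu$ and preserves directed suprema, and monotone convergence of Riemann integrals of monotone functions then passes the supremum through the outer integral.

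\emph{Item (i).} First I show that $k^\dagger\nu$ is a continuous valuation. Strictness holds because $k(x)(\varnothing)=0$. Modularity follows from additivity of the integral applied to the pointwise identity $k(x)(U)+k(x)(V)=k(x)(U\cup V)+k(x)(U\cap V)$. Scott continuity in $U$ follows from \cref{thm:jones-monotone-convergence}, since $x\mapsto k(x)(U_i)$ is a directed family with supremum $x\mapsto k(x)(\bigcup U_i)$. The bound $k^\dagger\nu(E)\leq\nu(D)\leq1$ is immediate. Scott continuity of $k^\dagger$ in $\nu$ is the second integral fact above. Monotonicity in the kernel follows from monotonicity of the integral in the integrand.

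\emph{Item (ii).} For the unit laws I use two computations. For a Scott-open $U$, the function $\delta_x(U)$ equals $\chi_U(x)$, so $\eta_D^\dagger\nu(U)=\int\chi_U\,d\nu=\nu(U)$. For the other law, the integral against a Dirac valuation is evaluation: $\int h\,d\delta_x=h(x)$, which is immediate from the layer-cake formula. Hence $k^\dagger(\delta_x)(U)=k(x)(U)$.

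\emph{Item (iii).} Associativity is the main obstacle. I will establish the change-of-integration formula
\[
  \int_E h\,d(k^\dagger\nu)=\int_D\Bigl(\int_E h\,dk(x)\Bigr)d\nu(x)
\]
for Scott-continuous $h:E\to[0,1]$. I would first prove it for simple step functions of the form $h=\sum_j a_j\chi_{U_j}$ with nested open sets $U_j$; there it is just linearity together with the definition of $k^\dagger$. I would then pass to a general $h$ by writing $h$ as the directed supremum of the step functions $h_n=\frac{1}{2^n}\sum_{i=1}^{2^n}\chi_{h^{-1}((i/2^n,1])}$. Monotone convergence (\cref{thm:jones-monotone-convergence}) is applied on the left and twice on the right: once inside, for each fixed $x$, and once for the outer integral against $\nu$. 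The inner supremum yields a directed family of Scott-continuous functions of $x$, so the theorem applies. Taking $h=\ell(\cdot)(V)$ then gives $(\ell^\dagger\circ k)^\dagger\nu(V)=\ell^\dagger(k^\dagger\nu)(V)$.

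\emph{Item (iv) and the consequences.} For the pushforward law, $(\eta_E\circ u)^\dagger\nu(U)=\int\chi_{u^{-1}(U)}\,d\nu=\nu(u^{-1}(U))$. The displayed identities \eqref{eq:kleisli-pushforward-composition} then follow formally, as written, from (ii), (iii) and (iv).
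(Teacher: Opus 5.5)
Your proposal is correct, but note that the paper does not actually prove items (i)--(iv). It imports them wholesale from \cite{GoubaultLarrecqJiaTheron2023} (Lemma~3.1 and Proposition~3.2 there). Only the displayed consequences are derived in-line, by exactly the formal chaining of (ii), (iii) and (iv) that you describe.

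What you supply is a self-contained reconstruction of the standard argument behind that citation. You reduce everything to three facts:
\begin{itemize}
\item linearity of the Choquet integral in the integrand;
\item its Scott continuity in the valuation, which you correctly obtain from the layer-cake formula (monotonicity of the layer functions in $t$ lets the Riemann integral commute with directed, not merely countable, suprema);
\item \cref{thm:jones-monotone-convergence}.
\end{itemize}
You then get associativity from the Fubini-type identity $\int_E h\,d(k^\dagger\nu)=\int_D\bigl(\int_E h\,dk(x)\bigr)\,d\nu(x)$ via dyadic step functions.

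This route makes explicit which integral facts are really used. The price is that additivity of the Choquet integral for arbitrary Scott-continuous integrands is itself a nontrivial theorem (see \cite{Tix1995}), which you take as given. It carries real weight in two places:
\begin{itemize}
\item modularity of $k^\dagger\nu$ in (i), where the integrand $x\mapsto k(x)(U)+k(x)(V)$ has values in $[0,2]$, so you need the extended integral or a factor $\tfrac12$;
\item the outer integral in the step-function case of (iii), where the integrands $x\mapsto k(x)(U_j)$ are not step functions.
\end{itemize}
The citation, for its part, is shorter and also covers the unbounded setting $\Vext$ that the paper later needs in Appendix~B.
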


The corresponding multiplication of the valuation monad is obtained by
taking the barycentre of a valuation of valuations,
\[
    \mu_D=(\id_{\Vsub(D)})^\dagger:
    \Vsub(\Vsub(D))\longrightarrow\Vsub(D).
\]
The same constructions restrict to probabilistic valuations. We use the
canonical Scott-continuous inclusion
\[
    \Vone(D)\hookrightarrow\Vsub(D).
\]
Indeed, if \(f:D\to E\) and \(\nu\in\Vone(D)\), then
\[
    \Vone(f)(\nu)(E)
    =
    \nu(f^{-1}(E))
    =
    \nu(D)
    =
    1.
\]
Moreover, if
\(
    k:D\to\Vone(E)
\)
and \(\nu\in\Vone(D)\), then
\[
    (k^\dagger\nu)(E)
    =
    \int_D k(x)(E)\,d\nu(x)
    =
    \int_D1\,d\nu
    =
    1.
\]
Hence the Kleisli extension restricts to
\(
\Vone(D)\longrightarrow\Vone(E).
\)

\subsection[Finite-poset valuation domains]{Finite-poset valuation domains and stochastic order}
\label{subsec:prelim-finite-posets}

Let $P$ be a finite poset. A subset $U\subseteq P$ is called an \emph{upper set} if
$x\in U$ and $x\leq y$ imply $y\in U$. We use
\[
        \up x=\{y\in P:x\leq y\},
        \qquad
        \down A=\{x\in P:x\leq a\text{ for some }a\in A\}.
\]
Every directed subset of a finite poset has a greatest element, so the
Scott-open subsets of $P$ are exactly its upper sets.
Scott continuity on $\sigma(P)$ is automatic for every
monotone map.  For $x\in P$, the Dirac valuation is
$\delta_x(U)=1$ if $x\in U$, and $\delta_x(U)=0$ otherwise.

On a finite poset every subprobabilistic valuation is simple. More precisely,
there are unique coefficients $p_x\geq0$ with $\sum_x p_x\leq1$ such that
\begin{equation}\label{eq:atomic-representation}
        \nu=\sum_{x\in P}p_x\delta_x,
        \qquad
        p_x=\nu(\up x)-\nu(\up x\setminus\{x\}).
\end{equation}
This is the standard finite-space description of the valuation powerdomain;
see~\cite[Page 5]{JungTix1998}. We therefore identify $\Vsub(P)$ with the
Euclidean simplex
\[
        \Delta_{\leq1}(P)
        =\left\{p\in\Rplus^P:\sum_{x\in P}p_x\leq1\right\}.
\]
For $A\subseteq P$, write $p(A)=\sum_{x\in A}p_x$. The stochastic order on
valuations is
\begin{equation}\label{eq:upper-set-order}
        p\stle q
        \quad\Longleftrightarrow\quad
        p(U)\leq q(U)
        \quad\text{for every upper set }U\subseteq P.
\end{equation}
Thus $q$ is above $p$ when every upward-closed observation receives at least
as much mass under $q$ as under $p$. On a finite poset this order permits
both adding mass and moving existing mass upward; it is generally different
from the coordinatewise order on the atomic coefficients.
The positive support of $p$ is
\[
        \supp(p)=\{x\in P:p_x>0\}.
\]
The standard characterizations of stochastic order on partially ordered
spaces~\cite{KamaeKrengelOBrien1977} give the following finite form.  The
subprobability version is obtained from the probability version by adjoining a
fresh least point and placing the missing mass there.

\begin{lemma}\label{lem:monotone-integrals}
For $p,q\in\Vsub(P)$, the following conditions are equivalent:
\begin{enumerate}[label=\textup{(\roman*)}]
\item $p\stle q$;
\item for every nonnegative monotone map $g:P\to\R$,
\[
  \sum_{x\in P}p_xg(x)\leq\sum_{x\in P}q_xg(x).
\]
\end{enumerate}
\end{lemma}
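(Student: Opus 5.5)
The plan is to prove both implications directly, using the layer-cake decomposition of a nonnegative monotone function on the finite poset $P$.

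For (ii)$\Rightarrow$(i), fix an upper set $U\subseteq P$ and take $g=\chi_U$. This map is nonnegative, and it is monotone because $U$ is upper. Since $\sum_x p_x\chi_U(x)=p(U)$, condition (ii) gives $p(U)\le q(U)$ at once.

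For (i)$\Rightarrow$(ii), let $g:P\to\R$ be nonnegative and monotone. List its distinct values as
\[
  0\le t_1<t_2<\dots<t_k ,
\]
and set $t_0=0$. For each $j$, the level set
\[
  U_j=\{x\in P: g(x)\ge t_j\}
\]
is an upper set because $g$ is monotone. Then for every $x\in P$,
\[
  g(x)=\sum_{j=1}^k (t_j-t_{j-1})\,\chi_{U_j}(x),
\]
since the right-hand side telescopes to $t_m$ when $g(x)=t_m$. This is the finite layer-cake formula, and it writes $g$ as a nonnegative combination of indicators of upper sets.

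Multiplying by $p_x$ and summing over $x\in P$ gives
\[
  \sum_x p_x g(x)=\sum_{j=1}^k (t_j-t_{j-1})\,p(U_j).
\]
The same identity holds with $q$ in place of $p$. Each coefficient $t_j-t_{j-1}$ is nonnegative, and (i) gives $p(U_j)\le q(U_j)$ for every $j$. Comparing the two sums term by term yields (ii).

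Nonnegativity of $g$ is what lets us take $t_0=0$, so that every coefficient is nonnegative. This matters because $p$ and $q$ may have different total masses: for a constant negative $g$ the inequality would fail. There is no real obstacle here. The only care needed is to check the telescoping identity, and to note that the argument needs no coupling theorem, so the result does not depend on the cited Kamae--Krengel--O'Brien characterization.
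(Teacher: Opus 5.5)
Your proof is correct and is essentially the paper's argument: indicators of upper sets give (ii)$\Rightarrow$(i), and the finite layer-cake decomposition $g=\sum_j(t_j-t_{j-1})\chi_{U_j}$ gives (i)$\Rightarrow$(ii). The only difference is that you let $t_1=0$ appear among the listed values (the paper uses only the positive values), which adds a term with coefficient zero and changes nothing.
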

\begin{proof}
The implication \textup{(ii)}$\Rightarrow$\textup{(i)} follows by taking
$g=\mathbf 1_U$ for every upper set $U$. Conversely, let
$0<t_1<\cdots<t_m$ be the distinct positive values of a nonnegative
monotone map $g:P\to\R$, and put
\[
  U_j=\{x\in P:g(x)\geq t_j\}.
\]
Each $U_j$ is an upper set. With $t_0=0$,
\[
  g=\sum_{j=1}^m(t_j-t_{j-1})\mathbf 1_{U_j}.
\]
If $p\stle q$, applying the upper-set inequalities to this nonnegative
linear combination gives
\[
  \sum_{x\in P}p_xg(x)\leq\sum_{x\in P}q_xg(x).
\]
\end{proof}

A finite poset is a continuous dcpo with a finite basis. Jones's results
therefore imply that $\Vsub(P)$ is a countably based domain; a countable basis
is obtained from simple valuations with rational
coefficients~\cite[Corollaries~5.4 and~5.5]{Jones1990}.

Directed suprema in the valuation powerdomain are computed pointwise on
Scott-open sets~\cite{JonesPlotkin1989,Jones1990}. Thus, if
$D\subseteq\Delta_{\leq1}(P)$ is directed and $p=\sup D$, then
\begin{equation}\label{eq:directed-supremum}
        p(U)=\sup_{q\in D}q(U)
        \qquad\text{for every upper set }U\subseteq P.
\end{equation}

\section{Subprobabilistic powerdomains of finite posets}
\label{sec:finite-poset-powerdomains}

\subsection{Construction of the approximating maps}\label{sec:construction}

This section constructs the basic approximation used throughout the paper.
For a valuation $p$ on a finite poset, we repeatedly remove mass from the
maximal elements of its positive support.  The resulting one-parameter family
$(\Phi_t)_{t\geq0}$ moves every valuation downward and converges back to it
as $t\to 0$. The main issue, addressed in the next subsection, is to choose the scale so that the maps also preserve the stochastic
order.

We first introduce the frontiers, erosion rates, and the recursive flow used in the construction.
\label{subsec:construction-flow}
Throughout this section, \(P\) is nonempty.  Put
\[
  n=|P|,
  \qquad
  K_P=n(n+1)^{n-1}.
\]
For $p\in\Delta_{\leq1}(P)$ with $p\neq0$, put
\[
        A(p)=\Max(\supp(p))
\]
to be the set of maximal elements of $\supp(p)$, which we also called the \textit{frontier}. A subset of a poset is an
\emph{antichain} if no two distinct elements are comparable. Thus $A(p)$ is
a nonempty antichain. For every nonempty antichain $A\subseteq P$, define
\begin{equation}\label{eq:coefficient}
        c(A)=(n+1)^{\,n-|\down A|}.
\end{equation}
The exact formula is chosen to enforce the comparison in
\cref{lem:coefficient-comparison}: if one frontier lies strictly below
another, then the lower frontier is eroded at a substantially larger rate.
This separation of rates is what later prevents two ordered trajectories
from crossing.

\begin{lemma}\label{lem:coefficient-comparison}
Let $A$ and $B$ be nonempty antichains of $P$. If
$A\subseteq\down B$ and $A\neq B$, then
\[
        c(A)\geq(n+1)c(B)>n c(B).
\]
\end{lemma}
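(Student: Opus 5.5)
The plan is to reduce the statement to a strict inequality $|\down A|<|\down B|$ and then read off the bound from the definition \eqref{eq:coefficient}. Since $c(A)=(n+1)^{n-|\down A|}$ and $c(B)=(n+1)^{n-|\down B|}$, the inequality $c(A)\geq(n+1)c(B)$ is equivalent to $n-|\down A|\geq n-|\down B|+1$, i.e. to $|\down A|\leq|\down B|-1$. The final strict inequality $(n+1)c(B)>nc(B)$ is immediate, because $c(B)$ is a positive power of $n+1$ (in particular $c(B)>0$).

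The first step is to show $\down A\subseteq\down B$. This follows from the hypothesis $A\subseteq\down B$ together with transitivity: if $x\leq a$ with $a\in A$, then $a\leq b$ for some $b\in B$, so $x\in\down B$.

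The second step, the only nontrivial one, is to show that the inclusion is strict. I would argue by contradiction: suppose $\down A=\down B$. For an antichain $C$ in a finite poset, $C=\Max(\down C)$. Indeed, every element of $C$ is maximal in $\down C$: if $c\leq x\leq c'$ with $c,c'\in C$, then $c=c'$ by the antichain property, so $x=c$. Conversely, every maximal element $x$ of $\down C$ lies below some $c\in C\subseteq\down C$, so maximality forces $x=c$. Applying this to $A$ and $B$ gives $A=\Max(\down A)=\Max(\down B)=B$, which contradicts $A\neq B$.

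Hence $\down A\subsetneq\down B$. Since $P$ is finite, this gives $|\down A|\leq|\down B|-1$, and the displayed inequality follows from the first paragraph. I do not expect any real obstacle. The only point needing care is the antichain identity $C=\Max(\down C)$, which is exactly where the hypothesis that $A$ and $B$ are antichains is used.
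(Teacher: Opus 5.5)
Your proof is correct and follows the paper's argument: both get $\down A\subseteq\down B$ by transitivity, make the inclusion strict via the identity that an antichain is exactly the set of maximal elements of its down-set, and then compare exponents in $c(A)/c(B)=(n+1)^{|\down B|-|\down A|}$. One small wording point: $c(B)$ can be $(n+1)^0=1$, so ``positive power'' should read ``power with nonnegative exponent''; only $c(B)>0$ is used, so the argument is unaffected.
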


\begin{proof}
For every antichain $E$, the maximal elements of $\down E$ are exactly the
points of $E$. Hence $A\subseteq\down B$ implies
$\down A\subseteq\down B$. Equality of these two lower sets would imply
$A=B$, so the inclusion is strict. Therefore
$|\down A|\leq|\down B|-1$, and~\eqref{eq:coefficient} gives
\[
        \frac{c(A)}{c(B)}
        =(n+1)^{|\down B|-|\down A|}
        \geq n+1.
\]
\end{proof}

Fix $p\in\Delta_{\leq1}(P)$. We define
\[
        \phi_p:[0,\infty)\longrightarrow\Delta_{\leq1}(P)
\]
recursively. During one stage, all coordinates on the current frontier are
decreased at the common rate $c(A)$, while all other coordinates are kept
fixed. The stage ends when the first frontier coordinate reaches zero; the
frontier is then recomputed from the smaller support.

Set $q_0=p$ and $s_0=0$. Suppose that $q_j$ and $s_j$ have been
defined. If $q_j=0$, define $\phi_p(s)=0$ for all $s\geq s_j$ and terminate
the recursion. If $q_j\neq0$, set
\[
        A_j=A(q_j),
        \qquad
        c_j=c(A_j),
        \qquad
        \tau_j=\min_{a\in A_j}\frac{(q_j)_a}{c_j}.
\]
The set $A_j$ is finite and nonempty, and $(q_j)_a>0$ for every $a\in A_j$,
hence $\tau_j>0$. For $0\leq u\leq\tau_j$, define
\begin{equation}\label{eq:recursive-interval}
        \phi_p(s_j+u)
        =q_j-u c_j\sum_{a\in A_j}\delta_a.
\end{equation}
Equivalently,
\[
        \bigl(\phi_p(s_j+u)\bigr)_x
        =
        \begin{cases}
        (q_j)_x-u c_j,&x\in A_j,\\
        (q_j)_x,&x\notin A_j.
        \end{cases}
\]
The definition of $\tau_j$ implies that every coordinate remains
nonnegative. The total mass does not increase, so
$\phi_p(s_j+u)\in\Delta_{\leq1}(P)$ for $0\leq u\leq\tau_j$. Define
\[
        s_{j+1}=s_j+\tau_j,
        \qquad
        q_{j+1}=\phi_p(s_{j+1}).
\]
At least one coordinate indexed by $A_j$ is zero in $q_{j+1}$. Coordinates outside \(A_j\) are unchanged, while the remaining
coordinates indexed by \(A_j\) stay nonnegative. Consequently,
\[
        \supp(q_{j+1})\subsetneq\supp(q_j).
\]
Thus the recursion has at most $|\supp(p)|\leq n$ nonzero steps and eventually
reaches the zero vector.

Formula~\eqref{eq:recursive-interval} shows that $\phi_p$ is continuous on
each interval $[s_j,s_{j+1}]$. At the common endpoint of two consecutive
intervals, both definitions have value $q_{j+1}$. If $q_m=0$, the extension
$\phi_p(s)=0$ for $s\geq s_m$ also agrees at $s_m$. Hence $\phi_p$ is
continuous on $[0,\infty)$. For $t\geq0$, define
\begin{equation}\label{eq:def_of_phi}
        \Phi_t:\Delta_{\leq1}(P)\longrightarrow\Delta_{\leq1}(P),
        \qquad
        \Phi_t(p)=\phi_p(t).
\end{equation}

\begin{remark}
To illustrate the definition of the above function, we give an explicit computation on a finite poset $B_5$, including its Hasse diagram in Example~\ref{ex:erosion-B5} in \ref{app:erosion-B5}.
\end{remark}

\vskip 3mm

We now establish the semigroup law and monotonicity in the time parameter.
\label{subsec:construction-semigroup}
We first show some basic properties of this map.
\begin{lemma}\label{lem:basic-family}
For every $p\in\Delta_{\leq1}(P)$ and $s,t\geq0$, the following hold.
\begin{enumerate}[label=\textup{(\roman*)}]
\item $\Phi_t(p)\leq p$ coordinatewise, and hence $\Phi_t(p)\stle p$;
\item $\Phi_0=\id$ and $\Phi_{s+t}=\Phi_s\circ\Phi_t$;
\item if $s\geq t$, then $\Phi_s(p)\leq\Phi_t(p)$ coordinatewise.
\end{enumerate}
\end{lemma}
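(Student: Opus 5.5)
The plan is to read all three properties directly off the recursive definition of $\phi_p$. The key tool will be a restart property: the trajectory started at an intermediate point $\phi_p(t)$ is just the tail of the original trajectory.

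For \textup{(i)}, note that on each interval $[s_j,s_{j+1}]$ formula~\eqref{eq:recursive-interval} only decreases coordinates, since $u c_j\geq0$. An induction on $j$ then gives $\phi_p(s_j+u)\leq q_j\leq\cdots\leq q_0=p$ coordinatewise, and beyond termination the value is $0\leq p$. Coordinatewise domination gives $\Phi_t(p)(U)\leq p(U)$ for every upper set $U$, because the coefficients are nonnegative. This is exactly $\Phi_t(p)\stle p$. Statement \textup{(iii)} comes from the same observation: each coordinate of $s\mapsto\phi_p(s)$ is nonincreasing on every stage interval and is continuous on $[0,\infty)$, so it is nonincreasing globally.

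For \textup{(ii)}, the identity $\Phi_0=\id$ holds because $\phi_p(0)=q_0=p$. For the semigroup law I will prove the restart identity
\[
  \phi_{\phi_p(t)}(s)=\phi_p(t+s)\qquad(s,t\geq0),
\]
which, once we set $r=\phi_p(t)$, is precisely $\Phi_s(\Phi_t(p))=\Phi_{s+t}(p)$. Fix $t$ and let $j$ be the stage with $t=s_j+u_0$, $0\leq u_0<\tau_j$; the terminated case is trivial, since $\phi_0\equiv0$. The crucial point is that when $u_0<\tau_j$, every frontier coordinate of $r$ is still positive. Hence $\supp(r)=\supp(q_j)$, so $A(r)=A_j$ and $c(A(r))=c_j$. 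The first stage of the recursion for $r$ therefore uses the same frontier and rate, and its length is
\[
  \min_{a\in A_j}\frac{(q_j)_a-u_0c_j}{c_j}=\tau_j-u_0 .
\]
So the first stage of $\phi_r$ coincides with $u\mapsto\phi_p(t+u)$ for $u\in[0,\tau_j-u_0]$ and ends at $q_{j+1}$. From that point on, both recursions are driven by identical data starting at $q_{j+1}$. A straightforward induction on the stage index then shows that the remaining stages agree with shifted times, which gives the restart identity.

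The main obstacle is bookkeeping at stage boundaries. The boundary case $u_0=0$ is already covered, since then $r=q_j$. Apart from that, I must make sure the definition depends only on the current point and not on its history, and this holds because each step is determined solely by $q_j$. No order-theoretic input is needed here; the crossing-prevention coefficients $c(A)$ only become relevant for monotonicity in $p$ later.
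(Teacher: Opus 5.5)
Your proposal is correct and follows essentially the same route as the paper. For (i) you use the coordinatewise decrease on each stage, and for (ii) you prove the restart identity $\phi_{\phi_p(t)}(u)=\phi_p(t+u)$ by noting that $\supp(r)=\supp(q_j)$, $A(r)=A_j$, and that the first stage has length $\tau_j-u_0$. The only small difference is in (iii): you argue directly that each coordinate is nonincreasing along the trajectory, whereas the paper writes $s=t+u$ and deduces $\Phi_s(p)=\Phi_u(\Phi_t(p))\leq\Phi_t(p)$ from (ii) and (i); both arguments are valid.
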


\begin{proof}
On each stage interval \([s_j,s_{j+1}]\), we have
\[
    \phi_p(s_j+u)
    =
    q_j-u c_j\sum_{a\in A_j}\delta_a
    \qquad
    (0\leq u\leq\tau_j).
\]
Hence, for every \(x\in P\),
\[
    \bigl(\phi_p(s_j+u)\bigr)_x
    =
    \begin{cases}
        (q_j)_x-u c_j, & x\in A_j,\\[1mm]
        (q_j)_x,       & x\notin A_j.
    \end{cases}
\]
Thus no coordinate increases during a stage. Since the endpoint of one stage
is the initial state of the next, it follows inductively that
\(
    \phi_p(t)\leq p
\)
coordinatewise for every \(t\geq0\). Therefore
\[
    \Phi_t(p)=\phi_p(t)\leq p
\]
coordinatewise.
Hence \(\Phi_t(p)\stle p\). This proves \textup{(i)}.

The equality $\Phi_0=\id$ follows from $\phi_p(0)=p$. We prove the semigroup
identity. Fix $t\geq0$ and put
\[
        r=\Phi_t(p)=\phi_p(t).
\]
Here $\phi_r$ denotes the map obtained from the same recursion with initial
value $r$. We claim that
\begin{equation}\label{eq:translated-tail}
        \phi_r(u)=\phi_p(t+u)
        \qquad(u\geq0).
\end{equation}
If $r=0$, both sides are zero. Assume that \(r\neq0\).  Since $r=\phi_p(t)$, there is a unique $j$ such that
$s_j\leq t<s_{j+1}$. Write $t=s_j+v$, where $0\leq v<\tau_j$. Then
\[
        r=q_j-vc_j\sum_{a\in A_j}\delta_a.
\]
No coordinate has become zero between $s_j$ and $t$, so
$\supp(r)=\supp(q_j)$ and $A(r)=A_j$. The coefficient in the first recursive
step starting from $r$ is therefore $c_j$, and the length of that step is
\[
\begin{aligned}
        \min_{a\in A_j}\frac{r_a}{c_j}
        =\min_{a\in A_j}
          \left(\frac{(q_j)_a}{c_j}-v\right)
        =\tau_j-v.
\end{aligned}
\]
Hence, for $0\leq u\leq\tau_j-v$,
\[
\begin{aligned}
        \phi_r(u)
        =r-u c_j\sum_{a\in A_j}\delta_a
        =q_j-(v+u)c_j\sum_{a\in A_j}\delta_a
        =\phi_p(t+u).
\end{aligned}
\]
At $u=\tau_j-v$, both sides equal $q_{j+1}$. If $q_{j+1}=0$, both maps
remain zero. If $q_{j+1}\neq0$, both recursions compute the same set
$A(q_{j+1})$, the same coefficient $c(A(q_{j+1}))$, and the same number
\[
        \min_{a\in A(q_{j+1})}
        \frac{(q_{j+1})_a}{c(A(q_{j+1}))}.
\]
They therefore agree on the next recursive interval and again have the same
endpoint. Repeating this argument over the finitely many remaining intervals
proves
\eqref{eq:translated-tail}. Consequently,
\[
        \Phi_s(\Phi_t(p))
        =\phi_r(s)
        =\phi_p(t+s)
        =\Phi_{s+t}(p),
\]
which proves~\textup{(ii)}.

If $s\geq t$, write $s=t+u$ with $u\geq0$. By~\textup{(ii)} and then
\textup{(i)},
\[
        \Phi_s(p)=\Phi_u(\Phi_t(p))\leq\Phi_t(p)
\]
coordinatewise. This proves~\textup{(iii)}.
\end{proof}

The construction above gives a decreasing semigroup below the identity, but
it does not yet show that each $\Phi_t$ is monotone for the stochastic order.
Since that order is determined by upper-set masses, the next subsection studies
the evolution of $\Phi_t(p)(U)$ for each upper set $U$.

\subsection{Upper-set inequalities and order preservation}\label{sec:upper-set}

The purpose of this subsection is twofold. First, we obtain uniform upper and
lower bounds on the rate at which an upper-set mass decreases. Second, we use
those local rate comparisons in a first-contact argument to prove that two
initially ordered trajectories cannot cross. This will yield the order
preservation needed for Scott continuity and finite separation.

We first compute the local decay rates of upper-set coordinates.
\label{subsec:upper-local}

\begin{lemma}\label{lem:max-support-meets-upper}
Let $p\in\Delta_{\leq1}(P)$ and let $U\subseteq P$ be an upper set. If
$p(U)>0$, then
\(
        A(p)\cap U\neq\varnothing.
\)
\end{lemma}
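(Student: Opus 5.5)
Since $p(U)=\sum_{x\in U}p_x>0$, there is some $x\in U$ with $p_x>0$, so $x\in\supp(p)\cap U$. The plan is to climb from $x$ to a maximal element of $\supp(p)$ lying above it, and then use that $U$ is an upper set to keep that element inside $U$.

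First we pick the maximal element. The set $\supp(p)\cap\up x$ is finite because $P$ is finite, and it is nonempty because it contains $x$. We choose an element $a$ of this set that is maximal within it; this is possible since every nonempty finite poset has maximal elements.

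Next we check that $a$ is maximal in all of $\supp(p)$. Suppose $b\in\supp(p)$ and $a\leq b$. Then $x\leq a\leq b$, so $b\in\supp(p)\cap\up x$. Maximality of $a$ in that set forces $b=a$. Hence $a\in\Max(\supp(p))=A(p)$.

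Finally, $x\in U$ and $x\leq a$, and $U$ is an upper set, so $a\in U$. Therefore $a\in A(p)\cap U$, which is what the statement asserts.

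No step here is a genuine obstacle. The only point needing care is the finiteness of $P$, which guarantees that the maximal element $a$ above $x$ exists.
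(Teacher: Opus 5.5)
Your proof is correct and is essentially the paper's own argument: pick $x\in\supp(p)\cap U$, take a maximal element of $\supp(p)$ above $x$ (which exists by finiteness), and use that $U$ is an upper set. Your extra check that a maximal element of $\supp(p)\cap\up x$ is maximal in all of $\supp(p)$ only makes explicit a step the paper leaves implicit.
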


\begin{proof}
Choose $x\in\supp(p)\cap U$. Since $\supp(p)$ is finite, there is a maximal
element $a$ of $\supp(p)$ with $x\leq a$. Then $a\in A(p)$. Since $U$ is an
upper set, $a\in U$.
\end{proof}

For an upper set $U\subseteq P$ and $p\in\Delta_{\leq1}(P)$, define
\begin{equation}\label{eq:gamma}
        \gamma_U(p)=
        \begin{cases}
        c(A(p))\,|A(p)\cap U|,&p\neq0,\\
        0,&p=0.
        \end{cases}
\end{equation}

\begin{lemma}\label{lem:local-upper-formula}
Let $p\in\Delta_{\leq1}(P)$, let $U$ be an upper set, and let $s\geq0$. There
is $\eta>0$ such that
\begin{equation}\label{eq:local-upper-formula}
        \Phi_{s+h}(p)(U)
        =\Phi_s(p)(U)-h\gamma_U(\Phi_s(p))
        \qquad(0\leq h\leq\eta).
\end{equation}
\end{lemma}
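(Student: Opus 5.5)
The plan is to reduce to the case $s=0$ using the semigroup law, and then read off the formula directly from the first stage of the recursion.

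Put $r=\Phi_s(p)$. By \cref{lem:basic-family}\textup{(ii)}, $\Phi_{s+h}(p)=\Phi_h(r)$ for every $h\geq0$. It therefore suffices to find $\eta>0$ with
\[
  \Phi_h(r)(U)=r(U)-h\gamma_U(r)\qquad(0\leq h\leq\eta).
\]

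If $r=0$, the recursion started at $r$ terminates immediately. Then $\Phi_h(r)=0$ for all $h\geq0$, and $\gamma_U(r)=0$ by \eqref{eq:gamma}. The identity holds with any $\eta>0$, say $\eta=1$.

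If $r\neq0$, I apply the recursion with initial value $q_0=r$ and $s_0=0$. Its first stage has frontier $A_0=A(r)$, coefficient $c_0=c(A(r))$, and length
\[
  \tau_0=\min_{a\in A(r)}\frac{r_a}{c_0}>0.
\]
I take $\eta=\tau_0$. For $0\leq h\leq\tau_0$, formula \eqref{eq:recursive-interval} gives
\[
  \Phi_h(r)=r-hc_0\sum_{a\in A(r)}\delta_a.
\]
Evaluating on $U$ uses only that $p\mapsto p(U)=\sum_{x\in U}p_x$ is linear and $\delta_a(U)=\mathbf 1_{a\in U}$. This gives
\[
  \Phi_h(r)(U)=r(U)-hc_0\,|A(r)\cap U|=r(U)-h\gamma_U(r),
\]
which is \eqref{eq:local-upper-formula}.

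There is no real obstacle here. The only points needing care are handling the degenerate case $r=0$ separately, and observing that $\tau_0>0$, which holds because every frontier point lies in the support. That positivity is exactly what makes the first stage a nontrivial interval on which the flow is affine. Note that $\eta$ depends on $p$ and $s$ through $r$, and this is allowed by the statement.
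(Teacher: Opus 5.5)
Your proposal is correct and follows essentially the same route as the paper: you set $r=\Phi_s(p)$, use the semigroup law to get $\Phi_{s+h}(p)=\Phi_h(r)$, treat $r=0$ separately, and otherwise take $\eta=\min_{a\in A(r)} r_a/c(A(r))>0$ and read the formula off the first affine stage of the recursion. The paper additionally checks that the resulting upper-set mass is nonnegative, but that check is not needed for the identity itself.
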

\begin{proof}
Put \(q=\Phi_s(p)\). If \(q=0\), then
\(\Phi_{s+h}(p)=0\) for all \(h\geq0\). Since
\(\gamma_U(0)=0\), equation~\eqref{eq:local-upper-formula}
holds for every \(\eta>0\).

Suppose that \(q\neq0\). By Lemma~\ref{lem:basic-family}(ii),
\(
        \Phi_{s+h}(p)=\Phi_h(q).
\)
Choose
\[
        \eta
        =\min_{a\in A(q)}
          \frac{q_a}{c(A(q))}
        >0.
\]
For every \(a\in A(q)\) and every \(0\leq h\leq\eta\), we have
\[
        q_a-hc(A(q))
        \geq q_a-\eta c(A(q))
        \geq0.
\]
Hence the first recursive stage starting from \(q\) is valid throughout
\([0,\eta]\), and therefore
\[
        \Phi_h(q)
        =
        q-hc(A(q))
        \sum_{a\in A(q)}\delta_a
        \qquad(0\leq h\leq\eta).
\]
Since \(\delta_a(U)=1\) precisely when \(a\in U\), it follows that
\[
\begin{aligned}
        \Phi_h(q)(U)
        &=
        q(U)
        -hc(A(q))
         \sum_{a\in A(q)}\delta_a(U)\\
        &=
        q(U)
        -hc(A(q))\,|A(q)\cap U|\\
        &=
        q(U)-h\gamma_U(q).
\end{aligned}
\]
Moreover,
\[
        q(U)-hc(A(q))\,|A(q)\cap U|
        =
        \sum_{x\in U\setminus A(q)}q_x
        +
        \sum_{a\in A(q)\cap U}
        \bigl(q_a-hc(A(q))\bigr) \geq0.
\]
Using \(q=\Phi_s(p)\) and
\(\Phi_{s+h}(p)=\Phi_h(q)\), we obtain
\[
        \Phi_{s+h}(p)(U)
        =
        \Phi_s(p)(U)
        -h\gamma_U(\Phi_s(p))
\]
for every \(0\leq h\leq\eta\). The formula remains valid at
\(h=\eta\).
\end{proof}

The local formulas are next upgraded to uniform two-sided estimates.
\label{subsec:upper-uniform}

\begin{lemma}\label{lem:upper-set-estimate}
For every upper set $U\subseteq P$, every
$p\in\Delta_{\leq1}(P)$, and every $t\geq0$,
\begin{equation}\label{eq:upper-set-estimate}
        \Phi_t(p)(U)\leq\bigl(p(U)-t\bigr)^+ = \max\{p(U)-t,0\}.
\end{equation}
\end{lemma}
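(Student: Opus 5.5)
Set $f(s)=\Phi_s(p)(U)$ for $s\geq0$. The plan is to show that $f$ falls at rate at least $1$ until it hits zero, and then stays at zero. The inequality~\eqref{eq:upper-set-estimate} then follows by integrating this slope bound.

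First, $f$ is continuous, because $\phi_p$ is continuous on $[0,\infty)$. It is also nonincreasing, by \cref{lem:basic-family}(iii).

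Next, fix $s$ with $f(s)>0$ and put $q=\Phi_s(p)$. Then $q\neq 0$, and \cref{lem:max-support-meets-upper} gives $|A(q)\cap U|\geq1$. Moreover, $|\down A|\leq n$ for every antichain $A$, so
\[
  c(A(q))=(n+1)^{n-|\down A(q)|}\geq1 .
\]
Hence $\gamma_U(q)\geq 1$. By \cref{lem:local-upper-formula}, there is $\eta>0$ such that
\[
  f(s+h)=f(s)-h\gamma_U(q)\leq f(s)-h \qquad (0\leq h\leq\eta).
\]
In words, at every point where $f$ is positive, its right derivative exists and is at most $-1$.

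It remains to turn this local fact into a global bound. Define $g(t)=f(t)-(p(U)-t)$; we need $g(t)\leq 0$ as long as $f$ is still positive. Let $T=\inf\{t:f(t)=0\}$, which is possibly $+\infty$. On $[0,T)$ the function $g$ is continuous, satisfies $g(0)=0$, and has right Dini derivative at most $0$ at every point.

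A standard argument then shows $g\leq0$ on $[0,T)$. Fix $\varepsilon>0$ and consider $g_\varepsilon(t)=g(t)-\varepsilon t$. Suppose $\sup\{t<T: g_\varepsilon\leq0 \text{ on }[0,t]\}=t_0<T$. By continuity, $g_\varepsilon(t_0)\leq 0$. The local formula gives
\[
  g_\varepsilon(t_0+h)\leq g_\varepsilon(t_0)-\varepsilon h<0
\]
for small $h>0$, which contradicts the choice of $t_0$. So $g_\varepsilon\leq 0$ on $[0,T)$, and letting $\varepsilon\to0$ gives $g\leq 0$. (Alternatively, one can avoid Dini derivatives entirely: $f$ is piecewise linear over the finitely many stages of the recursion, and on each stage interval where $f>0$ the slope $-\gamma_U$ is constant and at most $-1$. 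Summing the decrements over the stages gives the same bound.)

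Therefore $f(t)\leq p(U)-t$ for $t<T$. For $t\geq T$, monotonicity and continuity give $f(t)\leq f(T)=0$. Since $f\geq0$ throughout, $f(t)\leq \max\{p(U)-t,0\}$ for every $t\geq 0$, which is~\eqref{eq:upper-set-estimate}.

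The main point requiring care is passing from the right-sided local formula of \cref{lem:local-upper-formula} to the global inequality. The piecewise-linear description over the finitely many recursive stages handles this most cleanly.
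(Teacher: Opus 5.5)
Your proof is correct and rests on the same key fact as the paper's: whenever $\Phi_s(p)(U)>0$, \cref{lem:max-support-meets-upper} together with $c(A)\geq1$ forces the $U$-mass to decrease at rate $\gamma_U\geq1$. The paper turns this into the global bound by your parenthetical alternative (reduce to $\Phi_t(p)(U)>0$, so every earlier stage has positive $U$-mass, and sum the stage decrements $\tau_jc(A_j)|A_j\cap U|\geq\tau_j$); your main continuous-induction argument is also valid and does not need the $\varepsilon$-perturbation, since the local formula already gives $g(t_0+h)\leq g(t_0)$.
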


\begin{proof}
Put
\(H(s)=\Phi_s(p)(U) \) 
for every \(s\geq0\).
If \(t=0\), then \(\Phi_0(p)=p\), and hence
\(
        \Phi_0(p)(U)=p(U)=\bigl(p(U)-0\bigr)^+.
\)
Thus we may assume that \(t>0\).
If \(H(t)=0\), then
\eqref{eq:upper-set-estimate} holds naturally. 
Suppose therefore that
\(
        H(t)>0.
\)
By Lemma~\ref{lem:basic-family}(iii), for every \(0\leq s\leq t\),
\(
        \Phi_t(p)\leq\Phi_s(p)
\)
coordinatewise. Evaluating both sides on \(U\) gives
\[
        H(s)=\Phi_s(p)(U)
        \geq\Phi_t(p)(U)
        =H(t)>0.
\]

Consider the recursive construction of \(\Phi_r(p)\). Recall that
\[
        q_j=\Phi_{s_j}(p),\qquad
        A_j=A(q_j),
\]
and, for \(0\leq u\leq\tau_j\),
\[
        \Phi_{s_j+u}(p)
        =
        q_j-u\,c(A_j)\sum_{a\in A_j}\delta_a.
\]
Since \(\Phi_t(p)(U)>0\), the construction has not reached the zero
valuation before time \(t\). Thus there is an index \(k\) such that
\[
        s_k\leq t\leq s_{k+1},
        \qquad
        s_{j+1}=s_j+\tau_j
        \quad(0\leq j<k).
\]

We first consider a complete recursive interval
\([s_j,s_{j+1}]\), where \(0\leq j<k\). Since
\[
        q_j(U)
        =\Phi_{s_j}(p)(U)
        \geq\Phi_t(p)(U)
        >0,
\]
Lemma~\ref{lem:max-support-meets-upper} gives
\(
        A_j\cap U\neq\varnothing.
\)
Consequently,
\[
        |A_j\cap U|\geq1.
\]
Moreover, by the definition of \(c\),
\[
        c(A_j)
        =(n+1)^{\,n-|\down A_j|}
        \geq1.
\]
Evaluating the recursive formula at \(u=\tau_j\), we obtain
\begin{align*}
    \Phi_{s_{j+1}}(p)(U)
        &=
        \Phi_{s_j}(p)(U)
        -\tau_j c(A_j)
          \sum_{a\in A_j}\delta_a(U)\\
          &=
        \Phi_{s_j}(p)(U)
        -\tau_j c(A_j)|A_j\cap U|.
\end{align*}
Therefore
\[
\begin{aligned}
        \Phi_{s_j}(p)(U)
        -\Phi_{s_{j+1}}(p)(U)
        =
        \tau_j c(A_j)|A_j\cap U|
        \geq\tau_j
        =s_{j+1}-s_j.
\end{aligned}
\]

It remains to consider the last interval from \(s_k\) to \(t\). Put
\[
        u=t-s_k,
        \qquad 0\leq u\leq\tau_k.
\]
Again,
\[
        q_k(U)
        =\Phi_{s_k}(p)(U)
        \geq\Phi_t(p)(U)
        >0,
\]
so Lemma~\ref{lem:max-support-meets-upper} yields
\(
        A_k\cap U\neq\varnothing.
\)
Hence
\[
        c(A_k)|A_k\cap U|\geq1.
\]
Using the recursive formula once more,
\[
\begin{aligned}
        \Phi_t(p)(U)
        &=
        \Phi_{s_k}(p)(U)
        -(t-s_k)c(A_k)|A_k\cap U|,
\end{aligned}
\]
and therefore
\[
        \Phi_{s_k}(p)(U)-\Phi_t(p)(U)
        \geq t-s_k.
\]
Adding the preceding inequalities gives
\begin{align*}
        p(U)-\Phi_t(p)(U)
        &=
        \sum_{j=0}^{k-1}
        \bigl(
          \Phi_{s_j}(p)(U)
          -\Phi_{s_{j+1}}(p)(U)
        \bigr)
        +\Phi_{s_k}(p)(U)-\Phi_t(p)(U)\\
        &\geq
        \sum_{j=0}^{k-1}(s_{j+1}-s_j)
        +(t-s_k)
        =t.
\end{align*}
Thus
\(
        \Phi_t(p)(U)\leq p(U)-t.
\)
Since \(\Phi_t(p)(U)>0\), this also implies \(p(U)-t>0\), and hence
\[
        \Phi_t(p)(U)
        \leq p(U)-t
        =\bigl(p(U)-t\bigr)^+.
\]
\end{proof}

Recall that we set $K_P=n(n+1)^{n-1}$.

\begin{lemma}\label{lem:lower-set-estimate}
For every nonempty upper set $U\subseteq P$, every
$p\in\Delta_{\leq1}(P)$, and every $t\geq0$,
\begin{equation}\label{eq:lower-set-estimate}
  \Phi_t(p)(U)\geq\bigl(p(U)-K_Pt\bigr)^+.
\end{equation}
\end{lemma}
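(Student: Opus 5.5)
The plan mirrors the proof of \cref{lem:upper-set-estimate}, but now with an upper bound on the decay rate. Put \(H(s)=\Phi_s(p)(U)\) for \(s\geq0\). The estimate splits into two parts.

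The positive part is free: every coordinate of \(\Phi_t(p)\) is nonnegative, so \(H(t)\geq0\). It therefore suffices to prove
\[
  H(t)\geq p(U)-K_Pt .
\]

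The key uniform bound is on the rate \(\gamma_U\) from \eqref{eq:gamma}. I would show that
\[
  \gamma_U(q)\leq K_P\qquad\text{for every } q\in\Delta_{\leq1}(P).
\]
If \(q=0\), then \(\gamma_U(q)=0\). Otherwise \(A(q)\) is a nonempty antichain, so \(|\down A(q)|\geq1\). This gives \(c(A(q))=(n+1)^{n-|\down A(q)|}\leq(n+1)^{n-1}\). Also \(|A(q)\cap U|\leq|P|=n\). Multiplying the two bounds yields \(\gamma_U(q)\leq n(n+1)^{n-1}=K_P\).

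Next I would integrate this bound along the recursive stages. Let \(s_0<s_1<\cdots\) be the stage endpoints of the recursion for \(p\). Let \(k\) be such that \(s_k\leq t\leq s_{k+1}\); if the recursion has already terminated before \(t\), take \(s_k\) to be the last endpoint, with \(H\) constant equal to \(0\) afterwards. On a complete interval \([s_j,s_{j+1}]\), the recursive formula \eqref{eq:recursive-interval} evaluated on \(U\) gives
\[
  H(s_j)-H(s_{j+1})=\tau_j\,c(A_j)\,|A_j\cap U|=\tau_j\,\gamma_U(q_j)\leq K_P(s_{j+1}-s_j).
\]
The same computation on the partial interval gives \(H(s_k)-H(t)\leq K_P(t-s_k)\). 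On any tail after termination the loss is \(0\).

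Telescoping these inequalities gives \(p(U)-H(t)\leq K_Pt\). Equivalently, one can use \cref{lem:local-upper-formula}: \(H\) is continuous and piecewise linear, and its right derivative \(-\gamma_U(\Phi_s(p))\) is at least \(-K_P\).

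I do not expect a real obstacle. The only point needing care is bookkeeping at the last stage and after the flow reaches \(0\). That is exactly where the \((\cdot)^+\) in \eqref{eq:lower-set-estimate} and the nonnegativity of \(H\) take over. The hypothesis that \(U\) is nonempty is not actually needed for the rate bound. It only keeps the statement meaningful, since for \(U=\varnothing\) both sides are \(0\).
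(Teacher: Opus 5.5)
Your proposal is correct and matches the paper's proof: the paper also bounds $c(A_j)\,|A_j\cap U|\leq n(n+1)^{n-1}=K_P$ using $|\down A_j|\geq 1$. It then telescopes over the complete stage intervals and the final partial one, treats separately the case where the recursion reaches $0$ at some $s_N\leq t$, and finishes with nonnegativity of $\Phi_t(p)(U)$ to get the $(\cdot)^+$.
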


\begin{proof}
If \(t=0\), then
\(
        \Phi_0(p)(U)=p(U)
        =\bigl(p(U)-K_Pt\bigr)^+,
\)
so the assertion is immediate. 
We henceforth assume that \(t>0\).

Recall the recursive construction of \(\Phi_t(p)\). For every nonzero
recursive stage, we have
\(
        q_j=\Phi_{s_j}(p),
        \
        A_j=A(q_j),
\)
and, for \(0\leq u\leq\tau_j\),
\[
        \Phi_{s_j+u}(p)
        =
        q_j-u\,c(A_j)\sum_{a\in A_j}\delta_a.
\]
Evaluating this identity on \(U\) gives
\[
\begin{aligned}
        \Phi_{s_j+u}(p)(U)
        &=
        q_j(U)
        -u\,c(A_j)
          \sum_{a\in A_j}\delta_a(U)\\
        &=
        q_j(U)
        -u\,c(A_j)|A_j\cap U|.
\end{aligned}
\]
Since \(q_j=\Phi_{s_j}(p)\), we therefore have
\begin{equation}\label{eq:upper-mass-change}
        \Phi_{s_j}(p)(U)
        -
        \Phi_{s_j+u}(p)(U)
        =
        u\,c(A_j)|A_j\cap U|.
\end{equation}

We now give a uniform upper bound for the coefficient on the right.
Since \(q_j\neq0\), the set \(A_j\) is nonempty. Hence
\(
        |\down A_j|\geq1.
\)
By the definition
\(
        c(A_j)
        =(n+1)^{\,n-|\down A_j|},
\)
it follows that
\[
        c(A_j)\leq(n+1)^{n-1}.
\]
Also,
\(
        |A_j\cap U|
        \leq |A_j|
        \leq n.
\)
Consequently,
\[
        c(A_j)|A_j\cap U|
        \leq n(n+1)^{n-1}
        =K_P.
\]
Thus \eqref{eq:upper-mass-change} yields
\begin{equation}\label{eq:upper-mass-change-bound}
        \Phi_{s_j}(p)(U)
        -
        \Phi_{s_j+u}(p)(U)
        \leq K_Pu
        \qquad(0\leq u\leq\tau_j).
\end{equation}

Suppose first that the recursion has not reached the zero valuation before
 \(t\). Then there is an index \(k\) such that
\(
        s_k\leq t\leq s_{k+1}.
\)
For every \(j<k\), applying
\eqref{eq:upper-mass-change-bound} with \(u=\tau_j\) gives
\[
        \Phi_{s_j}(p)(U)
        -
        \Phi_{s_{j+1}}(p)(U)
        \leq K_P\tau_j
        =K_P(s_{j+1}-s_j).
\]
On the last recursive interval, put
\(
        u=t-s_k.
\)
Since \(0\leq u\leq\tau_k\), we similarly obtain
\[
        \Phi_{s_k}(p)(U)-\Phi_t(p)(U)
        \leq K_P(t-s_k).
\]
Adding these inequalities gives
\[
\begin{aligned}
        p(U)-\Phi_t(p)(U)
        &=
        \sum_{j=0}^{k-1}
        \bigl(
          \Phi_{s_j}(p)(U)
          -
          \Phi_{s_{j+1}}(p)(U)
        \bigr)
        +\Phi_{s_k}(p)(U)-\Phi_t(p)(U)\\
        &\leq
        K_P
        \sum_{j=0}^{k-1}(s_{j+1}-s_j)
        +K_P(t-s_k)
        =K_Pt.
\end{aligned}
\]

It remains to consider the case in which the recursion reaches the zero
valuation at some time \(s_N\leq t\). Then
\(
        \Phi_r(p)=0
\) for \( r\geq s_N.
\)
Applying the preceding estimate to the complete recursive intervals
\([s_j,s_{j+1}]\), \(0\leq j<N\), gives
\[
\begin{aligned}
        p(U)-\Phi_{s_N}(p)(U)
        &\leq K_Ps_N.
\end{aligned}
\]
Since \(\Phi_{s_N}(p)=0\) and \(s_N\leq t\),
\[
        p(U)-\Phi_t(p)(U)
        =p(U)
        \leq K_Ps_N
        \leq K_Pt.
\]
Thus in all cases
\[
        p(U)-\Phi_t(p)(U)\leq K_Pt,
\]
Finally, since \(\Phi_t(p)(U)\geq0\), we obtain
\[
        \Phi_t(p)(U)
        \geq
        \max\{p(U)-K_Pt,0\}
        =
        \bigl(p(U)-K_Pt\bigr)^+.
\]
This proves \eqref{eq:lower-set-estimate}.
\end{proof}

Combining Lemma~\ref{lem:upper-set-estimate} and Lemma~\ref{lem:lower-set-estimate}, we have the following result.

\begin{corollary}\label{cor:finite-two-sided}
For every nonempty upper set $U\subseteq P$,
\begin{equation}\label{eq:finite-two-sided}
  \bigl(p(U)-K_Pt\bigr)^+
  \leq\Phi_t(p)(U)
  \leq\bigl(p(U)-t\bigr)^+
  \qquad(p\in\Vsub(P),\ t\geq0).
\end{equation}
\end{corollary}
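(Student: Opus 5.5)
This corollary is an immediate juxtaposition of the two preceding estimates. The plan is to fix a nonempty upper set $U\subseteq P$, a valuation $p\in\Vsub(P)$ (identified with $\Delta_{\leq1}(P)$ as in~\eqref{eq:atomic-representation}), and $t\geq0$, and then read off each side of~\eqref{eq:finite-two-sided} from one earlier lemma.

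For the right-hand inequality I invoke Lemma~\ref{lem:upper-set-estimate}. It holds for every upper set, so in particular for nonempty ones, and it gives $\Phi_t(p)(U)\leq(p(U)-t)^+$ directly.

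For the left-hand inequality I invoke Lemma~\ref{lem:lower-set-estimate}. It is stated exactly under the hypothesis that $U$ is nonempty, and it gives $(p(U)-K_Pt)^+\leq\Phi_t(p)(U)$.

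Putting the two inequalities together yields~\eqref{eq:finite-two-sided}. There is no real obstacle here. The only point worth a moment's care is that the nonemptiness hypothesis is needed only for the lower bound. Even there it is harmless: for $U=\varnothing$ all three quantities vanish. So the restriction to nonempty $U$ simply matches the hypothesis under which Lemma~\ref{lem:lower-set-estimate} was stated.

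As a consistency check, the sandwich is non-vacuous because $K_P\geq1$, so that $(p(U)-K_Pt)^+\leq(p(U)-t)^+$.
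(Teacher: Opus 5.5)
Your proposal is correct and matches the paper's proof, which also obtains the corollary just by combining Lemma~\ref{lem:upper-set-estimate} for the upper bound with Lemma~\ref{lem:lower-set-estimate} for the lower bound. Your side remarks are also right: the case $U=\varnothing$ is trivial, and $K_P\geq1$.
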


We now prove order preservation by a boundary comparison and first-contact argument.
\label{subsec:upper-contact}

We start to show that $\Phi_t$ is order-preserving. Before that, we need the following lemma.

\begin{lemma}\label{lem:max-support-comparison}
If $p\stle q$ and $p\neq0$, then
\(
        A(p)\subseteq\down A(q).
\)
\end{lemma}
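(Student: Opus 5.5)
We argue pointwise: fix $a\in A(p)$ and show $a\in\down A(q)$. The natural test set is the principal upper set $U=\up a$, which is a nonempty upper set of $P$.

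First, since $a\in A(p)\subseteq\supp(p)$, we have $p_a>0$, and hence $p(\up a)\geq p_a>0$. The hypothesis $p\stle q$, in the form \eqref{eq:upper-set-order}, then gives
\[
  q(\up a)\geq p(\up a)>0 .
\]
In particular $q\neq0$, so $A(q)$ is defined and nonempty.

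Next, apply Lemma~\ref{lem:max-support-meets-upper} to $q$ and the upper set $\up a$. Since $q(\up a)>0$, it yields some $b\in A(q)\cap\up a$. Then $a\leq b$ with $b\in A(q)$, which says exactly that $a\in\down A(q)$. As $a\in A(p)$ was arbitrary, we conclude $A(p)\subseteq\down A(q)$.

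There is no real obstacle in this argument. The only point requiring attention is to use the principal upper set $\up a$ rather than working with coordinates directly. The stochastic order \eqref{eq:upper-set-order} does not control individual coefficients: $q_a$ may well be zero even though $p_a>0$. Only the masses of upper sets are comparable, and $\up a$ is the upper set that detects the mass sitting at $a$.
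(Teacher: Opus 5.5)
Your proof is correct and is essentially the paper's argument. The only difference is that the paper repeats the reasoning of Lemma~\ref{lem:max-support-meets-upper} inline: it picks $x\in\supp(q)\cap\up a$ and then a maximal element of $\supp(q)$ above $x$. You instead cite that lemma directly for $q$ and $\up a$.
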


\begin{proof}
Let $a\in A(p)$. Since $p_a>0$, we have $p(\up a)>0$, and therefore
$q(\up a)>0$. Choose $x\in\supp(q)\cap\up a$. Since $\supp(q)$ is finite,
there is a maximal element $b$ of $\supp(q)$ with $x\leq b$. Then
$b\in A(q)$ and $a\leq x\leq b$.
\end{proof}

\begin{lemma}\label{lem:boundary-comparison}
Let $p\stle q$, and let $U$ be an upper set such that $p(U)=q(U)$. Then
\(
        \gamma_U(p)\geq\gamma_U(q).
\)
\end{lemma}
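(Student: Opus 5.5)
The proof splits into cases according to whether $q(U)$ vanishes. After that, it compares frontiers via \cref{lem:max-support-comparison} and rates via \cref{lem:coefficient-comparison}. The hypothesis $p(U)=q(U)$ is used only to transfer positivity of the $U$-mass from $q$ to $p$.

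\emph{Degenerate case.} Suppose first that $q(U)=0$. If $q=0$ then $\gamma_U(q)=0$ by definition. Otherwise every point of $A(q)\subseteq\supp(q)$ carries positive mass, so no point of $A(q)$ lies in $U$. Hence $A(q)\cap U=\varnothing$ and again $\gamma_U(q)=0$. Since $\gamma_U(p)\geq 0$ always, the inequality holds.

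\emph{Main case.} Assume $q(U)>0$. Then $p(U)=q(U)>0$, so $p\neq0$ and $q\neq0$. By \cref{lem:max-support-meets-upper} we get $|A(p)\cap U|\geq1$. By \cref{lem:max-support-comparison}, $p\stle q$ gives $A(p)\subseteq\down A(q)$. There are two subcases.

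\begin{enumerate}
\item If $A(p)=A(q)$, then the coefficients and the intersections with $U$ coincide, so $\gamma_U(p)=\gamma_U(q)$.
\item If $A(p)\neq A(q)$, then \cref{lem:coefficient-comparison} gives $c(A(p))\geq(n+1)c(A(q))$. Combining this with $|A(p)\cap U|\geq1$ and $|A(q)\cap U|\leq n$, we obtain
\[
  \gamma_U(p)\geq c(A(p))\geq(n+1)c(A(q))>n\,c(A(q))\geq c(A(q))\,|A(q)\cap U|=\gamma_U(q).
\]
\end{enumerate}

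No step is a real obstacle; the only point needing care is the degenerate case $q(U)=0$. The main-case argument is exactly the rate-separation built into the definition~\eqref{eq:coefficient} of $c$: whenever the frontier of $p$ sits strictly below that of $q$, the factor $n+1$ in $c(A(p))$ absorbs the at most $n$ frontier points of $q$ lying in $U$.
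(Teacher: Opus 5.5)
Your proof is correct and follows the paper's argument essentially step for step: the same split on whether the $U$-mass vanishes, then \cref{lem:max-support-comparison} to get $A(p)\subseteq\down A(q)$, and \cref{lem:coefficient-comparison} so that the factor $n+1$ absorbs the at most $n$ frontier points of $q$ lying in $U$. The only cosmetic difference is in the degenerate case, where you prove just $\gamma_U(q)=0$ and treat $q=0$ separately, whereas the paper notes that both values vanish.
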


\begin{proof}
If $p(U)=q(U)=0$, then neither support meets $U$, and hence $\gamma_U(p)=\gamma_U(q) = 0$. Suppose that $p(U)=q(U) > 0$. By
Lemma~\ref{lem:max-support-meets-upper}, both $A(p)\cap U$ and
$A(q)\cap U$ are nonempty. Lemma~\ref{lem:max-support-comparison} gives
$A(p)\subseteq\down A(q)$.

If $A(p)=A(q)$, then the two values of $\gamma_U$ are equal. If
$A(p)\neq A(q)$, Lemma~\ref{lem:coefficient-comparison} gives
$c(A(p))>n c(A(q))$, and hence
\[
        \gamma_U(p)
        \geq c(A(p))
        >n c(A(q))
        \geq |A(q)\cap U|c(A(q))
         =\gamma_U(q).
\]
\end{proof}

\begin{theorem}\label{thm:order-preservation}
For every $t\geq0$, the map $\Phi_t$ is order preserving, i.e.,
\(
        p\stle q
        \ \Longrightarrow\
        \Phi_t(p)\stle\Phi_t(q).
\)
\end{theorem}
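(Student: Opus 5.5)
The plan is a first-contact argument on the finitely many functions
\[
  D_U(s)=\Phi_s(q)(U)-\Phi_s(p)(U)\qquad(U\subseteq P\text{ an upper set}),
\]
using Lemma~\ref{lem:local-upper-formula} for the local dynamics and Lemma~\ref{lem:boundary-comparison} at contact times. Fix $p\stle q$ and let
\[
  T=\sup\{t\geq0:\Phi_s(p)\stle\Phi_s(q)\text{ for all }0\leq s\leq t\}.
\]
Each $D_U$ is continuous in $s$, because $\phi_p$ and $\phi_q$ are continuous. Hence the set of good times is closed, so if $T<\infty$ then $\Phi_T(p)\stle\Phi_T(q)$. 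It therefore suffices to show that whenever $\Phi_T(p)\stle\Phi_T(q)$, there is $\eta>0$ with $\Phi_{T+h}(p)\stle\Phi_{T+h}(q)$ for $0\leq h\leq\eta$. This contradicts the definition of $T$, so $T=\infty$.

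By the semigroup law (Lemma~\ref{lem:basic-family}(ii)) we may replace $p,q$ by $p'=\Phi_T(p)$ and $q'=\Phi_T(q)$ and take $T=0$. Apply Lemma~\ref{lem:local-upper-formula} to $p'$ and to $q'$ for every upper set $U$; there are finitely many, so one common $\eta>0$ works. For $0\leq h\leq\eta$ this gives
\[
  D_U(h)=\bigl(q'(U)-p'(U)\bigr)+h\bigl(\gamma_U(p')-\gamma_U(q')\bigr).
\]
There are two cases.
\begin{enumerate}[label=\textup{(\alph*)}]
\item If $q'(U)>p'(U)$, shrink $\eta$ so that $D_U(h)\geq0$ on $[0,\eta]$. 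The finite number of upper sets again allows a single $\eta$.
\item If $q'(U)=p'(U)$, Lemma~\ref{lem:boundary-comparison} gives $\gamma_U(p')\geq\gamma_U(q')$, so $D_U(h)\geq0$ on $[0,\eta]$.
\end{enumerate}
Thus $\Phi_h(p')\stle\Phi_h(q')$ on $[0,\eta]$, which is exactly the required extension past $T$.

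The main obstacle is the equality case, and it is already settled by Lemma~\ref{lem:boundary-comparison}. What needs care is that Lemma~\ref{lem:local-upper-formula} gives an exactly affine right-hand derivative on a genuine interval $[T,T+\eta]$, not only a one-sided derivative at $T$. The case $p'=0$ is covered by the convention $\gamma_U(0)=0$, which keeps the formula valid. I would also check the degenerate situation $q'=0$: then $p'\stle0$ forces $p'=0$, and both trajectories stay at zero. Finally, I would remark that the contradiction argument can equally be phrased as the connectedness of $[0,\infty)$, since the set of good times is both closed and relatively open.
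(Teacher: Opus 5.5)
Your proposal is correct and follows essentially the same first-contact argument as the paper. The paper takes $\sigma$ as the infimum of bad times, where you take the supremum of good prefixes. It then gets $\Phi_\sigma(p)\stle\Phi_\sigma(q)$ by continuity and extends past $\sigma$ exactly as you do: Lemma~\ref{lem:local-upper-formula} with one common $\eta$ over the finitely many upper sets, and Lemma~\ref{lem:boundary-comparison} in the equality case.
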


\begin{proof}
Fix \(p\stle q\). For every upper set \(U\subseteq P\), define
\[
        d_U(s)
        =
        \Phi_s(q)(U)-\Phi_s(p)(U)
        \qquad(s\geq0).
\]
By the continuity of the maps
\(s\mapsto\Phi_s(p)\) and \(s\mapsto\Phi_s(q)\), each \(d_U\) is continuous.
Moreover, since \(p\stle q\),
\[
        d_U(0)
        =
        q(U)-p(U)
        \geq0
\]
for every upper set \(U\).

Suppose, towards a contradiction, that
\(
        d_U(s)<0
\)
for some upper set \(U\) and some \(s\geq0\). Let
\[
        \sigma
        =
        \inf
        \bigl\{
          s\geq0:
          d_U(s)<0
          \text{ for some upper set }U
        \bigr\}.
\]
The set in braces is nonempty by assumption, so \(\sigma\) is well defined.

We first show that
\(
        d_U(\sigma)\geq0
        \
        \text{for every upper set }U.
\)
Indeed, by the definition of \(\sigma\), for every \(s<\sigma\) and every
upper set \(U\),
\[
        d_U(s)\geq0.
\]
If \(\sigma>0\), letting \(s\to\sigma\) and using the continuity of
\(d_U\) gives
\(
        d_U(\sigma)\geq0.
\)
If \(\sigma=0\), the same conclusion follows from
\(
        d_U(0)=q(U)-p(U)\geq0.
\)
Thus, in all cases,
\(
        d_U(\sigma)\geq0
        \
        \text{for every upper set }U.
\)
Equivalently,
\[
        \Phi_\sigma(p)(U)
        \leq
        \Phi_\sigma(q)(U)
        \qquad
        \text{for every upper set }U,
\]
and hence
\(
        \Phi_\sigma(p)\stle\Phi_\sigma(q).
\)

We now apply Lemma~\ref{lem:local-upper-formula} at time \(\sigma\).
For each upper set \(U\), there are positive numbers
\(\eta_{p,U}\) and \(\eta_{q,U}\) such that
\[
\begin{aligned}
        \Phi_{\sigma+h}(p)(U)
        &=
        \Phi_\sigma(p)(U)
        -
        h\gamma_U(\Phi_\sigma(p)),\\
        \Phi_{\sigma+h}(q)(U)
        &=
        \Phi_\sigma(q)(U)
        -
        h\gamma_U(\Phi_\sigma(q))
\end{aligned}
\]
whenever
\[
        0\leq h
        \leq
        \min\{\eta_{p,U},\eta_{q,U}\}.
\]
Since \(P\) is finite, it has only finitely many upper sets. We may
therefore choose a single \(\eta>0\) such that both formulas hold for every
upper set \(U\) and every \(0\leq h\leq\eta\).

Subtracting the two formulas gives
\[
\begin{aligned}
        d_U(\sigma+h)
        &=
        d_U(\sigma)
        +
        h\Bigl(
          \gamma_U(\Phi_\sigma(p))
          -
          \gamma_U(\Phi_\sigma(q))
        \Bigr)
\end{aligned}
\]
for every upper set \(U\) and every \(0\leq h\leq\eta\).

Consider first an upper set \(U\) such that
\(
        d_U(\sigma)=0.
\)
Then
\[
        \Phi_\sigma(p)(U)
        =
        \Phi_\sigma(q)(U).
\]
Together with
\(
        \Phi_\sigma(p)\stle\Phi_\sigma(q),
\)
Lemma~\ref{lem:boundary-comparison} gives
\[
        \gamma_U(\Phi_\sigma(p))
        \geq
        \gamma_U(\Phi_\sigma(q)).
\]
Hence
\[
        \gamma_U(\Phi_\sigma(p))
        -
        \gamma_U(\Phi_\sigma(q))
        \geq0,
\]
and therefore
\[
        d_U(\sigma+h)
        \geq0
        \qquad(0\leq h\leq\eta).
\]

Now consider an upper set \(U\) such that
\(
        d_U(\sigma)>0.
\)
By continuity of \(d_U\), there exists \(\eta_U>0\) such that
\[
        d_U(\sigma+h)>0
        \qquad(0\leq h\leq\eta_U).
\]
Again, since there are only finitely many upper sets, we may decrease
\(\eta\), if necessary, so that
\[
        d_U(\sigma+h)\geq0
\]
for every upper set \(U\) and every \(0\leq h\leq\eta\).

Thus no upper set \(U\) satisfies
\(
        d_U(s)<0
\)
for any
\(
        s\in[\sigma,\sigma+\eta].
\)
This contradicts the definition of \(\sigma\). 

Therefore
\(
        d_U(s)\geq0
\)
for every upper set \(U\subseteq P\) and every \(s\geq0\). Hence
\[
        \Phi_s(p)(U)
        \leq
        \Phi_s(q)(U)
        \qquad
        \text{for every upper set }U,
\]
and consequently
\(
        \Phi_s(p)\stle\Phi_s(q)
        \) for every \(s\geq0.
\)
\end{proof}

\subsection[Finite-poset valuation powerdomains]{Probabilistic powerdomains of finite posets are FS-domains}
\label{sec:finite-separation}

The preceding subsections constructed an order-preserving semigroup.
To turn the maps $\Phi_t$ into an FS approximate identity, two
additional properties are required. Each $\Phi_t$ must preserve directed
suprema, and for $t>0$ it must admit a finite separator. The upper-set
estimate supplies both: it compares $\Phi_t(p)$ with a simultaneous finite
approximation of $p$, and it allows every atomic coordinate to be rounded
down to a fixed rational grid.

We next prove Scott continuity of the erosion maps.
\label{subsec:finite-FS-scott}

\begin{proposition}\label{prop:Phi-scott}
For every $t\geq0$, the map
\(
        \Phi_t:\Delta_{\leq1}(P)\longrightarrow\Delta_{\leq1}(P)
\)
is Scott-continuous.
\end{proposition}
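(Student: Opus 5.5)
Monotonicity of $\Phi_t$ is already given by \cref{thm:order-preservation}, so it remains to show that $\Phi_t$ preserves directed suprema. By~\eqref{eq:directed-supremum}, directed suprema in $\Delta_{\leq1}(P)$ are computed upper set by upper set. In the finite-dimensional simplex this amounts to Euclidean convergence. Indeed, the coordinates $p_x=p(\up x)-p(\up x\setminus\{x\})$ are finite linear combinations of upper-set values. Hence if $D$ is directed with $p=\sup D$, then for each $\varepsilon>0$ there is $q\in D$ with $p(U)-q(U)<\varepsilon$ for all of the finitely many upper sets $U$ simultaneously (use directedness over the finitely many $U$). Consequently $D$, viewed as a net indexed by itself, converges to $p$ in $\R^P$. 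The plan is therefore to prove that each $\Phi_t$ is continuous as a map $\Delta_{\leq1}(P)\to\Delta_{\leq1}(P)$ in the Euclidean topology. From this, $\Phi_t(q)\to\Phi_t(p)$ along $D$. Since $\Phi_t$ is monotone, $\Phi_t(q)(U)\leq\Phi_t(p)(U)$ for every $q\in D$, and the upper-set values $\Phi_t(q)(U)$ increase along $D$ and converge to $\Phi_t(p)(U)$. This gives $\sup_{q\in D}\Phi_t(q)(U)=\Phi_t(p)(U)$ for every $U$, i.e.\ $\Phi_t(\sup D)=\sup\Phi_t(D)$.

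Euclidean continuity of $(p,t)\mapsto\Phi_t(p)$ is the main obstacle, because the frontier $A(p)$ and the rate $c(A(p))$ jump as coordinates vanish. I would avoid analysing the recursion directly and instead exploit a one-sided argument adapted to directed sets. Fix $p=\sup D$ and let $q\in D$ with $q\stle p$, $\|q-p\|$ small. Monotonicity gives $\Phi_t(q)\stle\Phi_t(p)$, so we only need a lower bound on $\Phi_t(q)(U)$ close to $\Phi_t(p)(U)$. Here the semigroup law (\cref{lem:basic-family}(ii)) and the two-sided estimate \cref{cor:finite-two-sided} should help. The idea is to find, for $q$ close enough to $p$, a small $h\geq0$ (tending to $0$ as $q\to p$) such that $\Phi_h(p)\stle q$. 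Then monotonicity and the semigroup law yield $\Phi_{t+h}(p)=\Phi_t(\Phi_h(p))\stle\Phi_t(q)\stle\Phi_t(p)$. Since $s\mapsto\Phi_s(p)$ is continuous (established in the construction), $\Phi_{t+h}(p)\to\Phi_t(p)$ as $h\to0$, which squeezes $\Phi_t(q)(U)$ to $\Phi_t(p)(U)$.

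The key step is thus producing $h$ with $\Phi_h(p)\stle q$. By \cref{lem:upper-set-estimate}, $\Phi_h(p)(U)\leq(p(U)-h)^+$ for every upper set $U$, so it suffices that $q(U)\geq p(U)-h$ for all $U$. Because $q\in D$ can be chosen with $p(U)-q(U)<h$ for all $U$ at once, as in the first paragraph, this holds for any prescribed $h>0$. Hence for each $h>0$ there is $q_h\in D$ with $\Phi_{t+h}(p)\stle\Phi_t(q_h)\stle\Phi_t(p)$. Taking suprema over $D$ and letting $h\downarrow0$, using continuity of $\phi_p$ and \eqref{eq:directed-supremum}, gives $\sup_{q\in D}\Phi_t(q)(U)=\Phi_t(p)(U)$ for every $U$. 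This argument actually sidesteps full Euclidean continuity: all that is needed is monotonicity, the upper estimate of \cref{lem:upper-set-estimate}, the semigroup law, and continuity in the time parameter, so I would present it in this streamlined form.
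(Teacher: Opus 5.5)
Your streamlined argument is correct and is essentially the paper's proof. The paper also picks $q_\eps\in D$ with $q_\eps(U)>p(U)-\eps$ for all upper sets simultaneously and uses \cref{lem:upper-set-estimate} to get $\Phi_\eps(p)\stle q_\eps$. It then chains $\Phi_{t+\eps}(p)=\Phi_t(\Phi_\eps(p))\stle\Phi_t(q_\eps)\stle\sup\Phi_t(D)$ and lets $\eps\to0$ via continuity of $s\mapsto\Phi_s(p)$, so the Euclidean-continuity plan in your first paragraph is not needed.
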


\begin{proof}
By Theorem~\ref{thm:order-preservation}, $\Phi_t$ is order preserving. It
remains to prove that it preserves directed suprema.

Let $D\subseteq\Delta_{\leq1}(P)$ be directed, put $p=\sup D$, and let
\(
        r=\sup_{q\in D}\Phi_t(q).
\)
The image $\Phi_t(D)$ is directed because $\Phi_t$ is order preserving. Since
$q\stle p$ for every $q\in D$, order preservation also gives
$\Phi_t(q)\stle\Phi_t(p)$, and hence
\begin{equation}\label{eq:scott-one-side}
        r\stle\Phi_t(p).
\end{equation}
Fix $\eps>0$. By~\eqref{eq:directed-supremum}, for every upper set
$U$ there is $q_U\in D$ such that
\(
        q_U(U)>p(U)-\eps.
\)
There are only finitely many upper sets. Directedness therefore provides one  $q_\eps\in D$ above all the finitely many $q_U$, and hence \begin{equation}\label{eq:simultaneous-approximation}
        q_\eps(U)>p(U)-\eps
        \qquad\text{for every upper set }U.
\end{equation}
Lemma~\ref{lem:upper-set-estimate} and
\eqref{eq:simultaneous-approximation} imply
\[
        \Phi_\eps(p)(U)
        \leq\bigl(p(U)-\eps\bigr)^+
        \leq q_\eps(U)
\]
for every upper set $U$. Indeed, the second inequality is immediate when
$p(U)\leq\eps$, and otherwise it follows from
\eqref{eq:simultaneous-approximation}. Hence
$\Phi_\eps(p)\stle q_\eps$. Applying $\Phi_t$ and using order
preservation and Lemma~\ref{lem:basic-family}(ii), we obtain
\begin{equation}\label{eq:time-shift-comparison}
        \Phi_{t+\eps}(p)
        =\Phi_t(\Phi_\eps(p))
        \stle\Phi_t(q_\eps)
        \stle r.
\end{equation}
For fixed $p$, the map $s\mapsto\Phi_s(p)$ is continuous. 
For each
upper set $U$, letting
$\eps\rightarrow0$ in~\eqref{eq:time-shift-comparison} gives $\Phi_t(p)(U)\leq r(U)$. 
Since the finite family of upper-set coordinates determines the stochastic order, this ordinary coordinate limit yields $\Phi_t(p)\stle r$. Together with~\eqref{eq:scott-one-side}, this yields
\[
        \Phi_t(p)=r=\sup_{q\in D}\Phi_t(q).
\]
Therefore $\Phi_t$ preserves directed suprema and is Scott-continuous.
\end{proof}

We then establish finite separation and obtain the FS approximate identity.
\label{subsec:finite-FS-separation}

\begin{proposition}\label{prop:finite-separation}
For every $t>0$, the Scott-continuous map $\Phi_t$ is finitely separated from
$\id_{\Delta_{\leq1}(P)}$.
\end{proposition}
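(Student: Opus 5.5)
We need a finite set $M$ with $\Phi_t(p)\stle m\stle p$ for every $p$. The subsection's introduction points the way: round each atomic coordinate of $p$ down to a rational grid. Fix $t>0$, choose an integer $N$ with $N\geq n/t$, and let
\[
  M=\Bigl\{m\in\Delta_{\leq1}(P): m_x\in\tfrac1N\Z \text{ for every } x\in P\Bigr\},
\]
which is finite. For $p\in\Delta_{\leq1}(P)$, define $m$ by $m_x=\lfloor Np_x\rfloor/N$. Then $0\leq m_x\leq p_x$ coordinatewise, so $\sum_x m_x\leq1$ and $m\in M$.

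The inequality $m\stle p$ is immediate. Since $m\leq p$ coordinatewise, $m(U)\leq p(U)$ for every upper set $U$.

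For $\Phi_t(p)\stle m$, we use Lemma~\ref{lem:upper-set-estimate}, which gives $\Phi_t(p)(U)\leq(p(U)-t)^+$ for every upper set $U$. On the other hand,
\[
  m(U)\geq p(U)-\frac{|U|}{N}\geq p(U)-\frac nN\geq p(U)-t .
\]
Since also $m(U)\geq0$, we get $m(U)\geq(p(U)-t)^+\geq\Phi_t(p)(U)$. This holds for all upper sets $U$, including $U=\varnothing$ where everything vanishes, so $\Phi_t(p)\stle m\stle p$ and $M$ is a finite separator.

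Scott continuity of $\Phi_t$ is already given by Proposition~\ref{prop:Phi-scott}. The only point needing care is the choice of grid: the rounding error must stay below $t$ uniformly over all upper sets, and this is exactly what the bound $N\geq n/t$ ensures. If one wants to avoid $N$ depending on $t$ through $n$, the same argument works for any mesh $1/N<t/n$.
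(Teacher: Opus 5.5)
Your proof is correct and is essentially the paper's argument: round each atomic coordinate down to the grid $\frac1N\mathbb{Z}$ with $n/N\le t$, observe $m\le p$ coordinatewise, and combine the rounding error bound $p(U)-m(U)\le n/N$ with Lemma~\ref{lem:upper-set-estimate}. The only cosmetic difference is that the paper takes $n/N<t$ strictly and splits into the cases $p(U)\le t$ and $p(U)>t$, whereas you absorb both cases at once via $m(U)\ge (p(U)-t)^+$, which works equally well with the non-strict choice $N\ge n/t$.
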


\begin{proof}
Choose \(N\geq1\) such that \(n/N<t\), and let
\[ M_N
    =
    \left\{
        m\in\Delta_{\leq1}(P):
        Nm_x\in\N
        \text{ for every }x\in P
    \right\}.
\]
For every \(m\in M_N\), write \(m_x=k_x/N\), where
\(k_x\in\N\). Since
\(
    \sum_{x\in P}m_x\leq1,
\)
we have
\(
    \sum_{x\in P}k_x\leq N.
\)
In particular, \(0\leq k_x\leq N\) for every \(x\in P\). Thus each
coordinate \(m_x\) belongs to the finite set
\(
    \left\{0,\frac1N,\ldots,1\right\}.
\)
Since \(P\) is finite, only finitely many such vectors exist. Hence \(M_N\)
is finite.
For $p\in\Delta_{\leq1}(P)$, round each coordinate
down by setting
\[
        \lambda_x^p=\frac{\lfloor Np_x\rfloor}{N}.
\]
For every $x$, $\lambda_x^p\leq p_x$, so
\[
        \sum_{x\in P}\lambda_x^p
        \leq\sum_{x\in P}p_x
        \leq1.
\]
Thus $\lambda^p\in M_N$ and $\lambda^p\leq p$ coordinatewise; in
particular, $\lambda^p\stle p$. For every upper set $U$,
\begin{equation}\label{eq:rounding-error}
        0\leq p(U)-\lambda^p(U)=\sum_{x\in U}p_x-\sum_{x\in U}\frac{\lfloor Np_x\rfloor}{N}=\sum_{x\in U}\frac{Np_x-\lfloor Np_x\rfloor}{N}
        \leq\frac{|U|}{N}
        \leq\frac nN<t.
\end{equation}
If $p(U)\leq t$, Lemma~\ref{lem:upper-set-estimate} gives
$\Phi_t(p)(U)=0\leq\lambda^p(U)$. If $p(U)>t$, then
\eqref{eq:rounding-error} gives $p(U)-t<\lambda^p(U)$, while
Lemma~\ref{lem:upper-set-estimate} gives
\[
        \Phi_t(p)(U)\leq p(U)-t<\lambda^p(U).
\]
Thus $\Phi_t(p)\stle\lambda^p\stle p$ for every $p$, and $M_N$ is a finite
separator for $\Phi_t$.
\end{proof}

\begin{theorem}\label{thm:subprobability-FS}
For every finite poset $P$, the powerdomain $\Vsub(P)$ is a countably based
FS-domain.
\end{theorem}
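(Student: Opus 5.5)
The plan is to assemble the approximate identity from the erosion maps $\Phi_t$ of \cref{subsec:construction-flow}, indexed along the sequence $t_k=1/k$ ($k\geq1$). We then verify the clauses of \cref{def:FS} one at a time, relying on the results already proved in this section.

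\textbf{Step 1: each $\Phi_t$ is an admissible map.} Each $\Phi_t$ is a Scott-continuous self-map of $\Vsub(P)\cong\Delta_{\leq1}(P)$ by \cref{prop:Phi-scott}. For $t>0$ it is finitely separated from the identity by \cref{prop:finite-separation}.

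\textbf{Step 2: directedness.} If $s\geq t$, \cref{lem:basic-family}(iii) gives $\Phi_s(p)\leq\Phi_t(p)$ coordinatewise, hence $\Phi_s(p)\stle\Phi_t(p)$. So the family $(\Phi_{1/k})_{k\geq1}$ is increasing in $k$ in the pointwise order, and in particular directed.

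\textbf{Step 3: the supremum is the identity.} This is the step needing the most care, though it is short. Since the family is increasing, its pointwise supremum at $p$ is computed upper set by upper set, as in \eqref{eq:directed-supremum}:
\[
  \Bigl(\sup_k\Phi_{1/k}(p)\Bigr)(U)=\sup_k\Phi_{1/k}(p)(U).
\]
\cref{cor:finite-two-sided} gives, for every nonempty upper set $U$,
\[
  \bigl(p(U)-K_P/k\bigr)^+\leq\Phi_{1/k}(p)(U)\leq p(U).
\]
Hence $\sup_k\Phi_{1/k}(p)(U)=p(U)$. The empty upper set is trivial, since both sides vanish. Because a valuation on $P$ is determined by its values on upper sets, $\sup_k\Phi_{1/k}(p)=p$. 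One could alternatively invoke the continuity of $s\mapsto\Phi_s(p)$ at $0$ together with $\Phi_0=\id$, but the two-sided estimate avoids any subtlety about identifying order limits with Euclidean limits.

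\textbf{Step 4: conclusion.} Steps 1--3 show that $(\Phi_{1/k})_k$ is an FS approximate identity, so $\Vsub(P)$ is an FS-domain. Continuity then follows from \cref{lem:FS-properties}(2), and pointedness from the zero valuation. For countable basedness, cite \cite[Corollaries~5.4 and~5.5]{Jones1990} as recalled in \cref{subsec:prelim-finite-posets}: $P$ has a finite basis, so the simple valuations with rational coefficients form a countable basis. As a self-contained alternative, the union $\bigcup_N M_N$ of the finite separators from \cref{prop:finite-separation} is countable. Each $\lambda$ chosen in that proof satisfies $\Phi_t(p)\stle\lambda\stle p$, and $\Phi_t(p)\ll p$ by \cref{lem:FS-properties}(1). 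Interpolating yields a countable basis, but citing Jones suffices.
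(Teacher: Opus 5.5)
Your proof is correct and follows essentially the paper's argument: the paper also uses the erosion maps along a decreasing sequence ($t_k=2^{-k}$ rather than $1/k$), the same appeals to \cref{prop:Phi-scott}, \cref{prop:finite-separation} and \cref{lem:basic-family}(iii), and Jones's results for countable basedness; the only real difference is that the paper gets $\sup_k\Phi_{t_k}(p)(U)=p(U)$ from continuity of $s\mapsto\Phi_s(p)$ at $0$, where you use the two-sided bound of \cref{cor:finite-two-sided}. You should also mention the trivial case $P=\varnothing$ (where $\Vsub(P)$ is a singleton), which the paper treats separately because the erosion construction assumes $P\neq\varnothing$.
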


\begin{proof}
If $P=\varnothing$, then $\Vsub(P)$ is a singleton. Assume that $P$ is
nonempty. By the standard results recalled in Section~\ref{sec:finite-valuations},
$\Vsub(P)\cong\Delta_{\leq1}(P)$ is a countably based domain. It remains
to construct an FS approximate identity.

For $k\in\N$, put $t_k=2^{-k}$. Since $t_{k+1}<t_k$, Lemma~\ref{lem:basic-family}(iii) gives, for every $p$,
\begin{equation}\label{eq:directed-approximation}
        \Phi_{t_k}(p)\stle\Phi_{t_{k+1}}(p)\stle p.
\end{equation}
Each $\Phi_{t_k}$ is Scott-continuous by
Proposition~\ref{prop:Phi-scott} and finitely separated by
Proposition~\ref{prop:finite-separation}.

Fix $p\in\Delta_{\leq1}(P)$. The map $t\mapsto\Phi_t(p)$ is continuous and
$\Phi_0(p)=p$, so for every upper set $U$,
\[
        \lim_{k\to\infty}\Phi_{t_k}(p)(U)=p(U).
\]
The sequence in~\eqref{eq:directed-approximation} is directed. If $r$ denotes
its supremum, then~\eqref{eq:directed-supremum} gives
\[
        r(U)
        =\sup_k\Phi_{t_k}(p)(U)
        =\lim_{k\to\infty}\Phi_{t_k}(p)(U)
        =p(U)
\]
for every upper set $U$. Hence $r=p$. Thus
$\sup_k\Phi_{t_k}=\id$ pointwise, and $(\Phi_{t_k})_{k\in\N}$ is an FS
approximate identity.
\end{proof}

Thus the maps $(\Phi_{2^{-k}})_{k\in\N}$ form an explicit increasing FS
approximate identity on $\Vsub(P)$. This completes the subprobability
analysis for finite posets and provides the finite generators used in the
second half of the paper.


We finally consider normalized probabilistic valuations on finite posets.
\label{subsec:finite-FS-normalized}

Let $Q$ be a finite nonempty poset. Its normalized probabilistic powerdomain is
\[
        \Vone(Q)=\Delta_1(Q)
        =\left\{p\in\Rplus^Q:\sum_{x\in Q}p_x=1\right\},
\]
ordered by the same upper-set inequalities. Recall that a \emph{least}
element is below every point, whereas a \emph{minimal} element merely has no
strictly smaller point. This distinction is decisive here. If $Q$ has a
least element, that point can store the missing mass and the normalized case
reduces to the subprobability case. If $Q$ has no least element, the
probability vectors supported on $\Min(Q)$ form an infinite family of
minimal elements of $\Vone(Q)$, which is incompatible with a finite
separator.

\begin{proposition}\label{prop:probability-positive}
If $Q$ has a least element $\bot$, then restriction of coordinates gives an
order isomorphism
\[
        \Vone(Q)\cong\Vsub(Q\setminus\{\bot\}).
\]
Consequently, $\Vone(Q)$ is a countably based FS-domain.
\end{proposition}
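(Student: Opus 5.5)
Let $Q'=Q\setminus\{\bot\}$ and define $\rho:\Delta_1(Q)\to\Delta_{\leq1}(Q')$ by $\rho(p)=(p_x)_{x\in Q'}$. Since $p_\bot=1-\sum_{x\in Q'}p_x\geq0$, the map $\rho$ lands in $\Delta_{\leq1}(Q')$. Its inverse is $\iota(r)=r+(1-\sum_{x}r_x)\delta_\bot$, which is clearly a probability vector on $Q$. So $\rho$ is a bijection, with $\rho\circ\iota=\id$ and $\iota\circ\rho=\id$ by construction. (If $Q=\{\bot\}$, both sides are singletons, and the argument below still applies with $Q'=\varnothing$.)

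The plan is to show that $\rho$ and $\iota$ are monotone for the upper-set orders~\eqref{eq:upper-set-order}. This comes down to matching up the upper sets of $Q$ and of $Q'$.
\begin{itemize}
\item Every upper set $U$ of $Q$ either equals $Q$, or omits $\bot$. In the latter case $U$ is an upper set of $Q'$.
\item Conversely, every upper set $V$ of $Q'$ is an upper set of $Q$, since no element of $Q'$ lies below $\bot$.
\end{itemize}

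Now compare the two orders. For $p,q\in\Delta_1(Q)$, the condition $p(U)\leq q(U)$ is automatic when $U=Q$, because both sides equal $1$. For $U\subseteq Q'$ it reads exactly $\rho(p)(U)\leq\rho(q)(U)$. Hence $p\stle q$ in $\Vone(Q)$ if and only if $\rho(p)\stle\rho(q)$ in $\Vsub(Q')$. So $\rho$ is an order isomorphism, and its inverse is $\iota$.

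For the consequence, an order isomorphism between dcpos preserves directed suprema, the way-below relation, bases and FS approximate identities. Take the FS approximate identity $(\Phi_{2^{-k}})_k$ on $\Vsub(Q')$ from \cref{thm:subprobability-FS}. Conjugating gives the family $\iota\circ\Phi_{2^{-k}}\circ\rho$ on $\Vone(Q)$, with finite separators $\iota(M_N)$. A countable basis transfers in the same way.

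When $Q'=\varnothing$, both sides are singletons and the claim is trivial. No step is a real obstacle. The only point needing care is the observation that the single upper set of $Q$ containing $\bot$ is $Q$ itself, and that it imposes no constraint because of normalization.
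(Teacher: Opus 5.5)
Your proof is correct and follows essentially the same route as the paper. The paper also uses restriction $p\mapsto p|_{Q\setminus\{\bot\}}$ with inverse placing the missing mass at $\bot$, and observes that the only upper set containing $\bot$ is $Q$ itself, where all probabilistic valuations take value $1$; it then transfers the FS-domain and countable-basis properties from \cref{thm:subprobability-FS} across the isomorphism.
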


\begin{proof}
This is the finite-poset instance of Edalat's lifting trick
\cite[Lemma~6.1]{GoubaultLarrecq2022}. 
 Put $R=Q\setminus\{\bot\}$. Restriction sends
$p\in\Vone(Q)$ to $p|_R\in\Vsub(R)$. Its inverse sends
$\mu\in\Vsub(R)$ to the probabilistic valuation $\widehat\mu$ defined by
\[
  \widehat\mu_x=\mu_x\quad(x\in R),
  \qquad
  \widehat\mu_\bot=1-\mu(R).
\]
 An upper set of $Q$ not containing $\bot$ is exactly an upper set of $R$,
whereas an upper set containing $\bot$ is necessarily $Q$ itself, on which
every probabilistic valuation has value $1$. Hence restriction and its inverse
preserve and reflect the stochastic order. They are therefore inverse order
isomorphisms of dcpos. The final assertion follows from
\cref{thm:subprobability-FS}; when $Q=\{\bot\}$, both sides are singletons.
\end{proof}

\begin{proposition}\label{prop:probability-negative}
If $Q$ has no least element, then no self-map $f:\Vone(Q)\to\Vone(Q)$ admits a
finite set $F\subseteq\Vone(Q)$ such that, for every $p\in\Vone(Q)$, some
$m\in F$ satisfies
\(
        f(p)\stle m\stle p.
\)
In particular, $\Vone(Q)$ is not an FS-domain.
\end{proposition}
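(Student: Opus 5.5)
The plan is to exploit the infinite family of minimal elements of $\Vone(Q)$ that the discussion before the statement points to. Since $Q$ is finite and has no least element, $\Min(Q)$ contains at least two distinct points $a\neq b$. For $\lambda\in[0,1]$ put
\[
  p^\lambda=\lambda\delta_a+(1-\lambda)\delta_b\in\Vone(Q).
\]
The first step is to show that each valuation supported on $\Min(Q)$ is minimal in $\Vone(Q)$ for the stochastic order. I would argue as follows. Suppose $r\stle p$ with $\supp(p)\subseteq\Min(Q)$, and take $x\notin\Min(Q)$. The set $U=\up x$ is an upper set disjoint from $\Min(Q)$: indeed, if $m\geq x$ with $m$ minimal, then $m=x$, which is impossible since $x$ is not minimal. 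Hence $r_x\leq r(\up x)\leq p(\up x)=0$, so $r$ is also supported on $\Min(Q)$. For $m\in\Min(Q)$, the set $V_m=Q\setminus\{m\}$ is an upper set, because nothing lies strictly below $m$. Then $r(V_m)\leq p(V_m)$, and using total mass $1$ this gives $r_m\geq p_m$ for every $m\in\Min(Q)$. Summing over $\Min(Q)$, where both valuations have mass $1$, forces $r=p$.

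The second step applies minimality to a putative separator. Suppose $f$ and a finite $F$ exist. For each $\lambda$ choose $m_\lambda\in F$ with $f(p^\lambda)\stle m_\lambda\stle p^\lambda$. Minimality of $p^\lambda$ forces $m_\lambda=p^\lambda$, so $p^\lambda\in F$ for every $\lambda\in[0,1]$. These are uncountably many distinct elements, which contradicts the finiteness of $F$.

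The third step derives the final assertion. An FS-domain requires at least one map $f$ that is finitely separated from the identity; such an $f$ would be exactly a self-map of the kind just excluded. Hence $\Vone(Q)$ is not an FS-domain. The family $(f_i)$ would in fact have to be nonempty, but even a single member suffices for the contradiction.

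The proof is straightforward, and the only real point of care is the minimality step. There one must use the right upper sets, namely $\up x$ for non-minimal $x$ and the complements of minimal points, together with normalization of total mass. The normalization is exactly what fails in $\Vsub$, where $0$ lies below everything.
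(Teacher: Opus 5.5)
Your proof is correct and follows essentially the paper's argument: valuations supported on $\Min(Q)$ are minimal, by the upper sets $Q\setminus\{m\}$ and total mass one, so a finite separator would have to contain infinitely many of them. Your preliminary step with $\up x$ for non-minimal $x$ is redundant, since $r_m\geq p_m$ on $\Min(Q)$ together with total mass one already forces $r(Q\setminus\Min(Q))=0$, and the segment $p^\lambda$ is just a concrete choice of the paper's infinite family $F_M$.
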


\begin{proof}
Let $M=\Min(Q)$. Since $Q$ is finite, every element lies above a minimal
element. Thus, if $M$ were a singleton, its unique element would be below
every point of $Q$ and would be a least element. Hence $|M|\geq2$.

Let $p\in\Vone(Q)$ be supported on $M$, and suppose that $q\stle p$. For
every $m\in M$, the set $Q\setminus\{m\}$ is upper, and therefore
\[
  1-q_m=q(Q\setminus\{m\})
  \leq p(Q\setminus\{m\})=1-p_m.
\]
Thus $q_m\geq p_m$ for every $m\in M$. Since
$\sum_{m\in M}p_m=1$ and $q$ also has total mass one, it follows that
$q_m=p_m$ for every $m\in M$ and $q(Q\setminus M)=0$. Hence $q=p$, so every
probabilistic valuation supported on $M$ is a minimal element of $\Vone(Q)$.
Then
\[
  F_M=\{p\in\Vone(Q):\supp(p)\subseteq M\}
\]
is infinite because $|M|\geq2$.

If a finite set $F$ separated a self-map $f$ from the identity, then for each
$p\in F_M$ there would be $m_p\in F$ with
$f(p)\stle m_p\stle p$. Minimality of $p$ would give $m_p=p$, forcing the
finite set $F$ to contain the infinite set $F_M$, a contradiction.
\end{proof}

\begin{corollary}\label{cor:probability-characterization}
For every finite nonempty poset $Q$,
\[
        \Vone(Q)\text{ is an FS-domain}
        \quad\Longleftrightarrow\quad
        Q\text{ has a least element}.
\]
In the positive case, $\Vone(Q)$ is countably based.
\end{corollary}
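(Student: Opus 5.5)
The corollary follows by assembling Proposition~\ref{prop:probability-positive} and Proposition~\ref{prop:probability-negative}. We will prove the two directions of the equivalence separately and then treat countable basedness.

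For the direction ``$\Leftarrow$'', assume $Q$ has a least element $\bot$. Proposition~\ref{prop:probability-positive} gives an order isomorphism $\Vone(Q)\cong\Vsub(Q\setminus\{\bot\})$. An order isomorphism between dcpos preserves directed suprema, Scott-continuous maps, finite separators and countable bases. So we conjugate the FS approximate identity $(\Phi_{2^{-k}})_{k\in\N}$ of Theorem~\ref{thm:subprobability-FS}, taken on $\Vsub(Q\setminus\{\bot\})$, by this isomorphism. This yields an FS approximate identity on $\Vone(Q)$, together with a countable basis. In the degenerate case $Q=\{\bot\}$ both sides are singletons, and a singleton is trivially a countably based FS-domain, as noted in Proposition~\ref{prop:probability-positive}. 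This direction also gives the final assertion, since countable basedness is only claimed in the positive case.

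For the direction ``$\Rightarrow$'', we argue by contraposition. Suppose $Q$ has no least element, and suppose for contradiction that $\Vone(Q)$ were an FS-domain. Its FS approximate identity $(f_i)$ is nonempty, because it is directed and directed sets are nonempty by convention. Any single $f_i$ is then a self-map of $\Vone(Q)$ admitting a finite separator in the sense of Definition~\ref{def:FS}. Proposition~\ref{prop:probability-negative} states that no such self-map exists, which is the required contradiction.

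None of these steps is a real obstacle. The only point needing care is that ``FS-domain'' requires at least one finitely separated map, and this holds because directed families are nonempty; that is exactly what lets us apply Proposition~\ref{prop:probability-negative}.
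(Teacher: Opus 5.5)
Your proposal is correct and matches the paper, where the corollary is simply the conjunction of Proposition~\ref{prop:probability-positive} (which already concludes that $\Vone(Q)$ is a countably based FS-domain, including the singleton case $Q=\{\bot\}$) and Proposition~\ref{prop:probability-negative} (whose final sentence is exactly the negative direction). The one detail this needs is that an FS approximate identity is directed and hence nonempty, so some finitely separated self-map would exist; you note this correctly.
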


The finite-poset analysis is now complete: subprobabilistic valuations always
form an FS-domain, whereas normalized probabilistic valuations do so exactly
when the underlying finite poset has a least element. We now use the
subprobability spaces $\Vsub(P)$ as finite-dimensional factorization objects
for a class of general domains.

\section{Factorization-approximation: a new method to generate domain categories}
\label{sec:factorization-classes}

In this section, we develop a new method for constructing objects from a given class of known objects or structures. Unlike standard constructions such as products, coproducts, and subobjects, our method transfers essential properties of the given class to newly constructed objects by factoring morphisms through members of that class. 
Applying this method, we construct a class $\omega{\bf FVA} $ of objects whose categorical properties settle the generalized Jung–Tix problem.
The underlying principle of the method is not specific to the present setting and may find applications in other areas of mathematics.

\subsection{Factorization approximation}

More precisely, from a prescribed class ${\bf C}$ we generate a full subcategory by requiring the identity to be approximated by maps factoring through objects of ${\bf C}$. This construction transfers continuity, countable bases, finite separation, and closure properties from the generators to the generated objects. For the Jung--Tix problem, the natural generators are the finite-dimensional probabilistic domains $\Vsub(P)$, where $P$ is finite; the method therefore incorporates probabilistic structure at every approximation stage while retaining finite order-theoretic data.

\begin{definition}\label{def:C-factorization}
Let ${\bf C}$ be a nonempty class of dcpos.
\begin{enumerate}
\item[(1)] Let $D,E$ be dcpos. A Scott-continuous map $a:D\to E$ is \emph{${\bf C}$-factorable} if there are $C\in{\bf C}$ and Scott-continuous maps
\[
  D\xrightarrow{p}C\xrightarrow{e}E
\]
such that $a=e\circ p$.
\item[(2)] A \emph{${\bf C}$-factorization approximate identity} on a dcpo $D$ is an increasing sequence of ${\bf C}$-factorable maps
\[
  a_n=e_np_n:D\xrightarrow{p_n}C_n\xrightarrow{e_n}D,
  \qquad C_n\in{\bf C},
\]
such that
\[
  a_n\leq a_{n+1}\leq\id_D,
  \qquad
  \sup_{n\in\N}a_n=\id_D
\]
pointwise. A dcpo admitting such an approximate identity is called \emph{${\bf C}$-factorization approximable}. We write $\FAC$ for the full subcategory of $\DCPO$ consisting of all such dcpos.
\end{enumerate}
\end{definition}

The constant sequence $\id_C$ shows that every $C\in{\bf C}$ belongs to $\FAC$. More generally, we have the following elementary retract property.

\begin{proposition}\label{def:C-contain}
The category $\FAC$ contains every Scott-continuous retract of an object of ${\bf C}$. In particular, ${\bf C}\subseteq\FAC$.
\end{proposition}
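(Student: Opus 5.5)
The plan is to take the approximate identity to be the constant sequence through the retraction. Let $D$ be a Scott-continuous retract of some $C\in{\bf C}$. That means there are Scott-continuous maps
\[
  D\xrightarrow{s}C\xrightarrow{r}D,
  \qquad r\circ s=\id_D .
\]
For every $n\in\N$ we set $C_n=C$, $p_n=s$, $e_n=r$, and $a_n=e_n\circ p_n=r\circ s$. Each $a_n$ is then ${\bf C}$-factorable in the sense of \cref{def:C-factorization}(1), since it factors through the object $C\in{\bf C}$.

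Next we check the conditions of \cref{def:C-factorization}(2). Because $a_n=r\circ s=\id_D$ for every $n$, the sequence is constant. Hence $a_n\leq a_{n+1}\leq\id_D$ holds trivially. The pointwise supremum of a constant sequence is its common value, so $\sup_n a_n=\id_D$. Therefore $(a_n)_{n\in\N}$ is a ${\bf C}$-factorization approximate identity on $D$, and $D\in\FAC$.

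For the final assertion, take $D=C\in{\bf C}$ with $s=r=\id_C$. This exhibits $C$ as a (trivial) retract of itself, which gives ${\bf C}\subseteq\FAC$.

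There is no real obstacle here. The only point to watch is that ``retract'' is read as a retraction--section pair of Scott-continuous maps with composite $\id_D$, which is exactly the factorization that \cref{def:C-factorization} asks for. No continuity or separation properties of $C$ are needed at this stage.
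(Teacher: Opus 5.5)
Your proposal is correct and matches the paper's proof: the constant sequence $a_n=r\circ s=\id_D$, factoring through the retraction pair $D\to C\to D$, is a ${\bf C}$-factorization approximate identity. The inclusion ${\bf C}\subseteq\FAC$ then follows from the trivial retract $s=r=\id_C$, exactly as in the paper's constant-identity remark.
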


\begin{proof}
If $D$ is a Scott-continuous retract of $C\in{\bf C}$, with $e:D\to C$, $r:C\to D$, and $re=\id_D$, then the constant sequence $\id_D=re$ is a ${\bf C}$-factorization approximate identity on $D$.
\end{proof}

\subsection{Continuity and countable bases}

 \begin{proposition}\label{prop:C-continuous}
Let ${\bf C}$ be a class of continuous dcpos. Then every object of $\FAC$ is continuous. Moreover, if every object of ${\bf C}$ is countably based, then every object of $\FAC$ is countably based.
\end{proposition}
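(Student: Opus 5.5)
The plan is to show directly that $D\in\FAC$ has a basis, and then invoke \cref{lem:cont-equal}. Fix a ${\bf C}$-factorization approximate identity $a_n=e_np_n$ with $C_n\in{\bf C}$, and for each $n$ fix a basis $B_n$ of the continuous dcpo $C_n$ (countable if $C_n$ is countably based). The candidate basis of $D$ is
\[
  B=\bigcup_{n\in\N}e_n(B_n).
\]
Since a countable union of countable sets is countable, the countability assertion follows once we know that $B$ is a basis.

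The key step is a way-below transfer. Let $b\in B_n$ with $b\ll p_n(x)$ in $C_n$; we want $e_n(b)\ll x$ in $D$. Take a directed set $E\subseteq D$ with $x\leq\sup E$. Then
\[
  p_n(x)\leq p_n(\sup E)=\sup p_n(E),
\]
so $b\leq p_n(y)$ for some $y\in E$. Hence
\[
  e_n(b)\leq e_np_n(y)=a_n(y)\leq y,
\]
using $a_n\leq\id_D$. Therefore $e_n(b)\ll x$. Note that this uses only that $a_n$ lies below the identity, not that $e_n$ preserves $\ll$.

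Approximation comes next. By continuity of $C_n$, $p_n(x)=\sup\{b\in B_n:b\ll p_n(x)\}$, a directed set. Applying the Scott-continuous map $e_n$ gives
\[
  a_n(x)=\sup\{e_n(b):b\in B_n,\ b\ll p_n(x)\},
\]
and each such $e_n(b)$ lies in $B_x=\{c\in B:c\ll x\}$ by the transfer step. Taking the supremum over $n$ and using $\sup_n a_n(x)=x$, we see that $x$ is the supremum of $B_x$, provided $B_x$ is directed.

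The main obstacle is directedness of $B_x$. Given $c_1,c_2\in B_x$, we have $c_1\ll x=\sup_n a_n(x)$, where $(a_n(x))_n$ is increasing and hence directed. So $c_i\leq a_{m_i}(x)$ for some $m_i$. Put $m=\max(m_1,m_2)$; by monotonicity of the sequence, $c_1,c_2\leq a_m(x)$. By interpolation in the continuous dcpo $C_m$ we could seek a common upper bound from $B_m$, but there is a simpler choice. The set $S=\{e_m(b):b\in B_m,\ b\ll p_m(x)\}$ is directed, is contained in $B_x$, and has supremum $a_m(x)$. However, $c_i\leq a_m(x)$ does not by itself put $c_i$ below an element of $S$. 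To repair this, first interpolate $c_i\ll c_i'\ll x$ in $D$; interpolation is not yet available, but the same $\ll$ argument still applies. Choose $m$ with $c_i\ll x$ giving $c_i\leq a_m(x)$, and then pass to $a_{m+1}$ instead. The cleaner route is to define $B'_x=\bigcup_n\{e_n(b):b\ll p_n(x)\}$ and to show that every $c\ll x$ is below some element of $B'_x$: from $c\ll x=\sup$ of the directed set $B'_x$ (directed since $a_n\leq a_{n+1}$ and each level is directed, combined with the way-below transfer applied at level $n+1$), we get $c\leq$ some element of $B'_x$. Hence $B_x$ and $B'_x$ are mutually cofinal, so $B_x$ is directed with supremum $x$. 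The delicate point I expect is verifying directedness of $B'_x$ across different levels $n<n'$. For this, I would use that $e_n(b)\ll x=\sup_k a_k(x)$, so $e_n(b)$ lies below some $a_k(x)$ with $k\geq n'$, and then $a_k(x)$ is the supremum of the directed level-$k$ set. Since $e_n(b)\leq a_k(x)$ only, I would first interpolate $b\ll b'\ll p_n(x)$ in $C_n$, so that $e_n(b)\ll e_n(b')$-style arguments give strict way-belowness $e_n(b)\ll x$ directly. That yields $e_n(b)\leq$ some level-$k$ element, and similarly for the other element, which gives an upper bound inside level $k$.
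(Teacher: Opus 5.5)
Your overall plan (show that \(B=\bigcup_n e_n(B_n)\) is a basis) is sound, and so are your way-below transfer and the computation \(\sup B_x=x\). The gap is the directedness step, which you flag as delicate but do not close. Everything reduces to showing that \(B'_x=\bigcup_n S_n\), with \(S_n=\{e_n(b):b\in B_n,\ b\ll p_n(x)\}\), is directed. Concretely, you must show that an element \(e_n(b)\in S_n\) lies below some element of \(S_k\) for all large \(k\).

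From \(e_n(b)\ll x=\sup_k a_k(x)\) you only get \(e_n(b)\leq a_k(x)\). Since \(a_k(x)\) is merely the supremum of \(S_k\), this does not put \(e_n(b)\) below any member of \(S_k\); what you need is \(e_n(b)\ll a_k(x)\). None of your proposed repairs supplies this:
\begin{itemize}
\item Interpolating \(c_i\ll c_i'\ll x\) in \(D\) presupposes the continuity of \(D\) you are trying to prove.
\item An argument of the form \(e_n(b)\ll e_n(b')\) is unavailable, because \(e_n\) need not preserve \(\ll\), as you observed yourself.
\item The conclusion you finally draw, \(e_n(b)\ll x\), is just the transfer statement you already had, and ``\(e_n(b)\leq\) some level-\(k\) element'' does not follow from it.
\item The parenthetical ``directed since \(a_n\leq a_{n+1}\) and each level is directed'' is also insufficient, since \(\sup S_n\leq\sup S_{n'}\) says nothing about individual elements.
\end{itemize}

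The missing idea is that your transfer lemma holds at every point \(y\): if \(c\ll p_n(y)\) then \(e_n(c)\ll y\). So apply it at \(y=a_k(x)\) rather than at \(x\).
\begin{itemize}
\item Interpolate \(b\ll b''\ll p_n(x)\) in \(C_n\).
\item By Scott continuity of \(p_n\), \(p_n(x)=\sup_k p_n(a_k(x))\) is a directed supremum, so \(b''\leq p_n(a_k(x))\) for some \(k\). Hence \(b\ll p_n(a_k(x))\).
\item Transfer at \(a_k(x)\) then gives \(e_n(b)\ll a_k(x)\leq a_K(x)\) for all \(K\geq k\).
\item Since \(a_K(x)=\sup S_K\) with \(S_K\) directed, \(e_n(b)\) lies below an element of \(S_K\).
\end{itemize}
Do the same for the second element, take \(K\) large enough for both, and use directedness of \(S_K\). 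This gives directedness of \(B'_x\), and your cofinality argument then finishes the proof.

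For comparison, the paper avoids directedness altogether. It verifies criterion (3) of \cref{lem:cont-equal} by showing that \(V=p_n^{-1}(\mathord{\Uparrow}_{C_n}c)\) is a Scott-open neighbourhood of \(x\) contained in \(\uparrow e_n(c)\subseteq U\), using \(a_n\leq\id_D\). It then obtains countable basedness from the resulting countable base \(\{\mathord{\Uparrow}e_n(b):b\in B_n\}\) of the Scott topology together with \cref{lem:cont-property}(3).
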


\begin{proof}
Let \(D\in\FAC\), and choose a
 \({\bf C}\)-factorization approximate identity
 \[
   a_n=e_np_n:D\xrightarrow{p_n}C_n\xrightarrow{e_n}D, \qquad C_n\in{\bf C} 
   \qquad(n\in\N)
 \]
  such that
 \[
   a_n\leq a_{n+1}\leq\id_D,
   \qquad
   \sup_{n\in\N}a_n=\id_D
 \]
 pointwise.  
 
 We first record the following consequence of \(a_n=e_np_n\leq\id_D\):
 for every \(x\in D\) and every \(c\in C_n\),
 \begin{equation}\label{eq:C-way-below-transfer}
   c\ll_{C_n}p_n(x)
   \quad\Longrightarrow\quad
   e_n(c)\ll_Dx.
 \end{equation}
 Indeed, let \(S\subseteq D\) be directed and suppose that $
   x\leq\sup S$.  Since \(p_n\) is Scott-continuous,
 \[
   p_n(x)
   \leq p_n\left(\sup S\right)
   =\sup_{s\in S}p_n(s).
 \]
 If \(c\ll_{C_n}p_n(x)\), then there is some \(s\in S\) such that $   c\leq p_n(s)$.  Applying \(e_n\), we obtain
 \[
   e_n(c)
   \leq e_np_n(s)
   =a_n(s)
   \leq s.
 \]
 This proves \eqref{eq:C-way-below-transfer}.

 We next establish a local approximation property in \(D\).  Let \(x\in D\)
 and let \(U\subseteq D\) be Scott open with \(x\in U\).  Since  \(
   \sup_{n\in\N} a_n(x) = x   \), there exists 
  \(n\in\N\) such that  \(
   a_n(x)=e_np_n(x)\in U  \).  
   Thus  \(
   p_n(x)\in e_n^{-1}(U)\in\sigma(C_n)  \). 
  Since \(C_n\) is  continuous,
 \[
   p_n(x)
   =
   \sup\{c\in C_n:c\ll_{C_n}p_n(x)\}.
 \]
 Therefore there is some \(c\in C_n\) such that  $
   c\ll_{C_n}p_n(x)$  and $
   e_n(c)\in U$. Set 
 \[
   \mathord{\Uparrow}_{C_n}c
   =
   \{z\in C_n:c\ll_{C_n}z\}.
 \]
 By interpolation in the continuous dcpo \(C_n\), the set
 \(\mathord{\Uparrow}_{C_n}c\) is Scott open.  Hence 
 \(
   V=p_n^{-1}\bigl(\mathord{\Uparrow}_{C_n}c\bigr)
 \) 
 is Scott open in \(D\).  Since \(c\ll_{C_n}p_n(x)\), we have \(x\in V\). 
 If \(y\in V\), then  \(
   c\ll_{C_n}p_n(y)\),  and in particular \(c\leq p_n(y)\).  Hence
 \[
   e_n(c)
   \leq e_np_n(y)
   =a_n(y)
   \leq y.
 \]
 It follows that
 \[
   x\in V
   \subseteq\uparrow e_n(c)
   \subseteq U,
 \]
 where the last inclusion follows from \(e_n(c)\in U\) and the fact that
 every Scott-open set is an upper set.  Consequently,
 \begin{equation}\label{eq:C-local-principal}
   x\in
   \operatorname{int}_{\sigma(D)}
   \bigl(\uparrow e_n(c)\bigr)
   \subseteq
   \uparrow e_n(c)
   \subseteq U.
 \end{equation}
So $D$ is  continuous by \cref{lem:cont-equal}.

 Suppose now that every member of \({\bf C}\) is countably based.  For every
 \(n\in\N\), choose a countable basis \(B_n\) of \(C_n\), and put
 \[
   B=\bigcup_{n\in\N}e_n[B_n].
 \]
 Then \(B\subseteq D\) is countable. By \eqref{eq:C-local-principal}, the following  family
 $$\mathcal B=\bigcup_{n\in \N} \{\mathord{\Uparrow} e_n(b): \ b\in B_n\}$$
 is a countable base of the Scott topology of $D$. Hence, $D$ is countably based by \cref{lem:cont-property} \textup{(3)}.  
 \end{proof}

\subsection{Cartesian closedness and probability monad of the generated class}
In this subsection, we consider a subclass of FS-domains.

\begin{lemma}\label{lem:square}
Let $D$ be a continuous dcpo and let $(a_n)_{n\in\N}$ be an increasing sequence of Scott-continuous maps such that
\[
  a_n\leq\id_D,
  \qquad
  \sup_na_n=\id_D.
\]
For every $n$, let $(h_{n,k})_{k\in\N}$ be an increasing sequence of maps finitely separated from the identity such that
\[
  \sup_kh_{n,k}=a_n.
\]
Set
\[
  \mathcal S=\{h_{n,k}^2:n,k\in\N\}.
\]
Then:
\begin{enumerate}
\item[(1)] every member of $\mathcal S$ is finitely separated; if $h_{n,k}$ factors through an object $B$, then $h_{n,k}^2$ factors through the same object;
\item[(2)] $\mathcal S$ is directed and $\sup\mathcal S=\id_D$, i.e., $D$ is an FS-domain.
\item[(3)] $\mathcal S$ contains an increasing cofinal sequence.
\end{enumerate}
\end{lemma}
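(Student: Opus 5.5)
The plan is to work with the separators directly. The key tool is a two-step lemma. If $f,g\le\id_D$ are finitely separated, then $f\circ g$ is finitely separated. Moreover, if $h$ is finitely separated with separator $M$, then $h^2$ is separated by the finite set $h[M]$. Indeed, given $x$, pick $m\in M$ with $h(x)\le m\le x$. Then $h^2(x)=h(h(x))\le h(m)$, and $h(m)\le m\le x$ because $h\le\id$, which follows from $h(y)\le m'\le y$. So $h[M]$ separates $h^2$. Factorization is immediate: if $h=e\circ p$ through $B$, then $h^2=e\circ(p\circ e\circ p)$ factors through $B$. This settles (1).

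For (2), first observe that each $h_{n,k}\le a_n\le\id_D$. Hence $h_{n,k}(x)\ll x$ by \cref{lem:FS-properties}(1), and $h^2\le h$. Directedness is the main obstacle, since $h_{n,k}$ and $h_{m,l}$ for different $n,m$ need not be comparable. I would aim for the following: given $h_{n,k}^2$ and $h_{m,l}^2$, find $N,K$ with $h_{n,k}^2,h_{m,l}^2\le h_{N,K}^2$. I expect to use the ``squaring'' trick of Jung's FS theory for this.

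Let $M$ be a finite separator for $h=h_{n,k}$. For each $m'\in M$, $h(m')\le m'$ holds, and moreover $h(m')\ll m'$ by \cref{lem:FS-properties}(1). Since $\sup_N a_N=\id$ and $\sup_K h_{N,K}=a_N$, and the double family is directed (increasing in $K$ for fixed $N$, with $h_{N,K}\le a_N\le a_{N'}=\sup_K h_{N',K}$), we get $m'=\sup_{N,K}h_{N,K}(m')$. To make this last step rigorous, I would note that the family $\{h_{N,K}(m')\}$ is directed. For $N\le N'$, the element $h_{N,K}(m')\ll m'$ lies way below $a_{N'}(m')$? This is not immediate, so instead I would use the directed set $\{h_{N,K}(y)\}$ together with the fact that $h_{N,K}(y)\ll a_N(y)$. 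The argument runs as follows. Since $h_{N,K}(y)\le h_{N,K'}$ after a suitable composition, I would use continuity: $h(m')\ll m'=\sup_N a_N(m')$, so $h(m')\le a_N(m')$ for some $N$. Then $h(m')\ll a_N(m')$ after interpolation, say $h(m')\ll z\ll m'$ with $z\le a_N(m')$, which gives $h(m')\le h_{N,K}(m')$ for some $K$. By finiteness of $M$ and monotonicity in $N$ and $K$, there are $N,K$ with $h(m')\le g(m')$ for all $m'\in M$, where $g=h_{N,K}$. Then for every $x$, choosing $m'$ with $h(x)\le m'\le x$ gives
\[
h^2(x)\le h(m')\le g(m')\le g(x).
\]
Thus $h^2\le g$. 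Doing the same for $h_{m,l}$ and enlarging $N,K$ yields a common $g=h_{N,K}$ above both squares.

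Then I would use the same argument once more, now applied to $g$ with its own separator, to obtain $g\le$ some $g'$? What we actually need is $g\le h_{N',K'}^2$, which is the step that closes directedness. Note that $g(x)\ll x$ for every $x$ holds, but $g\le h'^2$ requires $g(x)\le h'(h'(x))$. For this I would exploit the fact that $g=h_{N,K}\le h_{N,K+1}$ and apply the previous argument to $h_{N,K+1}$: it shows $h_{N,K}^2\le$ some $h_{N',K'}$. Iterating twice gives
\[
h_{n,k}^2\le h_{N,K}\quad\text{and}\quad h_{N,K}^2\le h_{N',K'}.
\]
This does not yet give $h_{n,k}^2\le h_{N',K'}^2$ directly. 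Instead, I would show $h\le g^2$ for large $g$ by the same separator method using $h(m')\ll m'$ and $g^2(m')\to m'$. The supremum over $N,K$ of $g^2(m')$ equals $m'$ by continuity of composition of directed families. This gives $h^2\le h(\text{sep})\le g^2$ pointwise, and hence directedness.

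For $\sup\mathcal S=\id$: the supremum of the squares equals $(\sup h_{N,K})^2=\id$, because composition is jointly continuous on directed families. Together with the finite separation from (1), this makes $D$ an FS-domain.

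For (3), enumerate $\mathcal S$ as $s_0,s_1,\dots$ and inductively choose $t_{j+1}\in\mathcal S$ above $t_j$ and $s_{j+1}$ using directedness. The resulting sequence is increasing and cofinal.
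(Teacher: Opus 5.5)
Parts (1) and (3) are fine. Your separator $h[M]$ for $h^2$ works; the paper simply notes that $M$ itself already separates $h^2$, since $h^2(x)\le h(x)\le m\le x$.

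The gap is in (2), at the step that carries the lemma. Your final idea is exactly the paper's strategy: compare $h(m')$ with $g^2(m')$ at the points $m'$ of a finite separator of $h=h_{n,k}$, and conclude $h^2(x)\le h(m')\le g^2(m')\le g^2(x)$. But you justify the existence of a suitable $g=h_{N,K}$ by ``the supremum over $N,K$ of $g^2(m')$ equals $m'$ by continuity of composition of directed families.'' This presupposes that the doubly indexed family $(h_{N,K})_{N,K}$ is directed, and nothing gives you that. Your earlier reason, $h_{N,K}\le a_N\le a_{N'}=\sup_K h_{N',K}$, only yields $h_{N,K}(x)\le\sup_K h_{N',K}(x)$ pointwise. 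It does not give $h_{N,K}\le h_{N',K'}$ for a single $K'$, and the failure of exactly this is why one passes to squares at all. Without directedness you also cannot pick one index pair that works simultaneously for all separator points and for both given squares: $h_{N,K}$ and $h_{N,K}^2$ are monotone in $K$ but not known to be monotone in $N$. The same unjustified step reappears in your proof of $\sup\mathcal S=\id_D$, where you write $(\sup h_{N,K})^2$.

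The repair is a two-stage choice, first of $N$ and then of $K$.
\begin{enumerate}
\item Since $(h_{N,K})_K$ and $(a_N)_N$ are chains of Scott-continuous maps, $\sup_K h_{N,K}^2=a_N^2$ and $\sup_N a_N^2=\id_D$ pointwise.
\item For each of the finitely many separator points $m'$, interpolate $h(m')\ll z_{m'}\ll m'$.
\item Choose one $N$ with $z_{m'}\le a_N^2(m')$ for all $m'$; this is possible because $(a_N^2)_N$ is increasing.
\item Then $h(m')\ll a_N^2(m')=\sup_K h_{N,K}^2(m')$ yields one $K$ with $h(m')\le h_{N,K}^2(m')$ for all $m'$, because $(h_{N,K}^2)_K$ is increasing.
\end{enumerate}
The paper avoids computing $\sup_K h_{N,K}^2$ by using the longer interpolation chain $q(m)\ll r\ll s\ll t\ll m$. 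It picks $N$ with $t\le a_N(m)$ and $r\le a_N(s)$, then $K$ with $s\le H(m)$ and $q(m)\le H(s)$ for $H=h_{N,K}$. This gives $q(m)\le H(s)\le H^2(m)$ directly. Likewise, $\sup\mathcal S=\id_D$ must be obtained as the iterated supremum $\sup_N\sup_K h_{N,K}^2=\sup_N a_N^2=\id_D$. With these fixes your argument becomes the paper's.
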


\begin{proof}
If $M$ separates a map $h$, then
\[
  h^2(x)\leq h(x)\leq m\leq x
\]
for a suitable $m\in M$, so $M$ also separates $h^2$. If $h=ep$ factors through $B$, then $h^2=(h\circ e)p$.

Let $q_1^2,\ldots,q_r^2$ be finitely many members of $\mathcal S$, and let $M_l$ be a finite separator for $q_l$. By \cref{lem:FS-properties}\textup{(1)}, $q_l(m)\ll m$ for every $m\in M_l$. Repeated interpolation gives
\[
  q_l(m)\ll r_{l,m}\ll s_{l,m}\ll t_{l,m}\ll m.
\]
There are only finitely many pairs $(l,m)$. Since $a_n(z)$ increases to $z$, choose $N$ so that
\[
  t_{l,m}\leq a_N(m),
  \qquad
  r_{l,m}\leq a_N(s_{l,m})
\]
for all relevant pairs. Since $h_{N,k}(z)$ increases to $a_N(z)$, choose $K$ so that
\[
  s_{l,m}\leq h_{N,K}(m),
  \qquad
  q_l(m)\leq h_{N,K}(s_{l,m})
\]
for all $(l,m)$. Put $H=h_{N,K}$. If $q_l(x)\leq m\leq x$ with $m\in M_l$, then
\[
  q_l^2(x)\leq q_l(m)\leq H(s_{l,m})\leq H^2(m)\leq H^2(x).
\]
Thus $\mathcal S$ is directed.

For fixed $n$, the diagonal family $(h_{n,k}^2)_k$ is cofinal among the composites $h_{n,k}h_{n,l}$, and hence
\[
  \sup_kh_{n,k}^2=a_n^2.
\]
Similarly, $\sup_na_n^2=\id_D$, so $\sup\mathcal S=\id_D$.

Finally, enumerate $\mathcal S$ as $(s_j)_{j\in\N}$. Choose $t_0=s_0$ and, recursively, choose $t_{j+1}\in\mathcal S$ above both $t_j$ and $s_{j+1}$. Then $(t_j)_j$ is increasing and cofinal in $\mathcal S$.
\end{proof}

From now on, we assume that ${\bf C}$ is a nonempty class of pointed countably based FS-domains.

\begin{proposition}\label{prop:C-saturation}
Let ${\bf C}$ be a nonempty class of pointed countably based FS-domains.
We have
\[
  {\bf C}\subseteq\FAC\subseteq\FS,
  \qquad
  \mathcal{F}(\mathcal{F}(\mathbf{C}))=\FAC.
\]
Moreover, every object of $\FAC$ is pointed and countably based, and $\FAC$ is closed under Scott-continuous retracts.
\end{proposition}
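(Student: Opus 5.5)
The proof splits into four pieces: the inclusion ${\bf C}\subseteq\FAC$, pointedness and countable bases, the inclusion $\FAC\subseteq\FS$ via \cref{lem:square}, and closure under retracts, from which the idempotence $\mathcal F(\mathcal F({\bf C}))=\FAC$ will follow.

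\emph{First two pieces.} The inclusion ${\bf C}\subseteq\FAC$ is \cref{def:C-contain}. Since FS-domains are continuous (\cref{lem:FS-properties}), \cref{prop:C-continuous} makes every $D\in\FAC$ continuous and countably based. For pointedness, fix a factorization approximate identity $a_n=e_np_n$ through $C_n$ with least element $\bot_n$, and set $b=\sup_n e_n(\bot_n)$. This is a directed supremum, since $e_n(\bot_n)\leq e_np_n(e_{n+1}(\bot_{n+1}))\leq e_{n+1}(\bot_{n+1})$ using $a_n\leq a_{n+1}\leq\id$. For any $x\in D$ we have $e_n(\bot_n)\leq e_np_n(x)\leq x$, so $b\leq x$ and $b$ is least.

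\emph{The inclusion $\FAC\subseteq\FS$.} Fix an FS approximate identity $(g_{n,k})_{k}$ on each $C_n$; it is directed, and since $C_n$ is countably based I will take it to be an increasing sequence, using \cref{lem:square}(3) applied to $C_n$ itself. Put $h_{n,k}=e_n g_{n,k} p_n$. Then $\sup_k h_{n,k}=a_n$ by Scott continuity of $e_n$, and $h_{n,k}$ is increasing in $k$. The point needing care is that $h_{n,k}$ is finitely separated from $\id_D$. If $M$ separates $g_{n,k}$ in $C_n$, take $e_n[M]$: for $x\in D$ choose $m\in M$ with $g_{n,k}(p_n x)\leq m\leq p_n x$, so that $h_{n,k}(x)\leq e_n(m)\leq a_n(x)\leq x$. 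Now \cref{lem:square} applies and gives $D\in\FS$. It also yields an increasing cofinal sequence of maps $h^2$, each factoring through some $C_n$ by part (1); this sequence is useful for the next step.

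\emph{Retracts and idempotence.} Let $r\colon D\to E$ and $s\colon E\to D$ with $rs=\id_E$. Then $r a_n s=(r e_n)(p_n s)$ factors through $C_n$, is increasing, lies below $rs=\id_E$, and has supremum $\id_E$. The inclusion $\FAC\subseteq\mathcal F(\FAC)$ is again \cref{def:C-contain}. For the converse, take $D\in\mathcal F(\FAC)$ with maps $a_n=e_np_n$ through $D_n\in\FAC$. Each $D_n$ carries a ${\bf C}$-approximate identity $b_{n,k}=e'_{n,k}p'_{n,k}$, so $a_{n,k}:=e_nb_{n,k}p_n$ factors through ${\bf C}$, increases in $k$, and has supremum $a_n$. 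The obstacle is that the double family $(a_{n,k})$ need not be directed in $D$. I would resolve this by re-running the \cref{lem:square} construction with $h_{n,k}$ replaced by these ${\bf C}$-factorable maps, after first passing, as above, to finitely separated ${\bf C}$-factorable refinements $e_n e'_{n,k} g p'_{n,k} p_n$ with $g$ taken from an FS approximate identity of the relevant member of ${\bf C}$, made increasing via countable basedness and a diagonal choice. The directedness argument in \cref{lem:square} uses only finite separation together with $\sup_n a_n=\id$ and $\sup_k h_{n,k}=a_n$, and part (1) shows that squaring preserves the factoring object. The result is an increasing cofinal sequence of ${\bf C}$-factorable maps with supremum $\id_D$, so $D\in\FAC$.

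The delicate step is this last one: arranging an increasing sequence indexed by $k$ whose supremum is exactly $a_n$ while preserving factorization through ${\bf C}$. I would handle it by a diagonal argument over the countable data.
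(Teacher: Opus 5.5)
Your overall route is the paper's: the same maps $h_{n,k}=e_ng_{n,k}p_n$ separated by $e_n[M]$, \cref{lem:square} for $\FAC\subseteq\FS$, the maps $r a_n s$ for retracts, and a second pass through \cref{lem:square} for $\mathcal F(\mathcal F({\bf C}))\subseteq\FAC$. Two steps, however, are not justified as written.

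\textbf{First gap.} You cannot obtain an increasing FS approximate identity on $C_n$ from \cref{lem:square}(3) ``applied to $C_n$ itself''. The hypotheses of \cref{lem:square} already require, for each index, an increasing sequence of finitely separated maps with the given supremum. With $a_m=\id_{C_n}$, that is exactly the object you are trying to construct. Moreover, (3) only extracts a cofinal sequence from the \emph{countable} family $\mathcal S$, whereas a given directed FS approximate identity on $C_n$ may be uncountable. The fact itself is true, but it needs the countable-basis argument with which the paper opens its proof:
\begin{itemize}
\item enumerate the pairs $b\ll c$ from a countable basis;
\item for each pair, pick a member $f$ of the directed family with $b\leq f(c)$;
\item recursively choose, inside the directed family, an increasing sequence dominating all maps chosen so far;
\item check that the supremum is the identity via $y\leq b\ll c\leq x$ whenever $y\ll x$.
\end{itemize}

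\textbf{Second gap.} In the idempotence step, the ``diagonal argument over the countable data'' is unnecessary, and on its own it does not work. For different $k$, the doubly indexed maps $e_ne'_{n,k}g\,p'_{n,k}p_n$ need not be pointwise comparable. Directedness is restored only by the squaring trick of \cref{lem:square}, not by a diagonal choice.

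The fix is the observation you already made at the end of your third step. Applied to $D_n\in\FAC$, that step yields an increasing sequence $(q_{n,k})_k$ of finitely separated ${\bf C}$-factorable maps with $\sup_kq_{n,k}=\id_{D_n}$, by \cref{lem:square}(1)--(3). Now set $h_{n,k}=e_nq_{n,k}p_n$. Then:
\begin{itemize}
\item $h_{n,k}$ is ${\bf C}$-factorable;
\item it is separated by $e_n[M]$;
\item it increases in $k$;
\item its supremum is $a_n$, by Scott continuity of $e_n$.
\end{itemize}
One more application of \cref{lem:square} on $D$ finishes the argument; $D$ is continuous by \cref{prop:C-continuous} applied to the class $\FAC$. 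This is exactly the paper's argument.

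\textbf{Minor points.} For pointedness, a single $e_n(\bot_{C_n})$ is already the least element of $D$, so the supremum is superfluous. Also, idempotence does not follow from retract closure, contrary to your opening roadmap; retract closure only gives the trivial inclusion $\FAC\subseteq\mathcal F(\FAC)$.
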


\begin{proof}
We first recall that every countably based FS-domain admits an increasing sequential FS approximate identity. Indeed, choose a countable basis $B$ and enumerate the pairs $(b,c)\in B\times B$ with $b\ll c$. From an arbitrary directed FS approximate identity, choose for the $n$th pair a map sending $c$ above $b$, and recursively choose an increasing sequence dominating all maps chosen so far. The basis and interpolation properties then show that the supremum of this sequence is the identity.

Let $D\in\FAC$, and choose
\[
a_n=e_np_n:D\xrightarrow{p_n}C_n\xrightarrow{e_n}D,
  \qquad C_n\in{\bf C},
\]
with $\sup_na_n=\id_D$. By \cref{prop:C-continuous}, $D$ is continuous and countably based. For every $n$, choose an increasing FS approximate identity $(g_{n,k})_k$ on $C_n$ and set
\[
  h_{n,k}=e_ng_{n,k}p_n.
\]
If $M_{n,k}$ separates $g_{n,k}$, then $e_n[M_{n,k}]$ separates $h_{n,k}$, and $\sup_kh_{n,k}=a_n$. Applying \cref{lem:square} and taking the increasing cofinal sequence supplied there gives an FS approximate identity on $D$ whose members are ${\bf C}$-factorable. Hence $D\in\FS$. Since $C_n$ has a least element,
\[
  e_n(\bot_{C_n})\leq e_np_n(x)=a_n(x)\leq x
\]
for every $x\in D$, so $D$ is pointed.

Now suppose that $D$ has a $\FAC$-factorization approximate identity
\[
  a_n=e_np_n:D\xrightarrow{p_n}B_n\xrightarrow{e_n}D,
  \qquad B_n\in\FAC.
\]
By the preceding paragraph, each $B_n$ has an increasing approximate identity $(q_{n,k})_k$ consisting of finitely separated ${\bf C}$-factorable maps. Put
\[
  h_{n,k}=e_n q_{n,k} p_n.
\]
These maps are $\bf C$-factorable, increase to $a_n$, and are finitely separated: if $M$ separates $q_{n,k}$, then $e_n[M]$ separates $h_{n,k}$. The square refinement again yields a ${\bf C}$-factorization approximate identity on $D$. Thus $\mathcal{F}(\mathcal{F}(\mathbf{C}))=\FAC$.

The inclusion ${\bf C}\subseteq\FAC$ follows from the constant identity sequence. Finally, if $D$ is a Scott-continuous retract of $B\in\FAC$, with $e:D\to B$, $r:B\to D$, and $re=\id_D$, then $(ra_ne)_n$ is a ${\bf C}$-factorization approximate identity on $D$. Hence $\FAC$ is retract closed.
\end{proof}

\begin{proposition}\label{prop:C-product}

Suppose that for every $C_1,C_2 \in {\bf C}$, we have $C_1\times C_2\in\FAC$.
Then $D_1\times D_2\in\FAC$ for all $D_1,D_2\in\FAC$.
\end{proposition}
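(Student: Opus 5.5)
Take ${\bf C}$-factorization approximate identities
\[
  a_n=e_np_n:D_1\xrightarrow{p_n}C_n\xrightarrow{e_n}D_1,
  \qquad
  b_n=f_nq_n:D_2\xrightarrow{q_n}C'_n\xrightarrow{f_n}D_2
\]
on $D_1$ and $D_2$, and form the product maps $a_n\times b_n$ on $D_1\times D_2$. The idea is to show that these form an $\FAC$-factorization approximate identity on $D_1\times D_2$, and then to invoke $\mathcal F(\mathcal F({\bf C}))=\FAC$ from \cref{prop:C-saturation}.

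Each $a_n\times b_n$ factors as
\[
  D_1\times D_2\xrightarrow{p_n\times q_n}C_n\times C'_n\xrightarrow{e_n\times f_n}D_1\times D_2 .
\]
Both arrows are Scott-continuous, since directed suprema in a product of dcpos are computed componentwise. By hypothesis $C_n\times C'_n\in\FAC$, so $a_n\times b_n$ is $\FAC$-factorable.

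Next, $(a_n\times b_n)_n$ is increasing and lies below the identity because each component sequence does. Its pointwise supremum is
\[
  \sup_n (a_n\times b_n)(x,y)=\Bigl(\sup_na_n(x),\,\sup_nb_n(y)\Bigr)=(x,y),
\]
again because suprema in the product are componentwise and both sequences are increasing over the same index. Hence $D_1\times D_2\in\mathcal F(\FAC)$.

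By \cref{prop:C-saturation}, $\mathcal F(\FAC)=\mathcal F(\mathcal F({\bf C}))=\FAC$, so $D_1\times D_2\in\FAC$. That proposition applies because ${\bf C}$ is assumed to be a nonempty class of pointed countably based FS-domains. Its proof of $\mathcal F(\mathcal F({\bf C}))\subseteq\FAC$ used only that each factoring object is in $\FAC$, which is satisfied here. The main point needing care is exactly this saturation step: the factoring objects $C_n\times C'_n$ are only in $\FAC$, not in ${\bf C}$. The square-refinement argument of \cref{lem:square} handles this, and it uses continuity of $D_1\times D_2$. That continuity follows either from \cref{prop:C-continuous} applied to $\FAC$, whose objects are continuous, or directly because a product of continuous dcpos is continuous.
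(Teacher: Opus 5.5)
Your proposal is correct and follows the paper's proof essentially verbatim. You form $a_n\times b_n$, factor it through $C_n\times C_n'\in\FAC$, check that this gives an increasing approximate identity, and conclude from the saturation $\mathcal F(\FAC)=\FAC$ of \cref{prop:C-saturation}. Your remark that the square-refinement step needs continuity of $D_1\times D_2$ (supplied by \cref{prop:C-continuous} applied to the class $\FAC$) makes explicit a point the paper leaves implicit.
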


\begin{proof}
Choose ${\bf C}$-factorization approximate identities $(a_n^1)_n$ and $(a_n^2)_n$ on $D_1$ and $D_2$, with factorization objects $C_n^1,C_n^2\in{\bf C}$. The maps
\[
  A_n=a_n^1\times a_n^2
\]
form an increasing sequence below the identity and have supremum $\id_{D_1\times D_2}$. Each $A_n$ factors through $C_n^1\times C_n^2\in\FAC$. Applying \cref{prop:C-saturation} gives $D_1\times D_2\in\FAC$.
\end{proof}

\begin{proposition}\label{prop:C-function}
Suppose that
\(
  [C_1\to C_2]\in\FAC
  \)  for all \( C_1,C_2\in{\bf C},
\)
then $[D_1\to D_2]\in\FAC$ for all $D_1,D_2\in\FAC$.
\end{proposition}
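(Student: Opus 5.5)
We follow the pattern of \cref{prop:C-product}: build an increasing sequence of maps below the identity of $[D_1\to D_2]$ that factor through objects of $\FAC$, then use $\mathcal{F}(\mathcal{F}(\mathbf{C}))=\FAC$ from \cref{prop:C-saturation}. Choose $\mathbf C$-factorization approximate identities
\[
a_n^i=e_n^ip_n^i:D_i\xrightarrow{p_n^i}C_n^i\xrightarrow{e_n^i}D_i
\qquad (i=1,2).
\]
On $[D_1\to D_2]$ define
\[
A_n(f)=a_n^2\circ f\circ a_n^1 .
\]
This map is Scott-continuous, since composition is Scott-continuous in each argument and directed suprema in function spaces are pointwise.

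Monotonicity of $f$ and the inequalities $a_n^i\leq a_{n+1}^i\leq\id$ give $A_n\leq A_{n+1}\leq\id$. For the supremum we compute
\[
\sup_n a_n^2\bigl(f(a_n^1(x))\bigr)=\sup_{n,m}a_n^2\bigl(f(a_m^1(x))\bigr)=f(x).
\]
The first equality uses the diagonal cofinality of the two increasing sequences. The second uses the Scott continuity of $f$ together with $\sup_n a_n^i=\id$. Hence $\sup_nA_n=\id$ pointwise.

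Next we factor $A_n$ through $[C_n^1\to C_n^2]$. Set
\[
P_n(f)=p_n^2\circ f\circ e_n^1\in[C_n^1\to C_n^2],
\qquad
E_n(g)=e_n^2\circ g\circ p_n^1\in[D_1\to D_2].
\]
Both maps are Scott-continuous, again by pointwise suprema. Moreover
\[
E_nP_n(f)=e_n^2p_n^2\,f\,e_n^1p_n^1=A_n(f).
\]
By hypothesis $[C_n^1\to C_n^2]\in\FAC$. So $(A_n)$ is an $\FAC$-factorization approximate identity on $[D_1\to D_2]$, and \cref{prop:C-saturation} places $[D_1\to D_2]$ in $\mathcal F(\mathcal F(\mathbf C))=\FAC$.

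I do not expect any step to be a serious obstacle. The only delicate points are routine: checking that $P_n$ and $E_n$ are well defined and Scott-continuous, which follows because composition with fixed Scott-continuous maps preserves pointwise directed suprema, and verifying the double-supremum exchange in the computation of $\sup_nA_n$. Note also that \cref{prop:C-saturation} is stated for $D$ admitting an $\FAC$-factorization approximate identity with no further assumption on $D$, so it applies directly to $[D_1\to D_2]$. We therefore need not know in advance that the function space is continuous.
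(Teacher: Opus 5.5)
Your proposal is correct and matches the paper's proof essentially step for step: the same maps $P_n(f)=p_n^2\circ f\circ e_n^1$ and $E_n(g)=e_n^2\circ g\circ p_n^1$, the same composite $A_n(f)=a_n^2\circ f\circ a_n^1$ factoring through $[C_n^1\to C_n^2]\in\FAC$, and the same final appeal to the saturation $\mathcal{F}(\mathcal{F}(\mathbf{C}))=\FAC$ of \cref{prop:C-saturation}. The only difference is cosmetic: you get $\sup_nA_n=\id$ from a diagonal double supremum, whereas the paper fixes $k$, bounds $f(a_k^1(x))$ by $\sup_{n\geq k}a_n^2(f(a_k^1(x)))\leq\sup_nA_n(f)(x)$, and then takes the supremum over $k$.
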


\begin{proof}
Choose ${\bf C}$-factorization approximate identities
\[
  a_n^1=e_n^1p_n^1\quad\text{on }D_1,
  \qquad
  a_n^2=e_n^2p_n^2\quad\text{on }D_2.
\]
Define
\[
  P_n(f)=p_n^2\circ f\circ e_n^1,
  \qquad
  E_n(g)=e_n^2\circ g\circ p_n^1.
\]
Then $A_n=E_nP_n$ satisfies
\[
  A_n(f)=a_n^2\circ f\circ a_n^1
\]
and factors through $[C_n^1\to C_n^2]\in\FAC$. The sequence $(A_n)_n$ is increasing and lies below the identity. For every $f\in[D_1\to D_2]$, $x\in D_1$, and $k\in\N$,
\[
  f(a_k^1(x))
  =\sup_{n\geq k}a_n^2(f(a_k^1(x)))
  \leq\sup_nA_n(f)(x).
\]
Taking the supremum over $k$ and using Scott continuity of $f$ gives $f(x)\leq\sup_nA_n(f)(x)$. The reverse inequality is immediate. Hence $\sup_nA_n=\id_{[D_1\to D_2]}$, and \cref{prop:C-saturation} completes the proof.
\end{proof}

\begin{proposition}\label{prop:V-close}
Let $\mathcal V$ be either $\Vsub$ or $\Vone$. If
\(
  \mathcal V(C)\in\FAC
  \) for all \(C\in{\bf C},
\)
then $\mathcal V(D)\in\FAC$ for every $D\in\FAC$.
\end{proposition}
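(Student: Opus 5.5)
We follow the pattern of \cref{prop:C-product} and \cref{prop:C-function}: transport a ${\bf C}$-factorization approximate identity on $D$ through the functor $\mathcal V$, and then invoke the saturation $\mathcal F(\mathcal F({\bf C}))=\FAC$ from \cref{prop:C-saturation}.

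Let $D\in\FAC$, and fix a ${\bf C}$-factorization approximate identity
\[
a_n=e_np_n:D\xrightarrow{p_n}C_n\xrightarrow{e_n}D,
\qquad C_n\in{\bf C},
\]
with $a_n\leq a_{n+1}\leq\id_D$ and $\sup_na_n=\id_D$. Set $A_n=\mathcal V(a_n)$.

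First, $A_n$ factors through an object of $\FAC$. Functoriality of pushforward gives $\mathcal V(a_n)=\mathcal V(e_n)\circ\mathcal V(p_n)$. This follows from $(e_np_n)^{-1}(U)=p_n^{-1}(e_n^{-1}(U))$, and the pushforwards are Scott-continuous (for instance by \cref{lem:valuation-kleisli-laws}(iv) together with (i)). Thus $A_n$ factors as
\[
\mathcal V(D)\xrightarrow{\mathcal V(p_n)}\mathcal V(C_n)\xrightarrow{\mathcal V(e_n)}\mathcal V(D),
\]
and $\mathcal V(C_n)\in\FAC$ by hypothesis. For $\mathcal V=\Vone$ we use that pushforward preserves total mass, as recorded in the preliminaries, so both maps corestrict to $\Vone$.

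Second, $(A_n)_n$ is an approximate identity. Monotonicity of pushforward in the map, i.e.\ $f\leq g$ implies $f^{-1}(U)\subseteq g^{-1}(U)$ for upper $U$, gives $A_n\leq A_{n+1}$. \Cref{lem:V-local}, applied to the increasing sequence $(a_n)$ with $a_n\leq\id_D$ and $\sup_n a_n=\id_D$, gives $A_n\leq\id_{\Vsub(D)}$ and $\sup_nA_n=\id_{\Vsub(D)}$. For $\Vone$ the same statements hold by restriction: suprema in $\Vone(D)$ of directed families are computed pointwise on opens exactly as in $\Vsub(D)$, and $\Vone(D)$ is closed under such suprema. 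Hence $(A_n)_n$ is an $\FAC$-factorization approximate identity on $\mathcal V(D)$.

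Third, $\mathcal V(D)\in\mathcal F(\FAC)=\mathcal F(\mathcal F({\bf C}))=\FAC$ by \cref{prop:C-saturation}. That proposition already converts factorizations through objects of $\FAC$ into ${\bf C}$-factorable, finitely separated approximants via the square refinement of \cref{lem:square}.

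The only delicate point is that \cref{prop:C-saturation} applies to objects $\mathcal V(D)$ not yet known to be continuous. Its proof uses continuity and countable basedness of the target, which is supplied by \cref{prop:C-continuous} applied to the class $\FAC$. Every object of $\FAC$ is a pointed countably based FS-domain, so $\mathcal V(D)$ is continuous and countably based. Interpolation in \cref{lem:square} is therefore available, and no further obstacle arises. For $\Vone$ one should also check that $\Vone(D)$ is a dcpo with the pointwise directed suprema used above, which is standard.
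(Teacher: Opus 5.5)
Your proposal is correct and follows the paper's proof essentially verbatim. You factor $\mathcal V(a_n)=\mathcal V(e_n)\circ\mathcal V(p_n)$ through $\mathcal V(C_n)\in\FAC$, use \cref{lem:V-local} to get an increasing approximate identity, and conclude by the saturation $\mathcal F(\FAC)=\FAC$ of \cref{prop:C-saturation}; your remark that $\mathcal V(D)$ must be continuous for the square refinement (supplied by \cref{prop:C-continuous} applied to the class $\FAC$) makes explicit a step the paper leaves implicit, though only continuity, not countable basedness, of $\mathcal V(D)$ is actually used there.
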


\begin{proof}
Let $(a_n=e_np_n)_n$ be a ${\bf C}$-factorization approximate identity on $D$. Functoriality gives
\[
  \mathcal V(a_n)=\mathcal V(e_n)\mathcal V(p_n),
\]
which factors through $\mathcal V(C_n)\in\FAC$. The maps $\mathcal V(a_n)$ form an increasing sequence below the identity, and local continuity gives
\[
  \sup_n\mathcal V(a_n)
  =\mathcal V(\sup_na_n)
  =\id_{\mathcal V(D)}.
\]
Applying \cref{prop:C-saturation} proves the result.
\end{proof}

The singleton is a retract of every pointed dcpo.  Applying \cref{lem:dcpo-Cartesian,prop:C-product,prop:C-function}, we obtain the following criterion.

\begin{theorem}\label{thm:C-Cartesian closed}
Let ${\bf C}$ be a non-empty class consisting of some pointed countably based FS-domains. Then the following conditions are equivalent:
\begin{enumerate}
\item[(1)] $\FAC$ is Cartesian closed;
\item[(2)] for all $C_1,C_2\in{\bf C}$,
\[
  C_1\times C_2\in\FAC,
  \qquad
  [C_1\to C_2]\in\FAC.
\]
\end{enumerate}
\end{theorem}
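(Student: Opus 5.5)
The plan is to prove the two directions separately, with \cref{lem:dcpo-Cartesian} mediating between "Cartesian closed" and "closed under the $\DCPO$ constructions".

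For \textup{(2)}$\Rightarrow$\textup{(1)}, I would verify the three conditions of \cref{lem:dcpo-Cartesian} for the full subcategory $\FAC$ of $\DCPO$.
\begin{itemize}
\item \emph{Terminal object.} By \cref{prop:C-saturation}, every object of $\FAC$ is pointed, and $\FAC$ is nonempty because ${\bf C}$ is. Choose $D\in\FAC$. The singleton is a Scott-continuous retract of $D$: use $\bot\colon\{*\}\to D$ and the constant map $D\to\{*\}$. By retract closure in \cref{prop:C-saturation}, the singleton lies in $\FAC$. It is terminal in $\DCPO$, hence also in the full subcategory $\FAC$.
\item \emph{Binary products.} These are supplied by \cref{prop:C-product}, using the hypothesis $C_1\times C_2\in\FAC$. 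The $\DCPO$ product with its projections is then a product in the full subcategory.
\item \emph{Exponentials.} \Cref{prop:C-function} gives $[D_1\to D_2]\in\FAC$. The evaluation map and currying in $\DCPO$ are Scott-continuous, so they stay inside the full subcategory, and the universal property carries over verbatim.
\end{itemize}
\Cref{lem:dcpo-Cartesian} then yields Cartesian closedness.

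For \textup{(1)}$\Rightarrow$\textup{(2)}, assume $\FAC$ is Cartesian closed. By \cref{lem:dcpo-Cartesian}, its binary products and exponentials are order-isomorphic to the $\DCPO$ constructions $A_1\times A_2$ and $[A_1\to A_2]$. Since ${\bf C}\subseteq\FAC$ by \cref{prop:C-saturation}, this gives, for $C_1,C_2\in{\bf C}$, objects of $\FAC$ order-isomorphic to $C_1\times C_2$ and $[C_1\to C_2]$. An order isomorphism is a Scott-continuous retraction, so retract closure places $C_1\times C_2$ and $[C_1\to C_2]$ themselves in $\FAC$.

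The only delicate point is this last direction. One must use the "inherited" clause of \cref{lem:dcpo-Cartesian}, which is Jung's result for full subcategories of $\DCPO$, rather than an abstract categorical product that a priori might not be the set-theoretic one. Since the lemma is quoted, this step should go through directly; everything else is bookkeeping on top of the closure results already proved.
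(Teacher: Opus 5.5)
Your proof is correct and follows the paper's route. For sufficiency you use the singleton as a retract of a pointed object together with \cref{prop:C-product}, \cref{prop:C-function} and \cref{lem:dcpo-Cartesian}; for necessity you spell out what the paper compresses into ``follows from ${\bf C}\subseteq\FAC$'': by \cref{lem:dcpo-Cartesian} the categorical product and exponential are order-isomorphic to $C_1\times C_2$ and $[C_1\to C_2]$, and $\FAC$ is closed under such isomorphisms.
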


\begin{proof}
Necessity follows from ${\bf C}\subseteq\FAC$. Conversely, \cref{prop:C-saturation} shows that every object of $\FAC$ is pointed, so the singleton belongs to $\FAC$ by retract closure. The result now follows from \cref{lem:dcpo-Cartesian,prop:C-product,prop:C-function}.
\end{proof}

\subsection{$\omega{\bf FVA}$: a new category generated by subprobabilistic powerdomains of finite posets}\label{subsec:FVA}

\vskip 3mm
The preceding results in \Cref{sec:finite-poset-powerdomains} show that the subprobabilistic powerdomains of finite posets are FS-domains. We now use these spaces as factorization objects to define a new category in order to handle the generalized Jung--Tix problem.

A Scott-continuous map $a:D\to E$ is called \emph{finite-valuation factorable} if there are a finite poset $P$ and Scott-continuous maps
\[
  D\xrightarrow{p}\Vsub(P)\xrightarrow{e}E
\]
such that $a=e\circ p$. If $D=E$ and $a\leq\id_D$, then $a$ is called a \emph{finite-valuation approximant} of $D$.
\begin{definition}\label{def:FVA}
Let $D$ be a dcpo. 
\begin{enumerate}
\item[(1)] A \emph{finite-valuation approximate identity} on a dcpo $D$ is a sequence

\[
  a_n=e_np_n:D\xrightarrow{p_n}\Vsub(P_n)\xrightarrow{e_n}D
  \qquad(n\in\N),
\]
where each $P_n$ is a finite poset and
\begin{equation}\label{eq:fvai}
  a_n\leq a_{n+1}\leq\id_D,
  \qquad
  \sup_{n\in\N}a_n=\id_D
\end{equation}
pointwise. 

\item[(2)] A dcpo admitting such an approximate identity is called \emph{finite-valuation approximable}. We write $\FVA$ for the full subcategory of $\DCPO$ consisting of all finite-valuation approximable dcpos. 
\end{enumerate}
\end{definition}

\begin{remark}\label{rem:finite-poset-RB}
The approximating self-maps need not have finite image. It is different from the finite-image deflations used for RB-domains. 
For a domain $D$, the following conditions are equivalent (see \cite{Jung1989}):
\begin{enumerate}[label=\textup{(\roman*)}]
\item There are finite posets $P_n$ and Scott-continuous maps
\[
  D\xrightarrow{p_n}P_n\xrightarrow{e_n}D
  \qquad(n\in\N)
\]
such that, with $a_n=e_np_n$,
\[
  a_n\leq a_{n+1}\leq\id_D,
  \qquad
  \sup_n a_n=\id_D;
\]
\item $D$ is a countably based RB-domain.
\end{enumerate}

Thus replacing $\Vsub(P_n)$ by $P_n$ gives precisely the countably based RB-domains. No equation $p_ne_n=\id_{P_n}$ is required.

If sequences are replaced throughout by arbitrary directed families, the same finite-poset factorization condition characterizes all RB-domains.
\end{remark}

\begin{theorem}\label{thm:FVA-property}
\label{prop:generator-product}
\label{lem:generators}
\label{lem:pointed}
  The category $\FVA$ has the following properties:
\begin{enumerate}
\item[(1)] it contains $\Vsub(P)$ for every finite poset $P$, and every object of $\FVA$ is a pointed countably based FS-domain;
\item[(2)] it is saturated,
\(
  \mathcal{F}(\FVA)=\FVA,
\)
and is closed under Scott-continuous retracts;
\item[(3)] for finite posets $P,Q$,
\(
\Vsub(P)\times\Vsub(Q)\in\FVA.
\)
Consequently, $\FVA$ is closed under finite products.
\end{enumerate}
\end{theorem}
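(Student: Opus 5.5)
Parts (1) and (2) follow from the general framework once $\FVA$ is identified as $\FAC$ for the class $\mathbf C=\{\Vsub(P):P\text{ finite poset}\}$. Indeed, \cref{def:FVA} is literally \cref{def:C-factorization} for this $\mathbf C$. By \cref{thm:subprobability-FS}, each $\Vsub(P)$ is a countably based FS-domain, and it is pointed with least element the zero valuation. Hence $\mathbf C$ is a nonempty class of pointed countably based FS-domains. \Cref{prop:C-saturation} then gives $\mathbf C\subseteq\FVA\subseteq\FS$, pointedness and countable bases of every object, closure under Scott-continuous retracts, and $\mathcal F(\FVA)=\mathcal F(\mathcal F(\mathbf C))=\FAC=\FVA$. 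That settles (1) and (2).

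For (3) I would reduce to a single finite poset using a retraction. Let $P\sqcup Q$ be the disjoint union with its summand orders. Define
\[
  s:\Vsub(P)\times\Vsub(Q)\to\Vsub(P\sqcup Q),\qquad s(\mu,\nu)=\tfrac12\mu+\tfrac12\nu,
\]
and in the other direction
\[
  r:\Vsub(P\sqcup Q)\to\Vsub(P)\times\Vsub(Q),\qquad r(\lambda)=\bigl(\min\{1,2\lambda|_P\},\,\min\{1,2\lambda|_Q\}\bigr).
\]
The truncation $\min\{1,2\lambda|_P\}$ is not linear, so instead I would take
\[
  r(\lambda)=(2\lambda|_P\wedge\text{mass cap},\ \dots).
\]
Concretely, I would use the Scott-continuous map $T:\Vsub(P)^{(\le 2)}\to\Vsub(P)$ that erodes excess mass. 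A cleaner route is to write $2\lambda|_P$ as a valuation of mass at most $2$ and compose it with a Scott-continuous ``cap at total mass $1$'' retraction. The upper sets of $P\sqcup Q$ are exactly the pairs $U\sqcup V$, so the order on $\Vsub(P\sqcup Q)$ splits into the two coordinates, and the scaling by $\tfrac12$ is an order-embedding. Hence $r\circ s=\id$ as long as the cap fixes valuations of mass at most $1$.

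Since $\Vsub(P\sqcup Q)\in\mathbf C$, part (2) would then give $\Vsub(P)\times\Vsub(Q)\in\FVA$. The main obstacle is constructing a monotone, Scott-continuous cap $\{\text{valuations on }P\text{ of mass}\le2\}\to\Vsub(P)$ that is the identity on $\Vsub(P)$.

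If the cap proves awkward, I would try a different approach for (3). I would build the approximate identity directly with the semigroup $\Phi_t$ from \cref{sec:finite-poset-powerdomains}. The idea is to show that $\Phi_t$ on the lifted/scaled space lands, after rescaling by $2$, in mass $\le1$ per coordinate. That would give maps $a_n=r_n\circ s$ with $a_n\le\id$ and $\sup a_n=\id$, where $r_n(\lambda)=\bigl(2\Phi^{P}_{t_n}\text{-type truncation}\bigr)$. This is enough, since $\FVA=\mathcal F(\FVA)$ only needs an approximate identity and not an exact retraction. For instance, the map $\lambda\mapsto \bigl(\max(2\lambda|_P-\text{slack}_n)\bigr)$ could be built from the erosion $\Phi$ applied to mass in excess of $1-2^{-n}$.

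Finally, closure under finite products would follow from \cref{prop:C-product}, and the terminal singleton is handled by retract closure.
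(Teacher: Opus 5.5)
Your treatment of (1) and (2) is correct and is exactly the paper's argument: identify $\FVA$ with $\FAC$ for $\mathbf C=\{\Vsub(P)\}$ and apply \cref{thm:subprobability-FS,prop:C-saturation}. Part (3), however, has a genuine gap. The trouble is not merely that the cap is awkward to build. With your section $s(\mu,\nu)=\tfrac12\mu+\tfrac12\nu$, no retraction can exist at all.

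Take $a,b\in P$ with $\up a\cap\up b=\varnothing$ (for example $P=\{a,b\}$ an antichain), and put $\lambda=\tfrac12\delta_a+\tfrac12\delta_b\in\Vsub(P\sqcup Q)$. Then $s(\delta_a,0)\stle\lambda$ and $s(\delta_b,0)\stle\lambda$. Any monotone $r$ with $r\circ s=\id$ would therefore produce a first component $r(\lambda)_1\in\Vsub(P)$ with $r(\lambda)_1(\up a)\geq1$ and $r(\lambda)_1(\up b)\geq1$, hence total mass at least $2$. In general, the section of a retraction must reflect boundedness, and halving the mass creates common upper bounds for pairs such as $\delta_a,\delta_b$ that are unbounded in $\Vsub(P)$.

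The same argument defeats your fallback. Suppose $a_n=r_n\circ s$ increased to the identity. For large $n$, the first components of $r_n(\tfrac12\delta_a)$ and $r_n(\tfrac12\delta_b)$ would give mass $>\tfrac12$ to $\up a$ and to $\up b$ respectively. Monotonicity of $r_n$ would then force the first component of $r_n(\lambda)$ to have mass $>1$. So no erosion-based truncation can rescue this embedding.

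The missing idea, which is what the paper uses, is to pass to probability measures via $\Vsub(P)\cong\Vone(P_\bot)$ and $\Vsub(Q)\cong\Vone(Q_\bot)$ (\cref{prop:probability-positive}). You then embed by the product measure $E(\mu,\nu)=\mu\otimes\nu\in\Vone(P_\bot\times Q_\bot)\cong\Vsub(R)$, where $R=(P_\bot\times Q_\bot)\setminus\{(\bot,\bot)\}$, and retract by the pair of marginals $M$.

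Normalization is essential. For subprobabilities the marginals of $\mu\otimes\nu$ are $\nu(Q)\mu$ and $\mu(P)\nu$, whereas after lifting one has $M\circ E=\id$ exactly. The map $M$ is monotone as a pair of pushforwards along projections. Monotonicity of $E$ needs a short argument, because upper sets of $P_\bot\times Q_\bot$ are not rectangles. For an upper set $W$, the section map $x\mapsto\nu(W_x)$ is monotone and nonnegative, so \cref{lem:monotone-integrals} gives $(\mu\otimes\nu)(W)\leq(\mu'\otimes\nu)(W)\leq(\mu'\otimes\nu')(W)$.

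Both maps are Euclidean-continuous, hence Scott-continuous by \cref{lem:euclidean-scott}. Retract closure from (2) then gives $\Vsub(P)\times\Vsub(Q)\in\FVA$. Your final step, extending to arbitrary finite products via \cref{prop:C-product}, goes through as you describe.
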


\begin{proof}
Assertions (1) and (2) follow from \cref{thm:subprobability-FS,prop:C-saturation}, applied to \(
\{\Vsub(P): P \text{ is a finite poset} \}.
\)

For (3), adjoin fresh least elements and write
\[
  \widehat P=P_\bot,
  \qquad
  \widehat Q=Q_\bot.
\]
By \cref{prop:probability-positive}, adding the missing mass at the new least element gives order isomorphisms
\[
  \Vsub(P)\cong\Vone(\widehat P),
  \qquad
  \Vsub(Q)\cong\Vone(\widehat Q).
\]
Let
\[
  R=(\widehat P\times\widehat Q)\setminus\{(\bot,\bot)\}.
\]
A further application of \cref{prop:probability-positive} gives
\[
  \Vsub(R)\cong\Vone(\widehat P\times\widehat Q).
\]
Define
\[
  E(\mu,\nu)=\mu\otimes\nu,
  \qquad
  M(\xi)=\bigl(\Vone(\pi_1)(\xi),\Vone(\pi_2)(\xi)\bigr).
\]
The marginals of a product distribution are the original factors, so $M\circ E=\id$. The map $M$ is monotone. To prove that $E$ is monotone, let $\mu\stle\mu'$ and $\nu\stle\nu'$, and let $W\subseteq\widehat P\times\widehat Q$ be upper. For $x\in\widehat P$, put
\[
  W_x=\{y\in\widehat Q:(x,y)\in W\}.
\]
Then $x\mapsto\nu(W_x)$ is nonnegative and monotone, and \cref{lem:monotone-integrals} gives
\[
  (\mu\otimes\nu)(W)
  \leq(\mu'\otimes\nu)(W)
  \leq(\mu'\otimes\nu')(W).
\]
Both maps are Euclidean-continuous. By \cref{lem:euclidean-scott}, they are Scott-continuous. Transporting this retraction across the displayed order isomorphisms makes $\Vsub(P)\times\Vsub(Q)$ a Scott-continuous retract of $\Vsub(R)$. Hence it belongs to $\FVA$ by (2). Global finite-product closure follows from \cref{prop:C-product}.
\end{proof}

By \cref{thm:FVA-property}, the saturation and finite-product parts of the construction are already established. 
The main remaining task is to prove that $\FVA$ contains the function spaces and the subprobabilistic and probabilistic powerdomains of its generators, namely
\[
  [\Vsub(Q)\to\Vsub(P)],
  \qquad
  \Vsub(\Vsub(P)),
  \qquad
  \Vone(\Vsub(P))
\]
for finite posets $P,Q$. Once these generator-level statements are proved, \cref{thm:C-Cartesian closed,prop:V-close} lift them to all objects of $\FVA$. Thus $\FVA$ is Cartesian closed and closed under both probabilistic powerdomains, and hence provides a solution to the category-existence form of the Jung--Tix problem.

\section{$\omega{\bf FVA}$ as a solution to the generalized Jung--Tix problem}
\label{sec:FVA-solution}

\subsection{Finite monotone-valuation polytopes}
\label{sec:monotone-polytopes}

In this subsection, we study the finite-dimensional function spaces arising at the finite stages of the approximation. For finite posets $A$ and $P$, the function space $[A\to\Vsub(P)]$ is both a domain-theoretic function space and
a compact convex polytope in a finite-dimensional Euclidean space. These two descriptions will be used in parallel. The pointwise order provides the order-theoretic structure needed for monotonicity and approximation from below, whereas the Euclidean realization permits finite grids, randomized
rounding, and reconstruction by finite convex combinations.

We first realize the finite monotone-valuation spaces as polytopes in order coordinates.
\label{subsec:polytope-realization}

Let $A$ and $P$ be finite posets and let
\(
  \M(A,P)
\) be the set of monotone maps from $A$ to $\Vsub(P)$,
ordered pointwise. Thus an element $x\in\M(A,P)$ is a monotone map
\(
  x:A\longrightarrow\Vsub(P).
\)
For each $a\in A$, the value $x(a)$ is a subprobabilistic valuation on $P$.
Since every monotone map from a finite dcpo is Scott-continuous,
\[
  \M(A,P)=[A\to\Vsub(P)].
\]
If $A=\varnothing$ or $P=\varnothing$, this is the one-point dcpo. Assume that both are nonempty and identify $\Vsub(P)$ with
\(
  \Delta_{\leq1}(P)
  =\left\{v\in\mathbb R_{\geq0}^{P}:\sum_{p\in P}v_p\leq1\right\}.
\)
For $x \in \M (A,P)$,
writing
\[
  x(a)=\sum_{p\in P}x_{a,p}\delta_p,
\]
the whole map $x$ is represented by the vector
$(x_{a,p})_{(a,p)\in A\times P}\in\mathbb R^{A\times P}$.
Here $x_{a,p}$ is the atomic mass assigned to $p$ by the valuation $x(a)$.

A \emph{polyhedron} in a finite-dimensional real vector space is an
intersection of finitely many closed affine half-spaces, equivalently a set
defined by finitely many linear inequalities. A bounded polyhedron is called
a \emph{polytope}. The set $\M(A,P)$ consists exactly of the vectors
$(x_{a,p})$ satisfying
\[
  x_{a,p}\geq0,
  \qquad
  \sum_{p\in P}x_{a,p}\leq1,
\]
and
\[
  \sum_{p\in U}x_{a,p}
  \leq
  \sum_{p\in U}x_{b,p}
\]
whenever $a\leq b$ in $A$ and $U$ is an upper set of $P$. The first two
families express that every $x(a)$ is a subprobabilistic valuation, and the
last family expresses the monotonicity $x(a)\stle x(b)$. Since $A$ and $P$
are finite, only finitely many inequalities occur. They are linear, so the
set is convex, and $0\leq x_{a,p}\leq1$, so it is bounded. It is also
closed; hence, by the finite-dimensional Heine--Borel theorem,
$\M(A,P)$ is a compact convex polytope in $\mathbb R^{A\times P}$.

Let $\Up(P)$ be the finite family of upper sets of $P$. For $a\in A$ and $\varnothing\neq U\in\Up(P)$, define the linear functional
\[
\lambda_{a,U}:\M(A,P)\longrightarrow\mathbb R,
  \qquad
  \lambda_{a,U}(x)=x(a)(U)
  =\sum_{p\in U}x_{a,p}.
\]
The same formula defines a linear functional on the whole ambient space
$\mathbb R^{A\times P}$, and we use the same symbol for that extension. The
coordinate corresponding to the empty upper set is identically zero and is
omitted. The pointwise order on $\M(A,P)$ therefore has the equivalent
characterization
\[
  x\leq y
  \quad\Longleftrightarrow\quad
  \lambda_{a,U}(x)\leq\lambda_{a,U}(y)
  \text{ for all }a\in A,\ U\in\Up(P).
\]
For clarity, write
\[
  \Phi_t^P:\Vsub(P)\longrightarrow\Vsub(P)
\]
for the map constructed in
\cref{eq:def_of_phi} for the poset $P$. We lift it pointwise to the finite function space by
\begin{equation}\label{eq:Psi-definition}
  \Psi_t:\M(A,P)\longrightarrow\M(A,P),
  \qquad
  \Psi_t(x)=\Phi_t^P\circ x;
\end{equation}
equivalently,
\[
  \Psi_t(x)(a)=\Phi_t^P(x(a)).
\]
Since $\Phi_t^P$ is order preserving by
\cref{thm:order-preservation}, the composite $\Phi_t^P\circ x$ is monotone
whenever $x$ is monotone. Thus $\Psi_t$ is a self-map of $\M(A,P)$.
Directed suprema in $\M(A,P)$ are computed pointwise, and
$\Phi_t^P$ is Scott-continuous by \cref{prop:Phi-scott}; hence, for every
directed family $(x_i)$ and every $a\in A$,
\[
  \Psi_t\left(\sup_i x_i\right)(a)
  =\Phi_t^P\left(\sup_i x_i(a)\right)
  =\sup_i\Phi_t^P(x_i(a))
   =\sup_i\Psi_t(x_i)(a).
\]
Therefore $\Psi_t$ is Scott-continuous. If $0\leq s\leq t$, then
\cref{lem:basic-family} gives $\Phi_t^P\leq\Phi_s^P$, and consequently
$\Psi_t\leq\Psi_s$.

Finally, for every nonempty upper set $U\subseteq P$,
\[
  \lambda_{a,U}(\Psi_t(x))
  =\Psi_t(x)(a)(U)
  =\Phi_t^P(x(a))(U).
\]
Applying \cref{cor:finite-two-sided} to the valuation $x(a)$ yields
\begin{equation}\label{eq:Psi-estimate}
  \bigl(\lambda_{a,U}(x)-K_Pt\bigr)^+
  \leq \lambda_{a,U}(\Psi_t(x))
  \leq \bigl(\lambda_{a,U}(x)-t\bigr)^+.
\end{equation}
This estimate will convert the Euclidean error of the grid rounding into an
order-theoretic error measured by the flow parameter $t$.

\begin{lemma}\label{lem:M-domain}
For all finite posets $A,P$, the dcpo $\M(A,P)$ is an FS-domain.
\end{lemma}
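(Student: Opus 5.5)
The plan is to get the FS property from Jung's Cartesian closedness theorem (\cref{thm:FS-cate}) together with a retraction, which avoids building a monotone grid rounding by hand. The only input from the present paper is \cref{thm:subprobability-FS}, which says $\Vsub(P)$ is a pointed FS-domain; its least element is the zero valuation.

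First, the domain. A finite poset is not pointed in general, so I would adjoin a fresh least element and set $A_\bot=A\cup\{\bot\}$. This is a finite pointed poset. It is an FS-domain because the identity is finitely separated from itself, with separator the whole finite set $A_\bot$, and the constant family $\id_{A_\bot}$ is then an FS approximate identity. By \cref{thm:FS-cate}, the function space $[A_\bot\to\Vsub(P)]$ is a pointed FS-domain. (If $A$ or $P$ is empty, $\M(A,P)$ is a singleton and there is nothing to prove.)

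Second, exhibit $\M(A,P)=[A\to\Vsub(P)]$ as a Scott-continuous retract of $[A_\bot\to\Vsub(P)]$.
\begin{itemize}
\item Define $e(x)$ to be the extension of $x$ with $e(x)(\bot)=0$. This is monotone because $0$ is the least valuation.
\item Let $r$ be restriction to $A$.
\end{itemize}
Both maps preserve the pointwise order and pointwise directed suprema, so they are Scott-continuous, and $r\circ e=\id_{\M(A,P)}$.

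Third, show that a Scott-continuous retract of an FS-domain is FS. Let $(f_i)$ be an FS approximate identity on $[A_\bot\to\Vsub(P)]$, and put $g_i=r f_i e$.
\begin{itemize}
\item The family $(g_i)$ is directed, and $\sup_i g_i=r\circ\id\circ e=\id$ by Scott continuity of $r$.
\item If $M_i$ separates $f_i$, then $r[M_i]$ separates $g_i$. Indeed, $f_i(e(x))\le m\le e(x)$ gives $g_i(x)\le r(m)\le re(x)=x$.
\end{itemize}
Hence $\M(A,P)$ is an FS-domain.

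A more self-contained alternative would be to use the maps $\Psi_{2^{-k}}$ of \eqref{eq:Psi-definition}. They are already Scott-continuous and increasing in $k$, and by \eqref{eq:Psi-estimate} their supremum is the identity. One would then try to imitate \cref{prop:finite-separation} with the estimate \eqref{eq:Psi-estimate}. I expect this to be the main obstacle: coordinatewise rounding $a\mapsto\lambda^{x(a)}$ need not be monotone in $a$, so the rounded point may leave $\M(A,P)$, and joins that would repair it do not generally exist in $\Vsub(P)$. This is exactly why I would take the retract route, which bypasses the monotone-rounding issue entirely.
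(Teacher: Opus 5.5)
Your argument is correct and follows the same route as the paper, which combines \cref{thm:subprobability-FS} with Cartesian closedness of FS-domains applied to $[A\to\Vsub(P)]$. Your extra step of adjoining a bottom to $A$ and realizing $\M(A,P)$ as a Scott-continuous retract of $[A_\bot\to\Vsub(P)]$ is a worthwhile refinement. The only Cartesian-closedness result stated in the paper (\cref{thm:FS-cate}) is for pointed FS-domains, whereas $A$ need not be pointed, and your retract step lets that theorem apply literally.
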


\begin{proof}
Every finite poset is an FS-domain, and $\Vsub(P)$ is an FS-domain by
\cref{thm:subprobability-FS}. Since every monotone map from the finite dcpo
$A$ is Scott-continuous,
\(
  \M(A,P)=[A\to\Vsub(P)].
\)
The claim follows from the Cartesian closedness of $\FS$.
\end{proof}

We next describe the pointwise order by its Hasse cone.
\label{subsec:polytope-cone}

We give a finite generating set for the order directions. A
\emph{convex cone} in a real vector space is a subset closed under addition
and multiplication by nonnegative scalars. For a set $S$ of vectors, its
\emph{conic hull} is
\[
  \cone(S)
  =\left\{\sum_{i=1}^m\alpha_i s_i:
    m\geq0,\ \alpha_i\geq0,\ s_i\in S\right\}.
\]
A cone $C$ is \emph{pointed} if $C\cap(-C)=\{0\}$; this condition ensures
that $x\leq_Cy$ defined by $y-x\in C$ is antisymmetric.

For $v=(v_p)_{p\in P}\in\mathbb R^P$ and $U\subseteq P$, write
$v(U)=\sum_{p\in U}v_p$. Define
\begin{equation}\label{eq:CP}
  C_P=
  \left\{v\in\R^P:v(U)\geq0
  \text{ for every upper set }U\subseteq P\right\}.
\end{equation}
Thus, for valuations $\mu,\nu$ on $P$,
\[
  \mu\stle\nu
  \quad\Longleftrightarrow\quad
  \nu-\mu\in C_P.
\]
For $p\in P$, the \emph{standard basis vector} $e_p\in\mathbb R^P$ has
$p$-coordinate $1$ and all other coordinates $0$. The direction $e_p$
adds mass at $p$, while $e_q-e_p$ moves one unit of mass from $p$ to $q$.
We write $p\prec q$ when $q$ \emph{covers} $p$, meaning that $p<q$ and
there is no $r$ with $p<r<q$. The cover relations form the edges of the
Hasse graph of $P$. The dual-cone and Hasse-network description of monotone
cones is classical; see~\cite{Ubhaya2001}.

The compact polytope $\M(A,P)$ is the state space of monotone valuation-valued maps, whereas $C_{A,P}$ is the cone of admissible order increments in the ambient vector space. They are not equal and play different roles in the constructions below.

\begin{lemma}\label{lem:cone-generators}
 If $p\prec q$ denotes a cover relation in $P$, then
\begin{equation}\label{eq:cone-generators}
  C_P=\cone\Bigl(
      \{e_p:p\in P\}
      \cup
      \{e_q-e_p:p\prec q\}
  \Bigr).
\end{equation}
The cone $C_P$ is pointed. Consequently the pointwise order cone of
$\M(A,P)$ is
\[
  C_{A,P}= \prod_{a \in A} C_P= C_P^A,
\]
and $C_{A,P}$ is pointed.
\end{lemma}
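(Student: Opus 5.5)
We prove the two inclusions in \eqref{eq:cone-generators} separately, then deduce pointedness and the product description.

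\emph{The generators lie in $C_P$.} For every upper set $U$ we have $e_p(U)\in\{0,1\}$, so $e_p\in C_P$. For a cover $p\prec q$, $p\in U$ forces $q\in U$. Hence $(e_q-e_p)(U)\in\{0,1\}$, never $-1$. Since $C_P$ is an intersection of closed half-spaces through $0$, it is a convex cone, and the conic hull of the generators is contained in $C_P$.

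\emph{Every $v\in C_P$ is a conic combination of the generators.} This is the substantive step. We argue by induction on a mass-transport potential, mimicking the proof of \cref{lem:monotone-integrals} but in reverse. If $v\geq0$ coordinatewise, then $v=\sum_p v_pe_p$ and we are done. Otherwise pick $p$ with $v_p<0$.

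Consider the upper set $\up p\setminus\{p\}$. If it contains no point carrying positive mass, then $v(\up p)\leq v_p<0$, which contradicts $v\in C_P$. More precisely, we want a point $q>p$ with $v_q>0$ that can feed $p$. We could try to choose a chain $p=r_0\prec r_1\prec\cdots\prec r_k=q$ and subtract $\epsilon\sum_i(e_{r_{i+1}}-e_{r_i})=\epsilon(e_q-e_p)$. The cleaner route is a flow argument: we view the cover graph as a network, with the positive parts of $v$ as sources and the negative parts as sinks, where mass may only travel downward along cover edges. A feasible flow exists by the Gale/Hoffman supply–demand theorem (equivalently max-flow min-cut). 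Its condition says that for every upper set $U$, the supply inside $U$ is at least the demand inside $U$, since demand can only be served from above; this is exactly $v(U)\geq0$. Decomposing the flow into edge flows $f_{p\prec q}\geq0$ gives
\[
v=\sum_{p\prec q}f_{p\prec q}(e_q-e_p)+w,
\]
with leftover supply $w\geq0$, which equals $\sum_p w_pe_p$.

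Alternatively, we can avoid the network theorem and use cone duality (Farkas). Suppose $v\notin\cone(S)$, where $S$ is the generating set. Since $\cone(S)$ is finitely generated and hence closed, there is a $g$ with $g\cdot s\geq0$ for all $s\in S$ and $g\cdot v<0$. The conditions $g\cdot e_p\geq0$ and $g\cdot(e_q-e_p)\geq0$ for covers say that $g$ is nonnegative and monotone, because the order is generated by covers in a finite poset. Then \cref{lem:monotone-integrals}, via the layer-cake decomposition $g=\sum_j(t_j-t_{j-1})\mathbf 1_{U_j}$, gives $g\cdot v=\sum_j(t_j-t_{j-1})v(U_j)\geq0$, a contradiction. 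I would use this Farkas argument; it is short and reuses the earlier lemma. The main point to check is the closedness of the finitely generated cone, which is standard.

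\emph{Pointedness and the product description.} If $v,-v\in C_P$, then $v(U)=0$ for all upper sets. Taking $U=\up p$ and $U=\up p\setminus\{p\}$ and subtracting gives $v_p=0$, so $C_P$ is pointed. Finally, $x\leq y$ in $\M(A,P)$ holds iff $y(a)-x(a)\in C_P$ for every $a\in A$, by the $\lambda_{a,U}$ characterization. So the order cone is $C_P^A$, and a product of pointed cones is pointed coordinatewise.
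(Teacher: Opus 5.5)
Your proof is correct. At its core it is the same duality argument the paper uses: a finitely generated cone is closed, its dual cone is cut out by covers (so it consists of monotone test functions), and the layer-cake decomposition into upper-set indicators shows that every $v\in C_P$ pairs nonnegatively with those functions.

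The packaging differs, though.

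\begin{itemize}
\item \textbf{The paper's route.} It adjoins a fresh bottom $\bot$ and works in the zero-sum hyperplane of $\R^{\widehat P}$, with the cone generated only by cover differences. The dual of that cone is all monotone functions; constants are harmless there because of the zero-sum condition. It applies the bipolar theorem, transports back by deleting the $\bot$-coordinate, and then recovers the remaining $e_p$ by telescoping along saturated chains.
\item \textbf{Your route.} Farkas directly on $P$, with nonnegative monotone test functions, skips that detour. In exchange, the paper's version shows slightly more: the vectors $e_m$ for minimal $m$, together with the cover roots, already generate $C_P$.
\item \textbf{Pointedness.} Evaluating on the two upper sets $\up p$ and $\up p\setminus\{p\}$ and subtracting is more direct than the paper's downward induction from maximal elements.
\end{itemize}

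Two small points for the write-up.

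\begin{itemize}
\item \cref{lem:monotone-integrals} is stated only for valuations, so justify $g\cdot v\geq0$ by the layer-cake computation itself, as you in fact do, rather than by citing the lemma.
\item Drop the exploratory chain-transfer paragraph. Subtracting $\epsilon(e_q-e_p)$ for an arbitrary $q>p$ with $v_q>0$ can violate a tight inequality $v(U)=0$ on an upper set containing $q$ but not $p$. For example, take $p<q$, $s<q$, $p<r$ with $v_q=v_r=1$ and $v_p=v_s=-1$, and $U=\{s,q\}$. So that naive greedy step does not work without further care.
\end{itemize}
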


\begin{proof}
Adjoin a fresh least element and put $\widehat P=P_\bot$. Let
\[
  H_{\widehat P}
  =\left\{h\in\R^{\widehat P}:\sum_{x\in\widehat P}h_x=0\right\}.
\]
Deleting the $\bot$-coordinate defines a linear isomorphism
\[
  T:H_{\widehat P}\longrightarrow\R^P.
\]
Its inverse sends $v\in\R^P$ to the vector $\widehat v\in H_{\widehat P}$
defined by
\[
  \widehat v_p=v_p\quad(p\in P),
  \qquad
  \widehat v_\bot=-\sum_{p\in P}v_p.
\]
Let
\[
  \widehat C_{\widehat P}
  =\{h\in H_{\widehat P}:h(U)\geq0
      \text{ for every upper set }U\subseteq\widehat P\}.
\]
An upper set of $\widehat P$ not containing $\bot$ is exactly an upper set
of $P$, whereas the only upper set containing $\bot$ is $\widehat P$, on
which every $h\in H_{\widehat P}$ has sum zero. Hence
\begin{equation}\label{eq:T-cone}
  T(\widehat C_{\widehat P})=C_P.
\end{equation}

Put
\[
  G=\cone\{e_y-e_x:x\prec y\text{ in }\widehat P\}
  \subseteq H_{\widehat P}.
\]
For $g\in\R^{\widehat P}$,
\[
  \langle g,e_y-e_x\rangle=g(y)-g(x).
\]
Thus the dual cone $G^*$ consists precisely of the monotone real-valued functions on $\widehat P$: it is enough to impose the inequalities on cover
relations. We now identify $G^{**}$. Let $h\in\widehat C_{\widehat P}$ and
let $g$ be monotone. Since $h(\widehat P)=0$, subtracting the minimum value of
$g$ does not change $\langle g,h\rangle$. We may therefore assume that
$g\geq0$. If $0<t_1<\cdots<t_m$ are its distinct positive values and
$U_j=\{x:g(x)\geq t_j\}$, then each $U_j$ is upper and
\[
  g=\sum_{j=1}^m(t_j-t_{j-1})\mathbf 1_{U_j},
  \qquad t_0=0.
\]
Consequently,
\[
  \langle g,h\rangle
  =\sum_{j=1}^m(t_j-t_{j-1})h(U_j)\geq0.
\]
Conversely, if a vector $h$ has nonnegative pairing with every monotone
function, then the constant functions show that $h(\widehat P)=0$, and
choosing $g=\mathbf 1_U$ for an upper set $U$ gives $h(U)\geq0$. Hence
\[
  G^{**}=\widehat C_{\widehat P}.
\]
The cone $G$ is finitely generated and therefore closed. The finite-dimensional
bipolar theorem now gives
\[
  \widehat C_{\widehat P}=G.
\]
This is the standard dual-cone description of the Hasse-network cone; see
also~\cite{Ubhaya2001} and ~\cite[Theorem 14.1 and Theorem 19.1]{Rockafellar1970}.

Under $T$, a cover direction $e_m-e_\bot$, with $m$ minimal in $P$, becomes
$e_m$, and every other cover direction becomes $e_q-e_p$ for a cover
$p\prec q$ in $P$. Thus $C_P$ is generated by $e_m$ for minimal $m$ and by
the cover roots $e_q-e_p$. For an arbitrary $p\in P$, choose a saturated
chain
\[
  m=x_0\prec x_1\prec\cdots\prec x_k=p
\]
from a minimal element $m$. The telescoping identity
\[
  e_p=e_m+\sum_{i=1}^k(e_{x_i}-e_{x_{i-1}})
\]
shows that every $e_p$ lies in this cone, proving
\eqref{eq:cone-generators}.

To prove pointedness, let $v\in C_P\cap(-C_P)$. Then $v(U)=0$ for every
upper set $U$. Starting with maximal elements and proceeding downward, the
equality
\[
  0=v(\up p)=v_p+\sum_{q>p}v_q
\]
shows inductively that every coordinate $v_p$ is zero. Thus $C_P$ is
pointed.

Finally, the order on $\M(A,P)$ is pointwise. Hence
\[
  x\leq y
  \quad\Longleftrightarrow\quad
  y(a)-x(a)\in C_P\text{ for every }a\in A
  \quad\Longleftrightarrow\quad
  y-x\in C_P^A.
\]
Therefore $C_{A,P}=C_P^A$, and a finite product of pointed cones is pointed.
\end{proof}

We then compare the Euclidean and Scott structures and identify suitable interior points.
\label{subsec:polytope-interface}

The next lemma connects the Euclidean constructions below with domain
theory. The randomized coefficients will first be shown continuous in the
ordinary Euclidean topology. On the finite-dimensional ordered polytopes at
hand, monotonicity then upgrades Euclidean continuity to Scott continuity.

\begin{lemma}\label{lem:euclidean-scott}
Let $K$ and $L$ be finite products of polytopes of the form
$\M(A,P)$ and finite valuation spaces $\Vsub(Q)$ or $\Vone(Q)$. Every monotone
Euclidean-continuous map $F:K\to L$ is Scott-continuous.
\end{lemma}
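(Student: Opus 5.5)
Since $F$ is assumed monotone, it maps directed sets to directed sets and satisfies $\sup_i F(x_i)\leq F(\sup_i x_i)$. Scott continuity therefore comes down to the reverse inequality, and we plan to obtain it from the fact that in all the spaces considered, directed suprema coincide with Euclidean limits. More precisely, we first prove the following.

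\emph{Claim.} If $X$ is one of $K,L$ and $D\subseteq X$ is directed with supremum $s$, then $s$ lies in the Euclidean closure of $D$. Indeed, $D$ converges to $s$ as a net indexed by $D$ itself.

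With the claim, Scott continuity follows quickly. Put $r=\sup F(D)$ in $L$; by the claim applied to $F(D)$, the net $F(d)$ converges to $r$ in the Euclidean topology. By Euclidean continuity of $F$ and the claim applied to $D$, the net $F(d)$ also converges to $F(s)$. Limits are unique, so $F(s)=r$.

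To prove the claim, we work with the order coordinates. For $\M(A,P)$ these are the finitely many linear functionals $\lambda_{a,U}$. For $\Vsub(Q)$ we use $\mu\mapsto\mu(U)$ with $U$ a nonempty upper set, and for $\Vone(Q)$ the same functionals. For a finite product we take the union of the coordinate families of the factors. We then need two facts.
\begin{enumerate}
\item Directed suprema are computed coordinatewise as suprema of these functionals. For $\Vsub(Q)$ and $\Vone(Q)$ this is \eqref{eq:directed-supremum}; for $\M(A,P)=[A\to\Vsub(P)]$, suprema are pointwise in $a$, which reduces to the $\Vsub(P)$ case; for products, suprema are componentwise.
\item The coordinate map $x\mapsto(\lambda(x))_\lambda$ is injective, linear on the ambient space, and a homeomorphism onto its image. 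Injectivity is the pointedness argument in \cref{lem:cone-generators}: the atoms are recovered from upper-set masses by M\"obius inversion, $p_x=\nu(\up x)-\nu(\up x\setminus\{x\})$. That inversion is itself linear, so the inverse map is continuous.
\end{enumerate}
For a directed $D$, each coordinate net $\lambda(d)$ is monotone in $d$ and bounded, so it converges to its supremum, which by the first fact equals $\lambda(s)$. There are only finitely many coordinates, so the whole coordinate vector converges, and the second fact turns this into Euclidean convergence $d\to s$. This proves the claim.

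The main technical point is the second fact, that the upper-set coordinates determine the Euclidean topology. Once injectivity is known it is immediate from finite-dimensional linearity, and the rest is bookkeeping across the product factors.
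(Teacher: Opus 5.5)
Your proposal is correct and follows essentially the same route as the paper. Both arguments compute directed suprema on the finitely many upper-set coordinates, convert this into Euclidean convergence of the directed net through the linear inversion $p_x=\nu(\up x)-\nu(\up x\setminus\{x\})$, and then use Euclidean continuity together with coordinatewise computation of $\sup F(D)$ to identify that supremum with $F(\sup D)$.

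One small bookkeeping point: the claim should be applied directly to the net $(F(d))_{d\in D}$. This is legitimate because each coordinate $\lambda(F(d))$ is monotone and bounded in $d$.
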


\begin{proof}
Let $(x_i)_{i\in I}$ be a directed family in $K$ with supremum $x$. Consider first one component $\M(A,P)$. Directed suprema are computed pointwise on upper sets, so for every $a\in A$ and every $p\in P$,
\[
x_i(a)(\up p)\longrightarrow x(a)(\up p),
  \qquad
  x_i(a)(\up p\setminus\{p\})
  \longrightarrow x(a)(\up p\setminus\{p\}).\]
The atomic coordinates satisfy
\[
  x_i(a)_p
  =x_i(a)(\up p)-x_i(a)(\up p\setminus\{p\}).
\]
Hence every atomic coordinate of $x_i$ converges to the corresponding
coordinate of $x$. Since there are only finitely many coordinates,
$x_i\to x$ in the Euclidean topology. The same argument applies to
$\Vsub(Q)$ and $\Vone(Q)$, and then componentwise to finite products.

Euclidean continuity gives $F(x_i)\to F(x)$. Since $F$ is monotone, the family $(F(x_i))_{i\in I}$ is directed. Let $y=\sup_iF(x_i)$. For each
upper-set coordinate of $L$, directed-supremum computation and Euclidean convergence give
\[
  y(U)=\sup_iF(x_i)(U)=F(x)(U).
\]
These finitely many coordinates determine the order and the underlying valuation, so $y=F(x)$. Thus $F$ preserves directed suprema and is Scott-continuous.
\end{proof}

The later rounding construction will need a point uniformly separated from the
boundary. The \emph{Euclidean interior} $\Int K$ consists of those points that contain a sufficiently small Euclidean ball inside $K$.

\begin{lemma}\label{lem:interior-point}
If $A$ and $P$ are nonempty, then the compact convex polytope $\M(A,P)$ has
nonempty Euclidean interior.
\end{lemma}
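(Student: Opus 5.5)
To prove that $\M(A,P)$ has nonempty Euclidean interior, I would exhibit one explicit point $x^\ast\in\M(A,P)$ and check that every inequality defining the polytope holds \emph{strictly} at $x^\ast$. Since $\M(A,P)$ is cut out by finitely many linear inequalities in $\R^{A\times P}$, strict satisfaction of all of them at $x^\ast$ gives, by continuity of the finitely many linear functionals, a Euclidean ball around $x^\ast$ contained in $\M(A,P)$. So the proof reduces to choosing $x^\ast$ and verifying three families of strict inequalities.

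The inequalities to be made strict are:
\begin{itemize}
\item nonnegativity $x_{a,p}>0$;
\item the mass bound $\sum_p x_{a,p}<1$;
\item monotonicity $\lambda_{a,U}(x)<\lambda_{b,U}(x)$ for $a<b$ in $A$ and each nonempty upper set $U$.
\end{itemize}
The case $a=b$ is trivial and imposes nothing. The case $U=\varnothing$ gives the identically zero coordinate, which the paper already omits, so it imposes nothing either.

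I would take the valuation at $a$ to depend on $a$ through a strictly monotone height function. Let $h:A\to\{0,1,\dots,|A|-1\}$ be strictly monotone, for example $h(a)=|\down a|-1$. With $n=|P|$, fix $\varepsilon>0$ small and set
\[
  x^\ast_{a,p}=\varepsilon\,\bigl(1+h(a)\bigr).
\]
Every coordinate is positive, and the total mass is $n\varepsilon(1+h(a))\leq n\varepsilon|A|<1$ once $\varepsilon<1/(n|A|)$. If $a<b$, then $h(a)<h(b)$, so for every nonempty upper set $U$,
\[
  \lambda_{b,U}(x^\ast)-\lambda_{a,U}(x^\ast)=\varepsilon|U|\bigl(h(b)-h(a)\bigr)>0.
\]
Thus all three families hold strictly at $x^\ast$.

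To conclude, I take $\rho>0$ smaller than the minimum slack over all these finitely many inequalities, divided by a bound on the norms of the corresponding linear functionals. Each functional is a sum of at most $n$ coordinates, so its variation on a ball of radius $\rho$ is at most $n\rho$. Hence the Euclidean ball of radius $\rho$ about $x^\ast$ lies in $\M(A,P)$, which shows $x^\ast\in\Int\M(A,P)$.

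The only step needing care is recognising that monotonicity has to hold with strict inequality on \emph{every} nonempty upper set for comparable pairs. A constant choice such as $x(a)=\varepsilon\mathbf 1$ would satisfy monotonicity only with equality and so would lie on the boundary. The strictly increasing height is exactly what makes all those inequalities strict.
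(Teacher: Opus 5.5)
Your proof is correct and is essentially the paper's argument. The paper sets $u(a)=c(a)\,v$ with $v$ a fixed interior point of $\Delta_{\leq1}(P)$ and $c$ strictly order-preserving into $(0,1)$; your $x^\ast(a)=\varepsilon(1+h(a))\mathbf 1$ is the same construction, and both arguments then observe that every nontrivial defining inequality holds strictly. One cosmetic slip: $\lambda_{b,U}-\lambda_{a,U}$ involves up to $2n$ coordinates, so on your ball of radius $\rho$ it varies by at most $2n\rho$ rather than $n\rho$, which changes nothing.
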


\begin{proof}
The subprobability simplex
\(
  \Delta_{\leq1}(P)
  =
  \left\{
    \nu\in\mathbb R_{\geq0}^{P}:\nu(P)\leq1
  \right\}
\)
has nonempty interior. Choose
\(
  \rho\in\Int\Delta_{\leq1}(P),
\)
so that every atomic coordinate of $\rho$ is positive and
$\rho(P)<1$.

Since $A$ is finite, there exists a strictly order-preserving map
\(
  c:A\longrightarrow(0,1).
\)
Define
\(
  u(a)=c(a)\rho.
\)
Thus the image of $u$ lies on the open line segment
\(
  \{t\rho:0<t<1\}
  \subseteq\Int\Delta_{\leq1}(P).
\)
In particular, every atomic coordinate of every $u(a)$ is positive and
$u(a)(P)<1$.

Moreover, if $a<b$ in $A$ and $U\subseteq P$ is a nonempty upper set,
then $\rho(U)>0$ and hence
\(
  u(a)(U)
  =
  c(a)\rho(U)
  <
  c(b)\rho(U)
  =
  u(b)(U).
\)
Therefore $u$ satisfies strictly every nontrivial linear inequality
defining $\M(A,P)$. Hence $u$ lies in the Euclidean interior of
$\M(A,P)$.
\end{proof}

We have therefore represented $\M(A,P)$ as a compact ordered polytope with a
finitely generated pointed order cone and a nonempty interior.

\subsection{Monotone randomized grid rounding}
\label{sec:random-rounding}

The aim of this subsection is to replace each point of
$K=\M(A,P)$ by a probability distribution on finitely many nearby grid
points. A deterministic floor map is discontinuous at grid boundaries and need not preserve the cone order. Therefore, we introduce the random translation to remove the discontinuity after taking probabilities, while additional translations along the cover-root directions yield an explicit monotone coupling.

We first introduce cone-adapted random rounding and its error zonotope.
\label{subsec:rounding-setup}

Fix nonempty finite posets $A,P$ and put
$K=\M(A,P)\subseteq\R^d$, where $d=|A||P|$. Coordinates of
$\mathbb R^{A\times P}$ are indexed by pairs $(a,p)$. Let $e_{a,p}$ be the
standard basis vector with value $1$ in coordinate $(a,p)$ and $0$
elsewhere, and enumerate all these vectors as $e_1,\ldots,e_d$. 
Enumerate the cover-root directions
\(
  e_{a,q}-e_{a,p}
  \ (a\in A,\ p\prec q)
\) as $\xi_1,\ldots,\xi_r$. 
By \cref{lem:cone-generators},
\(
  C_{A,P}=C_P^A.
\)
For each $a\in A$, the copy of $C_P$ in the $a$-th component is
generated by
\(
  e_{a,p}
  \ (p\in P)
\)
and
\(
  e_{a,q}-e_{a,p}
  \ (p\prec q).
\)
Consequently,
\[
  C_{A,P}
  =
  \cone\Bigl(
    \{e_{a,p}:a\in A,\ p\in P\}
    \cup
    \{e_{a,q}-e_{a,p}:a\in A,\ p\prec q\}
  \Bigr).
\]
After enumerating these two finite families as
$e_1,\ldots,e_d$ and $\xi_1,\ldots,\xi_r$, respectively, the vectors
$e_i$ and $\xi_j$ generate $C_{A,P}$.

For a vector $v$, write
\[
  [0,v]=\{tv:0\leq t\leq1\}
\]
for the line segment from $0$ to $v$. For subsets $B_1,\ldots,B_m$ of a
vector space, their \emph{Minkowski sum} is
\[
  B_1+\cdots+B_m
  =\{b_1+\cdots+b_m:b_i\in B_i\}.
\]
Define the bounded set
\begin{equation}\label{eq:zonotope}
  Z_{A,P}
  =\sum_{i=1}^d[0,2e_i]
   +\sum_{j=1}^r[0,\xi_j]
  \subseteq C_{A,P}.
\end{equation}
This finite Minkowski sum of line segments is called a zonotope. Its role is
to contain every possible rounding error as we shall see; the inclusion in
$C_{A,P}$ will ensure that every rounded grid point lies below the input in
the cone order.

Let $U_1,\ldots,U_d,S_1,\ldots,S_r$ be independent random variables,
each uniformly distributed on $[0,1)$, and write
$U=(U_1,\ldots,U_d)$. For a real vector $w$, $\lfloor w\rfloor$ denotes
coordinatewise floor. 
For $z\in\mathbb R^d$, define
\begin{equation}\label{eq:random-rounding}
  Q_z=
  \left\lfloor
  z-U-\sum_{j=1}^rS_j\xi_j
  \right\rfloor
  \in\mathbb Z^d.
\end{equation}

Let $\pi_z$ be its probability distribution, that is,
\[
  \pi_z(\{\ell\})
  =
  \mathbb P(Q_z=\ell)
  \qquad(\ell\in\mathbb Z^d).
\]
When the random inputs need to be displayed
explicitly, we write $Q_z(U,S)$ for the same random vector. In one dimension
and without the $S_j$ terms,
$\lfloor z-U\rfloor$ equals $\lfloor z\rfloor$ with probability equal to
the fractional part of $z$ and equals $\lfloor z\rfloor-1$ otherwise.
Thus the individual floor map is discontinuous, but the two probabilities
vary continuously with $z$.

We next prove finite support and continuity of the rounding probabilities.
\label{subsec:rounding-continuity}

\begin{lemma}\label{lem:rounding-law}
For every $z\in\R^d$,  $\pi_z$ has finite support.  Moreover,
\begin{enumerate}[label=\textup{(\roman*)}]
\item if $\pi_z(\{\ell\})>0$, then
\begin{equation}\label{eq:rounding-support}
  z-\ell\in Z_{A,P};
\end{equation}
\item for every $\ell\in\Z^d$, the function
\(
  z\longmapsto\pi_z(\{\ell\})
\)
is continuous.
\end{enumerate}
\end{lemma}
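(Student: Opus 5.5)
The plan is to write the rounding error explicitly. For fixed $z$ and any realization $(U,S)\in[0,1)^{d+r}$, put
\[
  w=z-U-\sum_{j=1}^rS_j\xi_j,
  \qquad
  Q_z=\lfloor w\rfloor .
\]
Then $w-\lfloor w\rfloor\in[0,1)^d$, so
\[
  z-Q_z
  =U+\sum_{j=1}^rS_j\xi_j+(w-\lfloor w\rfloor).
\]
The vector $U+(w-\lfloor w\rfloor)$ has every coordinate in $[0,2)$. It can therefore be written as $\sum_i t_i(2e_i)$ with $t_i\in[0,1)$. Each $S_j\xi_j$ lies in $[0,\xi_j]$. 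Hence $z-Q_z\in Z_{A,P}$ for every realization, which proves (i) directly.

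Finite support follows from (i). The zonotope $Z_{A,P}$ is bounded, being a finite Minkowski sum of segments, so $z-Z_{A,P}$ is bounded. It therefore contains only finitely many lattice points of $\Z^d$, and $\pi_z$ is concentrated on them. The inclusion $Z_{A,P}\subseteq C_{A,P}$ is not needed here. Only boundedness is used.

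For (ii), I would write
\[
  \pi_z(\{\ell\})=\lambda\bigl(B_\ell(z)\bigr),
  \qquad
  B_\ell(z)=\Bigl\{(u,s)\in[0,1)^{d+r}:\ \ell\leq z-u-\sum_j s_j\xi_j<\ell+\mathbf 1\Bigr\},
\]
where $\lambda$ is Lebesgue measure. The plan is to show $\mathbf 1_{B_\ell(z')}\to\mathbf 1_{B_\ell(z)}$ almost everywhere as $z'\to z$, and then apply dominated convergence on the bounded cube. Convergence can only fail at points where some coordinate of $z-u-\sum_j s_j\xi_j$ equals an integer, i.e.\ on the union of the hyperplanes
\[
  u_i=z_i-\Bigl(\sum_j s_j\xi_j\Bigr)_i-k,\qquad k\in\Z,\ i\le d.
\]
For each $i$ and $k$ this is a graph over the remaining variables, with $u_i$ appearing with coefficient $1$, so it is Lebesgue-null. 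Only countably many such sets occur, so the exceptional set is null.

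An alternative for (ii) is a Fubini argument. Condition on $S$ and use independence of the $U_i$: the conditional probability factorizes into one-dimensional probabilities $\mathbb P(\lfloor y_i-U_i\rfloor=\ell_i)$, each a continuous piecewise-linear function of $y_i$. One then integrates over $S$ by dominated convergence.

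The only delicate step is the null-boundary claim in (ii). The point to check is that the random shifts $U_i$ enter each coordinate with unit coefficient. This is what makes the discontinuity sets null regardless of how the $\xi_j$ are arranged, so the first approach settles (ii).
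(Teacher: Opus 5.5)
Your proposal is correct and follows essentially the same route as the paper. Part (i) and finite support come from the decomposition $z-\ell=U+\theta+\sum_j S_j\xi_j$ with $U+\theta\in[0,2)^d$. Part (ii) comes from almost-everywhere local constancy of the indicator off a null union of graphs (null because $u_i$ enters with unit coefficient), followed by dominated convergence. Your exceptional set, covering all integer levels rather than only $\ell_i,\ell_i+1$, is slightly larger than the paper's but equally null.
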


\begin{proof}
On the event $Q_z=\ell$, put
\[
  \theta=z-U-\sum_jS_j\xi_j-\ell\in[0,1)^d.
\]
Then
\[
  z-\ell=U+\theta+\sum_jS_j\xi_j\in Z_{A,P},
\]
which proves~\textup{(i)}.  
It also shows that
\[
  \supp(\pi_z)\subseteq(z-Z_{A,P})\cap\Z^d,
\]
and the set on the right is finite.

For~\textup{(ii)}, let $z_n\to z$ in $\mathbb R^d$ and fix
$\ell\in\mathbb Z^d$. On $\Omega=[0,1)^{d+r}$, write
\[
  q_w(u,s)
  =
  \left\lfloor
    w-u-\sum_{j=1}^r s_j\xi_j
  \right\rfloor.
\]
Then
\[
  \pi_w(\{\ell\})
  =
  \int_\Omega
  \mathbf 1_{\{q_w=\ell\}}
  \,d\omega.
\]
Since
\(
  q_w(u,s)=\ell
\)
if and only if
\[
  \ell_i
  \leq
  w_i-u_i-\sum_{j=1}^r s_j(\xi_j)_i
  <
  \ell_i+1
  \qquad (1\leq i\leq d),
\]
the map
\(
  w\longmapsto \mathbf 1_{\{q_w(u,s)=\ell\}}
\)
is locally constant at $w=z$ unless, for some coordinate $i$,
\[
  z_i-u_i-\sum_{j=1}^r s_j(\xi_j)_i
  \in\{\ell_i,\ell_i+1\}.
\]

Let
\[
  B_i
  =
  \left\{
    (u,s):
    z_i-u_i-\sum_{j=1}^r s_j(\xi_j)_i
    \in\{\ell_i,\ell_i+1\}
  \right\},
\]
and put
\(
  B=\bigcup_{i=1}^d B_i.
\)
We claim that $B$ has Lebesgue measure zero. Indeed, write
\[
  B_i=B_{i,0}\cup B_{i,1},
\]
where, for $\eps\in\{0,1\}$,
\[
  B_{i,\eps}
  =
  \left\{
    (u,s):
    z_i-u_i-\sum_{j=1}^r s_j(\xi_j)_i
    =
    \ell_i+\eps
  \right\}.
\]
After all variables except $u_i$ have been fixed, the defining equality
for $B_{i,\eps}$ determines $u_i$ uniquely, namely
\[
  u_i
  =
  z_i-\ell_i-\eps
  -\sum_{j=1}^r s_j(\xi_j)_i.
\]
Thus every $u_i$-section of $B_{i,\eps}$ contains at most one
point and hence has one-dimensional Lebesgue measure zero. By Fubini's
theorem, the Lebesgue measure of $B_{i,\eps}$ is 
\(
  \mathcal{L}^{d+r}(B_{i,\eps})=0.
\)
Consequently,
\(
  \mathcal{L}^{d+r}(B_i)=0,
\)
and, since there are only finitely many coordinates,
\[
  \mathcal{L}^{d+r}(B)
  \leq
  \sum_{i=1}^d \mathcal{L}^{d+r}(B_i)
  =0.
\]
For \(w\in\mathbb R^d\), let
\[
  A_w^\ell
  =
  \{(u,s):q_w(u,s)=\ell\}.
\]
Thus
\[
  \mathbf 1_{A_w^\ell}(u,s)
  =
  \begin{cases}
  1,& q_w(u,s)=\ell,\\
  0,& q_w(u,s)\neq\ell.
  \end{cases}
\]

Fix now \((u,s)\) outside the null set of boundary parameters.  For each
coordinate \(i\), put
\[
  r_i(w)
  =
  w_i-u_i-\sum_j s_j(\xi_j)_i.
\]
Then
\[
  q_w(u,s)=\ell
\]
if and only if
\[
  \ell_i\leq r_i(w)<\ell_i+1
  \qquad\text{for every }i.
\]
Since \((u,s)\) is not a boundary parameter, at \(w=z\) none of the
numbers \(r_i(z)\) is equal to either \(\ell_i\) or \(\ell_i+1\).
Consequently, there exists a neighborhood \(N_{u,s}\) of \(z\) such
that, for every \(w\in N_{u,s}\), each of the above inequalities has
the same truth value as it has at \(w=z\).  Hence
\[
  q_w(u,s)=\ell
  \quad\Longleftrightarrow\quad
  q_z(u,s)=\ell
  \qquad(w\in N_{u,s}),
\]
or equivalently,
\[
  \mathbf 1_{A_w^\ell}(u,s)
  =
  \mathbf 1_{A_z^\ell}(u,s)
  \qquad(w\in N_{u,s}).
\]
Notice that this does not assert that
\(A_w^\ell=A_z^\ell\).  Rather, after \((u,s)\) has been fixed, the two indicator functions have the same value at that particular parameter point whenever \(w\) is sufficiently close to \(z\).

Therefore, if \(w_n\to z\), then for every \((u,s)\) outside the boundary
null set there exists \(n_0=n_0(u,s)\) such that
\[
  \mathbf 1_{A_{w_n}^\ell}(u,s)
  =
  \mathbf 1_{A_z^\ell}(u,s)
  \qquad(n\geq n_0).
\]
In particular,
\[
  \mathbf 1_{A_{w_n}^\ell}(u,s)
  \longrightarrow
  \mathbf 1_{A_z^\ell}(u,s)
\]
for almost every \((u,s)\).

Since
\[
  0\leq \mathbf 1_{A_{w_n}^\ell}(u,s)\leq1,
\]
the dominated convergence theorem yields
\[
\begin{aligned}
  \pi_{w_n}(\{\ell\})
  &=
  \int
  \mathbf 1_{A_{w_n}^\ell}(u,s)\,d(u,s)\\
  &\longrightarrow
  \int
  \mathbf 1_{A_z^\ell}(u,s)\,d(u,s)\\
  &=
  \pi_z(\{\ell\}).
\end{aligned}
\]
Thus \(w\mapsto\pi_w(\{\ell\})\) is continuous.
\end{proof}

We then construct monotone couplings for the cone order.
\label{subsec:rounding-coupling}

Equip $\Z^d$ with the order induced by $C_{A,P}$:
\[
  \ell\leq_Cm
  \quad\Longleftrightarrow\quad
  m-\ell\in C_{A,P}.
\]
Because $C_{A,P}$ is pointed, this is a partial order. 
A \emph{coupling} of two probability distributions $\mu$ and $\nu$
is a pair of random variables $(L,L')$ defined on the same probability
space such that $L$ has distribution $\mu$ and $L'$ has distribution
$\nu$.
It is a \emph{monotone coupling} if $L\leq_CL'$ almost surely, meaning with probability one. 
Such a coupling implies that $\mu$ is
stochastically below $\nu$. The next proposition constructs such a
coupling explicitly for the rounding distributions. Here stochastic order is taken with respect to the cone-induced order
$\leq_C$ on $\mathbb Z^d$ by comparison on all upper sets of the underlying ordered space.

\begin{proposition}\label{prop:rounding-monotone}
If $z,z'\in\mathbb R^d$ satisfy
\(
  z'-z\in C_{A,P}
\), then $\pi_z$ is stochastically below $\pi_{z'}$
for the order $\leq_C$. More precisely, there is a coupling $(L,L')$ of
$\pi_z$ and $\pi_{z'}$ such that $L\leq_CL'$ almost surely.
\end{proposition}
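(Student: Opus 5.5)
The plan is to build the coupling on a single probability space by reusing the randomness of $Q_z$, transformed by a measure-preserving shift of the torus $[0,1)^{d+r}$. The point is that every cone direction in the decomposition of $z'-z$ can be absorbed by one of the uniform variables $U_i$ or $S_j$.

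First, since $z'-z\in C_{A,P}$ and the vectors $e_1,\ldots,e_d,\xi_1,\ldots,\xi_r$ generate $C_{A,P}$ (\cref{lem:cone-generators} and the enumeration above), write
\[
  z'-z=\sum_{i=1}^d\alpha_ie_i+\sum_{j=1}^r\beta_j\xi_j,\qquad \alpha_i,\beta_j\geq0 .
\]
On the probability space carrying $(U,S)$, define shifted variables
\[
  U'_i=U_i-\alpha_i+k_i,\quad k_i=\lceil\alpha_i-U_i\rceil,
  \qquad
  S'_j=S_j-\beta_j+m_j,\quad m_j=\lceil\beta_j-S_j\rceil .
\]
That is, $U'_i$ and $S'_j$ are the fractional parts of $U_i-\alpha_i$ and $S_j-\beta_j$. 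Each map $u\mapsto u-\alpha\bmod1$ preserves Lebesgue measure on $[0,1)$, and the shifts act coordinatewise. Hence $(U',S')$ are again independent uniform variables on $[0,1)$, and $L':=Q_{z'}(U',S')$ has law $\pi_{z'}$. Set $L:=Q_z(U,S)$, which has law $\pi_z$.

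Second, compute directly:
\[
  z'-U'-\sum_jS'_j\xi_j
  =z-U-\sum_jS_j\xi_j+\sum_i k_ie_i+\sum_j m_j\xi_j .
\]
Here the $\alpha_i$ and $\beta_j$ terms cancel against those in $z'-z$. Because $\alpha_i,\beta_j\geq0$ and $U_i,S_j<1$, the integers $k_i$ and $m_j$ are nonnegative. The added vector $w=\sum_ik_ie_i+\sum_jm_j\xi_j$ lies in $\mathbb Z^d$, since each $e_i$ and $\xi_j$ has integer entries. The coordinatewise floor commutes with adding an integer vector, so
\[
  L'=L+w,\qquad w\in C_{A,P},
\]
and this holds surely, not merely almost surely. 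Therefore $L\leq_CL'$, which gives the required monotone coupling.

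Finally, stochastic dominance follows by the standard argument. For any $\leq_C$-upper set $V\subseteq\mathbb Z^d$, the event $\{L\in V\}$ is contained in $\{L'\in V\}$, so $\pi_z(V)\leq\pi_{z'}(V)$.

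The only delicate point is recognising that the cover-root directions $\xi_j$, which have negative entries, cannot be handled by a naive common-$U$ coupling. Coordinatewise floors of $y$ and $y+\delta$ do not differ by a cone element when $\delta\notin\R_{\geq0}^d$. This is exactly why the $S_j$ terms were built into \eqref{eq:random-rounding}: shifting $S_j$ converts a real multiple $\beta_j\xi_j$ into an integer multiple $m_j\xi_j$. Once that is seen, the remaining verification is the measure-preservation check and the sign of $k_i$ and $m_j$.
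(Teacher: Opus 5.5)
Your strategy is the right one: rotate the uniform variables on the torus so that every generator in the decomposition of $z'-z$ becomes an integer multiple of itself. However, the rotation has the wrong sign, and as written the central identity is false. With $U_i'=U_i-\alpha_i+k_i$ and $S_j'=S_j-\beta_j+m_j$, the terms $-U'$ and $-\sum_jS_j'\xi_j$ contribute $+\sum_i\alpha_ie_i+\sum_j\beta_j\xi_j=z'-z$ a second time instead of cancelling it:
\[
  z'-U'-\sum_jS_j'\xi_j
  =z-U-\sum_jS_j\xi_j+2(z'-z)-\sum_ik_ie_i-\sum_jm_j\xi_j .
\]
The coupling you define is then genuinely not monotone. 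Take $A=P=\mathbf 1$, so $d=1$, $r=0$, $C_{A,P}=[0,\infty)$ and $Q_z=\lfloor z-U\rfloor$, and let $z=0.2$, $z'=0.5$. On the event $U\in[0,0.2]$ your recipe gives $k=1$ and $U'=U+0.7$. Hence $L=\lfloor 0.2-U\rfloor=0$ while $L'=\lfloor -0.2-U\rfloor=-1$, so $L'<L$ with probability $0.2$.

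The repair is one line. Rotate the other way: set $U_i'=U_i+\alpha_i-k_i$ with $k_i=\lfloor U_i+\alpha_i\rfloor$, and $S_j'=S_j+\beta_j-m_j$ with $m_j=\lfloor S_j+\beta_j\rfloor$. Equivalently, keep your shifts but put $L=Q_z(U',S')$ and $L'=Q_{z'}(U,S)$. Then the terms really cancel, $L'=L+\sum_ik_ie_i+\sum_jm_j\xi_j$ holds surely, and $k_i,m_j\geq0$ since $U_i,S_j,\alpha_i,\beta_j\geq0$.

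After this correction, your treatment of the $\xi_j$ directions coincides with the paper's. The paper rotates $S_j$ by $\theta_j=\beta_j-\lfloor\beta_j\rfloor$, and its $n_j+\delta_j$ is exactly your $m_j$. The only difference is the $e_i$ part. The paper leaves $U$ untouched and uses only that $\lfloor w_i+\alpha_i\rfloor-\lfloor w_i\rfloor\in\N$ for $\alpha_i\geq0$, which suffices because the $e_i$ are coordinate directions. You rotate $U$ as well, which treats all generators uniformly and exhibits $L'-L$ directly as a nonnegative integer combination of generators, at the price of an extra (trivial) measure-preservation check.
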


\begin{proof}
Choose coefficients $\alpha_i,\beta_j\geq0$ such that
\[
  z'-z
  =
  \sum_{i=1}^d\alpha_i e_i
  +
  \sum_{j=1}^r\beta_j\xi_j.
\]
Write
\(
  \beta_j=n_j+\theta_j,
  \
  n_j\in\mathbb N,\ 0\leq\theta_j<1.
\)
Using the random variables from \cref{eq:random-rounding}, define
\(
  S_j'=(S_j+\theta_j)\bmod 1
\)
and
\(
  \delta_j
  =
  \mathbf 1_{\{S_j+\theta_j\geq1\}}.
\)
Thus
\(
  S_j+\theta_j=S_j'+\delta_j.
\)
Translation modulo $1$ preserves the uniform distribution on $[0,1)$.
Hence $S_1',\ldots,S_r'$ are again independent and uniformly distributed
on $[0,1)$, and they remain independent of $U$.
Set
\[
  L=Q_z(U,S),
  \qquad
  L'=Q_{z'}(U,S'),
\]
and put
\[
  w=z-U-\sum_{j=1}^rS_j\xi_j,
  \qquad
  m=\sum_{j=1}^r(n_j+\delta_j)\xi_j.
\]
Since each $\xi_j$ is an integral vector,
\[
  m\in C_{A,P}\cap\mathbb Z^d.
\]
Moreover,
\[
\begin{aligned}
  z'-U-\sum_{j=1}^rS_j'\xi_j
  =
  z-U-\sum_{j=1}^rS_j\xi_j
  +\sum_{i=1}^d\alpha_i e_i
  +\sum_{j=1}^r(n_j+\delta_j)\xi_j
  =
  w+m+\sum_{i=1}^d\alpha_i e_i.
\end{aligned}
\]
Because $m$ is integral, coordinatewise flooring gives
\[
\begin{aligned}
  L'-L
  &=
  m+
  \sum_{i=1}^d
  \bigl(
    \lfloor w_i+\alpha_i\rfloor-\lfloor w_i\rfloor
  \bigr)e_i.
\end{aligned}
\]
For every $i$,
\(
  \lfloor w_i+\alpha_i\rfloor-\lfloor w_i\rfloor
  \in\mathbb N,
\)
since $\alpha_i\geq0$.  Hence $L'-L$ is a nonnegative linear combination
of the generators $\xi_j$ and $e_i$, and therefore
\(
  L'-L\in C_{A,P}.
\)
Thus
\(
  L\leq_C L'
\)
with probability one.

The random variables $L$ and $L'$ have distributions $\pi_z$ and
$\pi_{z'}$, respectively.  Indeed, $S'$ has the same distribution as
$S$.  Therefore $(L,L')$ is a monotone coupling of $\pi_z$ and
$\pi_{z'}$.
Finally, let $H\subseteq\mathbb Z^d$ be an upper set for $\leq_C$.
Since $L\leq_C L'$ almost surely,
\[
  L\in H
  \quad\Longrightarrow\quad
  L'\in H
\]
almost surely. Consequently,
\(
  \pi_z(H)
  =
  \mathbb P(L\in H)
  \leq
  \mathbb P(L'\in H)
  =
  \pi_{z'}(H).
\)
Hence
\(
  \pi_z\stle\pi_{z'}.
\)
\end{proof}

We next combine interior contraction with a finite-state encoder.
\label{subsec:rounding-encoder}

Recall that
\(
  K=\M(A,P)\subseteq\mathbb R^d
\)
is the compact convex polytope introduced above. The distribution $\pi_z$ is defined on the whole integer grid, but near the boundary
of $K$ a rounded point may lie outside $K$. We therefore move each input a small distance toward a fixed interior point before rounding. The grid size is chosen proportional to that inward displacement, so the whole rounding
error remains inside the available interior margin.

Choose $u\in\Int K$ as in \cref{lem:interior-point}. With respect to a fixed
Euclidean norm, write $B(u,r)$ and $\overline B(u,r)$ for the open and closed
balls of radius $r$ around $u$. Fix $r_0>0$ such that
\(
  \overline B(u,r_0)\subseteq\Int K,
\)
and put
\[
  R_Z=\max\{\|z\|:z\in Z_{A,P}\}.
\]
Choose $c>0$ with $cR_Z<r_0$. For $0<\eps<1/2$, define
\[
  J_\eps(x)=(1-\eps)x+\eps u,
  \qquad
  h_\eps=c\eps.
\]

\begin{lemma}\label{lem:affine-interior}
For every $x\in K$ and every $v\in\R^d$ with $\|v\|<\eps r_0$, one has
$J_\eps(x)+v\in\Int K$.
\end{lemma}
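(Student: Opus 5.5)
The plan is to use convexity of $K$: every point of the closed ball $\overline B(u,r_0)$ lies in $\Int K$, and shrinking this ball toward $x$ keeps it inside $\Int K$. First rewrite
\[
  J_\eps(x)+v=(1-\eps)x+\eps\Bigl(u+\tfrac{1}{\eps}v\Bigr).
\]
Since $\|v\|<\eps r_0$, the point $w=u+v/\eps$ satisfies $\|w-u\|<r_0$. Hence $w\in B(u,r_0)\subseteq\Int K$.

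It therefore suffices to prove the standard convexity fact: if $x\in K$, $w\in\Int K$ and $0<\eps\leq1$, then $(1-\eps)x+\eps w\in\Int K$. Choose $\rho>0$ with $\overline B(w,\rho)\subseteq K$. We claim that
\[
  B\bigl((1-\eps)x+\eps w,\ \eps\rho\bigr)\subseteq K.
\]
Take any $y$ with $\|y-(1-\eps)x-\eps w\|<\eps\rho$ and write $y=(1-\eps)x+\eps w'$, where $w'=w+\bigl(y-(1-\eps)x-\eps w\bigr)/\eps$. Then $\|w'-w\|<\rho$, so $w'\in K$. Convexity of $K$ (it is a polytope) gives $y\in K$. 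Thus $(1-\eps)x+\eps w$ has a ball of radius $\eps\rho$ inside $K$, so it lies in $\Int K$.

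Applying this with $w=u+v/\eps$ gives $J_\eps(x)+v\in\Int K$. The restriction $0<\eps<1/2$ guarantees $\eps>0$, which is needed for the division by $\eps$, and $1-\eps\ge0$, so the combination is convex.

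There is no real obstacle. The only care needed is to use the open ball $B(u,r_0)$, which is justified by the strict inequality $\|v\|<\eps r_0$. Alternatively, one may use the closed-ball inclusion $\overline B(u,r_0)\subseteq\Int K$ directly, since $w$ lies in $\overline B(u,r_0)$ and hence in $\Int K$.
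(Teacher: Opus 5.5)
Your proof is correct and follows the paper's route. The paper uses the same rewriting $J_\eps(x)+v=(1-\eps)x+\eps\bigl(u+v/\eps\bigr)$ with $u+v/\eps\in B(u,r_0)\subseteq\Int K$, and then the fact that a strict convex combination of a point of $K$ and an interior point lies in $\Int K$. The only difference is that you prove this last fact directly with the ball of radius $\eps\rho$, whereas the paper simply cites Rockafellar's Theorem~6.1.
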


\begin{proof}
Write
\[
  J_\eps(x)+v
  =(1-\eps)x+\eps\left(u+\frac v\eps\right).
\]
The second point lies in $B(u,r_0)\subseteq\Int K$. A strict convex
combination of a point of $K$ and an interior point belongs to $\Int K$;
see \cite[Theorem~6.1]{Rockafellar1970}.
\end{proof}

Define
\[
  \Lambda_\eps
  =\{\ell\in\Z^d:h_\eps\ell\in K\}.
\]
Since $K$ is compact, the rescaled set $h_\eps^{-1}K$ is bounded and contains
only finitely many integer points. Restricting $\leq_C$ to
$\Lambda_\eps$ therefore gives a finite poset, denoted by $L_\eps$. For $x\in K$, put
\begin{equation}\label{eq:p-eps}
  p_\eps(x)
  =\pi_{J_\eps(x)/h_\eps}
  =\sum_{\ell\in\Lambda_\eps}
    \pi_{J_\eps(x)/h_\eps}(\{\ell\})\Dirac_\ell.
\end{equation}

\begin{proposition}\label{prop:p-kernel}
The map
\(
p_\eps:\M(A,P)\longrightarrow\Vone(L_\eps)\subseteq\Vsub(L_\eps)
\)
is well defined and Scott-continuous. If the coefficient of $\Dirac_\ell$
in $p_\eps(x)$ is nonzero, then
\begin{equation}\label{eq:grid-below}
  h_\eps\ell\leq_C J_\eps(x).
\end{equation}
\end{proposition}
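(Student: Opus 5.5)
The plan is to prove three things in turn: that $p_\eps(x)$ is supported in $\Lambda_\eps$, that the grid-below inequality \eqref{eq:grid-below} holds on that support, and that $p_\eps$ is Scott-continuous. Support and the inequality come together from \cref{lem:rounding-law}(i) combined with the interior margin of \cref{lem:affine-interior}. Put $z=J_\eps(x)/h_\eps$. If $\pi_z(\{\ell\})>0$, then $z-\ell\in Z_{A,P}$. Multiplying by $h_\eps$ gives
\[
  J_\eps(x)-h_\eps\ell=h_\eps(z-\ell)\in h_\eps Z_{A,P}.
\]
Two consequences follow. First, since $Z_{A,P}\subseteq C_{A,P}$ and $C_{A,P}$ is a cone, we get $h_\eps\ell\leq_C J_\eps(x)$, which is exactly \eqref{eq:grid-below}. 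Second, $\|h_\eps(z-\ell)\|\leq c\eps R_Z<\eps r_0$, so \cref{lem:affine-interior} with $v=-h_\eps(z-\ell)$ gives $h_\eps\ell\in\Int K\subseteq K$, that is, $\ell\in\Lambda_\eps$. Hence the full mass of the probability distribution $\pi_z$ sits on the finite set $\Lambda_\eps$, so $p_\eps(x)$ is a genuine element of $\Vone(L_\eps)$. Its coefficients are the atomic masses, and these agree with the formula \eqref{eq:p-eps}.

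For Scott continuity I would apply \cref{lem:euclidean-scott} to $p_\eps:\M(A,P)\to\Vone(L_\eps)$, so I need Euclidean continuity and monotonicity. Euclidean continuity is straightforward: each coefficient $x\mapsto\pi_{J_\eps(x)/h_\eps}(\{\ell\})$ is the composite of the affine map $x\mapsto J_\eps(x)/h_\eps$ with the continuous map of \cref{lem:rounding-law}(ii). There are finitely many $\ell\in\Lambda_\eps$, so $p_\eps$ is continuous into the simplex.

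Monotonicity is the main step. Let $x\leq y$ in $\M(A,P)$, so $y-x\in C_{A,P}$ by \cref{lem:cone-generators}. Then
\[
  J_\eps(y)/h_\eps-J_\eps(x)/h_\eps=\frac{1-\eps}{h_\eps}(y-x)\in C_{A,P}.
\]
\Cref{prop:rounding-monotone} now gives a coupling $(L,L')$ of the two rounding distributions with $L\leq_C L'$ almost surely. By the support argument above, both $L$ and $L'$ lie in $\Lambda_\eps$ almost surely. The delicate point is that the order on the target is the stochastic order of $\Vsub(L_\eps)$, which is taken with respect to upper sets of the finite poset $L_\eps$, i.e. $\Lambda_\eps$ with the restricted order. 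I would handle this as follows. Given an upper set $W$ of $L_\eps$, on the almost-sure event where $L,L'\in\Lambda_\eps$ and $L\leq_C L'$, the statement $L\in W$ implies $L'\in W$, because $W$ is upper within $\Lambda_\eps$ and $L'$ lies in $\Lambda_\eps$. Therefore $p_\eps(x)(W)=\mathbb P(L\in W)\leq\mathbb P(L'\in W)=p_\eps(y)(W)$, which gives $p_\eps(x)\stle p_\eps(y)$. \Cref{lem:euclidean-scott} then yields Scott continuity, completing the proof.
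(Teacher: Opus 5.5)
Your proposal is correct and follows the paper's proof essentially step for step: support and the grid-below inequality come from \cref{lem:rounding-law}(i) together with \cref{lem:affine-interior}, monotonicity from the coupling of \cref{prop:rounding-monotone}, and Scott continuity from Euclidean continuity plus \cref{lem:euclidean-scott}. Your explicit check that an upper set of $L_\eps$ is respected by the coupling on the event $L,L'\in\Lambda_\eps$ spells out what the paper compresses into the remark that both marginals are supported on $L_\eps$.
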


\begin{proof}
If the coefficient at $\ell$ is nonzero, then $\pi_{J_\eps(x)/h_\eps}(\{\ell\}) > 0$, and  
\cref{eq:rounding-support} yields a $z \in Z_{A,P}$  such that
\(
    J_\eps(x)/h_\eps - \ell = z
\). Consequently,
\(
  J_\eps(x)-h_\eps\ell=h_\eps z,
\)
which proves \cref{eq:grid-below}. Moreover,
\[
  \|h_\eps z\|\leq c\eps R_Z<\eps r_0,
\]
so \cref{lem:affine-interior} applied to $v=-h_\eps z$ gives
$h_\eps\ell\in\Int K$. Thus the whole distribution is supported on $L_\eps$.

If $x\leq y$, then
\[
  \frac{J_\eps(y)-J_\eps(x)}{h_\eps}
  =\frac{1-\eps}{h_\eps}(y-x)\in C_{A,P}.
\]
The monotone coupling from \cref{prop:rounding-monotone} has both marginals supported on $L_\eps$, hence $p_\eps(x)\stle p_\eps(y)$. Every coefficient in \cref{eq:p-eps} is Euclidean-continuous by \cref{lem:rounding-law}. Since the target is finite-dimensional, \cref{lem:euclidean-scott} gives Scott continuity.
\end{proof}

We have obtained the finite probabilistic encoding
\[
  p_\eps:\M(A,P)\longrightarrow\Vone(L_\eps).
\]
It is Scott-continuous and order preserving, and every grid point occurring
with nonzero probability lies below the contracted input. The next subsection
adds a reconstruction label to each grid point and arranges the resulting
kernels into an increasing approximation of the Dirac unit.

\subsection[Finite stochastic kernels]{Finite Stochastic Kernels on Monotone Valuation Polytopes}
\label{sec:finite-kernels}

The randomized grid map
\[
  p_\eps:K=\M(A,P)\longrightarrow\Vone(L_\eps)
\]
encodes an element of \(K\) by a probability valuation on a finite poset, but
it does not yet return points of \(K\).  We now attach to every grid state
\(\ell\in L_\eps\) a label \(y_\eps(\ell)\in K\).  The labels are obtained by
applying the erosion map to the scaled grid points.  They produce both a
finite probability kernel
\[
  \kappa_\eps:K\longrightarrow\Vone(K)
\]
and its barycentric approximation
\[
  d_\eps:K\longrightarrow K.
\]
The purpose of this subsection is to choose a decreasing sequence of scales for
which these maps form increasing approximations of the Dirac unit and the
identity, respectively.

We now introduce erosion labels and prove the local sandwich estimate.

Recall that \(K_P\geq1\) is the constant in the estimates
\[
  \bigl(\lambda_{a,U}(v)-K_Pt\bigr)^+
  \leq
  \lambda_{a,U}(\Psi_t(v))
  \leq
  \bigl(\lambda_{a,U}(v)-t\bigr)^+
\]
for \(v\in K\), \(a\in A\), every nonempty \(U\in\Up(P)\), and \(t>0\).
Also recall that \(c>0\) was chosen so that
\(
  cR_Z<r_0,
\)
and that the grid size is
\(
  h_\eps=c\eps.
\)
Put
\[
  R_{A,P}
  =\max\{\lambda_{a,U}(z):z\in Z_{A,P},\ a\in A,
       \varnothing\neq U\in\Up(P)\},
\]
and set
\[
  C_*=1+2K_P+cR_{A,P}.
\]
The finite number \(R_{A,P}\) is a uniform bound on the change of every
order coordinate \(\lambda_{a,U}\) over the rounding-error set \(Z_{A,P}\).
Define \(y_\eps:L_\eps\to K\) by
\begin{equation}\label{eq:labels}
  y_\eps(\ell)=\Psi_{2\eps}(h_\eps\ell).
\end{equation}
Equivalently, for every \(a\in A\),
\[
  y_\eps(\ell)(a)
  =
  \Phi_{2\eps}^P\bigl((h_\eps\ell)(a)\bigr).
\]
Thus the grid point \(h_\eps\ell\in K\) is moved farther downward, in the order of \(K\), by applying \(\Psi_{2\eps}\).
This additional margin makes the constructions at successive scales comparable.

\begin{lemma}\label{lem:flow-sandwich}
The map \(y_\eps:L_\eps\to K\) is monotone.  Put
\[
  \alpha_\eps=\frac{\eps}{K_P},
  \qquad
  \beta_\eps=C_*\eps.
\]
For every \(x\in K\) and every
\(\ell\in\supp(p_\eps(x))\), one has
\begin{equation}\label{eq:label-sandwich}
  \Psi_{\beta_\eps}(x)
  \leq y_\eps(\ell)
  \leq\Psi_{\alpha_\eps}(x).
\end{equation}
\end{lemma}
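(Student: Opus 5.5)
The plan is to reduce everything to the finitely many order coordinates $\lambda_{a,U}$, with $a\in A$ and $\varnothing\neq U\in\Up(P)$. These coordinates determine the pointwise order on $K=\M(A,P)$. The two-sided estimate \eqref{eq:Psi-estimate} for $\Psi_t$, combined with the rounding information in \cref{prop:p-kernel}, should then give both inequalities in \eqref{eq:label-sandwich} coordinate by coordinate.

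\emph{Monotonicity of $y_\eps$.} Suppose $\ell\leq_C\ell'$ in $L_\eps$. Then $h_\eps\ell'-h_\eps\ell=h_\eps(\ell'-\ell)\in C_{A,P}$. By \cref{lem:cone-generators} this means $h_\eps\ell\leq h_\eps\ell'$ in the pointwise order of $K$, and both points lie in $K$ by definition of $\Lambda_\eps$. Since $\Phi^P_{2\eps}$ is order preserving (\cref{thm:order-preservation}), $\Psi_{2\eps}$ is monotone on $K$. Hence $y_\eps(\ell)\leq y_\eps(\ell')$.

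\emph{Sandwich estimate.} Fix $x\in K$, $\ell\in\supp(p_\eps(x))$, and one coordinate $\lambda=\lambda_{a,U}$. The first step is to locate $\lambda(h_\eps\ell)$ relative to $\lambda(x)$.

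1. The convex combination $J_\eps(x)=(1-\eps)x+\eps u$ gives $\lambda(J_\eps(x))=(1-\eps)\lambda(x)+\eps\lambda(u)$. Since $0\leq\lambda(x),\lambda(u)\leq1$, this yields
\[
\lambda(x)-\eps\leq\lambda(J_\eps(x))\leq\lambda(x)+\eps .
\]

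2. The proof of \cref{prop:p-kernel} gives $J_\eps(x)-h_\eps\ell=h_\eps z$ for some $z\in Z_{A,P}\subseteq C_{A,P}$. The functional $\lambda$ is linear and nonnegative on $C_{A,P}$, and $h_\eps=c\eps$. Therefore
\[
\lambda(J_\eps(x))-c\eps R_{A,P}\leq\lambda(h_\eps\ell)\leq\lambda(J_\eps(x)).
\]

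Next, apply \eqref{eq:Psi-estimate} with $t=2\eps$ to the point $v=h_\eps\ell\in K$, and separately to $x$ with $t=\alpha_\eps$ and with $t=\beta_\eps$.

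For the upper bound,
\[
\lambda(y_\eps(\ell))\leq(\lambda(h_\eps\ell)-2\eps)^+\leq(\lambda(x)+\eps-2\eps)^+=(\lambda(x)-K_P\alpha_\eps)^+\leq\lambda(\Psi_{\alpha_\eps}(x)).
\]

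For the lower bound,
\[
\lambda(y_\eps(\ell))\geq(\lambda(h_\eps\ell)-2K_P\eps)^+\geq(\lambda(x)-\eps-c\eps R_{A,P}-2K_P\eps)^+=(\lambda(x)-\beta_\eps)^+\geq\lambda(\Psi_{\beta_\eps}(x)).
\]

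Both inequalities hold for every coordinate $\lambda_{a,U}$, and these coordinates characterize the order on $K$. This gives \eqref{eq:label-sandwich}.

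The only delicate point is the bookkeeping in the rounding step. We need the correct sign, namely that $h_\eps\ell$ lies below $J_\eps(x)$, which comes from \eqref{eq:grid-below}. We also need the quantitative gap, namely that $\lambda$ of the rounding error is at most $h_\eps R_{A,P}$. That bound holds because $Z_{A,P}\subseteq C_{A,P}$ and $R_{A,P}$ was defined as the maximum of exactly these functionals over $Z_{A,P}$. The constants $\alpha_\eps=\eps/K_P$ and $C_*=1+2K_P+cR_{A,P}$ are precisely those that make the two chains close.
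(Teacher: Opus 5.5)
Your proof is correct and follows the paper's argument essentially line by line. It uses the same reduction to the coordinates $\lambda_{a,U}$, and the same two-sided bound $\lambda(x)-\eps-c\eps R_{A,P}\leq\lambda(h_\eps\ell)\leq\lambda(x)+\eps$ obtained from $J_\eps(x)-h_\eps\ell=h_\eps z$ with $z\in Z_{A,P}$. It also applies \eqref{eq:Psi-estimate} at the same parameters $t=2\eps$, $\alpha_\eps$ and $\beta_\eps$. The only difference is that your monotonicity step spells out, via \cref{lem:cone-generators}, why $\ell\leq_C\ell'$ gives $h_\eps\ell\leq h_\eps\ell'$ in $K$, which the paper leaves implicit.
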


\begin{proof}
Monotonicity follows from the order preservation of \(\Psi_{2\eps}\). For the
estimates, nonzero weight gives
\[
  J_\eps(x)-h_\eps\ell=h_\eps z
  \qquad(z\in Z_{A,P}).
\]
Fix $a\in A$ and a nonempty upper set $U\subseteq P$, and write
$\lambda=\lambda_{a,U}$. 
Using \[ J_\eps(x)=(1-\eps)x+\eps u, \qquad h_\eps=c\eps, \] and the linearity of $\lambda$, we obtain \[ \lambda(h_\eps\ell) = (1-\eps)\lambda(x)+\eps\lambda(u)-c\eps\lambda(z). \] Since \[ 0\leq\lambda(x),\lambda(u)\leq1, \qquad 0\leq\lambda(z)\leq R_{A,P}, \] it follows that \[ \begin{aligned} \lambda(h_\eps\ell) &\geq (1-\eps)\lambda(x) -c\eps R_{A,P} \geq \lambda(x)-\eps-c\eps R_{A,P},\\ \lambda(h_\eps\ell) &\leq (1-\eps)\lambda(x)+\eps \leq \lambda(x)+\eps. \end{aligned} \] Hence \[ \lambda(x)-\eps-c\eps R_{A,P} \leq \lambda(h_\eps\ell) \leq \lambda(x)+\eps. \]
Using \cref{eq:Psi-estimate} with $t=2\eps$,
we obtain
\[
\begin{aligned}
  \lambda(y_\eps(\ell))
  &\leq
  \bigl(\lambda(h_\eps\ell)-2\eps\bigr)^+,\\
  \lambda(y_\eps(\ell))
  &\geq
  \bigl(\lambda(h_\eps\ell)-2K_P\eps\bigr)^+.
\end{aligned}
\]
Since
\[
  \lambda(x)-\eps-c\eps R_{A,P}
  \leq
  \lambda(h_\eps\ell)
  \leq
  \lambda(x)+\eps,
\]
and since $r\mapsto r^+$ is monotone, it follows that
\[
\begin{aligned}
  \lambda(y_\eps(\ell))
  &\leq
  \bigl(\lambda(x)-\eps\bigr)^+,\\
  \lambda(y_\eps(\ell))
  &\geq
  \bigl(\lambda(x)-C_*\eps\bigr)^+,
\end{aligned}
\]
where
\[
  C_*=1+2K_P+cR_{A,P}.
\]
Now set
\[
  \alpha_\eps=\frac{\eps}{K_P},
  \qquad
  \beta_\eps=C_*\eps.
\]
Applying \cref{eq:Psi-estimate} to $x$ with parameters
$t=\alpha_\eps$ and $t=\beta_\eps$, respectively, gives
\[
  \bigl(\lambda(x)-\eps\bigr)^+
  \leq
  \lambda(\Psi_{\alpha_\eps}(x))
\]
and
\[
  \lambda(\Psi_{\beta_\eps}(x))
  \leq
  \bigl(\lambda(x)-C_*\eps\bigr)^+.
\]
Hence
\[
  \lambda(\Psi_{\beta_\eps}(x))
  \leq
  \lambda(y_\eps(\ell))
  \leq
  \lambda(\Psi_{\alpha_\eps}(x)).
\]
Since the functionals $\lambda_{a,U}$ determine the order on
$\M (A,P)$, we conclude that
\[
  \Psi_{\beta_\eps}(x)
  \leq
  y_\eps(\ell)
  \leq
  \Psi_{\alpha_\eps}(x).
\]

\end{proof}

We next construct finite kernels and their barycentric maps.

A \emph{probability kernel on \(K\)} is a Scott-continuous map
\(
  k:K\longrightarrow\Vone(K).
\)
In this paper, such a kernel is called \emph{finite} if there are a finite
poset $L$ and Scott-continuous maps
\[
K\xrightarrow{p}\Vone(L)\xrightarrow{\Vone(y)}\Vone(K)
\]
with $y:L\to K$ and $k=\Vone(y)\circ p$. 
For a finite probabilistic valuation
\[
  \rho=\sum_{\ell}r_\ell\Dirac_\ell
\]
and labels \(y_\ell\in K\), its pushforward along \(y\) is
\[
  \Vone(y)(\rho)=\sum_\ell r_\ell\Dirac_{y_\ell},
\]
and its barycentre is \(\sum_\ell r_\ell y_\ell\). Since $K$ is convex and contains the zero map, the same formula defines a point of $K$ for a subprobability vector, with the
missing mass placed at zero. Define
\begin{align}
  e_\eps&:\Vsub(L_\eps)\longrightarrow K,
   \quad e_\eps(\nu)=\sum_{\ell\in L_\eps}\nu_\ell y_\eps(\ell),
  \label{eq:e-eps}\\
  \kappa_\eps&=\Vone(y_\eps)\circ p_\eps:K\longrightarrow\Vone(K),
  \label{eq:kappa-eps}\\
  d_\eps&=e_\eps\circ p_\eps:K\longrightarrow K.
  \label{eq:d-eps}
\end{align}
Thus $\kappa_\eps(x)$ is the finite probability distribution obtained by
replacing each grid state $\ell$ by its label $y_\eps(\ell)$, and
$d_\eps(x)$ is its barycentre. In \cref{eq:e-eps}, if $\nu$ has total mass
less than one, the missing mass is placed at the zero map of $K$; this does
not change the displayed sum.

\begin{proposition}\label{prop:kernel-sandwich}
The maps in \cref{eq:e-eps,eq:kappa-eps,eq:d-eps} are Scott-continuous and,
for every \(x\in K\),
\begin{align}
  \Dirac_{\Psi_{\beta_\eps}(x)}
  &\leq\kappa_\eps(x)
  \leq\Dirac_{\Psi_{\alpha_\eps}(x)},
  \label{eq:kappa-sandwich}\\
  \Psi_{\beta_\eps}(x)
  &\leq d_\eps(x)
  \leq\Psi_{\alpha_\eps}(x).
  \label{eq:d-sandwich}
\end{align}
The kernel $\kappa_\eps$ factors through $\Vone(L_\eps)$, and $d_\eps$
factors through $\Vsub(L_\eps)$.
\end{proposition}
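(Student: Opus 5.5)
The proof splits into three parts: Scott continuity of the three maps, the Dirac sandwich \eqref{eq:kappa-sandwich}, and the barycentric sandwich \eqref{eq:d-sandwich}.

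\emph{Scott continuity.} By \cref{prop:p-kernel}, $p_\eps$ is Scott-continuous. By \cref{lem:flow-sandwich}, $y_\eps:L_\eps\to K$ is monotone, so it is Scott-continuous from the finite poset $L_\eps$. Hence $\Vone(y_\eps)$ is Scott-continuous by functoriality, and $\kappa_\eps=\Vone(y_\eps)\circ p_\eps$ is Scott-continuous and factors through $\Vone(L_\eps)$.

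For $e_\eps$ we use \cref{lem:euclidean-scott}. In atomic coordinates $e_\eps$ is linear in $\nu$, hence Euclidean-continuous, so only monotonicity needs checking. For each order functional we have
\[
  \lambda_{a,U}(e_\eps(\nu))=\sum_\ell\nu_\ell\,\lambda_{a,U}(y_\eps(\ell)).
\]
The function $\ell\mapsto\lambda_{a,U}(y_\eps(\ell))$ is nonnegative and monotone on $L_\eps$, because $y_\eps$ is monotone and $\lambda_{a,U}$ is monotone for the order of $K$. So \cref{lem:monotone-integrals} gives $\nu\stle\nu'\Rightarrow \lambda_{a,U}(e_\eps(\nu))\leq\lambda_{a,U}(e_\eps(\nu'))$. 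Since these functionals determine the order of $K$, $e_\eps$ is monotone, hence Scott-continuous. Then $d_\eps=e_\eps\circ p_\eps$ is Scott-continuous, using the inclusion $\Vone(L_\eps)\subseteq\Vsub(L_\eps)$, and factors through $\Vsub(L_\eps)$.

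\emph{The sandwich for $\kappa_\eps$.} Write $p_\eps(x)=\sum_\ell r_\ell\Dirac_\ell$ with $\sum_\ell r_\ell=1$, so that $\kappa_\eps(x)=\sum_\ell r_\ell\Dirac_{y_\eps(\ell)}$. Let $W\subseteq K$ be Scott open. If $\Psi_{\beta_\eps}(x)\in W$, then every support label satisfies $y_\eps(\ell)\in W$ by \eqref{eq:label-sandwich} and upperness of $W$. Hence $\kappa_\eps(x)(W)=1=\Dirac_{\Psi_{\beta_\eps}(x)}(W)$, and otherwise the left-hand side is $0$. Conversely, if some $y_\eps(\ell)\in W$ with $r_\ell>0$, then $\Psi_{\alpha_\eps}(x)\in W$, so $\kappa_\eps(x)(W)\leq\Dirac_{\Psi_{\alpha_\eps}(x)}(W)$. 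This gives \eqref{eq:kappa-sandwich}.

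\emph{The sandwich for $d_\eps$.} We use total mass one and linearity of each $\lambda=\lambda_{a,U}$:
\[
  \lambda(d_\eps(x))=\sum_\ell r_\ell\,\lambda(y_\eps(\ell)).
\]
This is a convex combination of numbers lying between $\lambda(\Psi_{\beta_\eps}(x))$ and $\lambda(\Psi_{\alpha_\eps}(x))$ by \eqref{eq:label-sandwich}. It therefore lies between them, and \eqref{eq:d-sandwich} follows because the $\lambda_{a,U}$ determine the order.

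The main point needing care is the monotonicity of $e_\eps$ with respect to the stochastic order on $\Vsub(L_\eps)$ rather than coordinatewise. This is handled by the monotone-integral characterization together with nonnegativity of the $\lambda_{a,U}$ on $K$; the missing mass contributes the zero map, where every $\lambda_{a,U}$ vanishes.
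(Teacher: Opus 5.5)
Your proposal is correct and follows the paper's proof essentially step by step. You get monotonicity of $e_\eps$ from \cref{lem:monotone-integrals} and upgrade it with \cref{lem:euclidean-scott}, obtain Scott continuity of $\kappa_\eps$ and $d_\eps$ from their factorizations, and prove \eqref{eq:d-sandwich} by the convex-combination argument on the functionals $\lambda_{a,U}$. The only cosmetic difference is in \eqref{eq:kappa-sandwich}: you evaluate directly on Scott-open $W\subseteq K$ using upperness of $W$, whereas the paper phrases the same fact through the monotone couplings $(a,X)$ and $(X,b)$.
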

\begin{proof}
For every \(a\in A\) and every nonempty \(U\in\Up(P)\), the map
\(
  \ell\longmapsto \lambda_{a,U}(y_\eps(\ell))
\)
is nonnegative and monotone. Hence, by
\cref{lem:monotone-integrals}, the map \(e_\eps\) is monotone.
It is Euclidean-continuous and therefore Scott-continuous by
\cref{lem:euclidean-scott}. Consequently, the Scott continuity of
\(\kappa_\eps\) and \(d_\eps\) follows from their respective
factorizations
\[
  \kappa_\eps=\Vone(y_\eps)\circ p_\eps,
  \qquad
  d_\eps=e_\eps\circ p_\eps.
\]

We first record a simple consequence of the stochastic order. Suppose that
\[
  a\leq x_i\leq b
  \qquad (1\leq i\leq n),
\]
and let \((r_i)_{i=1}^n\) be a probability vector. If \(X\) is a random
variable satisfying \(\mathbb P(X=x_i)=r_i\), then
\(
  a\leq X\leq b
  \ \text{almost surely}.
\)
Thus \((a,X)\) and \((X,b)\) are monotone couplings, and hence
\[
  \delta_a
  \leq
  \sum_{i=1}^n r_i\delta_{x_i}
  \leq
  \delta_b.
\]

Now fix \(x\in K\). By \cref{eq:label-sandwich}, every
\(\ell\in\supp(p_\eps(x))\) satisfies
\(
  \Psi_{\beta_\eps}(x)
  \leq
  y_\eps(\ell)
  \leq
  \Psi_{\alpha_\eps}(x).
\)
Applying the preceding observation to the probability vector
\(
  \bigl(p_\eps(x)(\{\ell\})\bigr)_{\ell\in L_\eps}
\)
and the family \(\bigl(y_\eps(\ell)\bigr)_{\ell\in L_\eps}\), we obtain
\[
  \delta_{\Psi_{\beta_\eps}(x)}
  \leq
  \sum_{\ell\in L_\eps}
    p_\eps(x)(\{\ell\})\,\delta_{y_\eps(\ell)}
  \leq
  \delta_{\Psi_{\alpha_\eps}(x)}.
\]
Since
\[
  \kappa_\eps(x)
  =
  \sum_{\ell\in L_\eps}
    p_\eps(x)(\{\ell\})\,\delta_{y_\eps(\ell)},
\]
this proves \cref{eq:kappa-sandwich}.

Fix \(a\in A\) and a nonempty \(U\in\Up(P)\), and write
\[
  p_\eps(x)
  =
  \sum_{\ell\in L_\eps}r_\ell\delta_\ell,
  \qquad
  r_\ell\geq0,
  \qquad
  \sum_{\ell\in L_\eps}r_\ell=1.
\]
For every \(\ell\) with \(r_\ell>0\), the preceding sandwich and the
monotonicity of \(\lambda_{a,U}\) give
\[
  \lambda_{a,U}\bigl(\Psi_{\beta_\eps}(x)\bigr)
  \leq
  \lambda_{a,U}\bigl(y_\eps(\ell)\bigr)
  \leq
  \lambda_{a,U}\bigl(\Psi_{\alpha_\eps}(x)\bigr).
\]
Taking the convex combination with coefficients \(r_\ell\), and using
the linearity of \(\lambda_{a,U}\), yields
\[
\begin{aligned}
  \lambda_{a,U}\bigl(\Psi_{\beta_\eps}(x)\bigr)
  &=
  \sum_{\ell\in L_\eps}
  r_\ell\,
  \lambda_{a,U}\bigl(\Psi_{\beta_\eps}(x)\bigr)\\
  &\leq
  \sum_{\ell\in L_\eps}
  r_\ell\,
  \lambda_{a,U}\bigl(y_\eps(\ell)\bigr)\\
  &=
  \lambda_{a,U}\left(
    \sum_{\ell\in L_\eps}
    r_\ell y_\eps(\ell)
  \right)\\
  &=
  \lambda_{a,U}\bigl(d_\eps(x)\bigr)\\
  &\leq
  \sum_{\ell\in L_\eps}
  r_\ell\,
  \lambda_{a,U}\bigl(\Psi_{\alpha_\eps}(x)\bigr)\\
  &=
  \lambda_{a,U}\bigl(\Psi_{\alpha_\eps}(x)\bigr).
\end{aligned}
\]

Since the coordinates \(\lambda_{a,U}\), with
\(a\in A\) and \(\varnothing\neq U\in\Up(P)\), determine the order on
\(K\), it follows that
\[
  \Psi_{\beta_\eps}(x)
  \leq
  d_\eps(x)
  \leq
  \Psi_{\alpha_\eps}(x),
\]
which is \cref{eq:d-sandwich}.

Finally,
\(
  \kappa_\eps=\Vone(y_\eps)\circ p_\eps
\)
is the asserted finite-kernel factorization, while
\(
  d_\eps=e_\eps\circ p_\eps
\)
is the asserted factorization through \(\Vsub(L_\eps)\).
\end{proof}

We then arrange these kernels into an increasing sequence.

For a single value of \(\eps\), the preceding proposition gives only a
one-step approximation. We now choose a geometric sequence of scales so that
the upper bound at level $n$ lies below the lower bound at
level $n+1$. This produces genuinely increasing approximations rather than
merely approximations converging in Euclidean distance.

\begin{proposition}\label{prop:finite-kernels}
There are finite posets $L_n$ and Scott-continuous maps
\[
  p_n:K\to\Vone(L_n),
  \qquad
  y_n:L_n\to K,
  \qquad
  e_n:\Vsub(L_n)\to K
\]
such that, with
\[
  \kappa_n=\Vone(y_n)\circ p_n,
  \qquad
  d_n=e_n\circ p_n,
\]
one has
\begin{align}
  \kappa_n&\leq\kappa_{n+1}\leq\etaV_K, \quad
  \sup_n\kappa_n=\etaV_K,
  \label{eq:finite-kernel-limit}\\
  d_n&\leq d_{n+1}\leq\id_K,\quad
  \sup_nd_n=\id_K.
  \label{eq:finite-barycentric-limit}
\end{align}
Each $\kappa_n$ factors through $\Vone(L_n)$, and each $d_n$ is a
finite-valuation approximant.
\end{proposition}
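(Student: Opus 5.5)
We take a geometric sequence of scales \(\eps_n\) and set \(L_n=L_{\eps_n}\), \(p_n=p_{\eps_n}\), \(y_n=y_{\eps_n}\), \(e_n=e_{\eps_n}\). Then \(\kappa_n=\kappa_{\eps_n}\) and \(d_n=d_{\eps_n}\). Scott continuity and the factorizations through \(\Vone(L_n)\) and \(\Vsub(L_n)\) are already given by \cref{prop:p-kernel,prop:kernel-sandwich}. What remains is to choose the \(\eps_n\) so that the sandwiches \cref{eq:kappa-sandwich,eq:d-sandwich} at consecutive levels are nested. Since \(\Psi_s\leq\Psi_t\) for \(s\geq t\), it suffices to require
\[
  \alpha_{\eps_n}\geq\beta_{\eps_{n+1}},
  \qquad\text{i.e.}\qquad
  \frac{\eps_n}{K_P}\geq C_*\eps_{n+1}.
\]
We therefore put \(\eps_0<1/2\) and \(\eps_{n+1}=\eps_n/(K_PC_*)\). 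Note that \(K_PC_*>1\), so \(\eps_n\downarrow0\).

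With this choice, \cref{eq:d-sandwich} gives
\[
  d_n(x)\leq\Psi_{\alpha_{\eps_n}}(x)\leq\Psi_{\beta_{\eps_{n+1}}}(x)\leq d_{n+1}(x).
\]
Also \(d_n(x)\leq\Psi_{\alpha_{\eps_n}}(x)\leq x\) by \cref{lem:basic-family}(i), applied pointwise through \(\Psi\). For the kernels, \cref{eq:kappa-sandwich} gives
\[
  \kappa_n(x)\leq\Dirac_{\Psi_{\alpha_{\eps_n}}(x)}\leq\Dirac_{\Psi_{\beta_{\eps_{n+1}}}(x)}\leq\kappa_{n+1}(x).
\]
Here the middle inequality holds because \(y\mapsto\Dirac_y\) is monotone. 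Similarly \(\kappa_n(x)\leq\Dirac_x=\etaV_K(x)\).

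For the suprema, \(d_n(x)\geq\Psi_{\beta_{\eps_n}}(x)\) and \(\beta_{\eps_n}\to0\). By \cref{eq:Psi-estimate}, each order coordinate satisfies \(\lambda_{a,U}(\Psi_{\beta_{\eps_n}}(x))\geq\lambda_{a,U}(x)-K_P\beta_{\eps_n}\), which tends to \(\lambda_{a,U}(x)\). Directed suprema in \(K\) are computed on these finitely many coordinates, as in the proof of \cref{lem:euclidean-scott}. Hence \(\sup_n\Psi_{\beta_{\eps_n}}(x)=x\), and therefore \(\sup_nd_n=\id_K\).

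For the kernels, \(\kappa_n(x)\geq\Dirac_{\Psi_{\beta_{\eps_n}}(x)}\), so it suffices that \(\sup_n\Dirac_{\Psi_{\beta_{\eps_n}}(x)}=\Dirac_x\). This follows because \(\etaV_K\) is Scott-continuous and \(\Psi_{\beta_{\eps_n}}(x)\) is an increasing sequence with supremum \(x\). Both limits hold pointwise, so they hold as suprema in the respective function spaces.

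The main obstacle is bookkeeping rather than a new idea. One point to check is that the comparison \(\Dirac_y\leq\Dirac_{y'}\) for \(y\leq y'\) and the order \(\Psi_s\leq\Psi_t\) for \(s\geq t\) are used in the correct direction. Another is that the monotone-coupling observation in \cref{prop:kernel-sandwich} really compares \(\kappa_n(x)\) with Dirac valuations in \(\Vone(K)\) for the Scott topology of \(K\); this is immediate since any Scott-open set is upper. Finally, \(d_n\) is a finite-valuation approximant because \(d_n\leq\id_K\) and \(d_n=e_n\circ p_n\), where \(p_n\) lands in \(\Vone(L_n)\subseteq\Vsub(L_n)\) via the Scott-continuous inclusion.
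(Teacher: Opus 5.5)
Your proof is correct and follows essentially the same route as the paper: you choose a geometric sequence of scales with \(\beta_{\eps_{n+1}}\leq\alpha_{\eps_n}\), chain the sandwiches \eqref{eq:kappa-sandwich} and \eqref{eq:d-sandwich}, and use Scott continuity of \(\etaV_K\) for the kernel supremum. The differences are cosmetic. You take the ratio \(\theta=1/(K_PC_*)\) exactly, whereas the paper takes a strictly smaller \(\theta\), which is not needed. You also justify \(\sup_n\Psi_{\beta_{\eps_n}}(x)=x\) explicitly through the lower bound in \eqref{eq:Psi-estimate}, where the paper simply appeals to \(\beta_n\to0\).
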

\begin{proof}
Choose constants
\[
  0<\eps_0<\frac12
  \qquad\text{and}\qquad
  0<\theta<
  \min\left\{
    \frac12,\frac1{K_PC_*}
  \right\}.
\]
For every $n\geq0$, set
\[
  \eps_n=\eps_0\theta^n,
  \qquad
  L_n=L_{\eps_n},
  \qquad
  p_n=p_{\eps_n},
  \qquad
  y_n=y_{\eps_n},
  \qquad
  e_n=e_{\eps_n},
\]
and let
\[
  \kappa_n=\kappa_{\eps_n},
  \qquad
  d_n=d_{\eps_n}.
\]
Also write
\[
  \alpha_n=\frac{\eps_n}{K_P},
  \qquad
  \beta_n=C_*\eps_n.
\]
Since $\eps_{n+1}=\theta\eps_n$ and
$\theta<1/(K_PC_*)$, we have
\[
  \beta_{n+1}
  =
  C_*\eps_{n+1}
  =
  C_*\theta\eps_n
  <
  \frac{\eps_n}{K_P}
  =
  \alpha_n.
\]
The family $(\Psi_t)_{t\geq0}$ is decreasing in $t$; hence
\(
  \Psi_{\alpha_n}(x)
  \leq
  \Psi_{\beta_{n+1}}(x)
  \) for every \(x\in K\).
By
\cref{eq:kappa-sandwich,eq:d-sandwich}, for every \(n\) and every $x\in K$,
\[
  \Dirac_{\Psi_{\beta_n}(x)}
  \leq
  \kappa_n(x)
  \leq
  \Dirac_{\Psi_{\alpha_n}(x)}
  \leq
  \Dirac_x
\]
and
\[
  \Psi_{\beta_n}(x)
  \leq
  d_n(x)
  \leq
  \Psi_{\alpha_n}(x)
  \leq
  x.
\]
Combining these inequalities with
\(
  \Psi_{\alpha_n}(x)
  \leq
  \Psi_{\beta_{n+1}}(x)
\)
gives
\[
\begin{aligned}
  \kappa_n(x)
  &\leq
  \Dirac_{\Psi_{\alpha_n}(x)}
  \leq
  \Dirac_{\Psi_{\beta_{n+1}}(x)}
  \leq
  \kappa_{n+1}(x)
  \leq
  \Dirac_x,\\
  d_n(x)
  &\leq
  \Psi_{\alpha_n}(x)
  \leq
  \Psi_{\beta_{n+1}}(x)
  \leq
  d_{n+1}(x)
  \leq
  x.
\end{aligned}
\]
Thus $(\kappa_n)_n$ and $(d_n)_n$ are pointwise increasing, with
\(
  \kappa_n\leq\etaV_K
  \ \text{and}\ 
  d_n\leq\id_K
\)
for every \(n\).

We next identify their pointwise suprema. Since
\[
  \beta_n=C_*\eps_0\theta^n\longrightarrow0,
\] we have
\[
  \sup_n\Psi_{\beta_n}(x)=x
  \qquad(x\in K).
\]
Using the lower half of \cref{eq:d-sandwich}, we have
\(
  \Psi_{\beta_n}(x)
  \leq
  d_n(x)
  \leq
  x.
\)
Taking suprema over $n$ yields
\[
  x
  =
  \sup_n\Psi_{\beta_n}(x)
  \leq
  \sup_n d_n(x)
  \leq
  x.
\]
Therefore
\[
  \sup_n d_n(x)=x
  \qquad(x\in K),
\]
and hence
\(
  \sup_n d_n=\id_K
\)
pointwise.

Similarly, \cref{eq:kappa-sandwich} gives
\[
  \etaV_K(\Psi_{\beta_n}(x))
  \leq
  \kappa_n(x)
  \leq
  \etaV_K(x).
\]
Since the Dirac unit
\[
  \etaV_K:K\longrightarrow\Vone(K),
  \qquad
  x\longmapsto\Dirac_x,
\]
is Scott-continuous, it preserves the directed supremum
\(
  \sup_n\Psi_{\beta_n}(x)=x.
\)
Consequently,
\[
\begin{aligned}
  \sup_n\etaV_K(\Psi_{\beta_n}(x))
  =
  \etaV_K\left(\sup_n\Psi_{\beta_n}(x)\right)
  =
  \etaV_K(x).
\end{aligned}
\]
Taking suprema gives
\[
  \etaV_K(x)
  \leq
  \sup_n\kappa_n(x)
  \leq
  \etaV_K(x),
\]
and hence
\[
  \sup_n\kappa_n(x)=\etaV_K(x)
  \qquad(x\in K).
\]
Thus
\(
  \sup_n\kappa_n=\etaV_K
\)
pointwise.

Finally, by the definitions in
\cref{eq:kappa-eps,eq:e-eps}, each $\kappa_n$ and $d_n$ factors through a valuation space over the finite poset $L_n$:
\[
  \kappa_n
  =
  \Vone(y_n)\circ p_n,
  \qquad
  d_n
  =
  e_n\circ p_n.
\]
Hence $(\kappa_n)_n$ is an increasing finite-kernel approximation of
$\etaV_K$, and $(d_n)_n$ is an increasing finite-valuation approximate
identity on $K$.
\end{proof}

\begin{theorem}\label{thm:finite-polytope}
For all finite posets $A$ and $P$,
\(
  \M(A,P)\in\FVA.
\)
\end{theorem}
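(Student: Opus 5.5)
The plan is to read the theorem off \cref{prop:finite-kernels}, after first disposing of the degenerate cases. If $A=\varnothing$ or $P=\varnothing$, then $\M(A,P)$ is the one-point dcpo. It is isomorphic to $\Vsub(\varnothing)$, hence lies in $\FVA$ by \cref{thm:FVA-property}(1). Alternatively, it is a retract of any $\Vsub(P')$ because it is pointed, and \cref{thm:FVA-property}(2) applies.

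So assume $A$ and $P$ are nonempty, and put $K=\M(A,P)$. The main step is to take the sequence $d_n=e_n\circ p_n$ produced in \cref{prop:finite-kernels}. We would then check it against \cref{def:FVA} item by item.

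\begin{itemize}
\item Each $L_n$ is a finite poset, since it is the restriction of $\leq_C$ to the finite set $\Lambda_{\eps_n}$.
\item The map $p_n:K\to\Vone(L_n)$ is Scott-continuous by \cref{prop:p-kernel}. Composing with the Scott-continuous inclusion $\Vone(L_n)\hookrightarrow\Vsub(L_n)$ gives a Scott-continuous map $K\to\Vsub(L_n)$.
\item The map $e_n:\Vsub(L_n)\to K$ is Scott-continuous by \cref{prop:kernel-sandwich}.
\item The conditions $d_n\leq d_{n+1}\leq\id_K$ and $\sup_n d_n=\id_K$ are exactly \cref{eq:finite-barycentric-limit}.
\end{itemize}

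Hence $(d_n)_n$ is a finite-valuation approximate identity on $K$, and $K\in\FVA$.

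The only point that needs care is that the codomain used in the factorization is $\Vsub(L_n)$ rather than $\Vone(L_n)$. We must make sure the barycentric map $e_n$ is defined and monotone on all of $\Vsub(L_n)$, not only on probability vectors.

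\begin{itemize}
\item \emph{Definedness.} For a subprobability vector, $e_n$ still lands in $K$, because $K$ is convex and contains the zero map; the missing mass is simply sent to zero.
\item \emph{Monotonicity.} This follows from \cref{lem:monotone-integrals}. For each $a\in A$ and nonempty upper set $U\subseteq P$, the function $\ell\mapsto\lambda_{a,U}(y_n(\ell))$ is nonnegative and monotone, since $y_n$ is monotone by \cref{lem:flow-sandwich}.
\end{itemize}

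Both facts were already established in the proof of \cref{prop:kernel-sandwich}, so no genuine obstacle remains. The theorem is essentially a repackaging of \cref{prop:finite-kernels} in the language of \cref{def:FVA}.
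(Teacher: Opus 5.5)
Your proposal is correct and follows the paper's proof. The degenerate cases give the one-point dcpo; the paper cites \cref{prop:C-saturation}, and your route via $\Vsub(\varnothing)$ and \cref{thm:FVA-property}(1) works equally well. Otherwise the sequence $d_n=e_n\circ p_n$ of \cref{prop:finite-kernels} is the required finite-valuation approximate identity, and your explicit composition of $p_n$ with the inclusion $\Vone(L_n)\hookrightarrow\Vsub(L_n)$ spells out a step the paper leaves implicit.
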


\begin{proof}
If $A=\varnothing$ or $P=\varnothing$, the dcpo is a singleton and hence
belongs to $\FVA$ by \cref{prop:C-saturation}.  Otherwise,
\cref{lem:M-domain,prop:finite-kernels} gives a finite-valuation approximate
identity $(d_n)$.
\end{proof}

The finite-kernel construction is now complete. Every finite monotone-valuation
polytope has an increasing approximate identity through valuation spaces of
finite posets, and it also carries finite kernels converging increasingly to
the Dirac unit. The latter, stronger statement is the input for the lifting
arguments in the next subsection.

\subsection[Valuation powerdomain closure]{Closure under Valuation Powerdomains}
\label{sec:valuation-closure}

The finite-kernel theorem of the preceding subsection has two distinct
applications.  In this subsection we use only the first one: a finite
probability-kernel approximation of the Dirac unit is lifted, by Kleisli
extension, to an approximate identity on a valuation powerdomain.  The
function-space application is deferred to
\cref{sec:function-spaces-ccc}.

We first record the finite-state formulas used in the valuation lifting.

For every dcpo \(Z\), let
\[
  \iota_Z:\Vone(Z)\hookrightarrow\Vsub(Z)
\]
denote the canonical inclusion.  If \(L\) is a finite poset, every
\(\rho\in\Vsub(L)\) has a unique atomic representation
\[
  \rho=\sum_{\ell\in L}\rho_\ell\Dirac_\ell,
  \qquad
  \rho_\ell\geq0,
  \qquad
  \sum_{\ell\in L}\rho_\ell\leq1.
\]
Consequently, for every monotone map \(y:L\to Z\),
\begin{equation}\label{eq:finite-pushforward}
  \Vsub(y)(\rho)
  =
  \sum_{\ell\in L}\rho_\ell\Dirac_{y(\ell)}.
\end{equation}
If \(v:L\to\Vsub(E)\) is a valuation kernel, then
\begin{equation}\label{eq:finite-kleisli-sum}
  v^\dagger(\rho)
  =
  \sum_{\ell\in L}\rho_\ell v(\ell).
\end{equation}
Thus, on a finite source, Kleisli extension is simply the barycentric
combination of the values of the kernel.  These formulas will also be used
in the function-space reconstruction of
\cref{sec:function-spaces-ccc}.

The following is the local continuity of Kleisli extension needed below.

\begin{lemma}\label{lem:kleisli-local}
                                                             Let \((k_i)_{i\in I}\) be a directed family of Scott-continuous kernels
\[
  k_i:D\longrightarrow\Vsub(E),
\]
with pointwise supremum \(k\).  Then
\[
  k^\dagger=\sup_i k_i^\dagger.
\]
If \(E=D\) and \(k_i\leq\etaV_D\) for every \(i\), then
\[
  k_i^\dagger\leq\id_{\Vsub(D)}.
\]
\end{lemma}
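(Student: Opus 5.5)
The plan is to work directly from the definition \((k^\dagger\nu)(U)=\int_D k(x)(U)\,d\nu(x)\) in \cref{eq:kleisli-definition}. Both claims then reduce to facts about the Choquet integral on the fixed valuation \(\nu\). For the first claim, fix \(\nu\in\Vsub(D)\) and a Scott-open \(U\subseteq E\), and consider the maps \(h_i(x)=k_i(x)(U)\).

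\textbf{First assertion.} The steps, in order, are as follows.
\begin{enumerate}
\item Each \(h_i:D\to[0,1]\) is Scott-continuous, being evaluation of the Scott-continuous kernel \(k_i\) at \(U\).
\item The family \((h_i)\) is directed pointwise, because \((k_i)\) is directed and valuations are ordered pointwise on open sets.
\item Since directed suprema in \(\Vsub(E)\) are computed pointwise on Scott-open sets, \(\sup_i h_i(x)=\sup_i k_i(x)(U)=k(x)(U)\).
\item \cref{thm:jones-monotone-convergence} then gives
\[
(k^\dagger\nu)(U)=\int_D\sup_i h_i\,d\nu=\sup_i\int_D h_i\,d\nu=\sup_i (k_i^\dagger\nu)(U).
\]
\item The family \((k_i^\dagger\nu)_i\) is directed by monotonicity of Kleisli extension in the kernel (\cref{lem:valuation-kleisli-laws}(i)). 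Its supremum is computed pointwise on open sets, so \(k^\dagger\nu=\sup_i k_i^\dagger\nu\) for every \(\nu\).
\item Hence \(k^\dagger=\sup_i k_i^\dagger\) pointwise, which is the supremum in the function space.
\end{enumerate}
One point needs a little care: \(k\) must itself be Scott-continuous for \(k^\dagger\) to be defined. This holds because a pointwise directed supremum of Scott-continuous maps is Scott-continuous, since suprema in \([D\to\Vsub(E)]\) are pointwise.

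\textbf{Second assertion.} Suppose \(E=D\) and \(k_i\le\etaV_D\). By monotonicity in the kernel (\cref{lem:valuation-kleisli-laws}(i)), \(k_i^\dagger\le\etaV_D^\dagger\). The unit law \cref{eq:kleisli-unit-laws} gives \(\etaV_D^\dagger=\id_{\Vsub(D)}\), so \(k_i^\dagger\le\id_{\Vsub(D)}\).

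For a self-contained check, one can also argue directly. The hypothesis gives \(k_i(x)(U)\le\delta_x(U)=\chi_U(x)\), so monotonicity of the integral yields \((k_i^\dagger\nu)(U)\le\int\chi_U\,d\nu=\nu(U)\).

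The only mildly delicate step is justifying that \((h_i)\) is directed and that its supremum really is \(k(x)(U)\). Both follow from pointwise computation of directed suprema of valuations, recalled in \cref{eq:directed-supremum} and the preliminaries. The rest is a direct application of monotone convergence.
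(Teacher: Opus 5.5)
Your proof is correct and follows the paper's argument. The paper likewise fixes $\nu$ and a Scott-open $U$, applies \cref{thm:jones-monotone-convergence} to the maps $x\mapsto k_i(x)(U)$, concludes by pointwise computation of directed suprema in $\Vsub(E)$, and gets the second assertion from monotonicity in the kernel together with $\etaV_D^\dagger=\id_{\Vsub(D)}$; you additionally make explicit two points the paper leaves implicit, namely the Scott continuity of $k$ and the directedness of $(k_i^\dagger\nu)_i$.
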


\begin{proof}
Fix \(\nu\in\Vsub(D)\) and a Scott-open set \(U\subseteq E\).  Put
\[
  g_i(x)=k_i(x)(U),
  \qquad
  g(x)=k(x)(U).
\]
Then \(g=\sup_i g_i\) pointwise. \Cref{thm:jones-monotone-convergence}
for integration with respect to continuous valuations
gives
\[
  (k^\dagger\nu)(U)
  =
  \int_D g\,d\nu
  =
  \sup_i\int_D g_i\,d\nu
  =
  \sup_i(k_i^\dagger\nu)(U).
\]
Since directed suprema in \(\Vsub(E)\) are computed pointwise on Scott-open
sets, \(k^\dagger\nu=\sup_i k_i^\dagger\nu\), and hence
\(k^\dagger=\sup_i k_i^\dagger\).      
                                          
If \(E=D\) and \(k_i\leq\etaV_D\), monotonicity of Kleisli extension and
the unit law in \cref{lem:valuation-kleisli-laws}\textup{(i)--(ii)} give
\[
  k_i^\dagger
  \leq
  \etaV_D^\dagger
  =
  \id_{\Vsub(D)}.
\]
\end{proof}

We now lift the finite kernels to valuation powerdomains and derive global closure under \(\Vsub\) and \(\Vone\).

\begin{theorem}\label{thm:valuation-kernel-lifting}
Let \(X\) be an FS-domain.  Suppose that there are finite posets \(L_n\)
and Scott-continuous maps
\[
  p_n:X\longrightarrow\Vone(L_n),
  \qquad
  y_n:L_n\longrightarrow X
\]
such that
\[
  \kappa_n=\Vone(y_n)\circ p_n,
  \qquad
  \kappa_n\leq\kappa_{n+1}\leq\etaV_X,
  \qquad
  \sup_n\kappa_n=\etaV_X.
\]
Then
\[
  \Vsub(X)\in\FVA.
\]
\end{theorem}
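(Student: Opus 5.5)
We lift the finite kernels to the valuation powerdomain by Kleisli extension. Put
\[
  k_n=\iota_X\circ\kappa_n:X\longrightarrow\Vsub(X),
  \qquad
  A_n=k_n^\dagger:\Vsub(X)\longrightarrow\Vsub(X).
\]
The plan is to show that $(A_n)_n$ is a finite-valuation approximate identity on $\Vsub(X)$ in the sense of \cref{def:FVA}.

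\emph{Monotonicity and the supremum.} Since $\kappa_n\leq\kappa_{n+1}\leq\etaV_X$ and $\iota_X$ is monotone, monotonicity of Kleisli extension in the kernel (\cref{lem:valuation-kleisli-laws}(i)) gives $A_n\leq A_{n+1}$. Because $k_n\leq\etaV_X$, \cref{lem:kleisli-local} gives $A_n\leq\id_{\Vsub(X)}$. The increasing sequence $(k_n)$ has pointwise supremum $\etaV_X$, since suprema in $\Vone$ and $\Vsub$ are both computed on open sets. Applying \cref{lem:kleisli-local} once more, together with the unit law $\etaV_X^\dagger=\id$, yields
\[
  \sup_nA_n=\etaV_X^\dagger=\id_{\Vsub(X)}.
\]

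\emph{Factorization.} Write $k_n=\Vsub(y_n)\circ\iota_{L_n}\circ p_n$. By \cref{eq:kleisli-pushforward-composition},
\[
  k_n^\dagger=\Vsub(y_n)\circ(\iota_{L_n}\circ p_n)^\dagger,
\]
so each $A_n$ factors as
\[
  \Vsub(X)\xrightarrow{(\iota_{L_n}p_n)^\dagger}\Vsub(L_n)\xrightarrow{\Vsub(y_n)}\Vsub(X),
\]
which is a factorization through $\Vsub(L_n)$ with $L_n$ a finite poset. The first map is Scott-continuous by \cref{lem:valuation-kleisli-laws}(i), and the second is Scott-continuous by functoriality. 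Concretely, the first map sends $\nu$ to the valuation with atomic weights
\[
  \ell\longmapsto\int_X p_n(x)(\{\ell\})\,d\nu(x).
\]
These weights should be read through upper sets: one integrates $x\mapsto p_n(x)(V)$ for each upper set $V\subseteq L_n$, and Kleisli extension guarantees that this defines a valuation, so we never need to treat atomic coordinates as continuous functions.

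\emph{Role of the FS hypothesis.} The sequence $(A_n)$ already satisfies \eqref{eq:fvai} verbatim, so $\Vsub(X)\in\FVA$ follows directly from \cref{def:FVA}. The assumption that $X$ is an FS-domain is not needed for the factorization itself; it is consistent with \cref{thm:FVA-property}(1), since $\Vsub(X)$ is then continuous by Jones's theorem and is in any case pointed, with least element the zero valuation. I therefore expect no genuine obstacle. The only delicate point is the bookkeeping between $\Vone$ and $\Vsub$: the inclusion must commute with pushforward, $\iota_X\circ\Vone(y_n)=\Vsub(y_n)\circ\iota_{L_n}$, and directed suprema of $\Vone$-valued kernels must be computed in $\Vsub$. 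Both hold because all these operations are defined on open sets in the same way for $\Vone$ and $\Vsub$.
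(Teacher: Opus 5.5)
Your proposal is correct and follows the paper's proof essentially verbatim. The paper also sets $T_n=(\iota_X\circ\kappa_n)^\dagger$ and uses \cref{lem:kleisli-local} with the unit law $\etaV_X^\dagger=\id_{\Vsub(X)}$ to get an increasing sequence below the identity with supremum $\id_{\Vsub(X)}$; it then factors $T_n=\Vsub(y_n)\circ(\iota_{L_n}\circ p_n)^\dagger$ through $\Vsub(L_n)$ via naturality of the inclusion and the Kleisli--pushforward composition law. You are also right that the FS hypothesis is never used: the paper's argument does not invoke it, and the analogous \cref{lem:extended-kernel-lifting} in the appendix is stated for an arbitrary dcpo.
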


\begin{proof}
For every \(n\), put \[
  \bar p_n=\iota_{L_n}\circ p_n:
  X\longrightarrow\Vsub(L_n)
\]
and
\[
  \bar\kappa_n=\iota_X\circ\kappa_n:
  X\longrightarrow\Vsub(X).
\]
Naturality of the inclusion gives
\[
  \bar\kappa_n
  =
  \Vsub(y_n)\circ\bar p_n.
\]
Define
\[
  T_n=\bar\kappa_n^\dagger:
  \Vsub(X)\longrightarrow\Vsub(X).
\]
By \cref{lem:kleisli-local},
\[
  T_n\leq T_{n+1}\leq\id_{\Vsub(X)}
  \qquad\text{and}\qquad
  \sup_nT_n
  =
  \etaV_X^\dagger
  =
  \id_{\Vsub(X)}.
\]
\Cref{lem:valuation-kleisli-laws}\textup{(iv)}, which follows from Kleisli associativity and the unit laws, gives
\[
  T_n
  =
  \bigl(\Vsub(y_n)\circ\bar p_n\bigr)^\dagger
  =
  \Vsub(y_n)\circ\bar p_n^\dagger.
\]
Hence \(T_n\) factors as
\[
  \Vsub(X)
  \xrightarrow{\ \bar p_n^\dagger\ }
  \Vsub(L_n)
  \xrightarrow{\ \Vsub(y_n)\ }
  \Vsub(X).
\]
Since \(L_n\) is finite, this is a finite-valuation factorization.
Thus \((T_n)_n\) is a finite-valuation approximate identity on
\(\Vsub(X)\).
\end{proof}

\begin{corollary}\label{cor:finite-generator-valuation}
For every finite poset \(Q\),
\[
  \Vsub(\Vsub(Q))\in\FVA.
\]
\end{corollary}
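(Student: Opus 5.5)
The plan is to apply \cref{thm:valuation-kernel-lifting} with $X=\Vsub(Q)$. Its two hypotheses are that $X$ is an FS-domain and that there is an increasing finite-kernel approximation of the Dirac unit $\etaV_X$. Both will come from the finite monotone-valuation polytopes of \cref{sec:monotone-polytopes}.

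First I identify $\Vsub(Q)$ with one of these polytopes. Take $A$ to be the one-point poset $\mathbf 1$ and $P=Q$. Then
\[
  \M(\mathbf 1,Q)=[\mathbf 1\to\Vsub(Q)]\cong\Vsub(Q),
\]
and the isomorphism is evaluation at the unique point. It is an order isomorphism, hence an isomorphism of dcpos. If $Q=\varnothing$, then $\Vsub(Q)$ is a singleton, so $\Vsub(\Vsub(Q))$ is again a one-point dcpo. That case lies in $\FVA$ by \cref{prop:C-saturation}, since the singleton is a retract of any generator. From now on I assume $Q\neq\varnothing$.

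Next I transport the data of \cref{prop:finite-kernels} along this isomorphism. That proposition gives, for $K=\M(\mathbf 1,Q)$, finite posets $L_n$ and Scott-continuous maps $p_n:K\to\Vone(L_n)$ and $y_n:L_n\to K$ with
\[
  \kappa_n=\Vone(y_n)\circ p_n,\qquad
  \kappa_n\leq\kappa_{n+1}\leq\etaV_K,\qquad
  \sup_n\kappa_n=\etaV_K .
\]
Composing with the isomorphism $\Vsub(Q)\cong K$ and its inverse produces the same structure on $X=\Vsub(Q)$. This works because $\Vone$ is functorial and $\etaV$ is natural with respect to isomorphisms. The FS hypothesis holds by \cref{thm:subprobability-FS}, or alternatively by \cref{lem:M-domain}.

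With both hypotheses in place, \cref{thm:valuation-kernel-lifting} yields $\Vsub(\Vsub(Q))\in\FVA$. The substantive work was already carried out in \cref{prop:finite-kernels}, so I expect no real obstacle. The only point to check is that the degenerate case $A=\mathbf 1$ satisfies the standing assumptions of that construction: $A$ must be nonempty, and $\M(\mathbf 1,Q)$ must have nonempty interior by \cref{lem:interior-point}. Both hold, because a strictly order-preserving map from a one-point poset into $(0,1)$ trivially exists.
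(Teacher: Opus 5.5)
For $Q\neq\varnothing$ your argument is the paper's proof: identify $\Vsub(Q)\cong\M(\mathbf 1,Q)$, take the finite kernels from \cref{prop:finite-kernels}, and apply \cref{thm:valuation-kernel-lifting}. Your check that $A=\mathbf 1$ meets the standing hypotheses (nonempty $A$, nonempty interior) is correct.

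The degenerate case, however, contains a false step. $\Vsub(\varnothing)$ is indeed a singleton, but $\Vsub(\Vsub(\varnothing))\cong\Vsub(\mathbf 1)$ is \emph{not} a one-point dcpo. A subprobabilistic valuation on a one-point space is determined by its total mass, so $\Vsub(\mathbf 1)\cong[0,1]$. Your appeal to ``the singleton is a retract of any generator'' is therefore applied to the wrong object. The repair is immediate, and it is what the paper does: $\Vsub(\mathbf 1)$ is itself a generator $\Vsub(P)$ with $|P|=1$, so it lies in $\FVA$ by \cref{thm:FVA-property}\textup{(1)}.
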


\begin{proof}
If \(Q=\varnothing\), then \(\Vsub(Q)\) is the one-point dcpo and
\[
  \Vsub(\Vsub(Q))
  \cong
  \Vsub(\mathbf 1)
  \in\FVA
\]
by \cref{lem:generators}.  Suppose \(Q\neq\varnothing\).  Since
\[
  \Vsub(Q)\cong\M(\mathbf 1,Q),
\]
\cref{prop:finite-kernels} supplies a finite-kernel approximation of its
Dirac unit.  Apply \cref{thm:valuation-kernel-lifting}.
\end{proof}

\begin{lemma}\label{lem:valuation-retract}
    For every pointed dcpo $D$, $\Vone(D)$ is a Scott-continuous retract of $\Vsub(D)$.
\end{lemma}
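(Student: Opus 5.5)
We construct the retraction explicitly, following the lifting idea of Edalat already used in \cref{prop:probability-positive}: missing mass is moved to the bottom element. Let $\bot$ be the least element of $D$. The section is the canonical inclusion $\iota_D:\Vone(D)\hookrightarrow\Vsub(D)$, which is Scott-continuous because directed suprema in both dcpos are computed pointwise on Scott-open sets. For the retraction we define
\[
  r:\Vsub(D)\longrightarrow\Vone(D),
  \qquad
  r(\nu)=\nu+\bigl(1-\nu(D)\bigr)\Dirac_\bot .
\]
Explicitly, $r(\nu)(U)=\nu(U)$ if $U\neq D$ and $r(\nu)(D)=1$. The reason is that a Scott-open set contains $\bot$ if and only if it is all of $D$, since open sets are upper.

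The proof then proceeds in four checks.

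\emph{First}, $r(\nu)$ is a probabilistic valuation. We have $r(\nu)(\varnothing)=0$ and $r(\nu)(D)=1$. Modularity holds because $r(\nu)$ is a sum of two valuations, namely $\nu$ and the nonnegative multiple $(1-\nu(D))\Dirac_\bot$, and modularity is linear. Scott continuity on the lattice of open sets also survives the sum. The only delicate point is a directed union of open sets equal to $D$: it must contain $\bot$, so some member of the family already equals $D$, and continuity at such a union is trivial.

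\emph{Second}, $r$ is monotone. If $\nu\leq\nu'$, then $r(\nu)(U)=\nu(U)\leq\nu'(U)=r(\nu')(U)$ for every $U\neq D$, and both sides equal $1$ on $U=D$.

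\emph{Third}, $r$ preserves directed suprema. For $U\neq D$ this is the pointwise computation of suprema. For $U=D$ every value is $1$.

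\emph{Fourth}, $r\circ\iota_D=\id$. This holds because $\nu(D)=1$ for every $\nu\in\Vone(D)$.

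The only step requiring any care is the verification that $r(\nu)$ is a genuine continuous valuation. The key observation there is the identification $\{U\in\sigma(D):\bot\in U\}=\{D\}$, which makes every check elementary. If pointedness were dropped, this identification, and with it the construction, would fail.
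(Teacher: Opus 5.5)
Your proposal is correct and matches the paper's proof essentially verbatim: you use the same missing-mass map $\nu\mapsto\nu+\bigl(1-\nu(D)\bigr)\Dirac_\bot$, justified by the same observation that the only Scott-open set containing $\bot$ is $D$ itself. Your extra checks (that $r(\nu)$ is a genuine continuous valuation, and that the inclusion is Scott-continuous) simply make explicit what the paper leaves implicit.
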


\begin{proof}
Let \(\bot\) be the least element of \(D\).  Let
\[
\iota_D:\Vone(D)\hookrightarrow\Vsub(D)
\]
be the inclusion, and define
\[
  N_D:\Vsub(D)\longrightarrow\Vone(D),
  \qquad
  N_D(\nu)
  =
  \nu+\bigl(1-\nu(D)\bigr)\Dirac_\bot.
\]
The valuation \(N_D(\nu)\) has total mass one.  If
\(U\subsetneq D\) is Scott open, then \(\bot\notin U\), and therefore
\[
  N_D(\nu)(U)=\nu(U),
\]
whereas \(N_D(\nu)(D)=1\).  These formulas show that \(N_D\) is
Scott-continuous and that
\[
  N_D\circ\iota_D=\id_{\Vone(D)}.
\]
Thus \(\Vone(D)\) is a Scott-continuous retract of \(\Vsub(D)\).  
\end{proof}

Applying \cref{cor:finite-generator-valuation}, \cref{lem:valuation-retract}, \cref{prop:V-close} and \cref{thm:FVA-property}, we obtain that $\FVA$ is closed under valuation monads $\Vsub$ and $\Vone$.

\begin{theorem}\label{thm:global-Vsub}
If \(D\in\FVA\), then
\[
  \Vsub(D)\in\FVA, \ \ \Vone(D) \in\FVA.
\]
\end{theorem}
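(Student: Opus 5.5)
The plan is to reduce to the generator level and then use the general lifting result \cref{prop:V-close} with ${\bf C}=\{\Vsub(P):P\text{ a finite poset}\}$. Since $\FVA=\FAC$ for this class, \cref{prop:V-close} says it suffices to show $\Vsub(C)\in\FVA$ and $\Vone(C)\in\FVA$ for every generator $C=\Vsub(Q)$ with $Q$ finite. Once this is done, \cref{prop:V-close} gives, for arbitrary $D\in\FVA$ with approximate identity $a_n=e_np_n$, that the maps $\mathcal V(a_n)=\mathcal V(e_n)\mathcal V(p_n)$ factor through $\mathcal V(\Vsub(P_n))\in\FVA$. These maps increase to the identity by \cref{lem:V-local}, and saturation (\cref{thm:FVA-property}(2)) then yields $\mathcal V(D)\in\FVA$.

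For $\Vsub$ at the generator level, I would simply invoke \cref{cor:finite-generator-valuation}, which gives $\Vsub(\Vsub(Q))\in\FVA$. That corollary rests on \cref{prop:finite-kernels} applied to $\M(\mathbf 1,Q)\cong\Vsub(Q)$, followed by the Kleisli lifting of \cref{thm:valuation-kernel-lifting}.

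For $\Vone$ there are two routes, and I will use whichever is cleaner.

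\emph{Generator-level route.} The dcpo $\Vsub(Q)$ is pointed, with the zero valuation as least element, so by \cref{lem:valuation-retract} the space $\Vone(\Vsub(Q))$ is a Scott-continuous retract of $\Vsub(\Vsub(Q))\in\FVA$. Retract closure (\cref{thm:FVA-property}(2)) then gives $\Vone(\Vsub(Q))\in\FVA$, and \cref{prop:V-close} with $\mathcal V=\Vone$ finishes.

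\emph{Direct route.} Every $D\in\FVA$ is pointed by \cref{thm:FVA-property}(1). Having already shown $\Vsub(D)\in\FVA$, \cref{lem:valuation-retract} exhibits $\Vone(D)$ as a retract of $\Vsub(D)$, and retract closure gives $\Vone(D)\in\FVA$ directly.

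The direct route avoids any functoriality subtleties for $\Vone$ in \cref{prop:V-close}, so I would present it as the main argument.

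The only point needing care is in the proof of \cref{prop:V-close}. There one must check that $\mathcal V(a_n)\le\id$ and $\sup_n\mathcal V(a_n)=\id$. Both are exactly the content of \cref{lem:V-local}. For $\Vone$ they transfer because pushforward preserves total mass and suprema are computed on open sets.

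Beyond this, I expect no real obstacle: all the hard finite-dimensional work is already packaged in \cref{cor:finite-generator-valuation}.
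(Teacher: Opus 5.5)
Your proposal is correct and is essentially the paper's proof: the generator-level closure $\Vsub(\Vsub(P))\in\FVA$ from \cref{cor:finite-generator-valuation}, lifted by \cref{prop:V-close}, gives $\Vsub(D)\in\FVA$. Then pointedness of $D$, the missing-mass retraction of \cref{lem:valuation-retract}, and retract closure give $\Vone(D)\in\FVA$, exactly as in your direct route; your alternative generator-level route for $\Vone$ is also valid but unnecessary.
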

\begin{proof}
For every finite poset \(P\),
\cref{cor:finite-generator-valuation} gives
\(
\Vsub(\Vsub(P))\in\FVA.
\)
Applying \cref{prop:V-close} to the generator class
\(\{\Vsub(P): P \text{ is a finite poset}\}\) yields 
\(
\Vsub(D)\in\FVA
\)
for every \(D\in\FVA\).
By \cref{thm:FVA-property}, every such \(D\) is pointed; hence
\cref{lem:valuation-retract} makes \(\Vone(D)\) a Scott-continuous
retract of \(\Vsub(D)\). Retract closure of \(\FVA\) therefore
gives \(\Vone(D)\in\FVA\).
\end{proof}

\begin{corollary}\label{cor:valuation-monads-restrict}
The endofunctors \(\Vsub\) and \(\Vone\) restrict to endofunctors on
\(\FVA\).  Their units and multiplications are morphisms of the full
subcategory \(\FVA\), so both valuation monads restrict to \(\FVA\).
\end{corollary}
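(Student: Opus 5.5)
Since \cref{thm:global-Vsub} already gives \(\Vsub(D),\Vone(D)\in\FVA\) for every \(D\in\FVA\), and \(\FVA\) is a full subcategory of \(\DCPO\), every Scott-continuous map between objects of \(\FVA\) is automatically a morphism of \(\FVA\). The plan is therefore to check three things: the endofunctors restrict on morphisms, the units are Scott-continuous, and the multiplications are Scott-continuous. The monad laws then hold in \(\FVA\) because they already hold in \(\DCPO\).

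\emph{Functor part.} For a morphism \(f:D\to E\) of \(\FVA\), the pushforward \(\Vsub(f)=(\eta_E\circ f)^\dagger\) is Scott-continuous by \cref{lem:valuation-kleisli-laws}\textup{(i),(iv)}. Its restriction \(\Vone(f)\) maps \(\Vone(D)\) into \(\Vone(E)\) by the total-mass computation in \cref{subsec:prelim-valuations}. It remains Scott-continuous as a map \(\Vone(D)\to\Vone(E)\), because directed suprema in \(\Vone\) are computed pointwise on open sets, just as in \(\Vsub\). Functoriality, meaning \(\Vsub(g\circ f)=\Vsub(g)\circ\Vsub(f)\) and \(\Vsub(\id)=\id\), is inherited from \(\DCPO\). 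One can also read it off \eqref{eq:kleisli-pushforward-composition} together with the unit law.

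\emph{Units.} The map \(\eta_D:D\to\Vsub(D)\), \(x\mapsto\Dirac_x\), is Scott-continuous. For each Scott-open \(U\), the map \(x\mapsto\Dirac_x(U)=\chi_U(x)\) is Scott-continuous into \([0,1]\), and suprema of valuations are computed pointwise. Its corestriction lands in \(\Vone(D)\), since \(\Dirac_x(D)=1\).

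\emph{Multiplications.} We have \(\mu_D=(\id_{\Vsub(D)})^\dagger\), which is Scott-continuous \(\Vsub(\Vsub(D))\to\Vsub(D)\) by \cref{lem:valuation-kleisli-laws}\textup{(i)}. Its domain lies in \(\FVA\) by applying \cref{thm:global-Vsub} twice. For \(\Vone\), the multiplication is \((\iota_D)^\dagger\) restricted to \(\Vone(\Vone(D))\). This is the Kleisli extension of the kernel \(\iota_D:\Vone(D)\to\Vsub(D)\), whose values are probabilistic. By the total-mass computation recorded after \cref{lem:valuation-kleisli-laws}, it maps \(\Vone(\Vone(D))\) into \(\Vone(D)\).

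The monad identities (associativity and the two unit laws) follow from \eqref{eq:kleisli-unit-laws} and \eqref{eq:kleisli-associativity}. They hold verbatim because composition in \(\FVA\) is composition in \(\DCPO\).

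The only step needing care is the \(\Vone\) case. There one must check that Kleisli extension of a probabilistic kernel is Scott-continuous when regarded with codomain \(\Vone\). This follows because \(\Vone(E)\) is a sub-dcpo of \(\Vsub(E)\), closed under directed suprema since total mass is preserved. Everything else is bookkeeping.
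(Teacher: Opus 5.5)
Your proposal is correct and follows the same route as the paper. Both use object closure from \cref{thm:global-Vsub}, fullness of \(\FVA\) in \(\DCPO\), the unit and associativity laws of \cref{lem:valuation-kleisli-laws}, and the total-mass computation for restricting to \(\Vone\); your extra checks (Scott continuity of the pushforwards, units and multiplications, and the observation that \(\Vone(E)\) is a sub-dcpo of \(\Vsub(E)\) closed under directed suprema) only make explicit what the paper's short proof leaves implicit.
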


\begin{proof}
By \cref{lem:valuation-kleisli-laws}, the Dirac unit and Kleisli extension satisfy the unit and associativity laws in \(\DCPO\). The mass calculation following that lemma shows that these constructions also restrict to probabilistic valuations. Since \(\FVA\) is full in \(\DCPO\), the Scott-continuous units and multiplications between its objects are morphisms of \(\FVA\), and the same monad laws hold in this subcategory.
\end{proof}

\subsection[Function spaces and Cartesian closedness]{Function Spaces and Cartesian Closedness}
\label{sec:function-spaces-ccc}

We now use the same finite kernels in a different way.  Instead of
integrating them against an input valuation, we use them to sample the
argument of a Scott-continuous function at finitely many labels and then
reconstruct the function by barycentric integration.  This is the essential
step in the proof of Cartesian closedness.

We now apply sampling and barycentric reconstruction to function spaces.

\begin{theorem}\label{thm:function-kernel-lifting}
Let \(X\) be an FS-domain.  Suppose that there are finite posets \(L_n\)
and Scott-continuous maps
\[
  p_n:X\longrightarrow\Vone(L_n),
  \qquad
  y_n:L_n\longrightarrow X
\]
such that
\[
  \kappa_n=\Vone(y_n)\circ p_n,
  \qquad
  \kappa_n\leq\kappa_{n+1}\leq\etaV_X,
  \qquad
  \sup_n\kappa_n=\etaV_X.
\]
Then, for every finite poset \(P\),
\[
  [X\to\Vsub(P)]\in\FVA.
\]
\end{theorem}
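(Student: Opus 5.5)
The plan is to build a finite-valuation approximate identity on \([X\to\Vsub(P)]\) by \emph{sampling and barycentric reconstruction}. For each \(n\) we define
\[
  P_n:[X\to\Vsub(P)]\longrightarrow[L_n\to\Vsub(P)]=\M(L_n,P),
  \qquad P_n(f)=f\circ y_n,
\]
and
\[
  E_n:\M(L_n,P)\longrightarrow[X\to\Vsub(P)],
  \qquad E_n(g)=g^\dagger\circ\bar p_n ,
\]
with \(\bar p_n=\iota_{L_n}\circ p_n\). By \eqref{eq:finite-kleisli-sum}, \(E_n(g)(x)=\sum_{\ell}p_n(x)_\ell\,g(\ell)\). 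Then
\[
  A_n(f)=E_nP_n(f)=(f\circ y_n)^\dagger\circ\bar p_n=f^\dagger\circ\Vsub(y_n)\circ\bar p_n=f^\dagger\circ\bar\kappa_n ,
\]
using \eqref{eq:kleisli-pushforward-composition}. Here \(f^\dagger:\Vsub(X)\to\Vsub(P)\) is the Kleisli extension of \(f\), viewed as a kernel with values in \(\Vsub(P)\).

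First we check that \(P_n\) and \(E_n\) are Scott-continuous. \(P_n\) is precomposition, which is monotone and preserves pointwise directed suprema. \(E_n\) is monotone in \(g\) because Kleisli extension is monotone in the kernel (\cref{lem:valuation-kleisli-laws}(i)). It preserves directed suprema by \cref{lem:kleisli-local}, and each \(E_n(g)\) is Scott-continuous as a composite. Next we verify the order properties.
\begin{itemize}
\item \(A_n\leq\id\): since \(\bar\kappa_n\leq\etaV_X\) and \(f^\dagger\) is monotone, \(A_n(f)\leq f^\dagger\circ\etaV_X=f\) by the unit law.
\item \(A_n\leq A_{n+1}\): this follows in the same way from \(\bar\kappa_n\leq\bar\kappa_{n+1}\).
\item \(\sup_nA_n=\id\): for fixed \(x\), Scott continuity of \(f^\dagger\) gives \(\sup_n f^\dagger(\bar\kappa_n(x))=f^\dagger(\etaV_X(x))=f(x)\).
\end{itemize}
So \((A_n)\) is an increasing sequence below the identity with supremum \(\id\), and each \(A_n\) factors through \(\M(L_n,P)\).

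To finish, we note that \(\M(L_n,P)\in\FVA\) by \cref{thm:finite-polytope}. Hence \((A_n)\) is an \(\FVA\)-factorization approximate identity, and the saturation \(\mathcal F(\FVA)=\FVA\) of \cref{thm:FVA-property}(2) (via \cref{prop:C-saturation}) gives \([X\to\Vsub(P)]\in\FVA\). The hypothesis that \(X\) is an FS-domain is only needed to make sure the function space is a reasonable dcpo; the saturation argument itself yields continuity and the FS property.

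The main obstacle is the Scott continuity and well-definedness of \(E_n\), together with the identity \(A_n(f)=f^\dagger\circ\bar\kappa_n\). This requires the Kleisli calculus of \cref{lem:valuation-kleisli-laws} for kernels into \(\Vsub(P)\), and the subprobability masses must be handled correctly. The convergence step is straightforward once this identity is in place. If the identity is awkward to justify, the fallback is to compute directly. Write
\[
  A_n(f)(x)=\sum_\ell p_n(x)_\ell\, f(y_n(\ell))=\int f\,d\kappa_n(x)
\]
(barycentre in \(\Vsub(P)\)), and evaluate on each upper set \(U\) of \(P\) using \cref{thm:jones-monotone-convergence} with the Scott-continuous maps \(z\mapsto f(z)(U)\).
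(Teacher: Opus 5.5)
Your proposal is correct and follows the paper's proof essentially verbatim: the same sampling map $P_n(f)=f\circ y_n$ and reconstruction map $E_n(g)=g^\dagger\circ\bar p_n$, the identity $A_n(f)=f^\dagger\circ\bar\kappa_n$ via the Kleisli laws, the order and supremum checks via monotonicity and Scott continuity of $f^\dagger$ with the unit law, and then \cref{thm:finite-polytope} together with the saturation $\mathcal F(\FVA)=\FVA$. Your side remark is also fair, with one correction of wording: the function space is a dcpo for arbitrary dcpos, so the FS hypothesis on $X$ plays no essential role (saturation supplies continuity), and the paper uses it only to note in passing that $[X\to\Vsub(P)]$ is an FS-domain.
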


\begin{proof}
The assertion is immediate when \(P=\varnothing\), so assume that
\(P\neq\varnothing\), and put
\[
  Y=\Vsub(P).
\]
Both \(X\) and \(Y\) are FS-domains; hence their function space
\([X\to Y]\) is an FS-domain
\cite[Proposition~II-2.18]{GierzEtAl2003}.

For every \(n\), put
\[
  B_n=\M(L_n,P)=\Mon(L_n,Y).
\]
An element of \(B_n\) is a finite monotone table assigning a valuation in
\(Y\) to each state of \(L_n\).  By \cref{thm:finite-polytope},
\(
  B_n\in\FVA.
\)
Let
\[
  \bar p_n=\iota_{L_n}\circ p_n:
  X\longrightarrow\Vsub(L_n)
\]
and
\[
  \bar\kappa_n=\iota_X\circ\kappa_n:
  X\longrightarrow\Vsub(X).
\]
Define the sampling map
\[
  P_n:[X\to Y]\longrightarrow B_n,
  \qquad
  P_n(f)=f\circ y_n,
\]
and the reconstruction map
\[
  E_n:B_n\longrightarrow[X\to Y],
  \qquad
  E_n(v)=v^\dagger\circ\bar p_n.
\]
For \(x\in X\), formula~\eqref{eq:finite-kleisli-sum} reads
\begin{equation}\label{eq:function-reconstruction-sum}
  E_n(v)(x)
  =
  \sum_{\ell\in L_n}
  p_n(x)_\ell\,v(\ell).
\end{equation}
Thus \(P_n\) records finitely many values of \(f\), while \(E_n\) reconstructs
a function by the probability distribution \(p_n(x)\).

The map \(P_n\) is Scott-continuous because directed suprema in function
spaces are computed pointwise.  The map \(E_n\) is Scott-continuous by
\cref{lem:kleisli-local}, applied to directed families of kernels
\(L_n\to Y\).
Put
\[
  A_n=E_n\circ P_n.
\]
For \(f\in[X\to Y]\) and \(x\in X\), the Kleisli associativity law gives
\begin{align}
  A_n(f)(x)
  &=
  (f\circ y_n)^\dagger(\bar p_n(x))\notag\\
  &=
  f^\dagger\bigl(\Vsub(y_n)(\bar p_n(x))\bigr)\notag\\
  &=
  f^\dagger(\bar\kappa_n(x)).
  \label{eq:function-kernel-approximation}
\end{align}
Since
\[
  \bar\kappa_n
  \leq
  \bar\kappa_{n+1}
  \leq
  \etaV_X
\]
and \(f^\dagger\) is monotone, the unit law in
\cref{lem:valuation-kleisli-laws}\textup{(ii)} gives
\[
  A_n(f)(x)
  \leq
  A_{n+1}(f)(x)
  \leq
  f^\dagger(\etaV_X(x))
  =
  f(x).
\]
Hence
\[
  A_n\leq A_{n+1}\leq\id_{[X\to Y]}.
\]
Moreover, Scott continuity of \(f^\dagger\) and the unit law
(\cref{lem:valuation-kleisli-laws}\textup{(i)--(ii)}), together with \(\sup_n\bar\kappa_n=\etaV_X\), give
\[
\begin{aligned}
  \sup_nA_n(f)(x)
  &=
  f^\dagger\left(\sup_n\bar\kappa_n(x)\right)\\
  &=
  f^\dagger(\etaV_X(x))\\
  &=
  f(x).
\end{aligned}
\]
Thus \((A_n)_n\) is an increasing approximate identity on \([X\to Y]\).
Each \(A_n=E_nP_n\) factors through \(B_n\in\FVA\).  Applying
\cref{thm:FVA-property}\textup{(2)} proves the result.
\end{proof}

\begin{corollary}\label{cor:finite-generator-function}
For all finite posets \(P,Q\),
\(
  [\Vsub(Q)\to\Vsub(P)]\in\FVA.
\)
\end{corollary}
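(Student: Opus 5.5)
The plan is to apply \cref{thm:function-kernel-lifting} with \(X=\Vsub(Q)\). Nothing new needs to be constructed. The whole task is to check that \(\Vsub(Q)\) satisfies the hypotheses of that theorem.

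First I will dispose of the degenerate cases. If \(P=\varnothing\), then \(\Vsub(P)\) is a singleton, so the function space is a singleton and belongs to \(\FVA\) by \cref{prop:C-saturation}. If \(Q=\varnothing\), then \(\Vsub(Q)\) is a singleton \(\mathbf 1\), and \([\mathbf 1\to\Vsub(P)]\cong\Vsub(P)\), which lies in \(\FVA\) by \cref{thm:FVA-property}(1).

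Now assume \(P,Q\neq\varnothing\). The hypotheses of \cref{thm:function-kernel-lifting} will be supplied as follows.

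\begin{itemize}
\item \(X=\Vsub(Q)\) is an FS-domain by \cref{thm:subprobability-FS}.
\item Let \(\mathbf 1\) be the one-point poset. Identify \(\Vsub(Q)\) with the monotone-valuation polytope \(\M(\mathbf 1,Q)\). This identification is the identity on the underlying simplex \(\Delta_{\leq1}(Q)\) and respects the pointwise order, exactly as used in \cref{cor:finite-generator-valuation}.
\item With \(K=\M(\mathbf 1,Q)\), \cref{prop:finite-kernels} provides finite posets \(L_n\) and Scott-continuous maps \(p_n:K\to\Vone(L_n)\) and \(y_n:L_n\to K\). The kernels \(\kappa_n=\Vone(y_n)\circ p_n\) satisfy \(\kappa_n\leq\kappa_{n+1}\leq\etaV_K\) and \(\sup_n\kappa_n=\etaV_K\).
\end{itemize}

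Transporting these maps along the identification gives exactly the data required by \cref{thm:function-kernel-lifting}. That theorem then yields \([\Vsub(Q)\to\Vsub(P)]\in\FVA\) for every finite poset \(P\).

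There is no real obstacle here. The only point that needs care is checking that the order isomorphism \(\Vsub(Q)\cong\M(\mathbf 1,Q)\) carries the Dirac unit and the kernels \(\kappa_n\) correctly. It does, because order isomorphisms commute with \(\Vone(-)\) and with \(\etaV\) by naturality.
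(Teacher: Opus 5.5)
Your proposal is correct and matches the paper's proof: handle the degenerate case, identify \(\Vsub(Q)\) with \(\M(\mathbf 1,Q)\), take the finite kernels from \cref{prop:finite-kernels}, and apply \cref{thm:function-kernel-lifting}. Your separate treatment of \(P=\varnothing\) is harmless but unnecessary, since that theorem already covers the empty target.
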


\begin{proof}
If \(Q=\varnothing\), then \(\Vsub(Q)\) is the terminal dcpo and
\[
  [\Vsub(Q)\to\Vsub(P)]
  \cong
  \Vsub(P)
  \in\FVA.
\]
Suppose \(Q\neq\varnothing\).  Since
\(
  \Vsub(Q)\cong\M(\mathbf1,Q),
\)
\cref{prop:finite-kernels} supplies a finite-kernel approximation of the
Dirac unit on \(\Vsub(Q)\).  Apply
\cref{thm:function-kernel-lifting}.
\end{proof}

\begin{corollary}\label{cor:finite-generators}
For all finite posets \(P,Q\),
\[
  \Vsub(P)\times \Vsub(Q)\in\FVA,
  \qquad
  [\Vsub(Q)\to\Vsub(P)]\in\FVA.
\]
\end{corollary}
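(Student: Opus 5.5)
This corollary simply collects two generator-level facts that are already available, so the proof will just cite them. The product statement \(\Vsub(P)\times\Vsub(Q)\in\FVA\) is exactly \cref{thm:FVA-property}\textup{(3)}. That result was obtained by adjoining fresh least elements and using \cref{prop:probability-positive} to identify \(\Vsub(P)\cong\Vone(\widehat P)\) and \(\Vsub(Q)\cong\Vone(\widehat Q)\). It then exhibited \(\Vsub(P)\times\Vsub(Q)\) as a Scott-continuous retract of \(\Vsub(R)\), with \(R=(\widehat P\times\widehat Q)\setminus\{(\bot,\bot)\}\), through the product-measure section \(E(\mu,\nu)=\mu\otimes\nu\) and the marginal retraction \(M\). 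Retract closure, \cref{thm:FVA-property}\textup{(2)}, finished that argument. I would quote this result directly rather than redo it.

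The function-space statement \([\Vsub(Q)\to\Vsub(P)]\in\FVA\) is exactly \cref{cor:finite-generator-function}, so again the proof is a citation. For completeness I would recall the chain behind it. The degenerate case \(Q=\varnothing\) reduces to \(\Vsub(P)\in\FVA\). For \(Q\neq\varnothing\), the identification \(\Vsub(Q)\cong\M(\mathbf 1,Q)\) lets \cref{prop:finite-kernels} provide finite kernels \(\kappa_n=\Vone(y_n)\circ p_n\) increasing to the Dirac unit. Then \cref{thm:function-kernel-lifting} turns these kernels into an approximate identity \(A_n=E_nP_n\) on the function space. This approximate identity factors through \(\M(L_n,P)\), which lies in \(\FVA\) by \cref{thm:finite-polytope}, and saturation completes the argument.

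There is no real obstacle: the only thing to check is that both cited results have exactly the stated form for arbitrary finite posets, including the empty cases, and they do. The corollary is stated in this combined form because it verifies condition \textup{(2)} of \cref{thm:C-Cartesian closed} for the generator class \(\{\Vsub(P):P\text{ finite}\}\). Cartesian closedness of \(\FVA\) then follows in the next step.
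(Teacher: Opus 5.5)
Your proposal is correct and matches the paper's proof, which likewise just cites \cref{thm:FVA-property}\textup{(3)} for the product and \cref{cor:finite-generator-function} for the function space. The paper also cites \cref{prop:C-product}, but that is only needed for global product closure, not for this generator-level statement, so omitting it is fine.
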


\begin{proof}
Apply \cref{prop:C-product}, \cref{prop:generator-product} and 
\cref{cor:finite-generator-function}.
\end{proof}

Now we have reached the main result of this paper that  \(\FVA\) is a solution of the category-existence form of the Jung--Tix problem in the following sense.

\begin{theorem}\label{thm:main}
Let \(\FVA\) be the full subcategory of \(\DCPO\) defined in Section~\ref{subsec:FVA}. Then:
\begin{enumerate}[label=\textup{(\roman*)}]
\item every object of \(\FVA\) is a pointed countably based FS-domain;
\item \(\Vsub(P)\in\FVA\) for every finite poset \(P\), and \(\FVA\) is saturated, i.e., $\mathcal
F(\FVA)=\FVA$, and is closed under Scott-continuous
      retracts;
\item if \(X,Y\in\FVA\), then \(X\times Y, \ [X\to Y]\in\FVA\);
\item if \(D\in\FVA\), then
      \(\Vsub(D),\Vone(D)\in\FVA\).
\end{enumerate}
Hence \(\FVA\) is a full Cartesian closed subcategory of \(\FS\), and both
valuation monads restrict to \(\FVA\).
\end{theorem}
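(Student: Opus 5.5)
The main theorem is essentially an assembly of results already established, so the plan is to verify each item by citing the appropriate earlier statement and then conclude Cartesian closedness via the criterion in \cref{thm:C-Cartesian closed}. Throughout, take the generator class to be ${\bf C}=\{\Vsub(P):P\text{ a finite poset}\}$. By \cref{thm:subprobability-FS} this is a nonempty class of pointed countably based FS-domains: pointedness holds because the zero valuation is least. Hence $\FVA=\FAC$ for this ${\bf C}$, and all results of \Cref{sec:factorization-classes} apply.

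Items (i) and (ii) come directly from \cref{thm:FVA-property}(1)--(2), which in turn rests on \cref{prop:C-saturation}. That proposition gives ${\bf C}\subseteq\FAC\subseteq\FS$, pointedness and countable bases, $\mathcal F(\FAC)=\FAC$, and retract closure.

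For item (iii), I would invoke \cref{thm:C-Cartesian closed}, direction (2)$\Rightarrow$(1). Its hypothesis is exactly \cref{cor:finite-generators}: $\Vsub(P)\times\Vsub(Q)\in\FVA$ by \cref{thm:FVA-property}(3), and $[\Vsub(Q)\to\Vsub(P)]\in\FVA$ by \cref{cor:finite-generator-function}. The generator-level closure then lifts to arbitrary objects through \cref{prop:C-product} and \cref{prop:C-function}, giving $X\times Y,[X\to Y]\in\FVA$. The terminal singleton lies in $\FVA$ because it is a retract of any pointed object, and \cref{lem:dcpo-Cartesian} then yields Cartesian closedness, with products and exponentials inherited from $\DCPO$.

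For item (iv), I would cite \cref{thm:global-Vsub}. That result combines \cref{cor:finite-generator-valuation}, \cref{prop:V-close}, and the retraction $\Vone(D)\lhd\Vsub(D)$ of \cref{lem:valuation-retract}. The restriction of the monads is \cref{cor:valuation-monads-restrict}. Finally, $\FVA\subseteq\FS$ by (i), and $\FVA$ is full by definition.

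The only step with real content is checking that the hypotheses of the cited lifting results hold. In particular, \cref{prop:V-close} for $\mathcal V=\Vone$ would need $\Vone(\Vsub(P))\in\FVA$. I would avoid this by using $\Vsub$ only and handling $\Vone$ through the retract argument, as \cref{thm:global-Vsub} already does. The underlying generator facts, \cref{prop:finite-kernels} and \cref{thm:finite-polytope}, are taken as already proved, so no new obstacle arises here.
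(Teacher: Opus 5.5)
Your proposal is correct and follows the paper's own route. Items (i)--(ii) come from \cref{thm:FVA-property}, item (iii) from \cref{prop:C-product,prop:C-function} using \cref{cor:finite-generators}, and item (iv) from \cref{thm:global-Vsub}; Cartesian closedness then follows from \cref{thm:C-Cartesian closed} and the monad restriction from \cref{cor:valuation-monads-restrict}, with your handling of $\Vone$ via the retraction of \cref{lem:valuation-retract} being exactly how \cref{thm:global-Vsub} proceeds.
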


\begin{proof}
Assertions \textup{(i)} and \textup{(ii)} follow from
\cref{thm:FVA-property}.  Assertion
\textup{(iii)} follows from  \cref{prop:C-product}, \cref{prop:C-function} and \cref{cor:finite-generators}.  Assertion
\textup{(iv)} follows from \cref{prop:V-close} and \cref{thm:global-Vsub}.

Since \(\FVA\) is full and closed under finite products and the function-space objects inherited from \(\DCPO\), it follows that \(\FVA\) is Cartesian closed by \cref{thm:C-Cartesian closed}.
The final assertion follows from
\cref{cor:valuation-monads-restrict}.
\end{proof}

\begin{remark}\label{rem:directed-FVA}
The use of a sequence in the definition of $\omega\mathbf{FVA}$ is not
intrinsic to factorization approximation; it serves only to retain
countable basedness. Once the countability requirement is dropped, the
approximating sequence can be replaced by an arbitrary directed family.
The preceding continuity and stability results, together with the lifting
results for the constructions considered above, remain valid in this
directed setting. Indeed, their proofs use only directed convergence and
the existence of a common upper bound for every finite collection of
approximants.

More precisely, let $\mathbf{FVA}$ denote the full replete subcategory of
$\DCPO$ whose objects are those dcpos $D$ admitting a directed family
\[
    (a_i)_{i\in I},
    \qquad
    a_i=e_i\circ p_i\colon D\longrightarrow D,
\]
of finite-valuation factorization approximants such that
\[
    a_i\leq \id_D
    \qquad\text{and}\qquad
    \sup_{i\in I}a_i=\id_D
\]
pointwise. The directed version of the preceding continuity theorem shows
that every such $D$ is a continuous dcpo. Moreover,
\[
    \omega\mathbf{FVA}\subseteq\mathbf{FVA},
\]
while no countability condition is imposed on objects of $\mathbf{FVA}$.

The preceding proofs extend to $\mathbf{FVA}$ with only minor
modifications. In the square-refinement argument of
\cref{lem:square}, one retains the directed family of refining squares and
omits the final extraction of a countable cofinal sequence. The saturation
argument is unchanged: every comparison involves only finitely many
approximants, and directedness supplies a common upper bound for their
indices.

For products, if $(a_i)_{i\in I}$ and $(b_j)_{j\in J}$ approximate the
identities on $X$ and $Y$, respectively, one uses the product-directed
family
\[
    (a_i\times b_j)_{(i,j)\in I\times J}.
\]
For function spaces, one uses
\[
    A_{i,j}\colon [X\to Y]\longrightarrow [X\to Y],
    \qquad
    A_{i,j}(f)=b_j\circ f\circ a_i,
    \qquad (i,j)\in I\times J,
\]
so no diagonalization or countable-cofinality argument is required. For
the subprobabilistic powerdomain, the required approximation is given by the
directed family
\[
    \bigl(\Vsub(a_i)\bigr)_{i\in I},
\]
and the probability case follows from the same missing-mass retraction
\[
    \Vone(D)\triangleleft\Vsub(D).
\]

Consequently, after omitting only the countable-basis conclusion, the same
finite-generator, saturation, and transfer arguments show that
$\mathbf{FVA}$ is a full Cartesian closed subcategory of continuous domains
and that
\[
    D\in\mathbf{FVA}
    \quad\Longrightarrow\quad
    \Vsub(D),\Vone(D)\in\mathbf{FVA}.
\]
Hence the subprobabilistic and probabilistic valuation monads both restrict to
$\mathbf{FVA}$.
\end{remark}

\section{Comparison with bc-domains and RB-domains}
\label{sec:category-size}

Recall that a \emph{bc-domain} is a pointed domain in which every bounded subset has a supremum. We write $\BC$ for the category of bc-domains and $\omega\BC$ for the category of countably based bc-domains.
An \emph{RB-domain} is a Scott-continuous retract of a bifinite domain;
equivalently, its identity is the directed supremum of finite-image
Scott-continuous self-maps below the identity. We write $\RB$ for this full category. Let $\omega\mathbf{FS}_{\bot}$ (resp. $\omega\mathbf{RB}_{\bot}$ ) denote the full subcategory of pointed countably based FS-domains (resp. RB-domains). 

\subsection{Countably based bc-domains inside \(\FVA\)}
\label{subsec:size-bc}

\begin{lemma}\label{lem:finite-bc-fva}
Every finite bc-domain belongs to $\FVA$.
\end{lemma}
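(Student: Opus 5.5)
By \cref{thm:FVA-property}\textup{(2)}, $\FVA$ is closed under Scott-continuous retracts. It therefore suffices to exhibit a finite bc-domain $B$ as a Scott-continuous retract of $\Vsub(P)$ for a suitable finite poset $P$. The natural choice is $P=B$ itself. The section is $\eta_B\colon b\mapsto\Dirac_b$, which is monotone and hence Scott-continuous. It remains to construct a monotone retraction $r\colon\Vsub(B)\to B$ with $r(\Dirac_b)=b$. Monotonicity suffices for Scott continuity: $B$ is finite, so every directed subset of $\Vsub(B)$ has supremum attained in the Euclidean limit, and a monotone map into a finite poset preserves directed suprema. To verify this, note that a directed family has an increasing cofinal sequence converging coordinatewise (as in \cref{lem:euclidean-scott}). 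Since the values lie in a finite poset, the image sequence is eventually constant. One also needs the image of the supremum to equal that constant, which requires a semicontinuity property of $r$, dealt with below.

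For the retraction I would use a threshold rule. Fix $\theta\in(0,1)$, say $\theta=1/2$, and define
\[
  r(\nu)=\sup\{b\in B:\nu(\up b)>\theta\}.
\]
A finite bc-domain is a complete lattice minus possibly a top, and it has a least element. Any two elements $b,c$ with $\nu(\up b)>\theta$ and $\nu(\up c)>\theta$ are bounded: otherwise $\up b\cap\up c=\varnothing$ and $\nu(\up b)+\nu(\up c)>2\theta\geq1$ would exceed the total mass. Pairwise boundedness alone does not give joint boundedness of the whole set, so I would use a stronger threshold. I would work with $\{b:\nu(\up b)\ge 1\}$-style sets, or instead choose $\theta$ with $|B|(1-\theta)<1$. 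Then any finitely many such $b_i$ satisfy $\nu\bigl(\bigcap_i\up b_i\bigr)\geq1-\sum_i(1-\nu(\up b_i))>0$, so the intersection is nonempty and the set is bounded. In a bc-domain the bounded set then has a supremum, so $r$ is well defined, with $r(\nu)=\bot$ when the set is empty. Since $\nu(\up b)$ is monotone in $\nu$, the defining set grows with $\nu$, so $r$ is monotone. For Dirac valuations, $\Dirac_b(\up c)=1>\theta$ exactly when $c\leq b$, so $r(\Dirac_b)=b$.

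The step I expect to be the main obstacle is Scott continuity. The strict inequality $\nu(\up b)>\theta$ defines a Scott-open condition on $\nu$, since $\{\nu:\nu(U)>\theta\}$ is Scott open in $\Vsub(B)$. Hence if $\sup_i\nu_i=\nu$ and $\nu(\up b)>\theta$, then $\nu_i(\up b)>\theta$ for some $i$. The defining set for $\nu$ is finite, so each of its members already appears at some stage $i$, and by directedness they all appear at a single stage $i$. This gives $r(\nu)\leq r(\nu_i)$, and monotonicity supplies the reverse inequality. The strict threshold is what makes this work. I would double-check the choice of $\theta$ together with the boundedness estimate, and handle the trivial empty case of $B$ separately, although a bc-domain is pointed and hence nonempty. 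Then $r\circ\eta_B=\id_B$ exhibits $B$ as a retract of $\Vsub(B)\in\FVA$, which concludes the proof.
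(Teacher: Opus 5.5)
Your proposal is correct and is essentially the paper's argument: the same threshold retraction $\nu\mapsto\sup\{b:\nu(\up b)>\tau\}$ with $|B|(1-\tau)<1$, the same union-bound boundedness estimate, and the same Scott-openness argument for continuity. The one difference is that you retract from $\Vsub(B)$ directly, where the estimate still holds because $\nu(B)\leq 1$, whereas the paper retracts from $\Vone(B)\cong\Vsub(B\setminus\{\bot\})$. You should drop your opening claim that monotonicity alone yields Scott continuity: it is false here (a non-strict threshold gives a monotone but discontinuous map), and it is the strict inequality argument in your final paragraph that does the work.
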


\begin{proof}
Let $B$ be a finite bc-domain, let $N=|B|$, and put
\[
  \tau=1-\frac{1}{N+1}.
\]
Since $B$ has a least element, \cref{prop:probability-positive,lem:generators}
gives
\[
  \Vone(B)\cong\Vsub(B\setminus\{\bot\})\in\FVA.
\]
For $\mu\in\Vone(B)$, define
\[
  S(\mu)=\{b\in B:\mu(\up b)>\tau\}.
\]
Viewing $\mu$ as a probability vector on the finite set $B$, one has
\[
\begin{aligned}
  \mu\left(\bigcap_{b\in S(\mu)}\up b\right)
  \geq
  1-\sum_{b\in S(\mu)}\bigl(1-\mu(\up b)\bigr)
  >
  1-\frac{|S(\mu)|}{N+1}
  \geq\frac{1}{N+1}>0.
\end{aligned}
\]
Thus $S(\mu)$ has a common upper bound. Since $B$ is bounded complete, the
join $\sup S(\mu)$ exists; for $S(\mu)=\varnothing$ it is understood to
be $\bot$. Define
\[
  r_B:\Vone(B)\longrightarrow B,
  \qquad
  r_B(\mu)=\sup S(\mu).
\]
If $\mu\stle\nu$, then $S(\mu)\subseteq S(\nu)$, and hence
$r_B(\mu)\leq r_B(\nu)$. Moreover, if $(\mu_i)_{i\in I}$ is directed with
supremum $\mu$, then, for every $b\in B$,
\[
  \mu(\up b)=\sup_i\mu_i(\up b),
\]
so
\[
  S(\mu)=\bigcup_i S(\mu_i).
\]
Consequently,
\[
  r_B(\mu)
  =\sup\bigcup_iS(\mu_i)
  =\sup_i r_B(\mu_i),
\]
and $r_B$ is Scott-continuous. Finally,
\[
  S(\Dirac_x)=\down x,
  \qquad
  r_B(\Dirac_x)=x
  \qquad(x\in B).
\]
Thus $r_B\circ\etaV_B=\id_B$, so $B$ is a Scott-continuous retract of
$\Vone(B)$. The conclusion follows from \cref{thm:FVA-property}\textup{(2)}.
\end{proof}

\begin{proposition}\label{prop:omega-bc-fva}
Every countably based bc-domain belongs to $\FVA$.
\end{proposition}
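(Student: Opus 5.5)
The plan is to use the saturation property \(\mathcal F(\FVA)=\FVA\) from \cref{thm:FVA-property}\textup{(2)}, together with \cref{lem:finite-bc-fva}. It therefore suffices to equip a countably based bc-domain \(D\) with an increasing sequence of Scott-continuous maps \(a_n\leq a_{n+1}\leq\id_D\), with \(\sup_na_n=\id_D\), where each \(a_n\) factors through a finite bc-domain \(B_n\). Since \(B_n\in\FVA\), the sequence \((a_n)\) is then an \(\FVA\)-factorization approximate identity, and saturation yields \(D\in\FVA\).

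For the construction, fix a countable basis \(\{b_0,b_1,\ldots\}\) of \(D\), including \(\bot\). Let \(F_n\) be the closure of \(\{b_0,\ldots,b_n\}\) under suprema of bounded subsets, taken in \(D\). Because bounded subsets of a bc-domain have joins, \(F_n\) is finite: it consists of the joins of bounded subsets of a finite set. It contains \(\bot\), as the empty join, and it is itself a finite bc-domain, since its bounded joins agree with those of \(D\).

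Define
\[
  a_n(x)=\sup\{c\in F_n: c\ll x\}.
\]
The set on the right is bounded by \(x\), so its join exists and lies in \(F_n\). Write \(a_n=e_n\circ p_n\), where \(p_n:D\to F_n\) is the same formula with values in \(F_n\) and \(e_n:F_n\hookrightarrow D\) is the inclusion. The following properties need to be checked.
\begin{enumerate}
\item[(a)] \(p_n\) is Scott-continuous. Monotonicity is clear. For a directed supremum \(x=\sup E\), each \(c\ll x\) with \(c\in F_n\) satisfies \(c\ll e\) for some \(e\in E\), by interpolation. Since \(F_n\) is finite, one \(e\) works for all such \(c\) simultaneously.
\item[(b)] \(e_n\) is monotone, hence Scott-continuous on a finite poset.
\item[(c)] \(a_n\leq a_{n+1}\leq\id_D\). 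This holds because \(F_n\subseteq F_{n+1}\) and every such \(c\) lies below \(x\).
\item[(d)] \(\sup_na_n(x)=x\). Every basis element \(b_k\ll x\) lies in \(F_k\), so \(b_k\leq a_k(x)\).
\end{enumerate}

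The only delicate point is step (a), together with keeping the way-below relation taken in \(D\) rather than in \(F_n\). That choice is needed so that \(p_n\) is continuous; had we used \(\leq\) instead of \(\ll\), the map \(p_n\) could fail to preserve directed suprema. Once (a)--(d) are established, the conclusion follows directly from \cref{lem:finite-bc-fva} and saturation.
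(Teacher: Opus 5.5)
Your proof is correct. Its outer structure is the paper's: an increasing sequence of finite-image deflations, each factoring through a finite bc-domain, followed by \cref{lem:finite-bc-fva} and the saturation $\mathcal F(\FVA)=\FVA$. Where you differ is in how the deflations are produced.

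The paper takes a directed family of finite-image deflations from the cited inclusion $\BC\subseteq\RB$. It extracts an increasing cofinal sequence $(g_n)$ by enumerating basis pairs $b\ll c$. It then makes the arbitrary finite image $g_n[D]$ into a finite bc-domain $C_n$ by closing it under nonempty infima. That $C_n$ is a bc-domain in its own order, but its bounded suprema need not coincide with those of $D$.

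You instead write down the step-function deflations $a_n(x)=\sup\{c\in F_n:c\ll x\}$ directly. Because your $F_n$ are nested and closed under the bounded joins of $D$, monotonicity in $n$ and the bc-property of the factorization object come for free. No diagonal extraction or meet-closure is needed, and interpolation alone gives Scott continuity of $p_n$.

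Your route is self-contained; in effect it reproves the countably based case of $\BC\subseteq\RB$. The paper's route is shorter given the citation, and it establishes the slightly more general fact that every finite-image deflation on a bc-domain factors through a finite bc-domain.
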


\begin{proof}
Let $D$ be a countably based bc-domain. By the standard inclusion
$\BC\subseteq\RB$, there is a directed
family $(f_i)_{i\in I}$ of finite-image Scott-continuous maps such that
\[
  f_i\leq\id_D,
  \qquad
  \sup_i f_i=\id_D.
\]
We first replace this family by an increasing sequence. Let $B_0$ be a
countable basis of $D$, and enumerate all pairs
\[
  (b,c)\in B_0\times B_0
  \qquad\text{with}\qquad
  b\waybelow c.
\]
For every such pair, the equality $c=\sup_i f_i(c)$ yields an index $i$ with
$b\leq f_i(c)$. Using directedness, choose recursively an increasing sequence
$(g_n)_n$ from the family $(f_i)_i$ that satisfies the first $n$ of these
requirements. If $y\waybelow x$, interpolation and the basis property give
$b,c\in B_0$ such that
\[
  y\leq b\waybelow c\leq x.
\]
For all sufficiently large $n$,
\[
  y\leq b\leq g_n(c)\leq g_n(x).
\]
It follows that
\[
  g_n\leq g_{n+1}\leq\id_D,
  \qquad
  \sup_n g_n=\id_D.
\]

Fix $n$ and let $F_n=g_n[D]$. A bc-domain has all nonempty infima: the
infimum of a nonempty set is the supremum of its set of lower bounds. Let
$C_n$ be the closure of the finite set $F_n$ under nonempty infima. Then
$C_n$ is finite, contains $\bot$, and is closed under finite nonempty meets.
Hence $C_n$ is a finite bc-domain: if $A\subseteq C_n$ has an upper bound in
$C_n$, then
\[
  \sup_{C_n}A
  =\bigwedge\{c\in C_n:a\leq c\text{ for every }a\in A\}.
\]
By \cref{lem:finite-bc-fva}, $C_n\in\FVA$.

Regard $g_n$ as a map $p_n:D\to C_n$, and let
$e_n:C_n\hookrightarrow D$ be the inclusion. To see that $p_n$ is
Scott-continuous, let $A\subseteq D$ be directed. The directed set
$g_n[A]\subseteq F_n$ is finite and therefore has a largest element $m$.
Since $g_n:D\to D$ is Scott-continuous,
\[
  p_n(\sup A)=g_n(\sup A)=\sup g_n[A]=m=\sup_{C_n}p_n[A].
\]
The inclusion $e_n$ is Scott-continuous because every directed subset of the
finite poset $C_n$ has a largest element, which is its supremum both in $C_n$
and in $D$. Moreover,
\[
  g_n=e_n\circ p_n.
\]
Thus the increasing approximate identity $(g_n)_n$ factors through objects
$C_n\in\FVA$. Applying \cref{thm:FVA-property}\textup{(2)} gives
$D\in\FVA$.
\end{proof}

\subsection[Incomparability with RB-domains]{Convex connectedness and incomparability with RB-domains}
\label{subsec:size-finite-retracts}\label{subsec:size-RB}

\begin{lemma}\label{lem:finite-FVA-retract}
If a finite domain $D$ belongs to $\FVA$, then $D$ is a
Scott-continuous retract of $\Vsub(P)$ for some finite poset $P$.
\end{lemma}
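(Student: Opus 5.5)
The idea is to use the finite-valuation approximate identity on $D$ directly, together with finiteness of $D$, which should force some approximant to equal the identity. Let
\[
  a_n=e_np_n:D\xrightarrow{p_n}\Vsub(P_n)\xrightarrow{e_n}D
\]
satisfy $a_n\leq a_{n+1}\leq\id_D$ and $\sup_na_n=\id_D$ pointwise.

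Fix $x\in D$. The sequence $(a_n(x))_n$ is increasing in the finite poset $D$, so it is eventually constant. Its eventual value is its supremum $\sup_na_n(x)=x$. Hence there is $N_x$ with $a_n(x)=x$ for all $n\geq N_x$; monotonicity in $n$ is what keeps the value at $x$ once it is reached.

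Since $D$ is finite, put $N=\max_{x\in D}N_x$. Then $a_N=\id_D$, that is,
\[
  e_N\circ p_N=\id_D .
\]
Thus $p_N:D\to\Vsub(P_N)$ is a Scott-continuous section of the Scott-continuous map $e_N:\Vsub(P_N)\to D$. This exhibits $D$ as a Scott-continuous retract of $\Vsub(P_N)$, with $P=P_N$ a finite poset, as the statement requires.

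The only subtle point is that the approximate identity must be pointwise increasing and not merely directed with supremum the identity. In Definition~\ref{def:FVA} it is an increasing sequence, so eventual stabilization at each point holds and the argument goes through. The case $D=\varnothing$ does not arise, because objects of $\FVA$ are pointed by Theorem~\ref{thm:FVA-property}(1). No further obstacle is expected.
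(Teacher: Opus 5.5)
Your proposal is correct and follows essentially the same argument as the paper: each increasing sequence $(a_n(x))_n$ stabilizes at its supremum $x$ in the finite poset $D$, and a single index $N$ then works for all $x\in D$. Reading off $e_N\circ p_N=\id_D$ exhibits $D$ as a Scott-continuous retract of $\Vsub(P_N)$.
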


\begin{proof}
Let $(a_n)_n$ be a finite-valuation approximate identity on $D$, with
\[
  a_n=e_np_n,
  \qquad
  D\xrightarrow{p_n}\Vsub(P_n)\xrightarrow{e_n}D.
\]
For each $x\in D$, the increasing sequence $(a_n(x))_n$ has supremum $x$.
Since $D$ is finite, it is eventually equal to $x$. As $D$ has only finitely
many elements, there is one index $N$ such that
\[
  a_N(x)=x
  \qquad(x\in D).
\]
Thus $e_Np_N=\id_D$, and the displayed factorization at index $N$ is the
required Scott-continuous retraction.
\end{proof}

\begin{lemma}\label{lem:valuation-convex-connected}
Let $P$ be a finite poset. The Scott topology on $\Vsub(P)$ is coarser than
the relative Euclidean topology. Consequently, every convex subset of
$\Vsub(P)$ is connected in its relative Scott topology.
\end{lemma}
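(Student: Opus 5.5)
The plan is to show that every Scott-open subset of $\Vsub(P)\cong\Delta_{\leq1}(P)$ is open in the Euclidean topology; connectedness of convex sets then follows formally. By \cref{lem:cont-property}\textup{(2)} and the continuity of $\Vsub(P)$, the sets $\mathord{\Uparrow}\mu$ for $\mu\in\Vsub(P)$ form a basis of the Scott topology. So it suffices to show that each such set is Euclidean open. Alternatively, and more directly, take a Scott-open $W$ and a point $\nu\in W$. I want a Euclidean neighbourhood of $\nu$ that is contained in $W$.

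For this I will use the FS approximate identity from \cref{thm:subprobability-FS}. We have $\sup_k\Phi_{2^{-k}}(\nu)=\nu\in W$, so some $t>0$ gives $\Phi_t(\nu)\in W$. I then claim that every $\nu'$ with $\|\nu'-\nu\|_\infty<t/n$ satisfies $\Phi_t(\nu)\stle\nu'$. The argument runs through upper-set coordinates. For each nonempty upper set $U$ we have $|\nu'(U)-\nu(U)|\leq |U|\,t/n\leq t$, hence $\nu'(U)\geq\nu(U)-t$. By \cref{lem:upper-set-estimate}, $\Phi_t(\nu)(U)\leq(\nu(U)-t)^+$. If $\nu(U)\leq t$ this bound is $0$; otherwise it is at most $\nu'(U)$. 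In both cases $\Phi_t(\nu)(U)\leq\nu'(U)$, so $\Phi_t(\nu)\stle\nu'$. Since $W$ is an upper set, $\nu'\in W$. To make the inequality safely strict I would take the radius to be $t/(2n)$. This shows $W$ is Euclidean open, so the Scott topology is coarser than the relative Euclidean topology.

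For the consequence, let $S\subseteq\Vsub(P)$ be convex. With the relative Euclidean topology, $S$ is path-connected through line segments, since segments are Euclidean-continuous paths, and therefore it is connected. The identity map from $S$ with the relative Euclidean topology to $S$ with the relative Scott topology is continuous, because the second topology is coarser. The continuous image of a connected space is connected, so $S$ is connected in its relative Scott topology. If $S=\varnothing$ the claim is trivial.

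I do not expect a real obstacle. The only delicate point is justifying the uniform Euclidean neighbourhood, and \cref{lem:upper-set-estimate} does essentially all the work there. It replaces the need to analyse $\mathord{\Uparrow}\mu$ directly, and the upper-set coordinates are finite linear functionals, so they are Euclidean-continuous.
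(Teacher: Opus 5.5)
Your proof is correct. It shares the paper's skeleton: find an approximant of $\nu$ from below that already lies in the Scott-open set $W$, then show that its principal up-set contains a relative Euclidean neighbourhood of $\nu$, because the stochastic order is cut out by finitely many linear upper-set functionals. The connectedness consequence is argued identically in both.

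The difference is the approximant. The paper uses the scalings $\mu_n=(1-2^{-n})\mu$, which increase to $\mu$ for trivial reasons. It then splits into two cases:
\begin{itemize}
\item when $\mu(U)>0$, one has $\mu_n(U)<\mu(U)$ strictly, and continuity of the finitely many functionals gives the neighbourhood;
\item when $\mu(U)=0$, the constraint $\nu(U)\geq\mu_n(U)=0$ is vacuous.
\end{itemize}
This produces a neighbourhood without an explicit radius.

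You instead take the erosion approximants $\Phi_{2^{-k}}(\nu)$ from \cref{thm:subprobability-FS}. Combined with the quantitative bound of \cref{lem:upper-set-estimate}, they give an explicit sup-norm radius $t/n$. The $(\cdot)^+$ truncation handles the upper sets of small mass, playing the role of the paper's case $\mu(U)=0$. The paper's route is more elementary and independent of the Section~3 machinery, whereas yours reuses heavier tools but yields an explicit, uniform neighbourhood once $t$ is fixed.

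A minor point: the non-strict inequality $\nu'(U)\geq\nu(U)-t$ already gives $\Phi_t(\nu)(U)\leq\nu'(U)$. So shrinking the radius to $t/(2n)$ is unnecessary.
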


\begin{proof}
Let $O\subseteq\Vsub(P)$ be Scott open and let $\mu\in O$. For $n\geq1$,
put
\[
  \mu_n=(1-2^{-n})\mu.
\]
Then $\mu_n\uparrow\mu$, so $\mu_n\in O$ for some $n$. Since $O$ is an
upper set,
\[
  \up\mu_n\subseteq O.
\]
The stochastic order on $\Vsub(P)$ is determined by upper-set coordinates:
\[
  \nu\in\up\mu_n
  \quad\Longleftrightarrow\quad
  \nu(U)\geq\mu_n(U)
  \text{ for every }U\in\Up(P).
\]
If $\mu(U)>0$, then
\[
  \mu(U)>\mu_n(U),
\]
whereas if $\mu(U)=0$, the inequality
$\nu(U)\geq\mu_n(U)=0$ is automatic. Since $\Up(P)$ is finite and every map
$\nu\mapsto\nu(U)$ is linear, $\up\mu_n$ contains a relative Euclidean
neighborhood of $\mu$. Hence every Scott-open subset of $\Vsub(P)$ is
relatively Euclidean open.

A convex subset of a real vector space is Euclidean connected. Its relative
Scott topology is coarser than its relative Euclidean topology, and is
therefore connected as well.
\end{proof}

\begin{proposition}\label{prop:B5-not-FVA}
Let
\[
  B_5=\{\bot,a,b,c,d\},
\]
where
\[
  \bot<a<c,d,
  \qquad
  \bot<b<c,d,
  \qquad
  a\parallel b,
  \qquad
  c\parallel d,
\]
and there are no further comparabilities. Then
\[
  B_5\in\RB\setminus\FVA.
\]
\end{proposition}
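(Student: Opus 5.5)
The statement has two halves: membership in $\RB$ and non-membership in $\FVA$.

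For $B_5\in\RB$, note that $B_5$ is a finite pointed poset. Every element of a finite poset is compact, so $B_5$ is an algebraic domain. For finite posets the identity map has finite image and lies below the identity, so the constant family $(\id_{B_5})$ already witnesses the RB property. Equivalently, $B_5$ is a retract of itself and is trivially bifinite. I would simply cite the finite-image-deflation characterization of RB-domains recalled at the start of \cref{sec:category-size}.

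For $B_5\notin\FVA$, argue by contradiction. Suppose $B_5\in\FVA$. By \cref{lem:finite-FVA-retract}, there are a finite poset $P$ and Scott-continuous maps
\[
  B_5\xrightarrow{p}\Vsub(P)\xrightarrow{e}B_5,
  \qquad e\circ p=\id_{B_5}.
\]
The idea is to exploit the connectedness in \cref{lem:valuation-convex-connected}. Since $a,b\leq c$, we have $p(a),p(b)\stle p(c)$, and likewise $p(a),p(b)\stle p(d)$. Consider the segment
\[
  S=\{(1-t)p(a)+tp(b):0\leq t\leq 1\}.
\]
It is convex, and because the stochastic order is defined by linear upper-set inequalities, every point of $S$ is bounded above by both $p(c)$ and $p(d)$. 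Applying the monotone map $e$, every $e(s)$ with $s\in S$ lies in $\down c\cap\down d=\{\bot,a,b\}$. So $e$ restricts to a Scott-continuous map from $S$ (with its relative Scott topology) into $\{\bot,a,b\}$, and this map hits $a$ at $t=0$ and $b$ at $t=1$.

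Now look at the Scott topology on $\{\bot,a,b\}$. The sets $\{a\}$ and $\{b\}$ are upper sets of the finite poset $B_5$ intersected with this subset; concretely they are $\up a\cap\{\bot,a,b\}$ and $\up b\cap\{\bot,a,b\}$. Their preimages
\[
  O_a=e^{-1}(\up a)\cap S,\qquad O_b=e^{-1}(\up b)\cap S
\]
are disjoint, relatively Scott-open in $S$, and nonempty. They do not cover $S$ unless $\bot$ is avoided, so I need to exclude the value $\bot$.

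This is the main obstacle: a valid convex path from $p(a)$ to $p(b)$ could pass through points mapped to $\bot$, and then connectedness alone gives no contradiction. To rule this out, I would use a pointwise lower bound instead of the segment. Set $m=p(\bot)$, so that $m\stle p(a),p(b)$. The plan is to replace $S$ by a convex set all of whose points lie above something forced to map outside $\bot$.

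The first attempt is the set
\[
  S'=\{\nu:\nu\stle p(c),\ \nu\stle p(d)\}\cap\bigl(\up p(a)\cup\up p(b)\bigr),
\]
but a union of up-sets is not convex. So I would try a different route. By Scott continuity of $e$ at $p(a)$, there is a Scott-open, hence relatively Euclidean-open, neighbourhood $W$ of $p(a)$ with $e(W)\subseteq\up a$. Similarly there is a neighbourhood $W'$ of $p(b)$ with $e(W')\subseteq\up b$.

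The fallback is then the convex set $\up m\cap\down p(c)\cap\down p(d)$. This set is convex and contains $p(a)$ and $p(b)$, and $e$ maps it into $\{\bot,a,b\}$. Hence it splits into the open pieces $e^{-1}(a)$ and $e^{-1}(b)$ together with the closed piece $e^{-1}(\bot)$.

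I expect the real argument to use Scott-openness of $e^{-1}(\up a)$ and $e^{-1}(\up b)$ together with upper-set closure in the direction toward $p(c)$. Concretely, moving from a point mapped to $\bot$ upward toward the midpoint of $p(a)$ and $p(b)$ should force $e$ to take the value $a$ and $b$ simultaneously, which is impossible. Pinning down that midpoint argument is the step I would work out in detail.
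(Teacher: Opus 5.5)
Your argument for $B_5\in\RB$ is fine and agrees with the paper. The non-membership half, however, is not a proof. You identify the obstacle correctly but do not overcome it, and the side of the order you chose gives no contradiction by connectedness alone.

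You work with common \emph{lower} bounds of $p(c)$ and $p(d)$, which $e$ sends into $\down c\cap\down d=\{\bot,a,b\}$. That set is connected in its relative Scott topology, since every relatively Scott-open set containing $\bot$ is the whole set. So the decomposition of $\up m\cap\down p(c)\cap\down p(d)$ into $e^{-1}(a)$, $e^{-1}(b)$ and $e^{-1}(\bot)$ yields nothing. To get a contradiction this way you would need a convex set that meets both $e^{-1}(a)$ and $e^{-1}(b)$ while avoiding $e^{-1}(\bot)$, and nothing in your sketch produces one. The proposed ``midpoint argument'' has no mechanism behind it. A convex combination of $p(a)$ and $p(b)$ need not lie above either of them, so nothing forces $e$ to take the value $a$ or $b$ there, let alone both.

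The missing idea is to work with common \emph{upper} bounds instead. Put
\[
  H=\{\nu\in\Vsub(P):p(a)\stle\nu,\ p(b)\stle\nu\}.
\]
\begin{itemize}
\item $H$ is convex, being cut out by finitely many linear upper-set inequalities. By \cref{lem:valuation-convex-connected} it is therefore connected in its relative Scott topology.
\item Monotonicity of $e$ together with $e\circ p=\id_{B_5}$ gives $e[H]\subseteq\up a\cap\up b=\{c,d\}$.
\item Since $p(c),p(d)\in H$, in fact $e[H]=\{c,d\}$.
\item Because $c$ and $d$ are incomparable maximal elements, $\{c\}=\up c\cap\{c,d\}$ and $\{d\}=\up d\cap\{c,d\}$ are relatively Scott open. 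Hence $\{c,d\}$ is discrete and disconnected.
\end{itemize}
This contradicts the fact that the continuous image $e[H]$ of a connected space is connected.

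The asymmetry of the Scott topology, whose open sets are upper sets, is exactly why the upper-bound side works. Two incomparable maximal elements are separated by open sets, while the least element $\bot$ cannot be separated from $a$ and $b$.
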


\begin{proof}
The poset $B_5$ is finite and pointed. Hence its identity is an idempotent
finite-image deflation, and therefore $B_5\in\RB$.

Suppose, towards a contradiction, that $B_5\in\FVA$. By
\cref{lem:finite-FVA-retract}, there are a finite poset $P$ and
Scott-continuous maps
\[
  B_5\xrightarrow{i}\Vsub(P)\xrightarrow{r}B_5,
  \qquad
  r\circ i=\id_{B_5}.
\]
Consider the set of common upper bounds of $i(a)$ and $i(b)$,
\[
  H=
  \{\nu\in\Vsub(P):i(a)\leq\nu\text{ and }i(b)\leq\nu\}.
\]
The set $H$ is convex: each of its defining conditions is a finite family of
linear inequalities in the upper-set coordinates. It is therefore connected
in its relative Scott topology by
\cref{lem:valuation-convex-connected}.

For every $\nu\in H$, monotonicity of $r$ gives
\[
  a=r(i(a))\leq r(\nu),
  \qquad
  b=r(i(b))\leq r(\nu).
\]
The only common upper bounds of $a$ and $b$ in $B_5$ are $c$ and $d$.
Consequently,
\[
  r[H]\subseteq\{c,d\}.
\]
Conversely, $i(c),i(d)\in H$ and
\[
  r(i(c))=c,
  \qquad
  r(i(d))=d,
\]
so
\[
  r[H]=\{c,d\}.
\]
Since $c$ and $d$ are incomparable maximal elements, $\{c,d\}$ is discrete,
and hence disconnected, in its relative Scott topology. This contradicts
the fact that the continuous image of the connected space $H$ under
$r|_H$ must be connected. Therefore $B_5\notin\FVA$.
\end{proof}

\begin{theorem}\label{thm:category-size-count}
We have 
\(
  \omega\BC
  \subsetneq
  \FVA
  \subsetneq
  \omega\mathbf{FS}_{\bot}.
\)
Moreover,  $\FVA$ and $\omega\RB_{\bot}$  are
incomparable:
\[
  \FVA\nsubseteq\omega\RB_{\bot},
  \qquad
  \omega\RB_{\bot}\nsubseteq\FVA.
\]
\end{theorem}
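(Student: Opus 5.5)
The four inclusion/non-inclusion claims reduce to results already proved, plus two witnesses to strictness.

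For the inclusions, $\omega\BC\subseteq\FVA$ is \cref{prop:omega-bc-fva}, and $\FVA\subseteq\omega\mathbf{FS}_{\bot}$ is \cref{thm:main}(i).

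For non-inclusions:
\begin{itemize}
\item \textbf{$\omega\RB_\bot\nsubseteq\FVA$.} \Cref{prop:B5-not-FVA} gives $B_5\in\RB\setminus\FVA$. Since $B_5$ is finite and pointed, it is countably based, so $B_5\in\omega\RB_\bot$.
\item \textbf{Strictness of $\FVA\subsetneq\omega\mathbf{FS}_\bot$.} The same $B_5$ works, because every pointed RB-domain is FS. Hence $B_5\in\omega\mathbf{FS}_\bot\setminus\FVA$.
\end{itemize}

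The remaining two claims, $\FVA\nsubseteq\omega\RB_\bot$ and strictness of $\omega\BC\subsetneq\FVA$, both need an object of $\FVA$ that is not RB. I would use a single witness $\Vsub(P)$ with $P$ finite. Such a $\Vsub(P)$ lies in $\FVA$ by \cref{thm:FVA-property}(1). If it is not an RB-domain, then it is not a bc-domain either, since $\BC\subseteq\RB$. So that one example settles both claims.

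The natural choice is to invoke the authors' earlier result \cite{ChenKouLyu2026} that RB-domains are not closed under the probabilistic powerdomain. I would extract from it an explicit finite poset $Q$ with $\Vone(Q)\notin\RB$. I expect this to be the main obstacle, because I must check that the cited counterexample is finite, or reduce it to a finite one. If the counterexample has a least element, \cref{prop:probability-positive} gives $\Vone(Q)\cong\Vsub(Q\setminus\{\bot\})$, which lies in $\FVA$. If it does not, I would add a bottom: $\Vone(Q)$ is a retract of $\Vone(Q_\bot)$, and RB is retract-closed, so $\Vone(Q_\bot)\cong\Vsub(Q)$ is not RB either. Retract-closure of RB is exactly what makes this transfer work.

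As a fallback, if no finite counterexample is available from the citation, I would look for a simpler witness that $\FVA$ is not contained in $\omega\BC$. The candidate is $\Vsub(P)$ for a two-element antichain $P$. Its pair of Dirac valuations $\delta_x,\delta_y$ is bounded above by $\delta_x+\delta_y$ when total mass permits, so I would try to exhibit a bounded pair with no least upper bound. This only gives $\omega\BC\subsetneq\FVA$, which is the easy half.

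For the genuinely non-RB claim, the fallback is a direct argument in the style of \cref{prop:B5-not-FVA}. I would use convexity and connectedness via \cref{lem:valuation-convex-connected} to show that finite-image deflations cannot approximate the identity on some $\Vsub(P)$. That step is delicate, so I would rely on \cite{ChenKouLyu2026} first.
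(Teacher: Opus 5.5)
Your decomposition matches the paper's proof.
\begin{itemize}
\item The two inclusions come from \cref{prop:omega-bc-fva} and \cref{thm:main}\textup{(i)}.
\item The five-element poset $B_5$ of \cref{prop:B5-not-FVA} gives $\omega\RB_{\bot}\nsubseteq\FVA$. As you note, it also gives strictness of $\FVA\subsetneq\omega\mathbf{FS}_{\bot}$, which the paper leaves implicit.
\item A single finite valuation space outside $\RB$ settles the other two claims.
\end{itemize}
The witness the paper takes from \cite{ChenKouLyu2026} is the diamond $D_4=\{\bot<a,b<\top\}$. Its Hasse graph is a $4$-cycle rather than a tree. Since $D_4$ is pointed, your first branch applies and gives $\Vone(D_4)\cong\Vsub(\{a,b,\top\})\in\FVA\setminus\RB$.

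Your contingency steps, however, are false, and they fail exactly where the choice of witness matters. If $Q$ has no least element, $\Vone(Q)$ is \emph{not} a retract of $\Vone(Q_\bot)\cong\Vsub(Q)$. A retract of a pointed dcpo is pointed: if $r\circ s=\id$, then $r(\Dirac_\bot)\leq r(s(x))=x$ for every $x$. But $\Vone(Q)$ has distinct minimal elements $\Dirac_m,\Dirac_{m'}$ for $m\neq m'$ in $\Min(Q)$, by the proof of \cref{prop:probability-negative}. In that case $\Vone(Q)\notin\RB$ is only this failure of pointedness and says nothing about $\Vsub(Q)$.

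For instance, take a two-element antichain $Q=\{x,y\}$. Then $\Vsub(Q)$ is the triangle $\{(s,t)\in\R^2_{\geq0}:s+t\leq1\}$ with the coordinatewise order, which is a countably based bc-domain. The same example shows your fallback for $\omega\BC\subsetneq\FVA$ cannot work. Moreover, $\Dirac_x$ and $\Dirac_y$ are not even bounded there, since an upper bound would need total mass $2$. So the witness must be a pointed finite poset whose Hasse graph is not a tree, such as $D_4$.
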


\begin{proof}
The first inclusion is \cref{prop:omega-bc-fva}, and the second follows from
\cref{thm:main}\textup{(i)}.

Let
\[
  D_4=\{\bot,a,b,\top\},
  \qquad
  \bot<a,b<\top,
  \qquad
  a\parallel b,
\]
be the four-element diamond. By
\cref{prop:probability-positive,lem:generators},
\[
  \Vone(D_4)
  \cong
  \Vsub(D_4\setminus\{\bot\})
  \in\FVA.
\]
On the other hand, the finite-poset classification of
\cite{ChenKouLyu2026} gives
\[
  \Vone(D_4)\notin\RB,
\]
thus $\Vone(D_4)\notin\omega\RB_{\bot}$, because the undirected Hasse graph of $D_4$ is not a tree. Since every
bc-domain is an RB-domain, this also proves
\[
  \omega\BC\subsetneq\FVA
  \qquad\text{and}\qquad
  \FVA\nsubseteq\omega\RB_{\bot}.
\]

Finally, \cref{prop:B5-not-FVA} gives
\[
  B_5\in\omega\RB_{\bot}\setminus\FVA,
\]
and therefore $\omega\RB_{\bot}\nsubseteq\FVA$.
\end{proof}

Without countable-basis restriction, the following is also true.
\begin{corollary}
\(
  \BC
  \subsetneq
  {\bf FVA}
  \subsetneq
  \mathbf{FS}_{\bot}.
\)
Moreover,  ${\bf FVA}$ and $\RB$  are
incomparable:
\[
  {\bf FVA}\nsubseteq\RB,
  \qquad
  \RB\nsubseteq{\bf FVA}.
\]
\end{corollary}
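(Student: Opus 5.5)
The corollary is the directed analogue of \cref{thm:category-size-count}. I would prove it by running the same four arguments with countability removed, relying on \cref{rem:directed-FVA} for the closure properties of $\mathbf{FVA}$ (retracts, saturation, pointedness, FS-ness).

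\emph{First inclusion $\BC\subseteq\mathbf{FVA}$.} Take a bc-domain $D$ and a directed family $(f_i)_{i\in I}$ of finite-image Scott-continuous deflations with supremum $\id_D$; this exists because $\BC\subseteq\RB$. Unlike \cref{prop:omega-bc-fva}, I would not extract a sequence. For each $i$, let $C_i$ be the closure of $f_i[D]$ under nonempty infima. As before, $C_i$ is a finite bc-domain, $f_i=e_i\circ p_i$ factors through $C_i$, and $C_i\in\omega\FVA\subseteq\mathbf{FVA}$ by \cref{lem:finite-bc-fva}. Directed saturation $\mathcal F(\mathbf{FVA})=\mathbf{FVA}$ from \cref{rem:directed-FVA} then gives $D\in\mathbf{FVA}$.

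\emph{Second inclusion $\mathbf{FVA}\subseteq\mathbf{FS}_\bot$.} Pointedness follows as in \cref{prop:C-saturation}, since $e_i(\bot)\leq a_i(x)\leq x$. FS-ness comes from the directed square refinement of \cref{lem:square}, which the remark says survives without the countable cofinal sequence.

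\emph{Strictness and incomparability.} Here I reuse the concrete witnesses.
\begin{itemize}
\item $\Vone(D_4)\cong\Vsub(D_4\setminus\{\bot\})$ lies in $\mathbf{FVA}$ and not in $\RB$ by \cite{ChenKouLyu2026}. This gives $\mathbf{FVA}\nsubseteq\RB$, and hence $\BC\neq\mathbf{FVA}$, since $\BC\subseteq\RB$.
\item $B_5\in\RB$. To show $B_5\notin\mathbf{FVA}$, I need the directed version of \cref{lem:finite-FVA-retract}. For finite $D$ with directed family $(a_i)$, each $x$ has an index $i_x$ with $a_{i_x}(x)=x$, because the directed set $\{a_i(x)\}$ in a finite poset has a largest element, which equals $x$. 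Directedness then gives one $i$ above all $i_x$, and for that $i$ we get $a_i\geq a_{i_x}$, so $a_i(x)=x$ for all $x$. Hence $e_ip_i=\id_D$, and the connectedness argument of \cref{prop:B5-not-FVA} applies verbatim. This yields $\RB\nsubseteq\mathbf{FVA}$.
\item For $\mathbf{FVA}\neq\mathbf{FS}_\bot$, note $B_5\in\RB\subseteq\mathbf{FS}$ and $B_5$ is pointed, so $B_5\in\mathbf{FS}_\bot\setminus\mathbf{FVA}$.
\end{itemize}

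The main obstacle is the directed form of \cref{lem:finite-FVA-retract}, because there we can no longer argue with an increasing sequence becoming eventually constant. The bounding-index argument above handles it. The remaining steps depend only on the closure properties already claimed in \cref{rem:directed-FVA}.
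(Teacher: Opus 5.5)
Your proposal is correct and takes essentially the route the paper intends: the paper gives no separate proof and treats the corollary as the directed analogue of \cref{thm:category-size-count} via \cref{rem:directed-FVA}, with the same witnesses $\Vone(D_4)$ and $B_5$. Your bounding-index argument correctly supplies the directed form of \cref{lem:finite-FVA-retract}, which the paper leaves implicit. In the bc step you could also bypass saturation by factoring each $f_i$ directly through $\Vone(C_i)\cong\Vsub(C_i\setminus\{\bot\})$ using the retraction $r_{C_i}$ from \cref{lem:finite-bc-fva}.
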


\section{Conclusion and further research directions}
\label{sec:conclusion}

We introduced the class \(\FVA\) of finite-valuation approximable domains and
proved that it is a full Cartesian closed subcategory of \(\DCPO\), contained
in the pointed countably based FS-domains and closed under both \(\Vsub\) and
\(\Vone\).  The proof combines two kinds of finite approximation.  On finite
posets, the erosion maps provide explicit order-preserving FS approximants.
At the categorical level, factorization and saturation pass finite valuation
data to general domains.  The finite-dimensional polytope construction then
produces monotone stochastic kernels, which are lifted separately to
valuation powerdomains by Kleisli extension and to function spaces by finite
sampling and barycentric reconstruction.

We note that the Jung--Tix problem for FS-domains remains open: that is, whether the category  of FS-domains is   closed under  $\Vsub$.

The factorization viewpoint suggests several directions for further work.  A
first problem is to seek intrinsic order-theoretic characterizations of
finite-valuation approximability and to clarify more precisely its relation
to the established classes of FS-, RB-, and bc-domains.  A second direction
is to determine how far the finite-kernel and saturation methods extend to
other powerdomain constructions and to further semantic structures arising
in probabilistic and higher-order computation.  These questions concern the
scope of the method developed here and do not affect the closure results
proved above.

\clearpage
\appendix
\renewcommand{\thetheorem}{\Alph{section}.\arabic{theorem}}
\aliascntresetthe{proposition}
\aliascntresetthe{lemma}
\aliascntresetthe{corollary}
\aliascntresetthe{definition}
\aliascntresetthe{example}
\aliascntresetthe{remark}
\section{A concrete construction of the erosion map on \(B_5\)}
\label{app:erosion-B5}
\label{subsec:construction-B5}
\setcounter{figure}{0}
\renewcommand{\thefigure}{\Alph{section}.\arabic{figure}}

We give the calculation of the erosion map from Section~\ref{sec:construction} on
the five-element poset $B_5$ as a concrete example.

\begin{figure}[htbp]
\centering
\begin{tikzpicture}[scale=1.1,
  every node/.style={font=\small},
  point/.style={circle,fill=black,inner sep=1.2pt}
]
  \node[point,label=below:\(\bot\)] (bot) at (0,0) {};
  \node[point,label=left:\(a\)]    (a)   at (-1.2,1.4) {};
  \node[point,label=right:\(b\)]   (b)   at (1.2,1.4) {};
  \node[point,label=left:\(c\)]    (c)   at (-1.2,2.8) {};
  \node[point,label=right:\(d\)]   (d)   at (1.2,2.8) {};

  \draw (bot) -- (a);
  \draw (bot) -- (b);
  \draw (a) -- (c);
  \draw (a) -- (d);
  \draw (b) -- (c);
  \draw (b) -- (d);
\end{tikzpicture}
\caption{The poset \(B_5\).}
\label{fig:B5}
\end{figure}

\begin{example}\label{ex:erosion-B5}
Let
\(
  B_5=\{\bot,a,b,c,d\},
\)
where
\[
  \bot<a<c,d,\qquad
  \bot<b<c,d,\qquad
  a\parallel b,\qquad
  c\parallel d,
\]
and there are no further comparabilities.  Thus \(n=|B_5|=5\).

Consider the subprobabilistic valuation
\[
  p=\frac1{12}\delta_c+\frac16\delta_d+\frac1{12}\delta_a+\frac1{12}\delta_b.
\]
Its total mass is \(5/12\), so \(p\in\Delta_{\le 1}(B_5)\).  This example is
chosen so that the first frontier consists of two maximal points with unequal
masses, while the third frontier consists of two maximal points with equal
masses.

We compute the recursion for \Cref{eq:recursive-interval} explicitly.

\medskip
\noindent
\textbf{Step 0.}  Set
\[
  q_0=p,\qquad s_0=0.
\]
Since
\[
  \supp(q_0)=\{a,b,c,d\},
\]
its maximal elements are
\[
  A_0=A(q_0)=\{c,d\}.
\]
Now
\[
  \down A_0=\down\{c,d\}=B_5,
\]
so
\[
  c_0=c(A_0)=(5+1)^{\,5-|\down A_0|}
  =6^{5-5}=1.
\]
Moreover,
\[
  (q_0)_c=\frac1{12},\qquad (q_0)_d=\frac16,
\]
hence
\[
  \tau_0=\min_{x\in A_0}\frac{(q_0)_x}{c_0}
  =\min\Bigl\{\frac1{12},\frac16\Bigr\}
  =\frac1{12}.
\]
Therefore, for \(0\le u\le\tau_0=\frac1{12}\),
\[
  \phi_p(s_0+u)
  =q_0-u(\delta_c+\delta_d).
\]
Equivalently,
\[
  \phi_p(u)
  =\Bigl(\frac1{12}-u\Bigr)\delta_c
   +\Bigl(\frac16-u\Bigr)\delta_d
   +\frac1{12}\delta_a+\frac1{12}\delta_b,
  \qquad 0\le u\le \frac1{12}.
\]
At the end of the first step,
\[
  s_1=s_0+\tau_0=\frac1{12},
\]
and
\[
  q_1=\phi_p(s_1)
  =\frac1{12}\delta_d+\frac1{12}\delta_a+\frac1{12}\delta_b.
\]
Thus the point \(c\) has disappeared, while \(d\) remains.

\medskip
\noindent
\textbf{Step 1.}  Now
\[
  \supp(q_1)=\{a,b,d\},
\]
so the unique maximal element is
\[
  A_1=A(q_1)=\{d\}.
\]
Since
\[
  \down A_1=\down\{d\}=\{\bot,a,b,d\},
\]
we obtain
\[
  c_1=c(A_1)=(5+1)^{\,5-|\down A_1|}
  =6^{5-4}=6.
\]
Also,
\[
  (q_1)_d=\frac1{12},
\]
so
\[
  \tau_1=\min_{x\in A_1}\frac{(q_1)_x}{c_1}
  =\frac{1/12}{6}
  =\frac1{72}.
\]
Hence, for \(0\le u\le\tau_1=\frac1{72}\),
\[
  \phi_p(s_1+u)=q_1-6u\,\delta_d.
\]
That is,
\[
  \phi_p\Bigl(\frac1{12}+u\Bigr)
  =\Bigl(\frac1{12}-6u\Bigr)\delta_d
   +\frac1{12}\delta_a+\frac1{12}\delta_b,
  \qquad 0\le u\le \frac1{72}.
\]
At the end of the second step,
\[
  s_2=s_1+\tau_1=\frac1{12}+\frac1{72}=\frac7{72},
\]
and
\[
  q_2=\phi_p(s_2)
  =\frac1{12}\delta_a+\frac1{12}\delta_b.
\]
Thus the second maximal point \(d\) has now disappeared.

\medskip
\noindent
\textbf{Step 2.}  We now have
\[
  \supp(q_2)=\{a,b\},
\]
so
\[
  A_2=A(q_2)=\{a,b\}.
\]
Since
\[
  \down A_2=\down\{a,b\}=\{\bot,a,b\},
\]
it follows that
\[
  c_2=c(A_2)=(5+1)^{\,5-|\down A_2|}
  =6^{5-3}=36.
\]
Because
\[
  (q_2)_a=(q_2)_b=\frac1{12},
\]
we get
\[
  \tau_2=\min_{x\in A_2}\frac{(q_2)_x}{c_2}
  =\frac{1/12}{36}
  =\frac1{432}.
\]
Therefore, for \(0\le u\le\tau_2=\frac1{432}\),
\[
  \phi_p(s_2+u)=q_2-36u(\delta_a+\delta_b).
\]
Equivalently,
\[
  \phi_p\Bigl(\frac7{72}+u\Bigr)
  =\Bigl(\frac1{12}-36u\Bigr)\delta_a
   +\Bigl(\frac1{12}-36u\Bigr)\delta_b,
  \qquad 0\le u\le \frac1{432}.
\]
At the end of the third step,
\[
  s_3=s_2+\tau_2=\frac7{72}+\frac1{432}=\frac{43}{432},
\]
and
\[
  q_3=\phi_p(s_3)=0.
\]

\medskip
\noindent
Thus the recursive construction terminates after three nonzero steps:
\[
  q_0
  \longrightarrow q_1
  \longrightarrow q_2
  \longrightarrow q_3=0.
\]
The first frontier is \(\{c,d\}\), but since
\[
  (q_0)_c=\frac1{12}\neq\frac16=(q_0)_d,
\]
the two maximal points \(c\) and \(d\) are not removed simultaneously:
\(c\) disappears in the first step, and \(d\) disappears only in the second
step.  By contrast, in the third step the frontier is \(\{a,b\}\) and
\[
  (q_2)_a=(q_2)_b=\frac1{12},
\]
so the two remaining maximal points are removed simultaneously.

For convenience, the data of the recursion are summarized in the table
\[
\begin{array}{c|c|c|c|c}
j & q_j & A_j & c_j & \tau_j\\
\hline
0 &
\dfrac1{12}\delta_c+\dfrac16\delta_d+\dfrac1{12}\delta_a+\dfrac1{12}\delta_b
& \{c,d\} & 1 & \dfrac1{12}\\[1.2ex]
1 &
\dfrac1{12}\delta_d+\dfrac1{12}\delta_a+\dfrac1{12}\delta_b
& \{d\} & 6 & \dfrac1{72}\\[1.2ex]
2 &
\dfrac1{12}\delta_a+\dfrac1{12}\delta_b
& \{a,b\} & 36 & \dfrac1{432}\\[1.2ex]
3 & 0 & \varnothing & - & -
\end{array}
\]
and the corresponding break points are
\[
  s_0=0,\qquad s_1=\frac1{12},\qquad
  s_2=\frac7{72},\qquad s_3=\frac{43}{432}.
\]

Hence \(\phi_p(t)=\Phi_t(p)\) is given explicitly by
\[
  \phi_p(t)=
  \begin{cases}
  \left(\dfrac1{12}-t\right)\delta_c
  +\left(\dfrac16-t\right)\delta_d
  +\dfrac1{12}\delta_a+\dfrac1{12}\delta_b,
  & 0\le t\le \dfrac1{12},\\[1.2ex]
  \left(\dfrac1{12}-6\left(t-\dfrac1{12}\right)\right)\delta_d
  +\dfrac1{12}\delta_a+\dfrac1{12}\delta_b,
  & \dfrac1{12}\le t\le \dfrac7{72},\\[1.6ex]
  \left(\dfrac1{12}-36\left(t-\dfrac7{72}\right)\right)\delta_a
  +\left(\dfrac1{12}-36\left(t-\dfrac7{72}\right)\right)\delta_b,
  & \dfrac7{72}\le t\le \dfrac{43}{432},\\[1.6ex]
  0,
  & t\ge \dfrac{43}{432}.
  \end{cases}
\]
\end{example}

\section{\(\FVA\) is closed under the extended probabilistic powerdomain}
\label{app:extended-valuations}

For a dcpo $D$, the \emph{extended probabilistic powerdomain}
$\Vext(D)$ consists of all continuous valuations
$\nu:\sigma(D)\to[0,\infty]$, without any restriction on total mass,
ordered pointwise. It is a dcpo, with directed suprema computed
pointwise on Scott-open sets~\cite{GoubaultLarrecqJiaTheron2023}.
In particular, valuations of infinite total mass are allowed.
We use $0\cdot\infty=0$ in nonnegative scalar multiplication.
This appendix proves that
$\Vext(D)\in\FVA$ whenever $D\in\FVA$. For finite posets, the
approximating maps are defined directly on all extended valuations by a
finite minimum over upper sets. The general result then follows from the
finite kernels of \cref{prop:finite-kernels} and the saturation property
of \cref{thm:FVA-property}\textup{(2)}.

\subsection{Finite posets}
\label{subsec:extended-finite-posets}

For a finite poset $P$, its Scott-open sets are precisely its upper sets.
For $0<R<\infty$, write
\[
  \mathcal V_{\leq R}(P)
  =\{\nu\in\Vext(P):\nu(P)\leq R\}.
\]
Directed suprema in both $\Vext(P)$ and
$\mathcal V_{\leq R}(P)$ are computed pointwise on upper sets.
Positive finite scalar multiplication preserves these suprema; in
particular, multiplication by $R$ is an order isomorphism from
$\Vsub(P)$ onto $\mathcal V_{\leq R}(P)$.
We use the convention $0\cdot\infty=0$.

We shall use one elementary identity. If $J$ is a nonempty finite set,
$I$ is a directed poset, and $t_{i,j}\in[0,\infty]$ is nondecreasing in
$i$ for each $j\in J$, then
\begin{equation}\label{eq:extended-finite-min-sup}
  \sup_{i\in I}\min_{j\in J}t_{i,j}
  =\min_{j\in J}\sup_{i\in I}t_{i,j}.
\end{equation}
Indeed, the inequality from left to right follows pointwise. Let $L$
be the right-hand side. If $L=0$, the reverse inequality is immediate.
Otherwise, for any finite $r$ with $0\leq r<L$, choose $i_j$ such that
$t_{i_j,j}>r$ for each $j$. There is a common upper bound $i$ for the
finitely many $i_j$, so $\min_jt_{i,j}>r$. Taking the supremum over
such $r$ proves the reverse inequality, also when $L=\infty$.
Finite sums likewise commute with a common directed supremum in
$[0,\infty]$: separate indices may first be chosen for the finitely
many summands and then replaced by a common upper bound.

\Needspace{7\baselineskip}
\begin{theorem}\label{thm:finite-extended-FVA}
For every finite poset $P$, the extended probabilistic powerdomain
$\Vext(P)$ belongs to $\FVA$.
\end{theorem}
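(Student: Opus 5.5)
The plan is to build an increasing sequence of finite-valuation factorable maps $a_R:\Vext(P)\to\Vext(P)$, indexed by $R=1,2,3,\ldots$, each factoring through $\mathcal V_{\leq R}(P)\cong\Vsub(P)$, with $a_R\leq a_{R+1}\leq\id$ and $\sup_R a_R=\id$. Since $\Vsub(P)\in\FVA$ and $\FVA$ is saturated by \cref{thm:FVA-property}\textup{(2)}, such a sequence already gives $\Vext(P)\in\FVA$. We need not factor through $\Vsub$ of a finite poset literally; an $\FVA$ object suffices. The embedding $e_R:\mathcal V_{\leq R}(P)\hookrightarrow\Vext(P)$ is the inclusion. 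It is Scott-continuous because directed suprema on both sides are computed pointwise on upper sets.

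The projection $p_R:\Vext(P)\to\mathcal V_{\leq R}(P)$ should be a truncation that is monotone, Scott-continuous, below the identity, increasing in $R$, and equal to $\nu$ whenever $\nu(P)\leq R$. Following the hint ``finite minimum over upper sets,'' I will work with upper-set masses. For $\nu\in\Vext(P)$ and an upper set $U$, put
\[
  m_R(\nu)(U)=\min_{V\in\Up(P),\,V\supseteq U}\bigl(\min\{\nu(V),R\}\bigr).
\]
This simplifies to $\min\{\nu(U),R\}$, but that function need not be modular. Instead, I will first make the atomic coefficients finite by capping $\nu_x=\nu(\up x)-\nu(\up x\setminus\{x\})$; with infinite masses this difference is ill-defined. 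So I will use the safer choice: take the total-mass scaling
\[
  p_R(\nu)=\min\Bigl\{1,\ \min_{\varnothing\neq U\in\Up(P)}\frac{R}{\nu(U)}\Bigr\}\cdot\nu .
\]
The inner minimum is attained at $U=P$, but when $\nu(P)=\infty$ the scale is $0$ under the convention $0\cdot\infty=0$. Since scalar multiplication by a factor in $[0,1]$ keeps $\nu$ a continuous valuation and lies below $\nu$, we have $p_R(\nu)\leq\nu$ and $p_R(\nu)(P)\leq R$. Scaling is not monotone in $\nu$, though, because increasing $\nu(P)$ shrinks the factor. So I will replace scaling by a coordinatewise cap along a fixed linear extension. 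Enumerate $P$ as $x_1,\dots,x_n$ from top to bottom, so that each $\up x_k\subseteq\{x_1,\dots,x_k\}$. Then define the atomic masses of $p_R(\nu)$ greedily from the top: the mass on the upper set $W_k=\{x_1,\dots,x_k\}$ is $\min\{\nu(W_k),R\}$ whenever $W_k$ is upper. This again fails for non-upper $W_k$.

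The step I expect to be the main obstacle is choosing a truncation that is simultaneously monotone, modular and Scott-continuous. The fallback is to define $p_R(\nu)(U)=\min_{V\in\Up(P)}\bigl(\nu(U\cap V)+\min\{\nu(U\setminus V\text{-generated part}),\dots\}\bigr)$, using an inf-convolution of $\nu$ with the cap $R\cdot\mathbf 1$. Concretely, I would set
\[
  p_R(\nu)(U)=\min_{W\in\Up(P),\,W\subseteq U}\bigl(\nu(W)+R\,[W\neq U]\bigr)\wedge\nu(U).
\]
I will then verify modularity directly on the finite lattice $\Up(P)$. Monotonicity in $\nu$ is clear, since this is a minimum of monotone terms. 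Scott continuity follows from \eqref{eq:extended-finite-min-sup}, because the expression is a finite min of finite sums of directed suprema. The bound $p_R\leq\id$ holds because the term $W=U$ gives $\nu(U)$. Monotonicity in $R$ is clear. Boundedness holds because some choice of $W$ gives a value at most $R+$ (a finite mass); to guarantee this we may need the min over $W=\varnothing$, giving $p_R(\nu)(P)\leq R$. Finally, $\sup_R p_R(\nu)(U)=\nu(U)$ by \eqref{eq:extended-finite-min-sup}, since for $W\neq U$ the terms go to $\infty$.

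If modularity of this inf-convolution fails on general $\Up(P)$, I will instead realize $\Vext(P)$-truncation through finite kernels. In that approach I would cap each atomic coordinate at $R$ in the order of a linear extension, and define infinite atoms via $\nu(\up x)=\infty$ as atoms of mass $R$ at maximal points of $\{x:\nu(\up x)=\infty\}$. I would then check monotonicity via \cref{lem:monotone-integrals}.
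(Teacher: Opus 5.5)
There is a genuine gap: you never reach a map that is both a valuation and monotone, and the specification you set for $p_R$ cannot be met.

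\textbf{Your candidates fail.} Your final inf-convolution collapses. For $U\neq\varnothing$ the candidate $W=\varnothing$ contributes $R$, and every other $W\subsetneq U$ contributes at least $R$. Hence $p_R(\nu)(U)=\min\{\nu(U),R\}$, which is exactly the set function you already suspected is not modular. On the antichain $P=\{x,y\}$ with $\nu=\delta_x+\delta_y$ and $R=1$ it gives $p_R(\nu)(\{x\})+p_R(\nu)(\{y\})=2\neq1=p_R(\nu)(P)+p_R(\nu)(\varnothing)$.

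More fundamentally, no monotone $p_R:\Vext(P)\to\mathcal V_{\leq R}(P)$ can fix every $\nu$ with $\nu(P)\leq R$. On the same antichain, monotonicity gives $p_R(\delta_x+\delta_y)\geq p_R(\delta_x)=\delta_x$ and likewise $p_R(\delta_x+\delta_y)\geq\delta_y$. By modularity, $p_R(\delta_x+\delta_y)$ then has total mass at least $2>R$.

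Your fallback of capping atomic coordinates at $R$ is not monotone either. On the chain $\bot<\top$ one has $\tfrac12\delta_\bot+\tfrac32\delta_\top\stle2\delta_\top$. The capped valuations are $\tfrac12\delta_\bot+\delta_\top$ and $\delta_\top$, with total masses $\tfrac32>1$.

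\textbf{The missing idea.} You need to give up fixing a mass-bounded subspace, and use point-dependent caps that grow geometrically up the poset, after a preprocessing step that pushes a small fraction of mass downward. The paper sets
\[
  S_m\nu=\Bigl(1-\tfrac nm\Bigr)\nu+\tfrac1m\sum_{p\in P}\nu(\up p)\delta_p .
\]
This satisfies $S_m\leq S_{m+1}\leq\id$ and $\sup_mS_m=\id$, and its atoms obey $b_q\leq m\,b_p$ for $p<q$. It takes thresholds $c_m(p)=m^{|\down p|}$, so that $c_m(q)\geq m\,c_m(p)$ for $p<q$, and defines
\[
  a_m(\nu)(U)=\min_{W\in\Up(P),\,W\subseteq U}\Bigl((S_m\nu)(W)+\sum_{p\in U\setminus W}c_m(p)\Bigr).
\]
Unlike your constant penalty $R$, this penalty is additive over the removed points. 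The two growth conditions make $T=\{p:b_p<c_m(p)\}$ an upper set, so the minimum is attained at $W=U\cap T$ and equals $\sum_{p\in U}\min\{b_p,c_m(p)\}$.

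The two forms do different jobs:
\begin{itemize}
\item The atomic form proves modularity. This requires showing that every extended valuation, including ones of infinite mass, has an atomic representation with coefficients in $[0,\infty]$.
\item The minimum form gives monotonicity, $a_m\leq a_{m+1}\leq\id$, Scott continuity and $\sup_ma_m=\id$, via the finite min--sup exchange.
\end{itemize}
Dividing by $R_m=\sum_pc_m(p)$ then factors $a_m$ through $\Vsub(P)$.
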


\begin{proof}
If $P=\varnothing$, then $\Vext(P)=\Vsub(P)$ is the one-point
dcpo. Its identity factors through $\Vsub(P)$, so the constant
sequence of identity maps satisfies \cref{def:FVA}.
Assume henceforth that $P\neq\varnothing$, and put
\[
  n=|P|,
  \qquad d(p)=|\down p| \quad(p\in P).
\]
Since $\down p$ is a proper subset of $\down q$ whenever $p<q$,
\begin{equation}\label{eq:extended-rank-growth}
  1\leq d(p)\leq n,
  \qquad p<q\ \Longrightarrow\ d(q)\geq d(p)+1.
\end{equation}
All subsequent indices $m$ are integers satisfying $m>n$.

\emph{The preliminary maps.}
Define
\begin{equation}\label{eq:extended-simple-preprocessing}
  S_m\nu
  =\left(1-\frac nm\right)\nu
   +\frac1m\sum_{p\in P}\nu(\up p)\delta_p.
\end{equation}
Each summand is a continuous valuation, including when
$\nu(\up p)=\infty$. Explicitly, for $U\in\sigma(P)$,
\begin{equation}\label{eq:extended-preprocessing-open}
  (S_m\nu)(U)
  =\left(1-\frac nm\right)\nu(U)
   +\frac1m\sum_{p\in U}\nu(\up p).
\end{equation}
All coefficients are nonnegative, and $1-n/m>0$.
The formula shows that $S_m$ is monotone. If $(\nu_i)_{i\in I}$ is
directed with supremum $\nu$, pointwise computation of suprema and
commutation with the finite sum give
\[
  (S_m\nu)(U)
  =\left(1-\frac nm\right)\sup_i\nu_i(U)
    +\frac1m\sum_{p\in U}\sup_i\nu_i(\up p)
  =\sup_i(S_m\nu_i)(U).
\]
Thus $S_m:\Vext(P)\to\Vext(P)$ is Scott-continuous.

For $p\in U$, upperness gives $\up p\subseteq U$, whence
\[
  \sum_{p\in U}\nu(\up p)\leq n\nu(U).
\]
It follows from \eqref{eq:extended-preprocessing-open} that
$S_m\nu\stle\nu$. Moreover, direct expansion gives
\begin{equation}\label{eq:extended-preprocessing-increasing}
  S_{m+1}\nu
  =\frac m{m+1}S_m\nu+\frac1{m+1}\nu.
\end{equation}
Using $S_m\nu\stle\nu$ in this equality yields
$S_m\nu\stle S_{m+1}\nu$. Finally,
\[
  \left(1-\frac nm\right)\nu(U)
  \leq(S_m\nu)(U)\leq\nu(U).
\]
If $\nu(U)<\infty$, the lower bound tends to $\nu(U)$ as
$m\to\infty$. If $\nu(U)=\infty$, that lower bound is already
infinite for every $m>n$. Consequently
\begin{equation}\label{eq:extended-preprocessing-limit}
  S_m\leq S_{m+1}\leq\id_{\Vext(P)},
  \qquad \sup_{m>n}S_m=\id_{\Vext(P)}.
\end{equation}

\emph{The finite-minimum formula.}
Put
\begin{equation}\label{eq:extended-simple-thresholds}
  c_m(p)=m^{d(p)},
  \qquad R_m=\sum_{p\in P}c_m(p).
\end{equation}
Thus $0<R_m<\infty$, and
\begin{equation}\label{eq:extended-simple-threshold-growth}
  p<q\ \Longrightarrow\ c_m(q)\geq m c_m(p).
\end{equation}
For $\nu\in\Vext(P)$ and $U\in\sigma(P)$, define the set
function
\begin{equation}\label{eq:extended-minimum-approximant}
  a_m(\nu)(U)
  =\min_{\substack{W\in\sigma(P)\\ W\subseteq U}}
    \left((S_m\nu)(W)+\sum_{p\in U\setminus W}c_m(p)\right).
\end{equation}
There are only finitely many candidates, and $W=\varnothing$ is
always allowed. Hence this expression is finite, even if $\nu$ has
infinite total mass. We next prove that it defines a valuation;
this does not follow merely from taking a finite minimum.

\emph{Atomic representations and the valuation property.}
Every $\nu\in\Vext(P)$ admits a representation
\begin{equation}\label{eq:extended-atomic-existence}
  \nu=\sum_{p\in P}r_p\delta_p,
  \qquad r_p\in[0,\infty].
\end{equation}
To see this without subtracting infinite values, let
\[
  O=\bigcup\{U\in\sigma(P):\nu(U)<\infty\}.
\]
There are finitely many upper sets. Modularity and nonnegativity
imply $\nu(U\cup W)\leq\nu(U)+\nu(W)$, so $\nu(O)<\infty$.
Every upper subset of $O$ is upper in $P$. Apply
\eqref{eq:atomic-representation} to the subprobabilistic valuation
obtained by dividing $\nu|_O$ by $\max\{1,\nu(O)\}$.
Multiplying back and extending the resulting atoms by zero outside
$O$ gives a finite valuation $\zeta$ on $P$ such that
\[
  \zeta(U)=\nu(U\cap O)
  \qquad(U\in\sigma(P)).
\]
This also covers $O=\varnothing$, with $\zeta=0$.
Let $M=\Max(P\setminus O)$. If $U\subseteq O$, then
$U\cap M=\varnothing$ and $\zeta(U)=\nu(U)$.
If $U\nsubseteq O$, then $\nu(U)=\infty$ by the definition of $O$.
Choose $x\in U\setminus O$ and a maximal element $z$ of the finite
poset $P\setminus O$ above $x$. Since $U$ is upper, $z\in U\cap M$.
It follows that
\[
  \nu=\zeta+\sum_{z\in M}\infty\,\delta_z,
\]
proving \eqref{eq:extended-atomic-existence}. Such a representation
need not be unique; only its existence will be used.

Fix one representation \eqref{eq:extended-atomic-existence}.
Then \eqref{eq:extended-simple-preprocessing} gives
\[
  S_m\nu=\sum_{p\in P}b_p\delta_p,
  \qquad
  b_p=\left(1-\frac nm\right)r_p
       +\frac1m\sum_{z\geq p}r_z.
\]
If $p<q$, then
\[
  b_p\geq\frac1m\sum_{z\geq q}r_z,
  \qquad
  b_q=\left(1-\frac{n-1}{m}\right)r_q
        +\frac1m\sum_{z>q}r_z
      \leq\sum_{z\geq q}r_z.
\]
Therefore
\begin{equation}\label{eq:extended-atomic-threshold-comparison}
  p<q\ \Longrightarrow\ b_q\leq m b_p.
\end{equation}
Combining \eqref{eq:extended-atomic-threshold-comparison} with
\eqref{eq:extended-simple-threshold-growth} shows that
\[
  T=\{p\in P:b_p<c_m(p)\}
\]
is upper. Indeed, if $p\in T$ and $p<q$, then
\[
  b_q\leq m b_p<m c_m(p)\leq c_m(q),
\]
so $q\in T$.

For every upper set $W\subseteq U$,
\[
  (S_m\nu)(W)+\sum_{p\in U\setminus W}c_m(p)
  =\sum_{p\in W}b_p+\sum_{p\in U\setminus W}c_m(p)
  \geq\sum_{p\in U}\min\{b_p,c_m(p)\}.
\]
Equality holds for $W=U\cap T$, which is an allowed upper set.
Thus \eqref{eq:extended-minimum-approximant} is exactly
\begin{equation}\label{eq:extended-minimum-atomic-form}
  a_m(\nu)(U)=\sum_{p\in U}\min\{b_p,c_m(p)\},
  \qquad
  a_m(\nu)=\sum_{p\in P}\min\{b_p,c_m(p)\}\delta_p.
\end{equation}
This proves strictness, modularity, monotonicity in $U$, and
continuity on directed unions: the right-hand side is a finite sum
of finite multiples of Dirac valuations. In particular,
$a_m(\nu)\in\mathcal V_{\leq R_m}(P)$.
Although the coefficients $r_p$ were chosen for this verification,
\eqref{eq:extended-minimum-approximant} depends only on $\nu$.
Consequently $a_m(\nu)$ is independent of that choice.

\emph{Scott continuity and order bounds.}
For fixed $m,U,W$, the candidate
\[
  \nu\longmapsto(S_m\nu)(W)+\sum_{p\in U\setminus W}c_m(p)
\]
is monotone and preserves directed suprema. Taking the finite
minimum in \eqref{eq:extended-minimum-approximant} preserves
monotonicity. If $(\nu_i)_{i\in I}$ is directed with supremum
$\nu$, \eqref{eq:extended-finite-min-sup} gives
\[
\begin{aligned}
  a_m(\nu)(U)
  &=\min_{\substack{W\in\sigma(P)\\W\subseteq U}}
      \sup_i\left((S_m\nu_i)(W)
                     +\sum_{p\in U\setminus W}c_m(p)\right)\\
  &=\sup_i\min_{\substack{W\in\sigma(P)\\W\subseteq U}}
      \left((S_m\nu_i)(W)
                     +\sum_{p\in U\setminus W}c_m(p)\right)\\
  &=\sup_i a_m(\nu_i)(U).
\end{aligned}
\]
The common bound $R_m$ is preserved by directed suprema. Hence
$a_m:\Vext(P)\to\mathcal V_{\leq R_m}(P)$ is
Scott-continuous.

Taking $W=U$ in \eqref{eq:extended-minimum-approximant} yields
$a_m(\nu)(U)\leq(S_m\nu)(U)\leq\nu(U)$.
For each fixed $U,W$, its candidate increases with $m$, by
\eqref{eq:extended-preprocessing-limit} and
$c_m(p)\leq c_{m+1}(p)$. Taking minima over the same finite set of
upper sets therefore gives
\begin{equation}\label{eq:extended-simple-approx-order}
  a_m\leq a_{m+1}\leq\id_{\Vext(P)},
  \qquad a_m(\nu)(P)\leq R_m.
\end{equation}
In these inequalities, the maps are regarded as self-maps of $\Vext(P)$.

\emph{The supremum of the approximants.}
Fix $\nu\in\Vext(P)$ and $U\in\sigma(P)$.
The candidate with $W=U$ has supremum $\nu(U)$, by
\eqref{eq:extended-preprocessing-limit}.
For $W\subsetneq U$, the set $U\setminus W$ is nonempty and
$d(p)\geq1$ for its elements, so
\[
  (S_m\nu)(W)+\sum_{p\in U\setminus W}m^{d(p)}\geq m.
\]
Every such candidate has supremum $\infty$.
Since each candidate increases with $m$, another application of
\eqref{eq:extended-finite-min-sup} gives
\[
\begin{aligned}
  \sup_{m>n}a_m(\nu)(U)
  &=\min_{\substack{W\in\sigma(P)\\W\subseteq U}}
     \sup_{m>n}\left((S_m\nu)(W)
                      +\sum_{p\in U\setminus W}m^{d(p)}\right)\\
  &=\nu(U).
\end{aligned}
\]
For $U=\varnothing$ there is just the single candidate $W=\varnothing$,
and the same equality holds with both sides zero. This proves
\begin{equation}\label{eq:extended-simple-approx-sup}
  \sup_{m>n}a_m=\id_{\Vext(P)}
\end{equation}
on the whole of $\Vext(P)$, including valuations of infinite
total mass.

\emph{Finite-valuation factorization.}
Define
\[
\begin{aligned}
  p_m:\Vext(P)&\longrightarrow\Vsub(P),
  &p_m(\nu)&=R_m^{-1}a_m(\nu),\\
  e_m:\Vsub(P)&\longrightarrow\Vext(P),
  &e_m(\lambda)&=R_m\lambda.
\end{aligned}
\]
The mass bound in \eqref{eq:extended-simple-approx-order} makes
$p_m$ well-defined. Both maps are Scott-continuous, since $R_m$ is
a fixed positive finite constant and scalar multiplication preserves
directed suprema. Their composite is $e_m p_m=a_m$.
Reindexing the integers $m>n$ by $\N$, the factorizations
\[
  \Vext(P)\xrightarrow{\ p_m\ }\Vsub(P)
     \xrightarrow{\ e_m\ }\Vext(P)
\]
together with \eqref{eq:extended-simple-approx-order} and
\eqref{eq:extended-simple-approx-sup} satisfy precisely
\cref{def:FVA}. Hence $\Vext(P)\in\FVA$.
\end{proof}

\subsection{Closure under extended valuations}
\label{subsec:extended-global-closure}

For the argument below, we use integration with respect to
arbitrary $\nu\in\Vext(X)$, not only subprobabilistic valuations.
For a Scott-continuous $h:X\to[0,\infty]$, the Choquet integral is
\[
  \int_X h\,d\nu
  =\int_0^\infty\nu\bigl(\{x:h(x)>t\}\bigr)\,dt,
\]
where the right-hand side is the nonnegative integral in the real
variable $t$, possibly infinite. This extends the subprobability
integral used in Section~\ref{subsec:prelim-valuations}.
It is additive, positively homogeneous and Scott-continuous in
each argument separately, and
\[
  \int_X\chi_U\,d\nu=\nu(U)
  \quad(U\in\sigma(X)),
  \qquad \int_X1\,d\nu=\nu(X);
\]
see \cite[arXiv:2106.16190v3, Section~2.2, p.~8]
{GoubaultLarrecqJiaTheron2023}. In particular, for a directed family
$(h_i)_{i\in I}$ of Scott-continuous maps $X\to[0,\infty]$,
\[
  \int_X\sup_{i\in I}h_i\,d\nu
  =\sup_{i\in I}\int_Xh_i\,d\nu,
\]
with no finite-total-mass assumption on $\nu$.

For a Scott-continuous kernel $k:X\to\Vext(Y)$, the extended
Kleisli map
\[
  k^\dagger:\Vext(X)\longrightarrow\Vext(Y),
  \qquad
  (k^\dagger\nu)(U)=\int_Xk(x)(U)\,d\nu(x)
  \quad(U\in\sigma(Y))
\]
is well-defined and Scott-continuous by
\cite[Lemma~3.1(ii)]
{GoubaultLarrecqJiaTheron2023}. Here $k^\dagger$ denotes the
extension on all continuous valuations, whereas in the main text
its domain and codomain are subprobabilistic or probabilistic
powerdomains. The Scott-continuous pushforward is
\[
  \Vext(f):\Vext(X)\longrightarrow\Vext(Y),
  \qquad
  (\Vext(f)\nu)(U)=\nu\bigl(f^{-1}(U)\bigr)
  \quad(U\in\sigma(Y)),
\]
for Scott-continuous $f:X\to Y$; see
\cite[Proposition~3.2]
{GoubaultLarrecqJiaTheron2023}.

\begin{lemma}\label{lem:extended-kernel-lifting}
Let $X$ be a dcpo. Suppose that there are finite posets $L_n$ and
Scott-continuous maps
\[
  p_n:X\longrightarrow\Vone(L_n),
  \qquad y_n:L_n\longrightarrow X
\]
such that
\[
  \kappa_n=\Vone(y_n)\circ p_n,
  \qquad
  \kappa_n\leq\kappa_{n+1}\leq\etaV_X,
  \qquad \sup_n\kappa_n=\etaV_X.
\]
Then $\Vext(X)\in\FVA$.
\end{lemma}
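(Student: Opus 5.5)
We mirror the proof of \cref{thm:valuation-kernel-lifting}, now using the extended Kleisli extension and replacing the finite generator $\Vsub(L_n)$ by $\Vext(L_n)$. The latter lies in $\FVA$ by \cref{thm:finite-extended-FVA}, and saturation (\cref{thm:FVA-property}\textup{(2)}) then absorbs the extra factorization layer.

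\textbf{Step 1 (the approximants).} Let $\bar p_n$ and $\bar\kappa_n$ denote $p_n$ and $\kappa_n$ composed with the inclusions $\Vone(L_n)\hookrightarrow\Vext(L_n)$ and $\Vone(X)\hookrightarrow\Vext(X)$. These inclusions are Scott-continuous, since suprema are computed pointwise on open sets in both. Naturality gives $\bar\kappa_n=\Vext(y_n)\circ\bar p_n$. Define
\[
T_n=\bar\kappa_n^\dagger:\Vext(X)\to\Vext(X).
\]
The extended analogue of \cref{lem:valuation-kleisli-laws}\textup{(iv)} and its consequence \eqref{eq:kleisli-pushforward-composition} give $T_n=\Vext(y_n)\circ\bar p_n^\dagger$. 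This factors through $\Vext(L_n)$ via Scott-continuous maps, by \cite[Lemma~3.1, Proposition~3.2]{GoubaultLarrecqJiaTheron2023}. The unit law $\eta_X^\dagger=\id$ and associativity hold in the extended setting by the same reference. Alternatively, we check the needed identity directly: for $U\in\sigma(X)$, both $(\Vext(y_n)\circ\bar p_n^\dagger)(\nu)(U)$ and $T_n(\nu)(U)$ equal $\int_X \bar p_n(x)(y_n^{-1}U)\,d\nu(x)$.

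\textbf{Step 2 (order and supremum).} We need $T_n\le T_{n+1}\le\id$ and $\sup_nT_n=\id$. Fix $\nu$ and $U$, and set $g_n(x)=\kappa_n(x)(U)$. This is Scott-continuous into $[0,1]$, increasing in $n$, bounded by $\delta_x(U)=\chi_U(x)$, and satisfies $\sup_ng_n=\chi_U$. Monotonicity of the Choquet integral in the integrand gives
\[
T_n\nu(U)=\int g_n\,d\nu\le\int g_{n+1}\,d\nu\le\int\chi_U\,d\nu=\nu(U).
\]
The extended monotone convergence theorem stated above, which has no mass restriction, gives $\sup_n\int g_n\,d\nu=\nu(U)$. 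Since directed suprema in $\Vext(X)$ are pointwise, $\sup_nT_n=\id_{\Vext(X)}$.

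\textbf{Step 3 (conclusion and main obstacle).} The sequence $(T_n)$ is an $\FVA$-factorization approximate identity, because each $T_n$ factors through $\Vext(L_n)\in\FVA$. Thus $\Vext(X)\in\mathcal F(\FVA)=\FVA$.

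The step I expect to be the main obstacle is this use of saturation. \Cref{prop:C-saturation} was proved for a class of pointed countably based FS-domains, and it uses continuity of the factorization objects. That hypothesis holds here, since $\Vext(L_n)\in\FVA$ and objects of $\FVA$ have that form. Note that the proof never needs $\Vext(X)$ to be continuous a priori; continuity is an output of \cref{prop:C-continuous}. The only remaining care is Scott continuity of $\bar p_n^\dagger$ on valuations of infinite mass. This follows from the cited Lemma~3.1(ii), or directly: $\bar p_n^\dagger(\nu)(V)=\int p_n(x)(V)\,d\nu$, which is Scott-continuous in $\nu$ for each of the finitely many upper sets $V\subseteq L_n$.
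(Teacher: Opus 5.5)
Your proposal is correct and is essentially the paper's own proof. Your $T_n=\bar\kappa_n^\dagger=\Vext(y_n)\circ\bar p_n^\dagger$ is exactly the paper's composite $E_n\circ Q_n$ with $Q_n=(\jmath_n\circ p_n)^\dagger$, and the paper likewise bounds and takes the monotone-convergence supremum of $\int_X\kappa_n(x)(U)\,d\nu(x)$, then concludes from $\Vext(L_n)\in\FVA$ (\cref{thm:finite-extended-FVA}) and saturation $\mathcal F(\FVA)=\FVA$.
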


\begin{proof}
The canonical inclusion
$\jmath_n:\Vone(L_n)\hookrightarrow\Vext(L_n)$ is
Scott-continuous: directed suprema in both dcpos are computed
pointwise on upper sets, and a directed supremum of probabilistic
valuations still has total mass one. Define
\[
  Q_n=(\jmath_n\circ p_n)^\dagger:
     \Vext(X)\longrightarrow\Vext(L_n),
  \qquad
  E_n=\Vext(y_n):
     \Vext(L_n)\longrightarrow\Vext(X).
\]
By the extended integration and pushforward results just recalled,
both maps are well-defined and Scott-continuous. Explicitly,
\[
  (Q_n\nu)(A)=\int_Xp_n(x)(A)\,d\nu(x)
  \qquad(A\in\sigma(L_n)).
\]
Since $p_n(x)(L_n)=1$, their intermediate valuation has total mass
\[
  (Q_n\nu)(L_n)=\int_X1\,d\nu=\nu(X).
\]
Thus $Q_n$ takes values in the extended powerdomain
$\Vext(L_n)$; no finite-total-mass assumption is made on $\nu$.

Let $T_n=E_n\circ Q_n$. For $U\in\sigma(X)$, continuity of $y_n$
ensures that $y_n^{-1}(U)\in\sigma(L_n)$, and the definitions give
\[
\begin{aligned}
  (T_n\nu)(U)
  &=(Q_n\nu)(y_n^{-1}(U))\\
  &=\int_Xp_n(x)(y_n^{-1}(U))\,d\nu(x)\\
  &=\int_X\kappa_n(x)(U)\,d\nu(x).
\end{aligned}
\]
Evaluation of a valuation at a fixed open set is Scott-continuous.
Hence, for each $U$, the maps $x\mapsto\kappa_n(x)(U)$ are
Scott-continuous and increase pointwise to
$x\mapsto\delta_x(U)=\chi_U(x)$. Monotonicity of the integral gives
\[
  (T_n\nu)(U)\leq(T_{n+1}\nu)(U)
      \leq\int_X\chi_U\,d\nu=\nu(U).
\]
Its Scott continuity in the integrand then yields
\[
\begin{aligned}
  \sup_n(T_n\nu)(U)
  &=\int_X\sup_n\kappa_n(x)(U)\,d\nu(x)\\
  &=\int_X\chi_U\,d\nu
   =\nu(U).
\end{aligned}
\]
Consequently
\[
  T_n\leq T_{n+1}\leq\id_{\Vext(X)},
  \qquad \sup_nT_n=\id_{\Vext(X)}.
\]
Each $T_n$ factors through $\Vext(L_n)$, and
$\Vext(L_n)\in\FVA$ by \cref{thm:finite-extended-FVA}.
These factorizations therefore witness
$\Vext(X)\in\mathcal F(\FVA)$ in the sense of
\cref{def:C-factorization}. Applying
$\mathcal F(\FVA)=\FVA$ from
\cref{thm:FVA-property}\textup{(2)} proves the claim.
\end{proof}

\begin{theorem}\label{thm:extended-V-closure}
For every $D\in\FVA$, the extended probabilistic powerdomain
$\Vext(D)$ belongs to $\FVA$.
\end{theorem}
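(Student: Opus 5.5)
The plan mirrors the proof of \cref{thm:global-Vsub}: a generator-level statement combined with a transfer along a finite-valuation approximate identity. The generator-level input is \cref{lem:extended-kernel-lifting} applied to $X=\Vsub(P)$. Indeed, for nonempty $P$ we have $\Vsub(P)\cong\M(\mathbf 1,P)$, so \cref{prop:finite-kernels} supplies finite posets $L_n$ and maps $p_n,y_n$ with $\kappa_n=\Vone(y_n)\circ p_n$ increasing to the Dirac unit. Hence $\Vext(\Vsub(P))\in\FVA$. When $P=\varnothing$, $\Vsub(P)$ is a singleton, and $\Vext$ of a singleton is $[0,\infty]$; I will treat this case directly, as noted below.

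For general $D\in\FVA$, choose a finite-valuation approximate identity $a_n=e_np_n$ with $D\xrightarrow{p_n}\Vsub(P_n)\xrightarrow{e_n}D$. Functoriality of the pushforward gives $\Vext(a_n)=\Vext(e_n)\circ\Vext(p_n)$, which factors through $\Vext(\Vsub(P_n))\in\FVA$. I then need the extended analogue of \cref{lem:V-local}. Its proof goes through verbatim: $a_n^{-1}(U)$ is an increasing union equal to $U$, and $\nu$ preserves directed unions of opens with values in $[0,\infty]$. This gives $\Vext(a_n)\leq\Vext(a_{n+1})\leq\id$ and $\sup_n\Vext(a_n)=\id_{\Vext(D)}$. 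Consequently $\Vext(D)\in\mathcal F(\FVA)=\FVA$ by \cref{thm:FVA-property}\textup{(2)}.

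The degenerate case $[0,\infty]\cong\Vext(\mathbf 1)$ is the main point to double-check. It follows from \cref{thm:finite-extended-FVA} applied to the one-point poset, which is a nonempty finite poset. Alternatively, one can avoid empty $P_n$ altogether by replacing $\Vsub(\varnothing)$ with $\Vsub(\mathbf 1)$, of which it is a retract. I also need that \cref{lem:extended-kernel-lifting} applies here; it requires only that $X$ be a dcpo, which $\Vsub(P)$ is. No step appears hard, since all the analytic work is already contained in \cref{lem:extended-kernel-lifting} and \cref{thm:finite-extended-FVA}.
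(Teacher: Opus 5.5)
Your proposal is correct and follows the paper's proof essentially step for step. It establishes $\Vext(\Vsub(P))\in\FVA$ via \cref{lem:extended-kernel-lifting} applied to $\M(\mathbf 1,P)$, handling the empty case by \cref{thm:finite-extended-FVA} on the one-point poset; it then transfers along $\Vext(a_n)=\Vext(e_n)\circ\Vext(p_n)$ using the extended local-continuity argument and concludes by saturation $\mathcal F(\FVA)=\FVA$. Your optional replacement of $\Vsub(\varnothing)$ by $\Vsub(\mathbf 1)$ is a harmless alternative to the paper's direct treatment of that degenerate case.
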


\begin{proof}
We first verify closure on the generating objects:
\begin{equation}\label{eq:extended-on-finite-generator}
  \Vext(\Vsub(P))\in\FVA
  \qquad\text{for every finite poset }P.
\end{equation}
If $P=\varnothing$, then $\Vsub(P)$ is the one-point dcpo, so
\eqref{eq:extended-on-finite-generator} follows from
\cref{thm:finite-extended-FVA} applied to the one-point poset.
If $P\neq\varnothing$, evaluation at the unique point of
$\mathbf 1$ identifies $\Vsub(P)$ with $\M(\mathbf 1,P)$.
Apply \cref{prop:finite-kernels} with this choice of source and
target. It supplies finite posets $L_n$ and maps $p_n,y_n$ whose
kernels satisfy all the hypotheses of
\cref{lem:extended-kernel-lifting}. That lemma gives
\eqref{eq:extended-on-finite-generator}.

Now fix $D\in\FVA$. By \cref{def:FVA}, choose Scott-continuous
maps
\[
  p_n:D\longrightarrow\Vsub(P_n),
  \qquad e_n:\Vsub(P_n)\longrightarrow D
\]
for finite posets $P_n$, such that
\[
  a_n=e_n\circ p_n,
  \qquad a_n\leq a_{n+1}\leq\id_D,
  \qquad\sup_n a_n=\id_D.
\]
Functoriality of pushforward gives Scott-continuous factorizations
\[
  \Vext(a_n)=\Vext(e_n)\circ\Vext(p_n):
  \Vext(D)\longrightarrow
  \Vext(\Vsub(P_n))\longrightarrow\Vext(D).
\]
We verify that these form an increasing approximate identity without
any mass restriction. Fix $U\in\sigma(D)$. Since $U$ is upper and
$a_n\leq a_{n+1}\leq\id_D$,
\[
  a_n^{-1}(U)\subseteq a_{n+1}^{-1}(U)\subseteq U.
\]
If $x\in U$, then $x=\sup_n a_n(x)$, and Scott openness of $U$
implies $a_n(x)\in U$ for some $n$. Thus
\[
  \bigcup_n a_n^{-1}(U)=U.
\]
For any $\nu\in\Vext(D)$, monotonicity and preservation of
directed unions now give
\[
\begin{aligned}
  (\Vext(a_n)\nu)(U)
  &\leq(\Vext(a_{n+1})\nu)(U)\leq\nu(U),\\
  \sup_n(\Vext(a_n)\nu)(U)
  &=\sup_n\nu(a_n^{-1}(U))\\
  &=\nu\left(\bigcup_n a_n^{-1}(U)\right)
   =\nu(U).
\end{aligned}
\]
These equalities also apply when $\nu(U)=\infty$. Hence
\[
  \Vext(a_n)\leq\Vext(a_{n+1})
       \leq\id_{\Vext(D)},
  \qquad\sup_n\Vext(a_n)=\id_{\Vext(D)}.
\]
By \eqref{eq:extended-on-finite-generator}, every intermediate
object $\Vext(\Vsub(P_n))$ belongs to $\FVA$.
Therefore $\Vext(D)\in\mathcal F(\FVA)$, and a second
application of \cref{thm:FVA-property}\textup{(2)} gives
$\Vext(D)\in\FVA$.
\end{proof}

\section*{Acknowledgements}
During the preparation of this manuscript, the authors used AI-assisted tools for language polishing and grammar checking. The authors carefully reviewed and verified the final manuscript and take full responsibility for its content, including the correctness of all mathematical statements, proofs, and references. 

\end{document}